\documentclass{imanum}


\usepackage{amsmath,amsthm} 
\usepackage{amssymb,mathrsfs} 
\usepackage{a4wide} 
\usepackage{graphicx} 
\usepackage{color,subfigure} 
\usepackage{enumerate}  


\DeclareMathAlphabet{\mathpzc}{OT1}{pzc}{m}{it}
 
\newcommand{\rme}{\mathrm{e}} 
\newcommand{\etA}{\mathrm{e}^{t A}} 
\newcommand{\etB}{\mathrm{e}^{t B}} 
\newcommand{\etC}{\mathrm{e}^{t C}} 
\newcommand{\esA}{\mathrm{e}^{s A}} 
\newcommand{\esB}{\mathrm{e}^{s B}} 
\newcommand{\esC}{\mathrm{e}^{s C}}

\newcommand{\Id}{\mathrm{Id}} 
 
\newcommand{\dt}{{\Delta t}}

\newcommand{\RR}{\mathbb{R}}

\newcommand{\cE}{\mathcal{E}}
\newcommand{\cH}{\mathcal{H}}
\newcommand{\cL}{\mathcal{L}}
\newcommand{\wcL}{\widetilde{\mathcal{L}}}

\newcommand{\Li}{\mathcal{K}}
\newcommand{\Lovd}{\mathcal{L}_{\rm ovd}}
\newcommand{\Lgam}{\mathcal{L}_\gamma}

\newcommand{\Tr}{\mathrm{Tr}}
\newcommand{\cT}{\mathcal{T}}
\newcommand{\tempT}{\mathrm{T}}
\newcommand{\kappaK}{c}
\renewcommand{\leq}{\leqslant}
\renewcommand{\geq}{\geqslant}

\renewcommand{\d}[1]{\mathrm{d}#1}

\begin{document}

\title{The computation of averages from equilibrium and nonequilibrium Langevin molecular dynamics}
\shorttitle{The computation of averages in Langevin dynamics}

\author{%
{\sc
Benedict Leimkuhler\thanks{Email: b.leimkuhler@ed.ac.uk},
Charles Matthews\thanks{Email: c.matthews@ed.ac.uk}  } \\[2pt]
University of Edinburgh, School of Mathematics,   \\
James Clerk Maxwell Building,  Edinburgh, EH9 3JZ, UK\\[6pt]
{\sc and}\\[6pt]
{\sc Gabriel Stoltz}\thanks{Corresponding Author. Email: stoltz@cermics.enpc.fr}\\[2pt]
Universit\'e Paris-Est, CERMICS (ENPC),\\
INRIA, F-77455 Marne-la-Vall\'ee, FRANCE
}
\shortauthorlist{B. Leimkuhler, C. Matthews and G. Stoltz}

\maketitle

\begin{abstract}
{We consider numerical methods for thermodynamic sampling, \textit{i.e.} computing sequences of points distributed according to the Gibbs-Boltzmann distribution, using Langevin dynamics and overdamped Langevin dynamics (Brownian dynamics).  A wide variety of numerical methods for Langevin dynamics may be constructed based on splitting the stochastic differential equations into various component parts, each of which may be propagated exactly in the sense of distributions. Each such method may be viewed as generating samples according to an associated invariant measure that differs from the exact canonical invariant measure by a stepsize-dependent perturbation.  We provide error estimates \`a la Talay-Tubaro on the invariant distribution for small stepsize, and compare the sampling bias obtained for various choices of splitting method.  We further investigate the overdamped limit and apply the methods in the context of driven systems where the goal is sampling with respect to a nonequilibrium steady state. Our analyses are illustrated by numerical experiments.}
{Langevin dynamics; Stochastic differential equations; Numerical discretization; Canonical sampling; Molecular dynamics; Talay-Tubaro expansion; Nonequilibium.}
\end{abstract}

\section{Introduction}
\label{sec:introduction}

A fundamental purpose of molecular simulation is the computation of macroscopic quantities, typically through averages of functions of the variables of the system with respect to a given probability measure~$\mu$ which defines the macroscopic state of the system. We consider systems described by a separable Hamiltonian
\begin{equation} \label{ham}
H(q,p) = V(q) + \frac12 p^T M^{-1}p,
\end{equation}
where $q = (q_1,\dots,q_N)$ and $p=(p_1,\dots,p_N)$ respectively are the vectors of positions and momenta of~$N$ particles in dimension~$d$, $V$ is a potential energy function and $M$ is a positive definite mass matrix, typically a diagonal matrix. 

The Hamiltonian (\ref{ham}) represents a fully classical molecular dynamics model. For instance, a fluid of $N$ argon atoms is well described by pairwise interactions  among the nuclei, where the potential $V(q) = \sum_{1 \leq i < j \leq N} v(|q_i-q_j|).$   The distance based potential $v(r)$ may be fitted to Buckingham or Lennard-Jones forms (for instance, see \cite{frenksmit} or \cite{allentild}).  These short-ranged potentials model van der Waals type interactions including both Pauli repulsion (the inability of the populated electron shells to interpenetrate) and dispersion due to temporary dipoles forming in the charge clouds surrounding the nuclei. In more complicated molecular systems, other potential energy functions are used to capture local covalent bond structure and Coulombic interactions due to charges on the atoms.  Coarse-grained classical models may amalgamate several degrees of freedom, as for example when a molecule is replaced by a rigid body description.   Classical molecular dynamics models are now a standard and widespread tool in almost every field of science and engineering.  For example, see \cite{ne} for some applications in engineering, \cite{dd} for a discussion of the use of molecular dynamics in drug discovery and see also the motivation provided in classical textbooks on molecular simulation such as \cite{allentild,frenksmit,Schlick,Tuckerman}.

In the most common setting, the probability measure~$\mu$ with respect to which averages are computed corresponds to the canonical ensemble. Its distribution is defined by the Boltzmann-Gibbs density, which models the configurations of a conservative system in contact with a heat bath at fixed temperature~$\tempT$:
\begin{equation}
  \label{eq:canonical_measure}
  \mu(\d{q} \,\d{p}) = Z^{-1} \rme^{-\beta H(q,p)} \, \d{q} \,\d{p},
\end{equation}
where $\beta^{-1} = k_B\tempT$ with $k_B$ Boltzmann's constant and $Z$ is a normalization constant ensuring that the integral of $\mu$ over the entirety of phase space is unity.

Molecular dynamics can be used for the study of a wide range of thermodynamic and structural properties. Typically, observables are chosen which capture the features of interest and numerical studies are aimed at computing the averages of these observables accurately.  For instance, the average pressure in a three-dimensional fluid such as liquid argon is obtained by computing $\mathcal{P} = \mathbb{E}_\mu(\psi)$, the expectation of an observable $\psi$ with respect to the canonical measure $\mu$, where the pressure observable $\psi$ is defined as
\[
\psi(q,p) = \frac{1}{3 \mathcal{V}} \left( p^T M^{-1} p - \sum_{i=1}^N q_i \cdot \nabla_{q_i} V(q)
\right),
\]
$\mathcal{V}$ being the physical volume of the box occupied by the fluid. By studying the variation in pressure with changes in a thermodynamic parameter (temperature or density), one may obtain part of the phase diagram of the material.   Other observables may be used to model the determination of molecular form (shape and size) or structural rearrangement under different ambient conditions. It is for instance increasingly common to use molecular dynamics in biology to reveal  allosteric mechanisms related to protein function or drug binding; in such cases the observable may measure the  distance between two particular groups of atoms or their relative alignment; see~\cite{dd} for examples and further references contained therein.

\medskip

Numerically, the high-dimensional averages with respect to~$\mu$ are often approximated as ergodic averages along discrete stochastic paths (Markov chains) constructed through numerical solution of certain stochastic differential equations (SDEs). There are two principal sources of approximation error in the computation of average properties such as $\mathbb{E}_\mu(\psi)$: (i) systematic bias (or {\em perfect sampling bias}) related to the use of a discretization method for the SDEs (and usually proportional to a power of the integration stepsize $\dt$), and  (ii) statistical errors, due to the finite lengths of the sampling paths involved and the underlying variance of the random variables; see the presentation in Section~2.3.1 of~\cite{LRS10}. In this article we are concerned with the systematic bias, specifically the systematic bias in long-term simulation, i.e. with respect to the invariant (or nonequilibrium steady-state) distribution.   

One of the most popular choices of SDE system for sampling purposes is Langevin dynamics, which is given by:
\begin{equation}
\label{langevin}
\left\{ \begin{aligned}
\d{q}_t & = M^{-1} p_t \, \d{t}, \\
\d{p}_t & = -\nabla V(q_t) \, \d{t} - \gamma M^{-1} p_t \, \d{t} + \sqrt{\frac{2\gamma}{\beta}} \, \d{W}_t, 
\end{aligned} \right.
\end{equation}
where $\d{W}_t$ is a standard $dN$-dimensional Wiener process. The friction intensity $\gamma>0$ is a free parameter which may be adjusted to enhance sampling efficiency. Under suitable conditions, the dynamics~\eqref{langevin} is ergodic for the Boltzmann-Gibbs distribution (see for instance~\cite{Talay02,MSH01,CLS07} and references therein).  

We will also be interested in nonequilibrium situations where a given system is subject to nonconservative driving and dissipative perturbations. In this case, the averages may be taken with respect to a stationary distribution which has no simple functional form. The simulation of nonequilibrium systems in their steady-states is one popular way to compute transport coefficients such as the thermal conductivity or the shear viscosity, as the linear response of an appropriate average property (see for instance~\cite{EM08,Tuckerman}). We discuss a specific example in Section~\ref{sec:noneq_systems}: the computation of the mobility of a particle, which measures the tendency of the particle to flow in the direction of an external forcing. The mobility is related to the self-diffusion through Einstein's relation (see~\eqref{eq:def_nu_Einstein} below).

The aim of this work is to provide a numerical analysis of the perfect sampling bias in Langevin dynamics arising from numerical schemes obtained by a splitting strategy, building on studies such as~\cite{Talay02} or~\cite{BO10}, and clarifying the sampling  properties of recently proposed schemes (see~\cite{SkeelIzaguirre,Melchionna,BussiParinello,ParisDudesCharlieKnowsTheReference,LM12}). 
Of particular interest is the behavior of methods in the overdamped limit $\gamma\to+\infty$ and variations of Langevin dynamics incorporating nonequilibrium forcings such as the addition of non-gradient forces (in which case the invariant measure is unknown). The idea behind splitting schemes for stochastic differential equations is to decompose the generator of the dynamics into a sum of generators associated with dynamics which are analytically integrable, or at least very simple to integrate. We refer to the individual splitting  terms of the dynamics as ``elementary dynamics'' in the sequel. One example in the context of Langevin dynamics is the splitting scheme based on a symplectic integration of the Hamiltonian part of the dynamics combined with an exact treatment of the fluctuation-dissipation part. Such methods are more convenient to implement in molecular simulation codes than the implicit schemes proposed in~\cite{Talay02} or~\cite{MSH01}, and are also efficient in practice (see~\cite{LM13}). Some essential elements of the numerical analysis on the accuracy of such splitting schemes have been provided in~\cite{BO10}. 

\medskip

We focus in this article on the case where the position space is compact (e.g. a torus) since this is most relevant from the point-of-view of applications in condensed matter physics and biology, where periodic boundary conditions are typically used. This assumption simplifies the treatment of the Fokker-Planck operator associated to Langevin dynamics, and, with additional smoothness assumptions on the potential energy function, ensures regularity properties, discrete spectrum and spectral gap. In particular \eqref{eq:canonical_measure} is the unique invariant probability measure of the Langevin process. We assume for simplicity that the positions belong to the torus $\mathcal{M}=(L\mathbb{T})^{dN}$ where $L>0$ denotes the size of the simulation cell, and denote by $\cE = \mathcal{M} \times \mathbb{R}^{dN}$ the state space of the system, \textit{i.e.} the set of all admissible configurations~$(q,p)$. 

Let us emphasize that we expect our results to hold for unbounded position spaces, under appropriate assumptions on the potential energy function. Our proofs may however require non-trivial modifications, using in particular the tools and the results from~\cite{MSH01,Talay02,BO10,Kopec}. Generalizations to other dynamics similar to Langevin dynamics such as Generalized Langevin Dynamics (see~\cite{Mori,Z73}), Dissipative Particle Dynamics (see~\cite{HK92,EW95}) or Nos\'e-Hoover-Langevin dynamics (see~\cite{SaDeCh07,LeNoTh09}) are also possible, although a rigorous extension would require substantial work in view of the estimates needed involving the generator of the dynamics for instance (see the discussion in Remark~\ref{rmk:structure_proof}). 

In practice,  since Langevin dynamics is discretized, averages computed along a single trajectory converge to averages with respect to a measure~$\mu_{\gamma,\dt}$, which is an approximation to~$\mu$ in the sense that there exists a function $f_{\alpha,\gamma}$ for which
\begin{equation}
\label{eq:error_estimate_intro}
\int_\cE \psi(q,p) \, \mu_{\gamma,\dt}(\d{q} \, \d{p}) = \int_\cE \psi(q,p) \, \mu(\d{q} \, \d{p}) + \dt^\alpha \int_\cE \psi(q,p) f_{\alpha,\gamma}(q,p) \, \mu(\d{q} \, \d{p}) + \mathrm{O}(\dt^{\alpha+1}),
\end{equation}
see Section~\ref{sec:error_estimates_finite_friction} for precise statements. Of course, the momenta are usually trivial to sample since they are distributed according to a Gaussian measure. The primary issue is therefore to sample positions according to the marginal of the canonical measure:
\begin{equation}
\label{eq:marginal_mu}
\overline{\mu}(\d{q}) = \widetilde{Z}^{-1} \rme^{-\beta V(q)} \, \d{q}.
\end{equation}
Denoting by $\overline{\mu}_{\gamma,\dt}(\d q)$ the marginal of the invariant measure for the numerical scheme in the position variables, and by 
\begin{equation}
\label{eq:pi}
(\pi \varphi)(q) = \int_{\RR^{dN}} \varphi(q,p) \, \kappa(\d{p}), \qquad \kappa(\d{p}) = \left(\frac{2\pi}{\beta}\right)^{-dN/2} \sqrt{\mathrm{det}(M)} \exp\left(-\frac{\beta p^T M^{-1} p}{2}\right) \d{p},
\end{equation}
the partial average of a function $\varphi$ with respect to the momentum variable, the error estimate~\eqref{eq:error_estimate_intro} becomes, for observables which depend only on the position variable,
\[
\int_\mathcal{M} \psi(q) \, \overline{\mu}_{\gamma,\dt}(\d{q}) = \int_\mathcal{M} \psi(q) \, \overline{\mu}(\d{q}) + \dt^\alpha \int_\mathcal{M} \psi(q) (\pi f_{\alpha,\gamma})(q) \, \overline{\mu}(\d{q}) + \mathrm{O}(\dt^{\alpha+1}).
\]

Let us conclude this introduction by noting that alternative sampling strategies are available: the bias in the invariant measure sampled by discretization of Langevin dynamics could in principle be eliminated by employing a Metropolis-Hastings procedure (see~\cite{MRRTT53,Hastings70} and the discussion in Section~2.2 of~\cite{LRS10}). Another advantage of superimposing a Metropolis-Hastings procedure upon a discretization of Langevin dynamics is that it stabilizes the numerical scheme even for  forces~$-\nabla V$ which are not globally Lipschitz. The numerical analysis of Langevin-based Metropolis integrators has been performed in~\cite{BV09} and~\cite{BV12}, where strong error estimates are provided. On the other hand, it is not always possible or desirable to use a Metropolis correction. First, the average acceptance probability in the Metropolis step for Langevin-like dynamics in general decreases exponentially with the dimension of the system for a \emph{fixed} timestep (see for instance~\cite{KP91}). In fact, the timestep should be reduced as some inverse power of the system size in order to maintain a constant acceptance rate (see the recent works on Metropolization of Hamiltonian dynamics by \cite{BPRSS13}, following the strategy pioneered in \cite{RGG97,RR98}). There are ways to limit the decrease of the ratio, by either changing the dynamics or the measure used to compute the Metropolis ratio (see for instance~\cite{IH04} in the context of Hamiltonian dynamics), or by evolving only parts of the system (see~\cite{BV12}). The latter strategy may however complicate the implementation of parallel algorithms for the simulation of very large systems, especially if long-range potentials are used (as acknowledged in Remark~2.5 of~\cite{BV12}). This may be a reason why Metropolis corrections are not often implemented in popular molecular dynamics packages such as NAMD. Second, the variance of the computed averages may increase since rejections occur, and the numerical trajectory is therefore more correlated in general than for rejection-free dynamics. Lastly, the Metropolis procedure requires that the invariant measure of the system be known. This is the case for equilibrium systems, but no longer is the case for nonequilibrium systems subjected to external forcings such as a temperature gradient or a non-gradient force (this is the framework considered in Section~\ref{sec:noneq_systems} of this article, see for instance the dynamics~\eqref{eq:noneq_Langevin}). 

\subsection*{Summary of the results and organization of the paper}

We focus in this article on first- and second-order splitting schemes, relying on Lie-Trotter decompositions of the evolution. This restriction is motivated both by pedagogical purposes and by the dominant role in applications played by second-order splitting schemes. Let us however emphasize that most of our results could, in principle, be extended to higher-order decompositions.

Results corresponding to discretizations of the equilibrium Langevin dynamics and computation of static average properties are gathered in Section~\ref{sec:equilibrium}, while nonequilibrium systems and the computation of transport properties are discussed in Section~\ref{sec:noneq_systems} (relying on the computation of the mobility or autodiffusion coefficient as an illustration). The proofs of all our results can be found in Section~\ref{sec:proofs}.

\bigskip

Let us now highlight some of our contributions.
\begin{itemize}
\item In the equilibrium setting, we rigorously ground in Section~\ref{sec:error_estimates_finite_friction} the results presented in~\cite{LM12} giving the leading order correction to the invariant measure with respect to~$\dt$ for general splitting schemes, via a Talay-Tubaro expansion (see~\cite{TT90}). We carefully study all possible splitting schemes, taking advantage of what we call the ``TU lemma'' (Lemma~\ref{lem:TU}) to relate invariant measures of various splitting schemes where the elementary dynamics are integrated in different orders. From a technical viewpoint, our proofs are a variation on the standard way of establishing similar results since we use the specific structure of splitting schemes to conveniently write evolution operators as compositions of the semigroups of the elementary dynamics (working at the level of generators, as in~\cite{DF12}; see also~\cite{MST10} for a related approach based on solution of appropriate Poisson equations). The structure of the proof is highlighted in Section~\ref{sec:proof_thm:error_first_order_schemes}, see Remark~\ref{rmk:structure_proof}. 

\item We show in Section~\ref{sec:num_estimation_correction} how the leading order correction to equilibrium averages can be estimated on-the-fly by approximating a time-integrated correlation function. This can be seen as a practical way of numerically solving a Poisson equation (a standard way of proceeding when studying linear response of nonequilibrium systems) and is an alternative to Romberg extrapolation to eliminate the leading order correction as done in~\cite{TT90}.

\item We carefully study the overdamped regime $\gamma \to +\infty$ in Section~\ref{sec:ovd_limit}, making use in particular of uniform resolvent estimates obtained in Theorem~\ref{lem:bounds_CL_gamma} thanks to a uniform hypocoercivity property;

\item We provide error estimates for the computation of transport coefficients, by assessing the bias arising in the numerical discretization of either (i) the computation of integrated time-correlation functions expressing transport coefficients via Green-Kubo formulae; or (ii) ergodic averages of steady-state nonequilibrium dynamics where the equilibrium evolution~\eqref{langevin} is perturbed by a non-gradient force and the transport coefficient is extracted from the linear response of some quantity of interest (see Section~\ref{sec:noneq_systems}). The latter approach is illustrated by the study of the mobility, which measures the response in the average velocity arising from a constant external force exerted on the system. We also study the consistency of the numerical estimations in the overdamped limit.
\end{itemize}
Some numerical simulations are provided to illustrate the most important results (see Section~\ref{sec:numerics} and~\ref{sec:error_transport}).

\section{Error estimates for the invariant measure for equilibrium dynamics}
\label{sec:equilibrium}

We start by giving some properties of Langevin dynamics in Section~\ref{sec:ppties_Lang} (most results are well-known, except for the material on the overdamped limit $\gamma \to +\infty$ presented in Section~\ref{sec:ovd_lim_Lang}). The numerical schemes we consider are then described in Section~\ref{sec:splitting_schemes}, their ergodic properties being discussed in Section~\ref{sec:ergo_num_scheme}. Error estimates for the invariant measure are provided in Section~\ref{sec:error_estimates_finite_friction}. We then show in Section~\ref{sec:num_estimation_correction} how to estimate the leading order correction term through an appropriate integrated correlation function. An important side result of this section is the development error estimates for Green-Kubo type formulas. Finally, we study the errors on the invariant measures in the overdamped limit in Section~\ref{sec:ovd_limit}. Let us emphasize that we will make use of the following assumption throughout this work:\\

\noindent {\sc Assumption} 1:  The potential $V$ belongs to $C^\infty(\mathcal{M},\mathbb{R})$.\\


The above assumption is quite restrictive since typical potentials used in molecular simulation, such as the Lennard-Jones potential, have singularities. Although ergodicity for Langevin dynamics with singular potentials has been recently proved in~\cite{CG10}, there are still many issues with singular potentials, including the existence and uniqueness of an invariant measure for numerical schemes (see~\cite{MSH01}), and the derivation of appropriate bounds or estimates on the resolvent of the generator of Langevin dynamics (all the results presented in Section~\ref{sec:ergodicity} below are obtained under the assumption of smooth potentials). Since the latter estimates are fundamental for our work, we have to restrict ourselves to smooth potentials.
Of course, from a more practical viewpoint, it could also be argued that the potential energy function could be smoothed out by appropriate high energy truncations and regularizations, and that such regularizations should not affect too much the average properties of the system since high energy states are quite unlikely under the canonical measure.

\subsection*{Functional analysis setting and notation}

The reference Hilbert space for our analysis is the Hilbert space $L^2(\mu)$. As in~\cite{Talay02} for instance, we will consider errors in the average of smooth functions whose derivatives grow at most polynomially (the space $\mathcal{S}$ defined below). In fact, since the position space is compact, only the growth in the momentum variable has to be controlled. 

The polynomial growth of a function can be characterized by the Lyapunov functions:
\[
\Li_s(q,p) = 1 + |p|^{2s},
\]
for $s \in \mathbb{N}^* = \{1,2,3,\ldots \}$. This allows us to define the following Banach spaces of functions of polynomial growth
\[
L^\infty_{\Li_s} = \left \{ \psi \textrm{ measurable } \, \left| \, \frac{\psi}{\Li_s} \in L^\infty(\cE) \right. \right\},
\]
endowed with the norms
\[
\| \psi \|_{L^\infty_{\Li_s}} = \left\| \frac{\psi}{\Li_s}\right\|_{L^\infty}.
\]
To characterize the growth of the derivatives, we introduce the spaces $W^{m,\infty}_{\Li_s}$ defined as 
\[
W^{m,\infty}_{\Li_s} = \Big\{ f \in L^\infty_{\Li_s} \ \Big| \ \forall r \in \mathbb{N}^{2dN}, \ |r| \leq m, \ \partial^r f \in L^\infty_{\Li_s} \Big\},
\]
where $|r| = r_1+r_2+\dots+r_{2dN}$, and $\partial^r$ stands for $\partial_{q_1}^{r_1} \dots \partial_{q_{dN}}^{r_{dN}} \partial_{p_1}^{r_{dN+1}} \dots \partial_{p_{dN}}^{r_{2dN}}$.

\begin{definition}[Sufficiently smooth functions]
The set $\mathcal{S}$ of smooth functions is the set of functions $f \in L^2(\mu)$ such that, for any $m \geq 0$, there exists $s \geq 0$ (depending on $f$ and $m$) so that $f \in W^{m,\infty}_{\Li_s}$. The subset $\widetilde{\mathcal{S}} \subset \mathcal{S}$ is composed of the functions with average zero with respect to~$\mu$:
\[
\widetilde{\mathcal{S}} = \left\{ f \in \mathcal{S} \ \left| \ \int_\cE f\,d\mu = 0 \right.\right\}.
\]
\end{definition}

Some of our results will be stated in the weighted Sobolev spaces~$H^m(\mu)$ defined as 
\[
H^m(\mu) = \left\{ f \in L^2(\mu) \ \left| \ \forall r \in \mathbb{N}^{2dN}, \ |r| \leq m, \ \partial^r f \in L^2(\mu) \right.\right\},
\]
endowed with the norm
\[
\| f \|^2_{H^m(\mu)} = \| u \|^2_{L^2(\mu)} + \sum_{\substack{r \in \mathbb{N}^{2dN} \cr 1\leq |r| \leq m}} \|  \partial^r f \|^2_{L^2(\mu)}.
\]
Note that $W^{m,\infty}_{\Li_s} \subset H^m(\mu)$ since the function $\Li_s$ is in~$L^2(\mu)$. We will also occasionally need the Sobolev spaces $H^m(\kappa)$ of functions of the~$p$ variable only whose derivatives up to order~$m$ are square-integrable with respect to the probability measure~$\kappa(dp)$.

Unless stated otherwise, all the operators appearing below are by default considered as operators defined on the core $\mathcal{S}$, with range contained in $\mathcal{S}$. Some results are stated on extensions of the operators under consideration to (sub)spaces of $H^1(\mu)$ or $L^\infty_{\Li_s}$. With some abuse of notation, we will denote the extension of operators by the same letter. The appropriate domain of the operators should always be clear from the context. When an operator~$T$ is defined on the core~$\mathcal{S}$, we denote by~$T^*$ its formal adjoint, which is the operator defined on~$\mathcal{S}$ such that, for all $(f,g)\in\mathcal{S}^2$,
\[
\langle f, Tg\rangle_{L^2(\mu)} = \int_{\mathcal{E}} f(q,p) \, (Tg)(q,p) \, \mu(\d q \, \d p) = \int_{\mathcal{E}} (T^*f)(q,p) \, g(q,p) \, \mu(\d q \, \d p) = \langle T^*f, g\rangle_{L^2(\mu)}.
\]
When $T$ is a differential operator with smooth coefficient (which will be the case in many situations here), the action of the formal adjoint is found using integration by parts.

\subsection{Properties of equilibrium Langevin dynamics}
\label{sec:ppties_Lang}

Langevin dynamics can be seen as Hamiltonian dynamics perturbed by an Ornstein-Uhlenbeck process in the momenta with friction coefficient $\gamma > 0$:
\begin{equation}
\label{eq:Langevin}
\left\{ \begin{aligned}
\d{q}_t & = M^{-1} p_t \, \d{t}, \\
\d{p}_t & = -\nabla V(q_t) \, \d{t} - \gamma M^{-1} p_t \, \d{t} + \sqrt{\frac{2\gamma}{\beta}} \, \d{W}_t, 
\end{aligned} \right.
\end{equation}
where $W_t$ is a $dN$-dimensional standard Brownian motion and $M$ is the mass matrix of the system. We assume that the mass matrix is diagonal: $M = \mathrm{diag}(m_1 \mathrm{I}_d,\dots,m_N \mathrm{I}_d)$, so that momenta are Gaussian random vectors under the canonical measure, with unit covariance, and hence the components of~$p$ are very easy to sample. 
Note that we formulate here the dynamics using friction forces proportional to the velocity of the particles. 

The existence and uniqueness of strong solutions is guaranteed when the position space is compact since the kinetic energy function $1+|p|^2$ is a Lyapunov function, see for instance Theorem~5.9 in~\cite{rey-bellet}. We will sometimes denote by $(q_{\gamma,t},p_{\gamma,t})$ the solution of this equation to emphasize the dependence on the friction coefficient.

In order to describe more conveniently splitting schemes, it is useful to introduce the elementary dynamics with generators (defined on the core~$\mathcal{S}$)
\begin{equation}
\label{eq:def_ABC}
A = M^{-1} p \cdot \nabla_q, 
\qquad
B = -\nabla V(q) \cdot \nabla_p, 
\qquad
C = -M^{-1} p \cdot \nabla_p + \frac1\beta \Delta_p.
\end{equation}
The generator $\mathcal{L}_\gamma$ for equilibrium Langevin dynamics~\eqref{eq:Langevin}, defined on the core~$\mathcal{S}$, is the sum of the generators of the elementary dynamics:
\[
\mathcal{L}_\gamma = A + B + \gamma C,
\]
where $\mathcal{L}_0 = A+B$ is the generator associated with the Hamiltonian part of the dynamics. The invariance of the canonical measure $\mu$ defined in~\eqref{eq:canonical_measure} for Langevin dynamics can be rewritten in terms of the generator $\mathcal{L}_\gamma$: for any test function~$\varphi \in \mathcal{S}$,
\begin{equation}
\label{eq:inv_mu_by_L}
\int_\mathcal{E} \mathcal{L}_\gamma \varphi \, \d{\mu} = 0.
\end{equation}
In fact, the operators $A+B$ and $C$ separately preserve~$\mu$. Recall also that, thanks to the compact embedding of 
\[
H^1(\kappa) \cap \mathrm{Ker}(\pi) = \left\{ f \in H^1(\kappa) \, \left| \int_{\mathbb{R}^{dN}} f(p) \, \kappa(\d{p}) = 0 \right. \right \}
\]
in $L^2(\kappa) \cap \mathrm{Ker}(\pi)$, it is easy to show that the operator $C^{-1}$ is compact and positive definite on $L^2(\kappa) \cap \mathrm{Ker}(\pi)$. 
It is also easy to check that 
\[
(A+B)^* = -(A+B), \qquad C^* = C,
\]
where, we recall, the adjoints are formally defined as operators on~$\mathcal{S}$ through integration by parts. Note that the formal adjoint
\begin{equation}
  \label{eq:reversibility_Langevin}
  \mathcal{L}^*_\gamma = -(A + B) + \gamma C
\end{equation}
defined on~$\mathcal{S}$ has an action quite similar to the action of the generator~$\mathcal{L}_\gamma$ defined on~$\mathcal{S}$. Functional estimates valid for (extensions of) $\mathcal{L}_\gamma$ will therefore also hold for (extensions of) the formal adjoint of this operator. The equality~\eqref{eq:reversibility_Langevin} expresses the reversibility up to momentum reversal of Langevin dynamics with respect to the invariant measure~$\mu$ (see the discussion in Section~2.2.3 of~\cite{LRS10}). In particular, introducing the bounded, unitary operator on~$L^2(\mu)$
\begin{equation}
\label{eq:op_R}
(\mathcal{R}\varphi)(q,p) = \varphi(q,-p),
\end{equation}
\eqref{eq:reversibility_Langevin} can be reformulated $\mathcal{R} \mathcal{L}_\gamma \mathcal{R} = \mathcal{L}_\gamma^*$. 

\subsubsection{Ergodicity results}
\label{sec:ergodicity}

The ergodicity of Langevin dynamics for $\gamma > 0$, understood either as the almost sure convergence of time averages along a realization of the dynamics, or the long-time convergence of the law of the process to~$\mu$, is well established, see for instance~\cite{MSH01,Talay02,CLS07} and references therein. These references rely on the use of Lyapunov functions, following strategies of proofs pioneered in the Markov Chain community (see~\cite{MeynTweedie}), although alternative proofs relying on analytical tools exist (see~\cite{rey-bellet,HM11}). In any case, the evolution semigroup can be given a meaning in a weighted $L^\infty$ space, and the measure~$\mu$ is the unique invariant measure of the dynamics. This property can be translated as $\mathrm{Ker}(\Lgam) = \mathbb{C}\mathbf{1}$.

An alternative way to prove the long-time convergence of the law of the process is to use subelliptic or hypocoercive estimates as studied in~\cite{Talay02,EH03,HN04,Villani,HP08}. An important result of hypocoercivity in this case is that there exist $K_\gamma, \lambda_\gamma > 0$ such that the semigroup $\rme^{t \Lgam}$, defined on the core~$\widetilde{\mathcal{S}}$, can be extended to a bounded operator on an appropriate subspace of $H^1(\mu)$:
\begin{equation}
  \label{eq:semigroup_estimates_H1mu}
  \| \rme^{t \Lgam} \|_{\mathcal{B}(\cH^1)} \leq K_\gamma \rme^{-\lambda_\gamma t},
\end{equation}
where the subspace
\[
\cH^1 = H^1(\mu) \backslash \mathrm{Ker}(\Lgam) = \left\{ u \in H^1(\mu) \ \left| \ \int_\cE u \, \d{\mu} = 0 \right. \right\}
\]
of the Hilbert space $H^1(\mu)$ is endowed with the norm $\| u \|^2_{H^1(\mu)} = \| u \|_{L^2(\mu)}^2 + \| \nabla_p u \|_{L^2(\mu)}^2 + \| \nabla_q u\|_{L^2(\mu)}^2$, and $\| \cdot \|_{\mathcal{B}(\cH^1)}$ is the operator norm on $\cH^1$. A similar bound holds for $\rme^{t \Lgam^*}$. In particular, the operators $\mathcal{L}_\gamma$ and $\Lgam^*$ are invertible on~$\cH^1$, and 
\begin{equation}
  \label{eq:stability_H1}
  \left\| \Lgam^{-1} \right\|_{\mathcal{B}(\cH^1)} \leq \frac{K_\gamma}{\lambda_\gamma}.
\end{equation}
Note also that the same bound holds for $(\mathcal{L}_\gamma^*)^{-1}$.

For unbounded position spaces, the potential $V$ has to satisfy some assumptions for~\eqref{eq:semigroup_estimates_H1mu} to hold (such as a Poincar\'e inequality for $\rme^{-\beta V}$), but these assumptions are trivially satisfied when the position space is compact, as is the case here. An important issue is the dependence on~$\gamma$ of the constants $K_\gamma,\lambda_\gamma$, or at least the dependence on $\gamma$ of the resolvent norm $\left\| \Lgam^{-1} \right\|_{\mathcal{B}(\cH^1)}$. This is made precise in the results presented below in Section~\ref{sec:ham_limit} and~\ref{sec:ovd_lim_Lang}.

Before presenting these asymptotic estimates, let us first recall that a careful analysis of the proof presented in~\cite{Talay02}, as provided by~\cite{Kopec}, allows to prove the following result.

\begin{theorem}
\label{thm:stability_S}
The space $\widetilde{\mathcal{S}}$ is stable under~$\Lgam^{-1}$ and $(\Lgam^*)^{-1}$. 
\end{theorem}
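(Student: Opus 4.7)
The plan is to combine three ingredients: (i) invertibility of $\Lgam$ on the mean-zero subspace of $\cH^1$, which provides a candidate inverse via the semigroup; (ii) a Lyapunov function argument to control $\Lgam^{-1}f$ in a weighted $L^\infty$ space; and (iii) a probabilistic, variational-SDE argument to propagate the polynomial control to all partial derivatives. The overall scheme is that of~\cite{Talay02}, with the sharp bookkeeping needed for arbitrary-order derivatives being the contribution of~\cite{Kopec}.

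First, given $f\in\widetilde{\mathcal{S}}\subset\cH^1$, the hypocoercive estimate~\eqref{eq:stability_H1} provides a unique $u\in\cH^1$ with $\Lgam u=f$, represented by
\[
u \;=\; -\int_0^\infty \rme^{t\Lgam} f \,\d{t},
\]
and $\int_\cE u\,\d{\mu}=0$ since $\mu$ is invariant under the semigroup. To upgrade the functional framework, I exploit the Lyapunov structure of Langevin dynamics: a direct computation yields $\Lgam\Li_s\leq -a_s\Li_s+b_s$ pointwise, because the leading $-2s|p|^{2s-2}(M^{-1}p)\cdot p$ contribution from $\gamma C\Li_s$ dominates the $B\Li_s$ contribution (since $\nabla V$ is bounded on $\mathcal{M}$ by Assumption~1). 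Combined with a minorization argument on sublevel sets, this produces exponential decay $\|\rme^{t\Lgam}f\|_{L^\infty_{\Li_s}}\leq K_s\rme^{-\lambda_s t}\|f\|_{L^\infty_{\Li_s}}$ for mean-zero $f$, by standard Meyn--Tweedie theory~\cite{MeynTweedie,MSH01}, and integrating in time places $u\in L^\infty_{\Li_s}$.

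For smoothness and polynomial growth of the derivatives, I rely on the probabilistic representation
\[
u(q,p) \;=\; -\int_0^\infty \EE\!\left[f\bigl(q_t^{q,p},p_t^{q,p}\bigr)\right]\,\d{t},
\]
where $(q_t^{q,p},p_t^{q,p})$ is the solution of~\eqref{eq:Langevin} issued from $(q,p)$. Differentiating under the integral expresses, for each multi-index $r$, the derivative $\partial^r u$ as an integral of expectations involving derivatives of~$f$ evaluated along the trajectory together with the Jacobian processes $\partial^\alpha_{q,p}(q_t,p_t)$, which satisfy linear variational SDEs driven by derivatives of the coefficients of~\eqref{eq:Langevin}. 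Since those coefficients are smooth, with derivatives of at most polynomial growth in $p$ (again by Assumption~1), standard moment estimates yield polynomial bounds on the Jacobian processes in every $L^q(\Omega)$; combined with the exponential-in-time decay of $\EE[f(q_t,p_t)]$ inherited from the mean-zero property of~$f$, this shows that the formal derivative integral converges absolutely and places $\partial^r u$ in some $L^\infty_{\Li_{s(r)}}$. Hence $u\in\mathcal{S}$ and the mean-zero property carries over, so $u\in\widetilde{\mathcal{S}}$.

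The adjoint case $(\Lgam^*)^{-1}$ is immediate from the identity $\mathcal{R}\Lgam\mathcal{R}=\Lgam^*$ of~\eqref{eq:reversibility_Langevin} and the stability of $\widetilde{\mathcal{S}}$ under the unitary operator~$\mathcal{R}$. The technical heart of the plan, and the chief source of difficulty, is the tracking in Step~3 of how the weight exponent $s(r)$ inflates with~$|r|$: each variation of~$(q_t,p_t)$ is coupled through the drift $M^{-1}p\cdot\nabla_q$, so controlling higher-order Jacobians requires controlling simultaneously higher moments of~$(q_t,p_t)$. This bookkeeping is precisely the content of~\cite{Kopec}, and it is also why relaxing the smoothness assumption on~$V$ is not a cheap matter.
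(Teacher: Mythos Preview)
The paper does not give its own proof of this theorem; it simply invokes~\cite{Talay02} and~\cite{Kopec}. Your outline is precisely the strategy of those references, so in that sense you are aligned with the paper.

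There is, however, one genuine soft spot in your Step~3. You write that polynomial moment bounds on the Jacobian processes, ``combined with the exponential-in-time decay of $\EE[f(q_t,p_t)]$ inherited from the mean-zero property of~$f$'', yield absolute convergence of the derivative integral. This does not follow: $\partial^r u(q,p)$ is an integral over $t$ of expressions like $\EE\big[(\partial^\alpha f)(q_t,p_t)\,J_t^{(\alpha)}\big]$, and the exponential decay of the \emph{scalar} $\EE[f(q_t,p_t)]$ says nothing about these products. The Jacobian processes are not damped by any mean-zero structure, and the derivatives $\partial^\alpha f$ need not have mean zero either, so you cannot simply multiply a decaying factor by a bounded one. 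What is actually needed is that the differentiated semigroup $\partial^r(\rme^{t\Lgam}f)$ itself decays exponentially in a suitable weighted norm. In~\cite{Talay02} this is obtained by noting that $\partial^r(\rme^{t\Lgam}f)$ satisfies its own evolution equation with generator $\Lgam$ plus lower-order terms, and then bootstrapping the Lyapunov/minorization machinery on that enlarged system; the careful induction on $|r|$ with control of the weight index is exactly what~\cite{Kopec} supplies. You correctly flag this bookkeeping as the hard part at the end, but your stated mechanism for time-integrability is the wrong one.

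The reduction of $(\Lgam^*)^{-1}$ to $\Lgam^{-1}$ via the momentum-reversal identity $\mathcal{R}\Lgam\mathcal{R}=\Lgam^*$ is clean and correct.
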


This result is of fundamental importance in our proofs. It allows to state that, if the operators $T_1,\dots,T_M$ are well defined operators from~$\widetilde{\mathcal{S}}$ to $\widetilde{\mathcal{S}}$, then the operator $\Lgam^{-1} T_M \Lgam^{-1} \dots \Lgam^{-1} T_1 \Lgam^{-1}$ also is a well defined operator from $\widetilde{\mathcal{S}}$ to~$\widetilde{\mathcal{S}}$. 

\subsubsection{Hamiltonian limit $\gamma \to 0$}
\label{sec:ham_limit}

When $\gamma = 0$, Langevin dynamics reduces to the Hamiltonian dynamics, whose generator $A+B$ has a kernel much larger than $\mathrm{Ker}(\Lgam) = \mathbb{C}\mathbf{1}$. It is therefore expected that the operator norm of $\Lgam^{-1}$ diverges as $\gamma \to 0$. The rate of divergence is made precise in the following theorem, summarizing the results from Theorem~1.6 and Proposition~6.3 of~\cite{HP08}.

\begin{theorem}[see~\cite{HP08}]
\label{thm:Ham_limit_Lgam}
Denote by $\| \cdot \|_{\mathcal{B}(\cH^0)}$ the operator norm on the subspace
\begin{equation}
\label{eq:def_H0}
\cH^0 = \left\{ u \in L^2(\mu) \ \left| \ \int_\cE u \, \d{\mu} = 0 \right. \right\}
\end{equation}
of the Hilbert space $L^2(\mu)$. There exists two constants $c_-, c_+ > 0$ such that, for any $0 < \gamma \leq 1$,
\[
\frac{c_-}{\gamma} \leq \left\| \mathcal{L}_\gamma^{-1} \right\|_{\mathcal{B}(\mathcal{H}^0)} \leq \frac{c_+}{\gamma}.
\]
\end{theorem}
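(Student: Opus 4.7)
I would establish the two bounds separately. The lower bound comes from an explicit test vector; the upper bound I would obtain from an $L^2(\mu)$-hypocoercivity estimate with the $\gamma$-dependence tracked throughout.

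\textbf{Lower bound.} Since $A+B$ generates the Hamiltonian flow of $H$, one has $(A+B)H = 0$: indeed $AH = M^{-1}p\cdot\nabla V(q) = -BH$. Setting $\phi_0 := H - \langle H\rangle_\mu$, which belongs to $\mathcal{H}^0\cap\widetilde{\mathcal{S}}$, a direct computation gives
\[
\Lgam\phi_0 = \gamma\,CH = \gamma\left(-p^T M^{-2} p + \frac{\mathrm{tr}(M^{-1})}{\beta}\right),
\]
a polynomial of degree two in $p$; hence $\|\Lgam\phi_0\|_{L^2(\mu)} \leq c_1\gamma$ for some $c_1 = c_1(M,\beta) > 0$. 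Since $\|\phi_0\|_{L^2(\mu)}$ is a positive constant independent of $\gamma$, the operator norm satisfies
\[
\|\Lgam^{-1}\|_{\mathcal{B}(\mathcal{H}^0)} \;\geq\; \frac{\|\phi_0\|_{L^2(\mu)}}{\|\Lgam\phi_0\|_{L^2(\mu)}} \;\geq\; \frac{c_-}{\gamma}.
\]

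\textbf{Upper bound.} For the upper bound I would apply the $L^2(\mu)$-hypocoercivity framework of Dolbeault--Mouhot--Schmeiser to the splitting $\Lgam = L + S$, with $L := A+B$ antisymmetric on $L^2(\mu)$ and $S := \gamma C$ symmetric nonpositive. Writing $\Pi$ for the $L^2(\mu)$-orthogonal projection onto $\mathrm{Ker}(C)$, which acts as $(\Pi u)(q,p) = (\pi u)(q)$ in the notation of~\eqref{eq:pi}, consider the modified inner product
\[
\langle u, u\rangle_\varepsilon := \|u\|_{L^2(\mu)}^2 + 2\varepsilon\,\mathrm{Re}\langle\mathsf{A}u, u\rangle_{L^2(\mu)}, \qquad \mathsf{A} := \left(1 + (L\Pi)^*(L\Pi)\right)^{-1}(L\Pi)^*.
\]
Two coercivity inputs enter, both uniform in $\gamma$: microscopic coercivity
\[
-\langle Su,u\rangle_{L^2(\mu)} \;=\; \frac{\gamma}{\beta}\,\|\nabla_p u\|_{L^2(\mu)}^2 \;\geq\; \gamma\lambda_m\,\|(1-\Pi)u\|_{L^2(\mu)}^2,
\]
by the Gaussian Poincar\'e inequality for $\kappa$; and macroscopic coercivity $\|L\Pi u\|_{L^2(\mu)}^2 \geq \lambda_M\|\Pi u\|_{L^2(\mu)}^2$, following from $\mathrm{Cov}_\kappa(p) = M/\beta$ together with the Poincar\'e inequality for $\overline{\mu}$ (valid on the compact torus since $V$ is smooth); the auxiliary identity $\Pi L\Pi = 0$, which DMS also requires, comes from $\int_{\mathbb{R}^{dN}} p\,\kappa(\mathrm{d}p) = 0$. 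For $\varepsilon$ small enough (independent of $\gamma$), the DMS computation then yields
\[
-\mathrm{Re}\langle\Lgam u, u\rangle_\varepsilon \;\geq\; K\gamma\,\langle u,u\rangle_\varepsilon, \qquad 0 < \gamma \leq 1,
\]
with $K$ independent of $\gamma$; testing with $u = \Lgam^{-1}f$ and using $\langle\cdot,\cdot\rangle_\varepsilon \asymp \|\cdot\|_{L^2(\mu)}^2$ gives $\|\Lgam^{-1}\|_{\mathcal{B}(\mathcal{H}^0)} \leq c_+/\gamma$.

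\textbf{Main obstacle.} The substantive step in the upper bound is to verify that all auxiliary operators arising in the DMS identity, namely $\mathsf{A}$, $(L\Pi)\mathsf{A}$, $\mathsf{A}L(1-\Pi)$ and $\mathsf{A}S$, are bounded on $L^2(\mu)$ with operator norms independent of $\gamma$, so that the cross-terms can be absorbed into the coercive right-hand side. Boundedness of $\mathsf{A}$ and $(L\Pi)\mathsf{A}$ follows from the defining resolvent identity; the $\gamma$-independent bound on $\mathsf{A}S$ relies on the compactness of $C^{-1}$ on $L^2(\kappa)\cap\mathrm{Ker}(\pi)$ recalled in Section~\ref{sec:ppties_Lang}, together with the fact that the range of $\mathsf{A}$ avoids $\mathrm{Ker}(C)$. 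The remainder of the argument is a $\gamma$-bookkeeping check confirming that the effective decay rate in the coercivity estimate is governed by the microscopic scale~$\gamma\lambda_m$, which is precisely what produces the sharp $1/\gamma$ scaling of the resolvent norm.
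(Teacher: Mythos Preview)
The paper does not prove this statement itself: it is quoted from~\cite{HP08} (Theorem~1.6 and Proposition~6.3 there), where the upper bound is obtained via $H^1(\mu)$-type hypocoercivity rather than the $L^2$-based DMS construction you propose. Your route is therefore genuinely different from the cited reference, and in principle more elementary since it avoids derivative norms altogether.

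Your lower bound is correct and clean: the test vector $H-\langle H\rangle_\mu$ is exactly the right object, and the identity $\Lgam(H-\langle H\rangle_\mu)=\gamma\,CH$ gives the $c_-/\gamma$ bound immediately.

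For the upper bound there is a gap in the $\gamma$-bookkeeping. You claim $\varepsilon$ can be chosen independently of $\gamma$; this cannot work. The cross-term $\varepsilon\,\langle \mathsf{A}L(1-\Pi)u,\Pi u\rangle$ is of size $\varepsilon\,C_M\,\|(1-\Pi)u\|\,\|\Pi u\|$ with $C_M$ independent of $\gamma$, whereas the only available control on $\|(1-\Pi)u\|^2$ is the microscopic dissipation $\gamma\lambda_m\|(1-\Pi)u\|^2$. Absorbing the former into the latter via Young's inequality forces $\varepsilon\lesssim\gamma$. The fix is straightforward: take $\varepsilon=\varepsilon_0\gamma$ with $\varepsilon_0$ small and $\gamma$-independent. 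Since $\|\mathsf{A}\|\leq 1/2$, the modified form $\langle\cdot,\cdot\rangle_{\varepsilon_0\gamma}$ remains uniformly equivalent to $\|\cdot\|_{L^2(\mu)}^2$ for $0<\gamma\leq 1$, and the dissipation inequality becomes
\[
-\mathrm{Re}\,\langle \Lgam u,u\rangle_{\varepsilon_0\gamma}\;\geq\; K\gamma\,\|u\|_{L^2(\mu)}^2,
\]
from which the resolvent bound follows as you describe. With this correction, your closing remark that the effective rate is governed by the microscopic scale $\gamma\lambda_m$ is exactly right, and the remainder of the DMS verification (boundedness of $\mathsf{A}$, $L\Pi\mathsf{A}$, $\mathsf{A}L(1-\Pi)$, and of $\mathsf{A}S=\gamma\,\mathsf{A}C$, the last being automatic for $\gamma\leq 1$ once $\mathsf{A}C$ is bounded) goes through.
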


We state the result with the upper bound $\gamma \leq 1$, but it holds in fact for $0 < \gamma \leq \gamma_{\rm max}$ for any finite value $\gamma_{\rm max}>0$. Note also that the same bound holds for $(\mathcal{L}_\gamma^*)^{-1}$.

\subsubsection{Overdamped limit $\gamma \to +\infty$}
\label{sec:ovd_lim_Lang}

The overdamped limit can be obtained by either letting the friction go to infinity in~\eqref{eq:Langevin} together with an appropriate rescaling of time; or by letting masses go to~0. When discussing overdamped limits in this article, we will always set the mass matrix~$M$ to identity and consider the limit $\gamma \to +\infty$. Since we restrict our attention to the invariant measure of the system, the time rescaling is not relevant. 

Let us describe more precisely the convergence result. It is shown in Section~2.2.4 of~\cite{LRS10} for instance that the solutions of~\eqref{eq:Langevin} observed over long times, namely $(q_{\gamma,\gamma s},p_{\gamma,\gamma s})_{s \geq 0}$, converge pathwise on finite time intervals $s \in [0,t]$ to the solutions of overdamped Langevin dynamics
\begin{equation}
\label{eq:overdamped}
\d{Q}_t = -\nabla V(Q_t) \, \d{t} + \sqrt{\frac2\beta} \, \d{W}_t,
\end{equation}
with the same initial condition $Q_0 = q_{\gamma,0}$. The process~\eqref{eq:overdamped} is ergodic on the compact position space $\mathcal{M}$, with unique invariant probability measure~$\overline{\mu}(\d{q})$ defined in~\eqref{eq:marginal_mu}. Its generator
\[
\Lovd = -\nabla V(q) \cdot \nabla_q + \frac1\beta \Delta_q,
\]
defined on the core~$\mathcal{S} \cap \mathrm{Ker}(\pi) = C^\infty(\mathcal{M})$, is an elliptic operator which is symmetric on~$L^2(\overline{\mu})$, with compact resolvent (see for instance the discussion and the references in Section~2.3.2 of~\cite{LRS10}). It is easy to see that the inverse operator $\Lovd^{-1}$ can be extended to a bounded operator from 
\[
\widetilde{H}^m(\overline{\mu}) = \left\{ \varphi \in H^m(\overline{\mu}) \, \left| \int_\mathcal{M} \varphi \, \d{\overline{\mu}} = 0 \right.\right\}
\]
to $\widetilde{H}^{m+2}(\overline{\mu})$. Let us finally mention that the set of $C^\infty(\mathcal{M})$ functions with average zero with respect to~$\overline{\mu}$ is of course stable with respect to~$\Lovd^{-1}$. 

The following result gives bounds on the resolvent of the Langevin generator in the overdamped regime, and in fact quantifies the difference between the resolvent $\Lgam^{-1}$ and the resolvent $\Lovd^{-1}$ appropriately rescaled by a factor~$\gamma$. 

\begin{theorem}
\label{lem:bounds_CL_gamma}
There exist two constants $c_-, c_+ > 0$ such that, for any $\gamma \geq 1$,
\begin{equation}
\label{eq:crude_bounds_Lgamma_ovd}
c_- \gamma \leq \| \cL_\gamma^{-1} \|_{\mathcal{B}(\cH^1)} \leq c_+ \gamma.
\end{equation}
More precisely, there exists a constant $K > 0$ such that, for any $\gamma \geq 1$,
\begin{equation}
\label{eq:divergent_behavior_Lgamma}
\begin{aligned}
\left\| \cL_\gamma^{-1} - \gamma \Lovd^{-1} \pi - p^T \nabla_q\Lovd^{-1}\pi + \Lovd^{-1} \pi (A+B) C^{-1} (\Id - \pi) \right \|_{\mathcal{B}(\cH^1)} \leq \frac{K}{\gamma}, \\
\left\| \left(\cL_\gamma^*\right)^{-1} - \gamma \Lovd^{-1} \pi + p^T \nabla_q\Lovd^{-1}\pi - \Lovd^{-1} \pi (A+B) C^{-1} (\Id - \pi) \right \|_{\mathcal{B}(\cH^1)} \leq \frac{K}{\gamma},
\end{aligned}
\end{equation}
where the operator $\pi$ is defined in~\eqref{eq:pi}, and $(C^{-1}\psi)(q,p)$ is understood as applying the operator $C^{-1}$ to the function $\psi(q,\cdot) \in L^2(\kappa)$ for all values of $q \in \mathcal{M}$.
\end{theorem}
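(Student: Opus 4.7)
The plan is to construct an explicit approximate right inverse
\[
\Phi_\gamma = \gamma \Lovd^{-1}\pi + p^T\nabla_q \Lovd^{-1}\pi - \Lovd^{-1}\pi(A+B)C^{-1}(\Id-\pi)
\]
of $\cL_\gamma$ by a formal asymptotic expansion in powers of $\gamma^{-1}$, to complete it by one additional corrector $\gamma^{-1}U_1$ so that $\cL_\gamma(\Phi_\gamma + \gamma^{-1} U_1) = \Id + \gamma^{-1} R_\gamma$ with $R_\gamma$ bounded on $\cH^1$ uniformly in $\gamma$, and then to deduce the two estimates of the theorem from this identity. The matching lower bound $c_-\gamma$ in~\eqref{eq:crude_bounds_Lgamma_ovd} is read off by testing $\cL_\gamma^{-1}$ on a function of $q$ alone with $\Lovd^{-1}\pi f \not\equiv 0$, while the upper bound $c_+\gamma$ follows at once from $\|\Phi_\gamma\|_{\mathcal{B}(\cH^1)} = \mathrm{O}(\gamma)$.

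The expansion exploits the orthogonal decomposition of $L^2(\mu)$ induced by $\pi$: the operator $C$ vanishes on $\mathrm{Ran}(\pi)$ and is boundedly invertible on $\mathrm{Ker}(\pi)$ with a bound that is independent of $\gamma$. Applying $\pi$ and $\Id - \pi$ to $\cL_\gamma u = f$ yields the coupled system
\[
\pi(A+B)(\Id-\pi)u = \pi f, \qquad (\Id-\pi)(A+B) u + \gamma C(\Id-\pi) u = (\Id-\pi)f.
\]
Inverting the second equation perturbatively gives $(\Id-\pi)u = \gamma^{-1} C^{-1}\bigl[(\Id-\pi)f - p\cdot\nabla_q\,\pi u\bigr] + \mathrm{O}(\gamma^{-2})$, and substituting in the first equation, together with the algebraic identities $C^{-1} p = -p$ and $\pi(A+B)(p\cdot\nabla_q w) = \Lovd w$ for $w = w(q)$, yields $\Lovd\,\pi u = \gamma\pi f - \pi(A+B)C^{-1}(\Id-\pi)f + \mathrm{O}(\gamma^{-1})$. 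Inverting $\Lovd$ and reinjecting the result into the formula for $(\Id-\pi)u$ produces precisely the three-term expansion $\Phi_\gamma f$.

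The rigorous step consists in verifying $\cL_\gamma(\Phi_\gamma + \gamma^{-1} U_1) = \Id + \gamma^{-1}R_\gamma$ directly from the explicit formula: the orders $\gamma^2$, $\gamma$ and $\gamma^0$ cancel by construction (using $C\Lovd^{-1}\pi = 0$, the identity $C(p^T\nabla_q w) = -p^T\nabla_q w$ for $w = w(q)$, and the defining property of $U_1$), leaving only the $\gamma^{-1}$ residual $R_\gamma = (A+B)U_1$. Crucially, the $\mathrm{Ran}(\pi)$-component of $U_0$ is chosen (namely as $-\Lovd^{-1}\pi(A+B)C^{-1}(\Id-\pi)$) precisely so that $R_\gamma$ takes values in $\mathrm{Ker}(\pi)$. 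Since the leading $\gamma\Lovd^{-1}\pi$ contribution of $\Phi_\gamma$ vanishes on $\mathrm{Ker}(\pi)$, the same block decomposition shows that $\cL_\gamma^{-1}$ restricted to $\mathrm{Ker}(\pi)$ is bounded on $\cH^1$ uniformly in $\gamma$. Applying $\cL_\gamma^{-1}$ to the identity yields $\cL_\gamma^{-1} - \Phi_\gamma - \gamma^{-1}U_1 = -\gamma^{-1}\cL_\gamma^{-1}R_\gamma = \mathrm{O}(\gamma^{-1})$, and the announced bound follows because $\gamma^{-1}U_1$ is itself $\mathrm{O}(\gamma^{-1})$. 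The adjoint estimate is obtained either by an analogous computation for $\cL_\gamma^*$, or via $\mathcal{R}\cL_\gamma\mathcal{R} = \cL_\gamma^*$ together with the observation that $\pi$, $\Lovd$ and $C$ commute with the momentum reversal $\mathcal{R}$ while $A+B$ anticommutes with it, which flips the sign of the $p^T\nabla_q\Lovd^{-1}\pi$ term to match the statement.

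The main obstacle is the uniform $\mathcal{B}(\cH^1)$-boundedness of the residual $R_\gamma$ and of $\cL_\gamma^{-1}$ restricted to $\mathrm{Ker}(\pi)$: both involve compositions of the first-order unbounded operators $A$ and $B$ with the inverses $\Lovd^{-1}$ and $C^{-1}$, while the $H^1(\mu)$-norm requires controlling derivatives in both $q$ and $p$. The needed regularity must be propagated through these compositions without introducing $\gamma$-dependent constants, which calls upon the stability of $\widetilde{\mathcal{S}}$ under $\Lgam^{-1}$ stated in Theorem~\ref{thm:stability_S}, elliptic regularity for the overdamped generator $\Lovd$ on the compact manifold $\mathcal{M}$, and the $\gamma$-independent spectral properties of $C$ on $L^2(\kappa) \cap \mathrm{Ker}(\pi)$ recalled in Section~\ref{sec:ppties_Lang}.
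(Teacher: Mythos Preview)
Your asymptotic expansion, the explicit form of the correctors $\Phi_\gamma$ and $U_1$, and the use of the momentum reversal $\mathcal{R}$ for the adjoint all match the paper's proof. The two arguments are the same up to the point where the residual identity
\[
\cL_\gamma^{-1}-\Phi_\gamma-\gamma^{-1}U_1=-\gamma^{-1}\cL_\gamma^{-1}R_\gamma
\]
must be turned into an $\mathrm{O}(\gamma^{-1})$ bound in $\mathcal{B}(\cH^1)$.

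Here, however, your argument has a genuine gap. You claim that ``the same block decomposition shows that $\cL_\gamma^{-1}$ restricted to $\mathrm{Ker}(\pi)$ is bounded on $\cH^1$ uniformly in~$\gamma$'', and then use this to conclude $\gamma^{-1}\cL_\gamma^{-1}R_\gamma=\mathrm{O}(\gamma^{-1})$. But that is circular: the only information the block decomposition gives is the identity above, and bounding its right-hand side \emph{is} bounding $\cL_\gamma^{-1}$ on $\mathrm{Ran}(R_\gamma)\subset\mathrm{Ker}(\pi)$. One could try to break the loop by a Neumann series, writing $\cL_\gamma^{-1}=(\Phi_\gamma+\gamma^{-1}U_1)(\Id+\gamma^{-1}R_\gamma)^{-1}$ for $\gamma$ large, but this (i)~requires first proving that $R_\gamma=(A+B)U_1$ is bounded on $\cH^1$---which you flag as the ``main obstacle'' but do not actually establish (Theorem~\ref{thm:stability_S} gives stability of $\widetilde{\mathcal{S}}$, not $\mathcal{B}(\cH^1)$ operator norms, and controlling $\nabla_q$ of $(A+B)C^{-1}(\Id-\pi)f$ and of the term $\frac12 p^T(\nabla^2\Lovd^{-1}\pi f)p$ needs more regularity than $f\in H^1(\mu)$ provides without further work)---and (ii)~would only cover $\gamma\geq\gamma_0$ for some large $\gamma_0$, not all $\gamma\geq 1$.

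The paper closes precisely this gap with Lemma~\ref{lem:bounded_resolvent_perp} (uniform hypocoercivity for large frictions), proved \emph{independently} of the expansion by exhibiting a modified $H^1(\mu)$ inner product $\langle\langle\cdot,\cdot\rangle\rangle$ in which $-\cL_1$ is coercive and $-\langle\langle u,Cu\rangle\rangle\geq 0$ for $u\in\cH_\perp^1$, so that $-\langle\langle u,\cL_\gamma u\rangle\rangle\geq\alpha\|u\|_{H^1(\mu)}^2$ uniformly in $\gamma\geq 1$. That lemma is the missing ingredient: once you have it, applying it to $\cL_\gamma^{-1}\big[(A+B)u^1\big]$ (after choosing $\widetilde{u}^0$ so that $(A+B)u^1\in\cH_\perp^1$) gives the $\mathrm{O}(\gamma^{-1})$ bound directly, with no circularity and for all $\gamma\geq 1$. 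You should invoke that lemma rather than attempt to extract the uniform bound from the block structure alone.
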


Note that the function $\Lovd^{-1} \pi f$ is well defined since, as $f$ belongs to $\cH^1$, the function~$\pi f$ has a vanishing average with respect to~$\overline{\mu}$. The fact that $\Lovd^{-1} \pi (A+B) C^{-1} (\Id - \pi)$ is bounded on~$\cH^1$ is discussed in the proof of Theorem~\ref{lem:bounds_CL_gamma}. An important ingredient in the proof is the following estimate, which we call uniform hypocoercivity estimate. 

\begin{lemma}[Uniform hypocoercivity for large frictions]
\label{lem:bounded_resolvent_perp}
Consider the following subspace of $\mathcal{H}^1$:
\[
\cH_\perp^1 = \left\{ u \in \cH^1 \ \left| \ \overline{u}(q) = \int_{\RR^{dN}} u(q,p) \, \kappa(\d{p}) = 0 \right. \right\}. 
\]
There exists a constant $K > 0$ such that, for any $\gamma \geq 1$,
\[
\forall f \in \cH^1_\perp, \qquad \| \cL_\gamma^{-1} f \|_{H^1(\mu)} \leq K \| f \|_{H^1(\mu)}.
\]
\end{lemma}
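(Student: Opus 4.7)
The plan exploits the macroscopic/microscopic decomposition induced by the projection $\pi$, together with the fact that the hypothesis $\pi f = 0$ eliminates the slow component that would otherwise force $\cL_\gamma^{-1}$ to grow linearly in $\gamma$ (cf.\ the crude bound~\eqref{eq:crude_bounds_Lgamma_ovd}). Setting $u = \cL_\gamma^{-1} f$, I write $u = u_\parallel + u_\perp$ with $u_\parallel = \pi u$ and $u_\perp = (\Id - \pi) u$. Projecting $\cL_\gamma u = f$ by $\pi$ and using $\pi C = 0$ together with $\pi f = 0$ gives the macroscopic constraint $\pi(A+B) u_\perp = 0$, while the $(\Id - \pi)$-projection, together with $B u_\parallel = 0$ and $(\Id-\pi) A u_\parallel = A u_\parallel$, gives the microscopic equation
\[
\gamma C u_\perp = f - A u_\parallel - (\Id - \pi)(A + B) u_\perp.
\]

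Solving formally for $u_\perp$ using the invertibility of $-C$ on $\mathrm{Ran}(\Id - \pi)$ and reinjecting into $\pi(A+B) u_\perp = 0$ leads to an effective elliptic equation for $u_\parallel$:
\[
\Lovd\, u_\parallel = \pi(A+B) C^{-1} \bigl[ f - (\Id - \pi)(A+B) u_\perp \bigr],
\]
where the key identity $\pi(A+B) C^{-1} A \varphi = -\Lovd \varphi$ (for $\varphi \in C^\infty(\mathcal{M})$) is obtained by noting that each $p_j$ is an eigenfunction of $C$, so $C^{-1} A \varphi = -p^T \nabla_q \varphi$, and then computing $\pi(A+B)(-p^T \nabla_q \varphi)$ explicitly. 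Since the right-hand side has zero $\overline{\mu}$-mean, the elliptic regularity of $\Lovd^{-1}$ on $\widetilde{H}^m(\overline{\mu})$ yields
\[
\| u_\parallel \|_{H^2(\overline{\mu})} \leq K \bigl( \| f \|_{H^1(\mu)} + \| u_\perp \|_{H^2(\mu)} \bigr).
\]

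The bound on $u_\perp$ requires two ingredients. First, integration by parts in $\langle \cL_\gamma u, u \rangle_{L^2(\mu)} = \langle f, u \rangle_{L^2(\mu)}$ gives $\gamma \beta^{-1} \| \nabla_p u_\perp \|^2_{L^2(\mu)} = -\langle f, u \rangle_{L^2(\mu)}$, and the Gaussian Poincar\'e inequality in~$p$ (applicable because $\pi u_\perp = 0$) yields the $L^2$-smallness $\| u_\perp \|^2_{L^2(\mu)} \leq K \gamma^{-1} \| f \|_{L^2(\mu)} \| u \|_{L^2(\mu)}$. Second, commuting $\nabla_p$ and $\nabla_q$ through the microscopic equation and exploiting the ellipticity of $-C$ on $\mathrm{Ran}(\Id - \pi)$, one obtains a Sobolev bound of the form $\| u_\perp \|_{H^2(\mu)} \leq K \gamma^{-1} \bigl( \| f \|_{H^1(\mu)} + \| u_\parallel \|_{H^2(\overline{\mu})} \bigr)$. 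Inserting this into the elliptic bound and absorbing the $\gamma^{-1} \| u_\parallel \|_{H^2(\overline{\mu})}$ term for $\gamma$ large enough closes the bootstrap; the complementary range $1 \leq \gamma \leq \gamma_0$ is handled directly by~\eqref{eq:stability_H1} applied at fixed $\gamma$. Combining the two pieces and using the orthogonal decomposition $\| u \|_{H^1(\mu)}^2 = \| u_\parallel \|_{H^1(\overline{\mu})}^2 + \| u_\perp \|_{H^1(\mu)}^2$ (which follows from $\pi u_\perp = 0$) yields the claim uniformly in $\gamma \geq 1$.

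The main obstacle is the $H^2$-estimate on $u_\perp$: the commutator $[\nabla_p, \cL_\gamma] = M^{-1} \nabla_q - \gamma M^{-1} \nabla_p$ contains the very large factor $\gamma$ one is trying to control. The natural resolution is to test the differentiated equation with a $\gamma$-weighted energy in which $\nabla_p$ derivatives carry a factor $\gamma^{-1/2}$, so that the problematic $\gamma$-dependent contribution is absorbed into the dissipation $\gamma \| \nabla_p \cdot \|^2_{L^2(\mu)}$ supplied by the $\gamma C$ part of the generator, while the resulting norm remains equivalent to the standard $H^1(\mu)$-norm.
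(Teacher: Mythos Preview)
Your approach is genuinely different from the paper's, and is closer in spirit to the proof of Theorem~\ref{lem:bounds_CL_gamma} (for which the present lemma is the key input). The paper proceeds by pure hypocoercivity: it writes $\mathcal{L}_\gamma = \mathcal{L}_1 + (\gamma-1)C$, invokes the known coercivity $-\langle\langle u,\mathcal{L}_1 u\rangle\rangle \geq \widetilde{\alpha}\langle\langle u,u\rangle\rangle$ in a modified inner product $\langle\langle\cdot,\cdot\rangle\rangle$ equivalent to $H^1(\mu)$ (built with a cross term $-2\langle\nabla_p u,\nabla_q v\rangle$ and weights $a\gg b\gg 1$), and then shows the sign condition $-\langle\langle u,Cu\rangle\rangle\geq 0$ for $u\in\cH^1_\perp$. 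This last step is where $\pi u=0$ enters: it forces $\partial_{q_{i,\alpha}}u$ to have zero $\kappa$-mean, so the Gaussian Poincar\'e inequality bounds $\|\partial_{q_{i,\alpha}}u\|$ by $\|\nabla_p\partial_{q_{i,\alpha}}u\|$ and absorbs the single negative commutator contribution. Uniform coercivity for all $\gamma\geq 1$ follows in one stroke, with no bootstrap.

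Your macroscopic/microscopic decomposition and the $L^2$-smallness $\|u_\perp\|^2\leq K\gamma^{-1}\|f\|\,\|u\|$ are correct, and the effective elliptic equation for $u_\parallel$ is exactly the mechanism behind~\eqref{eq:divergent_behavior_Lgamma}. The gap is the asserted bound $\|u_\perp\|_{H^2(\mu)}\leq K\gamma^{-1}(\|f\|_{H^1}+\|u_\parallel\|_{H^2})$. You correctly identify that $[\nabla_p,\mathcal{L}_\gamma]$ contains a $\gamma$-large term and propose weighting $\nabla_p$ by $\gamma^{-1/2}$; but such a weighted norm is \emph{not} uniformly equivalent to the standard $H^1(\mu)$ norm as $\gamma\to\infty$, so the passage back to unweighted norms at the end would lose precisely the $\gamma$-independence you need. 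Moreover, the elliptic right-hand side $\pi(A+B)C^{-1}(\Id-\pi)(A+B)u_\perp$ genuinely costs two derivatives of $u_\perp$, so the $H^2$ control is not an artefact of the bookkeeping and really must be established. In effect you are trying to close the same asymptotic expansion that the paper uses to prove Theorem~\ref{lem:bounds_CL_gamma} \emph{from} Lemma~\ref{lem:bounded_resolvent_perp}, without the lemma as an input; the paper's modified-inner-product argument is what supplies that missing uniform estimate directly at the $H^1$ level.
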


The proofs of Theorem~\ref{lem:bounds_CL_gamma} and Lemma~\ref{lem:bounded_resolvent_perp} are provided in Section~\ref{sec:proof_L_gamma_large}.

\subsection{Splitting schemes for equilibrium Langevin dynamics}
\label{sec:splitting_schemes}

We present in this section the splitting schemes to be examined in this article. These schemes can be described by evolution operators $P_\dt$ defined on the core~$\mathcal{S}$ (but which can be  extended to bounded operators on~$L^\infty(\mathcal{E})$), and which are such that the Markov chain $(q^n,p^n)$ generated by the discretization satisfies
\[
P_\dt \psi(q,p) = \mathbb{E}\Big( \psi\left(q^{n+1},p^{n+1}\right)\Big| (q^n,p^n) = (q,p)\Big).
\]
We also briefly give some ergodicity results obtained by minor extensions or variations of existing results in the literature (see in particular~\cite{MSH01,Talay02,BO10,BH13}). Since these ergodicity issues are by now a rather standard and well-understood matter, especially for compact position spaces, we provide only elements of proofs in Section~\ref{sec:proof_ergodicity_MC}.

\subsubsection{First-order splitting schemes}
\label{sec:splitting_schemes_1st}

First-order schemes are obtained by a Lie-Trotter splitting of the elementary evolutions generated by $A,B,\gamma C$. The motivation for this splitting is that all elementary evolutions are analytically integrable (see the expressions of the associated semigroups in~\eqref{eq:analytic_expressions_semigroups}). There are 6 possible schemes, whose evolution operators (defined on the core~$\mathcal{S}$) are of the general form
\[
P^{Z,Y,X}_\dt = \rme^{\dt Z} \rme^{\dt Y} \rme^{\dt X}, 
\]
with all possible permutations $(Z,Y,X)$ of $(A, B, \gamma C)$. For instance, the numerical scheme associated with $P_\dt^{B,A,\gamma C}$ is
\begin{equation}
\label{eq:Langevin_splitting}
\left\{ \begin{aligned}
\widetilde{p}^{n+1} & = p^n - \dt \, \nabla V(q^{n}), \\
q^{n+1} & = q^n + \dt \, M^{-1} \widetilde{p}^{n+1}, \\
p^{n+1} & = \alpha_\dt \widetilde{p}^{n+1} + \sqrt{\frac{1-\alpha^2_\dt}{\beta}M} \, G^n, 
\end{aligned} \right.
\end{equation}
where $\alpha_\dt = \exp(-\gamma M^{-1} \dt)$, and $(G^n)$ are independent and identically distributed Gaussian random vectors with identity covariance. The simulation of the dynamics with generator~$C$ is very simple for diagonal mass matrix~$M$ since $\alpha_\dt$ is a diagonal matrix.    Note that the order of the operations performed on the configuration of the system is the inverse of the order of the operations mentioned in the superscript of the evolution operator $P_\dt^{B,A,\gamma C}$ when read from right to left. This inversion is known as \emph{Vertauschungssatz} (see for instance the discussion in Section~III.5.1 of~\cite{HairerLubichWanner06}). It arises from the fact that the numerical method modifies the distribution of the variables, whereas the evolution operator encodes the evolution of observables (determined by the adjoint of the operator encoding the evolution of the distribution).


The iterations of the three schemes associated with $P^{\gamma C,B,A}_\dt,P^{B,A,\gamma C}_\dt,P^{A,\gamma C,B}_\dt$ share a common sequence of update operations, as for $P^{\gamma C,A,B}_\dt,P^{A,B,\gamma C}_\dt,P^{B,\gamma C,A}_\dt$. More precisely, we mean that equalities of the following form hold:
\begin{equation}
\label{eq:similar_evolution_operators}
\left(P^{A,B,\gamma C}_\dt\right)^n = T_{ \dt} \left(P^{\gamma C,A,B}_\dt\right)^{n-1} U_{\gamma,\dt}, 
\qquad 
U_{\gamma,\dt} = \rme^{\gamma \dt C},
\qquad
T_{ \dt} = \rme^{\dt A} \rme^{\dt B}.
\end{equation}
It is therefore not surprising that the invariant measures of the schemes with operators composed in the same order have very similar properties, as made precise in Theorem~\ref{thm:error_first_order_schemes}, relying on Lemma~\ref{lem:TU}.

\subsubsection{Second-order schemes}
\label{sec:splitting_schemes_2nd}

Second-order schemes are obtained by a Strang splitting of the elementary evolutions generated by $A,B,\gamma C$. There are also 6 possible schemes, which are of the general form
\[
P^{Z,Y,X,Y,Z}_\dt = \rme^{\dt Z/2} \rme^{\dt Y/2} \rme^{\dt X} \rme^{\dt Y/2} \rme^{\dt Z/2}, 
\]
with the same possible orderings as for first-order schemes. Again, these schemes can be classified into three groups depending on the ordering of the operators once the elementary one-step evolution is iterated: (i)~$P_\dt^{\gamma C,B,A,B,\gamma C}, P_\dt^{A,B,\gamma C,B,A}$, (ii)~$P_\dt^{\gamma C,A,B,A,\gamma C}, P_\dt^{B,A,\gamma C,A,B}$, and (iii)~$P_\dt^{B,\gamma C,A,\gamma C,B}, P_\dt^{A,\gamma C,B,\gamma C,A}$. We discard the latter category since the invariant measures of the associated numerical schemes are not consistent with $\overline{\mu}$ in the overdamped limit (see Section~\ref{sec:ovd_limit}).

\subsubsection{Geometric Langevin Algorithms}
\label{sec:splitting_schemes_GLA}

In fact, as already proved in~\cite{BO10} (see also Corollary~\ref{cor:error_GLA} below), second order accuracy of the invariant measure can be obtained by resorting to a first-order splitting between the Hamiltonian and the Ornstein-Uhlenbeck parts, and discretizing the Hamiltonian part with a second-order scheme. This corresponds to the following evolution operators of Geometric Langevin Algorithm (GLA) type:
\begin{equation}
\label{eq:GLA_schemes}
\begin{aligned}
P_\dt^{\gamma C,A,B,A} = \rme^{\gamma \dt C} \rme^{\dt A /2} \rme^{\dt B} \rme^{\dt A/2}, & \qquad P_\dt^{\gamma C,B,A,B} = \rme^{\gamma \dt C} \rme^{\dt B /2} \rme^{\dt A} \rme^{\dt B/2}, \\
P_\dt^{A,B,A,\gamma C} = \rme^{\dt A /2} \rme^{\dt B} \rme^{\dt A/2} \rme^{\gamma \dt C}, & \qquad P_\dt^{B,A,B,\gamma C} = \rme^{\dt B /2} \rme^{\dt A} \rme^{\dt B/2} \rme^{\gamma \dt C}.
\end{aligned}
\end{equation}

\subsection{Ergodicity results for splitting schemes}
\label{sec:ergo_num_scheme}

Let us now give some technical results on the ergodic behavior of the splitting schemes presented in Section~\ref{sec:splitting_schemes}.In this section we denote the evolution operator by $P_\dt$   (supressing the dependence on the friction parameter~$\gamma$ although the constants appearing in the results below a priori depend on this parameter). Ergodicity results for a fixed value of $\dt$ are obtained with techniques similar to the ones presented in~\cite{MeynTweedie}, by mimicking the proofs presented for certain discretization schemes of the Langevin equation in~\cite{MSH01,Talay02,BO10}. A more subtle point is to obtain rates of convergence which are uniform in the timestep~$\dt$, as done in~\cite{BH13} for a class of Metropolis-Hastings schemes based on a discretization of overdamped Langevin dynamics in unbounded spaces as the proposal. We are able here to prove such results by relying on the fact that the position space~$\mathcal{M}$ is compact.

\bigskip

The proof is based on two preliminary results, namely a uniform drift inequality or Lyapunov condition and a uniform minorization condition (see Section~\ref{sec:proof_ergodicity_MC} for the proofs). The term uniform refers to estimates which are independent of the timestep~$\dt$. To obtain such estimates, we have to consider evolutions over fixed times $T \simeq n\dt$, which amounts to iterating the elementary evolution $P_\dt$ over $\lceil T/\dt \rceil$ timesteps (where $\lceil x \rceil$ denotes the smallest integer larger than~$x$).

\begin{lemma}[Uniform Lyapunov condition]
\label{lem:Lyapunov}
For any $s^* \in \mathbb{N}^*$, there exist $\dt^* > 0$ and $C_a,C_b >0$ such that, for any $1 \leq s \leq s^*$ and $0 < \dt \leq \dt^*$,
\begin{equation}
\label{eq:moment_estimates}
P_\dt \Li_s \leq \rme^{-C_a \dt} \Li_s + C_b \dt. 
\end{equation}
In particular, for any $T > 0$,
\begin{equation}
\label{eq:moment_estimates_uniform}
P_\dt^{\lceil T/\dt \rceil} \Li_s \leq \exp(-C_a T) \Li_s + \frac{C_b \dt}{1 - \rme^{-C_a \dt}}.
\end{equation}
\end{lemma}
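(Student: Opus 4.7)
The plan is to derive the one-step bound~\eqref{eq:moment_estimates} by estimating the action of each elementary propagator $\rme^{\dt A}$, $\rme^{\dt B}$, $\rme^{\gamma\dt C}$ on $\Li_s(q,p) = 1 + |p|^{2s}$, composing them in the prescribed order, and choosing a small auxiliary Young parameter so that the contraction coming from the Ornstein--Uhlenbeck $C$-step dominates the expansion coming from the force $B$-step. The iterated bound~\eqref{eq:moment_estimates_uniform} then follows by a direct geometric summation.

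\textbf{Elementary bounds.} Since $\Li_s$ depends only on $p$, the kinetic step $\rme^{\dt A}$ acts as the identity on $\Li_s$. For the force step, $\rme^{\dt B}\Li_s(q,p) = 1 + |p - \dt\nabla V(q)|^{2s}$; using $F := \|\nabla V\|_{L^\infty(\mathcal{M})} < \infty$ (finite by Assumption~1 and compactness of~$\mathcal{M}$), the fundamental theorem of calculus applied to $t \mapsto |p - t\nabla V(q)|^{2s}$, and Young's inequality $|p|^{2s-1} \leq \eta |p|^{2s} + C_\eta$ for arbitrary $\eta>0$, yield for $\dt \leq 1$
\[
\rme^{\dt B}\Li_s \leq (1 + \eta c_B \dt)\,\Li_s + C_B(\eta)\,\dt,
\]
with $c_B$ depending on $s$ and $F$ but independent of $\eta$. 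For the OU step, taking $M = \Id$ for notational simplicity (the general diagonal case is identical up to constants), set $\alpha_\dt = \rme^{-\gamma\dt}$, $\sigma_\dt^2 = \beta^{-1}(1-\alpha_\dt^2)$ and decompose a standard Gaussian $G \in \RR^{dN}$ as $G = Z\,p/|p| + G_\perp$ with $Z \sim \mathcal{N}(0,1)$ and $G_\perp \perp p$. Then
\[
|\alpha_\dt p + \sigma_\dt G|^2 = (\alpha_\dt |p| + \sigma_\dt Z)^2 + \sigma_\dt^2 |G_\perp|^2.
\]
Expanding $(a^2+b^2)^s$ by the binomial theorem and taking expectation, the leading term is exactly $\alpha_\dt^{2s}|p|^{2s}$, while every correction carries a factor $\sigma_\dt^{2j}$ with $j \geq 1$, hence is $\mathrm{O}(\dt)$ times a polynomial in $|p|$ of degree at most $2s-2$. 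Combined with $\alpha_\dt^{2s} \leq 1 - s\gamma\dt$ and $1 - \alpha_\dt^{2s} \leq 2s\gamma\dt$ for $\dt$ small, and Young's inequality once more, this yields
\[
\rme^{\gamma\dt C}\Li_s \leq (1 - s\gamma\dt + \eta' c_C \dt)\,\Li_s + C_C(\eta')\,\dt,
\]
with $c_C$ independent of $\eta'$.

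\textbf{Composition and iteration.} Every operator $P_\dt$ of Section~\ref{sec:splitting_schemes} is a product of at most five elementary propagators of the form $\rme^{hA}$, $\rme^{hB}$, $\rme^{\gamma h C}$ with $h \in \{\dt,\dt/2\}$, and two half-steps of $B$ or of $\gamma C$ contribute the same linear-in-$\dt$ coefficient as one full step. Applying the three elementary bounds sequentially, the prefactor of $\Li_s$ in $P_\dt \Li_s$ expands as $1 + (\eta c_B + \eta' c_C - s\gamma)\dt + \mathrm{O}(\dt^2)$. Choosing $\eta,\eta'$ small enough that $\eta c_B + \eta' c_C \leq s\gamma/2$ bounds this prefactor by $\rme^{-C_a \dt}$ for some $C_a > 0$ and $\dt \leq \dt^*$ small enough; taking $C_a,\dt^*$ uniform over $1 \leq s \leq s^*$ gives~\eqref{eq:moment_estimates}. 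Iterating,
\[
P_\dt^n \Li_s \leq \rme^{-n C_a \dt}\Li_s + C_b \dt \,\frac{1 - \rme^{-n C_a \dt}}{1 - \rme^{-C_a \dt}},
\]
and substituting $n = \lceil T/\dt \rceil$ together with the crude bound $1 - \rme^{-n C_a\dt} \leq 1$ yields~\eqref{eq:moment_estimates_uniform}.

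\textbf{Main obstacle.} The delicate point is that both the $B$-expansion and the $\gamma C$-contraction act at order $\dt$ on the prefactor of $\Li_s$: the OU term $-s\gamma \dt$ only barely beats the $B$-term. The argument succeeds because the dangerous contributions in $\rme^{\dt B}\Li_s - \Li_s$ and in $\rme^{\gamma \dt C}\Li_s - \alpha_\dt^{2s}\Li_s$ are polynomials in~$|p|$ of degree strictly less than~$2s$, and can therefore be absorbed into $\eta \Li_s$ with $\eta$ arbitrarily small via Young's inequality, at the price of a finite $\eta$-dependent additive constant. It is precisely this flexibility that allows the $C$-contraction to dominate uniformly in $\dt$ as $\dt \to 0$.
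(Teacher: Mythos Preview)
Your argument is correct and rests on the same core mechanism as the paper's proof: the Ornstein--Uhlenbeck step contracts $\Li_s$ by a factor $1 - \mathrm{O}(\gamma\dt)$, while the $B$-step (and the fluctuation part of the $C$-step) produce corrections that are $\mathrm{O}(\dt)$ times polynomials in $|p|$ of degree strictly below $2s$, which Young's inequality absorbs into $\eta\Li_s$ with $\eta$ arbitrarily small. The iteration to~\eqref{eq:moment_estimates_uniform} is identical.

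The organization differs. The paper fixes one scheme ($P_\dt^{B,A,\gamma C}$), writes the one-step map $p^{n+1} = \alpha_\dt p^n + \delta_\dt$ explicitly, and expands $(p_i^{n+1})^{2s}$ componentwise (exploiting that $M$ is diagonal); the other schemes are declared similar. Your approach is more modular: you bound each elementary semigroup $\rme^{\dt A}, \rme^{\dt B}, \rme^{\gamma\dt C}$ on $\Li_s$ separately and compose, using positivity of the propagators. This covers all first- and second-order splittings of Section~\ref{sec:splitting_schemes} uniformly without repeating the calculation. Your treatment of the $C$-step via the orthogonal decomposition $G = Z\,p/|p| + G_\perp$ is also a bit cleaner than the componentwise expansion, as it yields directly a bound on $|p|^{2s}$ rather than on $\sum_i (p_i)^{2s}$.
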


\begin{lemma}[Uniform minorization condition]
\label{lem:minorization}
Consider $T > 0$ sufficiently large, and fix any $p_{\rm max} > 0$. There exist $\dt^*, \alpha > 0$ and a probability measure~$\nu$ such that, for any bounded, measurable non-negative function $f$, and any $0 < \dt \leq \dt^*$,
\[
\inf_{|p| \leq p_{\rm max}} \left( P_\dt^{\lceil T/\dt \rceil}f \right)(q,p) \geq \alpha \int_\cE f(q,p) \, \nu(\d{q} \, \d{p}).
\]
\end{lemma}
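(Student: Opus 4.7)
The plan is to show that, for $n = \lceil T/\dt \rceil$ with $T$ large enough and $\dt \leq \dt^*$, the law of $(q^n,p^n)$ conditional on $(q^0,p^0) = (q,p)$ with $q \in \mathcal{M}$ and $|p| \leq p_{\rm max}$ admits a density with respect to Lebesgue measure which is uniformly bounded below by a strictly positive constant on a fixed compact set $K \subset \cE$. The minorization then follows at once by choosing $\nu$ to be the normalized Lebesgue measure on $K$ and $\alpha$ proportional to the lower bound times the volume of $K$.

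First I would write one step of the scheme in the form
\[
(q^{k+1},p^{k+1}) = \Phi_\dt\bigl(q^k,p^k\bigr) + \Sigma_\dt(q^k,p^k) \, G^k,
\]
where $\Phi_\dt$ collects the deterministic substeps, $G^k \sim \mathcal{N}(0,I_{dN})$ is the Gaussian vector used in the exact integration of the Ornstein--Uhlenbeck substep $\rme^{\gamma \dt C}$, and $\Sigma_\dt$ is an affine coefficient of order $\sqrt{\dt}$. The crucial structural feature shared by all the splittings listed in Section~\ref{sec:splitting_schemes} is that $G^k$ is injected into the momentum variable and is then transferred to the positions at later steps through the subsequent $\rme^{\dt A}$ substeps. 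Iterating gives a smooth map
\[
(q^n,p^n) = \Psi^{(n)}_\dt\bigl(q^0,p^0,G^0,\dots,G^{n-1}\bigr).
\]

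Second, I would treat the reference case $V \equiv 0$. The iteration is then affine, so $(q^n,p^n)$ is an explicit Gaussian vector whose covariance can be computed in closed form. For $T$ sufficiently large and $\dt$ sufficiently small, the spectrum of this covariance is bounded above and below by constants depending only on $T$, $\gamma$, $\beta$ and $M$: the momentum variance converges to the equilibrium value $M/\beta$, while the position variance is of order $T/(\gamma\beta)$ thanks to the transfer of the Ornstein--Uhlenbeck noise through the free flight substeps. This yields a Gaussian lower bound on the transition density on any fixed compact set $K \subset \cE$, uniform in $\dt \leq \dt^*$ and in $|p^0| \leq p_{\rm max}$, $q^0 \in \mathcal{M}$.

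Third, for the general smooth potential $V$, I would use a discrete Girsanov-type change of measure comparing the $V$-chain to the $V \equiv 0$ chain: the Radon--Nikodym derivative between the joint laws of $(G^0,\dots,G^{n-1})$ under the two schemes has log-density consisting of $n = \lceil T/\dt \rceil$ terms, each of order $\dt$ in expectation since $\nabla V$ is bounded on the compact torus $\mathcal{M}$ and the intermediate momenta stay in a bounded region on an event of uniformly positive probability via the Lyapunov estimate~\eqref{eq:moment_estimates_uniform}. The main difficulty lies precisely here: the number of noise increments $n$ diverges as $\dt \to 0$, and one must ensure that the accumulated Girsanov exponent does not blow up. Once this is controlled, the Girsanov density is bounded above and below by constants independent of $\dt \leq \dt^*$ on a tube around a well-chosen deterministic reference trajectory, which transfers the previous Gaussian lower bound to the original chain on a (possibly smaller) compact set, completing the argument.
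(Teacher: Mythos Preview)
Your first two steps match the paper closely: the paper also writes $(q^n,p^n)$ explicitly in terms of the initial condition, the accumulated forces $\nabla V(q^k)$, and the Gaussian increments $G^k$, and it computes in closed form the covariance $\mathscr{V}^n$ of the free ($V\equiv 0$) Gaussian part, showing that it converges to the covariance of the continuous free Langevin process and is therefore uniformly nondegenerate for $T$ large and $\dt \leq \dt^*$.

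The genuine divergence is in step three. The paper does \emph{not} use any Girsanov-type change of measure. Instead it observes that the full state decomposes \emph{additively} as
\[
(q^n,p^n) = (\mathscr{Q}^n,\mathscr{P}^n) + (\widetilde{\mathcal{G}}^n,\mathcal{G}^n),
\]
where $(\widetilde{\mathcal{G}}^n,\mathcal{G}^n)$ is exactly the centered Gaussian from the free dynamics, and the random shift $(\mathscr{Q}^n,\mathscr{P}^n)$ (which depends on the trajectory through the forces $\nabla V(q^k)$) satisfies $|\mathscr{Q}^n|,|\mathscr{P}^n| \leq R$ uniformly in $n \leq T/\dt$, $\dt \leq \dt^*$ and $|p^0|\leq p_{\rm max}$, simply because $\nabla V$ is bounded on the compact torus and the geometric sums $\sum_k \alpha_\dt^k$ are controlled. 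The minorization then follows by writing $\mathbb{P}\bigl((q^n,p^n)\in A\bigr) = \mathbb{P}\bigl((\widetilde{\mathcal{G}}^n,\mathcal{G}^n) \in A - (\mathscr{Q}^n,\mathscr{P}^n)\bigr)$ and bounding the Gaussian density from below uniformly over all shifts of size at most $R$. This entirely bypasses the difficulty you flagged---the diverging number of noise increments---because nothing is summed in an exponent; the force contributions are summed \emph{linearly} and the resulting shift stays bounded.

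Your Girsanov route could plausibly be made to work (the shifts are of order $\sqrt{\dt}$ and there are $O(1/\dt)$ of them, so the exponent should be $O(T)$), but you have not actually carried out that control, and there is an additional wrinkle: in several of the splittings the force term also feeds into $q^{n+1}$ at order $\dt^2$ in the same step, which is not directly absorbed by shifting the momentum noise. The paper's additive-shift argument is both simpler and more robust; I would recommend adopting it in place of your step three.
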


Lemma~\ref{lem:minorization} ensures that Assumption~2 in~\cite{HM11} holds for any choice of Lyapunov function~$\Li_s$ ($s \geq 1$), provided $p_{\rm max}$ is chosen to be sufficiently large. The uniform minorization condition can formally be rewritten as
\[
\forall (q_0,p_0) \in \mathcal{M} \times B(0,p_{\rm max}), 
\qquad 
P_\dt\Big((q_0,p_0),\d{q}\,\d{p}\Big) \geq \alpha \nu(\d{q}\,\d{p}). 
\]
We present a direct proof of Lemma~\ref{lem:minorization} in Section~\ref{sec:proof_ergodicity_MC}. Extending this result to unbounded position spaces is much more difficult in general, see for instance the recent works~\cite{KV06,KV13} and~\cite{BH13} where non-degeneracy of the noise is assumed.

\medskip

Let us now precisely state the ergodicity result. 

\begin{proposition}[Ergodicity of numerical schemes]
\label{prop:ergodicity_MC}
Fix $s^* \geq 1$. For any $0 < \gamma < +\infty$, there exists $\dt^*>0$ such that, for any $0 < \dt \leq \dt^*$, the Markov chain associated with $P_\dt$ has a unique invariant probability measure $\mu_{\gamma,\dt}$, which admits a density with respect to the Lebesgue measure $\d{q} \, \d{p}$, and has finite moments: There exists $R>0$ such that, for any $1 \leq s \leq s^*$,
\begin{equation}
\label{eq:moment_estimate}
\int_\cE \Li_s \, \d{\mu}_{\gamma,\dt} \leq R < +\infty,
\end{equation}
uniformly in the timestep $\dt$. There also exist $\lambda, K > 0$ (depending on $s^*$ and $\gamma$ but not on~$\dt$) such that, and for all functions $f \in L^\infty_{\Li_s}$, the following holds for almost all $(q,p) \in \cE$:
\begin{equation}
\label{eq:ergodicity_num}
\forall n \in \mathbb{N}, \qquad \left| \left(P_\dt^n f\right)(q,p) - \int_\cE f \d{\mu}_{\gamma,\dt} \right| \leq K \, \Li_s(q,p) \, \rme^{-\lambda n \dt} \, \| f \|_{L^\infty_{\Li_s}}.
\end{equation}
\end{proposition}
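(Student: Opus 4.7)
The plan is to deduce Proposition~\ref{prop:ergodicity_MC} from Lemmas~\ref{lem:Lyapunov} and~\ref{lem:minorization} by invoking the Hairer--Mattingly version of Harris' ergodic theorem applied to a \emph{skeleton chain}. Fix $\gamma > 0$ and choose $T > 0$ large enough for Lemma~\ref{lem:minorization} to hold. Set $N_\dt = \lceil T/\dt \rceil$ and work with the iterated kernel $Q_\dt := P_\dt^{N_\dt}$. The uniform summation~\eqref{eq:moment_estimates_uniform} yields a contracted Lyapunov condition $Q_\dt \Li_s \leq \tilde\gamma \Li_s + \tilde K$ with $\tilde\gamma = \rme^{-C_a T} < 1$ strictly and $\tilde K \leq C_b/C_a + C_b\dt^*$ uniform in $\dt \leq \dt^*$. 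Any sublevel set $\{\Li_s \leq R_0\}$ is contained in $\mathcal{M} \times \{|p| \leq p_{\max}\}$ for an explicit $p_{\max}(R_0)$, so Lemma~\ref{lem:minorization} provides a uniform minorization $Q_\dt((q,p),\cdot) \geq \alpha\, \nu$ on that set.

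With these two ingredients, the coupling-based Harris theorem (in the form of Hairer and Mattingly's ``Yet another look at Harris' ergodic theorem'') applied to $Q_\dt$ yields existence and uniqueness of an invariant probability measure $\mu_{\gamma,\dt}$, together with a geometric contraction $\| Q_\dt^n f - \mu_{\gamma,\dt}(f)\|_{L^\infty_{\Li_s}} \leq \tilde K_1 \rho^n \|f\|_{L^\infty_{\Li_s}}$ for some $\rho \in (0,1)$ and $\tilde K_1$ depending only on $\tilde \gamma$, $\tilde K$ and $\alpha$, hence independent of~$\dt$. The moment bound~\eqref{eq:moment_estimate} is obtained by integrating the Lyapunov inequality against $\mu_{\gamma,\dt}$ after a preliminary truncation $\Li_s \wedge R$, which justifies finiteness, and then passing to the limit $R \to \infty$ by monotone convergence.

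To promote $\mu_{\gamma,\dt}$ to an invariant measure of $P_\dt$ itself, consider $\tilde \mu = N_\dt^{-1}\sum_{k=0}^{N_\dt-1} \mu_{\gamma,\dt} P_\dt^k$: it is both $P_\dt$-invariant and $Q_\dt$-invariant, hence equal to $\mu_{\gamma,\dt}$ by the uniqueness just established. Uniqueness for $P_\dt$ follows similarly since any $P_\dt$-invariant probability is $Q_\dt$-invariant. For the decay rate~\eqref{eq:ergodicity_num}, write $n = j N_\dt + r$ with $0 \leq r < N_\dt$, iterate the one-step Lyapunov inequality~\eqref{eq:moment_estimates} at most $N_\dt$ times to bound $\| P_\dt^r f\|_{L^\infty_{\Li_s}}$ by $\|f\|_{L^\infty_{\Li_s}}$ times a constant depending only on $T$, and then apply the geometric contraction of $Q_\dt^j$. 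Absolute continuity of $\mu_{\gamma,\dt}$ with respect to the Lebesgue measure follows from the identity $\mu_{\gamma,\dt} = \mu_{\gamma,\dt} P_\dt^m$ via an explicit change of variables: for $m$ large enough, the Gaussian smoothing produced by the factors $\rme^{\gamma \dt C}$ is transported by the streaming $\rme^{\dt A}$ into a non-degenerate smoothing in both $q$ and $p$, so that $P_\dt^m((q,p),\cdot)$ admits a smooth density.

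The main obstacle is to ensure that the constants $\tilde\gamma,\tilde K,\alpha,\rho$ governing the contraction do not degenerate as $\dt \to 0$; this is precisely why the Lyapunov and minorization conditions have been stated in uniform form and why one works with the fixed-time skeleton $Q_\dt$ rather than with $P_\dt$ directly. A secondary subtlety is the density argument for those splittings in which the noise is injected only on the momentum variable: one must verify case by case that the deterministic $A$-step propagates the $p$-noise into the $q$-variable after a few iterations, which is a discrete hypoellipticity property established by writing out the composition of the elementary maps and applying the change-of-variables formula.
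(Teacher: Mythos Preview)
Your proposal is correct and follows essentially the same route as the paper: both apply the Hairer--Mattingly version of Harris' theorem to the skeleton chain $Q_\dt = P_\dt^{\lceil T/\dt\rceil}$, using Lemmas~\ref{lem:Lyapunov} and~\ref{lem:minorization} for the uniform drift and minorization conditions, and then transfer the resulting geometric contraction back to $P_\dt^n$ by writing $n = jN_\dt + r$. Your write-up is in fact slightly more careful than the paper's on several minor points (the averaging argument to show $\mu_{\gamma,\dt}$ is $P_\dt$-invariant, the use of the one-step Lyapunov inequality to control the residual $P_\dt^r$, and the truncation $\Li_s\wedge R$ for the moment bound), all of which the paper either leaves implicit or states somewhat loosely.
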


Let us again emphasize that, compared to the results of~\cite{MSH01,Talay02,BO10}, the only new estimate is the uniform-in-$\dt$ decay rate in~\eqref{eq:ergodicity_num} as obtained in~\cite{BH13} for Metropolis schemes. These uniform estimates follow from an application of the results of~\cite{HM11} to the sampled chain $P_\dt^{\lceil T/\dt \rceil}$ (see Section~\ref{sec:proof_ergodicity_MC} for more detail). Recall also that the convergence rates we obtain of course depend on the friction parameter~$\gamma$.

\bigskip

An interesting consequence of the above estimates is that we are able to obtain  uniform control of the resolvent of the operator $\Id-P_\dt$ extended to appropriate Banach spaces. Such a bound will prove useful to control approximation errors in Green-Kubo type formulas (see Section~\ref{sec:num_estimation_correction}). Note indeed that the estimate~\eqref{eq:ergodicity_num} implies the operator bound 
\[
\left \| P_\dt^n \right\|_{\mathcal{B}(L^\infty_{\Li_s,\dt})} \leq K \, \rme^{-\lambda n \dt},
\]
on the Banach space
\[
L^\infty_{\Li_s,\dt} = \left\{ \psi \in L^\infty_{\Li_s} \, \left| \, \int_\cE \psi \, \d{\mu}_{\gamma,\dt} = 0 \right. \right\}.
\]
The Banach space $L^\infty_{\Li_s,\dt}$ depends both on~$\dt$ and $\gamma$ through $\mu_{\gamma,\dt}$, although the dependence on $\gamma$ is not explicitly written. This proves that the series
\[
\sum_{n=0}^{+\infty} P_\dt^n 
\]
is well defined as a bounded operator on $L^\infty_{\Li_s,\dt}$, and is in fact equal to $(\Id-P_\dt)^{-1}$ since
\[
(\Id-P_\dt) \sum_{n=0}^{+\infty} P_\dt^n = \Id.
\]
We also have the bound
\[
\left\| \left(\Id - P_\dt\right)^{-1} \right \|_{\mathcal{B}(L^\infty_{\Li_s,\dt})} \leq \sum_{n=0}^{+\infty} \left \| P_\dt^n \right\|_{\mathcal{B}(L^\infty_{\Li_s,\dt})} \leq \frac{K}{1-\rme^{-\lambda \dt}} \leq \frac{2K}{\lambda \dt}
\]
provided $\dt$ is sufficiently small. Let us summarize this result as follows.

\begin{corollary}
\label{corr:resolvent_estimates_I_Pdt}
For any $s^* \in \mathbb{N}^*$, there exist $\dt^* > 0$ and $R > 0$ such that, for all $0 \leq s \leq s^*$, a uniform resolvent bound holds: for any $0 < \dt \leq \dt^*$, 
\begin{equation}
\label{eq:unif_bound_Linf_Lis}
\left\| \left(\frac{\Id - P_\dt}{\dt}\right)^{-1} \right \|_{\mathcal{B}(L^\infty_{\Li_s,\dt})} \leq R.
\end{equation}
\end{corollary}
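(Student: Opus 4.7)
The proof is essentially a reading off from the geometric-series argument that appears just above the statement, and I would present it as follows.

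The plan is to use the uniform-in-$\dt$ exponential decay estimate \eqref{eq:ergodicity_num} of Proposition~\ref{prop:ergodicity_MC} as the single analytic input. First I would observe that since $P_\dt$ preserves the invariant probability measure $\mu_{\gamma,\dt}$, it stabilizes the closed subspace $L^\infty_{\Li_s,\dt}$. For $f \in L^\infty_{\Li_s,\dt}$, the average $\int_\cE f\,\d\mu_{\gamma,\dt}$ is zero, so the right-hand side of \eqref{eq:ergodicity_num} applied to $f$ yields directly the operator bound
\[
\bigl\| P_\dt^n \bigr\|_{\mathcal{B}(L^\infty_{\Li_s,\dt})} \leq K \, \rme^{-\lambda n \dt},
\]
with $\lambda$ and $K$ independent of $\dt$ for $\dt \leq \dt^*$.

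Next I would assemble the Neumann series $S_\dt = \sum_{n=0}^{+\infty} P_\dt^n$, which converges absolutely in $\mathcal{B}(L^\infty_{\Li_s,\dt})$ thanks to the geometric bound above, with norm dominated by $K/(1-\rme^{-\lambda \dt})$. A telescoping computation shows $(\Id - P_\dt) S_\dt = S_\dt(\Id - P_\dt) = \Id$ on $L^\infty_{\Li_s,\dt}$, hence $S_\dt = (\Id - P_\dt)^{-1}$. Therefore
\[
\bigl\| (\Id - P_\dt)^{-1} \bigr\|_{\mathcal{B}(L^\infty_{\Li_s,\dt})} \leq \frac{K}{1-\rme^{-\lambda \dt}}.
\]
For $\dt$ small enough one has $1-\rme^{-\lambda \dt} \geq \lambda \dt / 2$, so multiplying by $\dt$ on both sides and rearranging gives
\[
\left\| \left(\frac{\Id - P_\dt}{\dt}\right)^{-1} \right\|_{\mathcal{B}(L^\infty_{\Li_s,\dt})} = \dt \, \bigl\| (\Id - P_\dt)^{-1} \bigr\|_{\mathcal{B}(L^\infty_{\Li_s,\dt})} \leq \frac{2K}{\lambda} =: R,
\]
which is the desired uniform bound.

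The only delicate point is conceptual rather than computational: one needs the decay constant $\lambda$ in \eqref{eq:ergodicity_num} to be independent of $\dt$, which is precisely the refinement of the ergodicity result (relative to earlier works such as~\cite{MSH01,Talay02,BO10}) obtained by applying the Hairer–Mattingly framework to the sampled chain $P_\dt^{\lceil T/\dt \rceil}$ via the uniform Lyapunov condition of Lemma~\ref{lem:Lyapunov} and the uniform minorization condition of Lemma~\ref{lem:minorization}. Once that uniform decay is in hand, the corollary is a one-line Neumann series argument, valid for all $0 \leq s \leq s^*$ with the same constants (by choosing $\dt^*$ and $K,\lambda$ corresponding to the largest weight $s^*$, since $\Li_s \leq \Li_{s^*}$ whenever $s \leq s^*$).
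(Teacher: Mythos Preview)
Your proposal is correct and follows essentially the same argument as the paper: the paper derives the operator bound $\|P_\dt^n\|_{\mathcal{B}(L^\infty_{\Li_s,\dt})} \leq K\rme^{-\lambda n\dt}$ from \eqref{eq:ergodicity_num}, sums the Neumann series to identify $(\Id-P_\dt)^{-1}$, and bounds it by $K/(1-\rme^{-\lambda\dt}) \leq 2K/(\lambda\dt)$. Your additional remarks on why $P_\dt$ stabilizes $L^\infty_{\Li_s,\dt}$ and on the uniformity in $s\leq s^*$ are welcome clarifications but do not depart from the paper's line of reasoning.
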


\subsection{Error estimates for finite frictions}
\label{sec:error_estimates_finite_friction}

In this section we study the error of the average of sufficiently smooth functions, which allows us to characterize the corrections to the invariant measure. In Theorems~\ref{thm:error_first_order_schemes} and~\ref{thm:error_second_order_schemes}, below, we characterize all the first- and second-order splittings; the technique of proof allows us to provide a rigorous study of the error estimates in the overdamped regime (see Section~\ref{sec:ovd_limit}) and for nonequilibrium systems (see Section~\ref{sec:noneq_systems}). 

\begin{remark}
If only the order of magnitude of the correction is of interest, and not the expression of the correction in itself, no regularity result with regard to the derivatives is required (see~\cite{BO10}), in contrast to situations where such corrections are explicitly considered, as in~\cite{Talay02} for instance.
\end{remark}

\subsubsection{Relating invariant measures of two numerical schemes}

We classified in Section~\ref{sec:splitting_schemes} the numerical schemes according to the order of appearance of the elementary operators. More precisely, we considered schemes to be similar when the global ordering of the operators is the same but the operations are started and ended differently, as in~\eqref{eq:similar_evolution_operators} above (see also~\eqref{eq:relation_Q_P} below for an abstract definition).
This choice of classification is motivated by the following lemma which demonstrates how we may straightforwardly obtain the expression of the invariant measure of one scheme when the expression for another one is given.

We state the result in an abstract fashion for two schemes $P_\dt = U_\dt T_\dt$ and $Q_\dt = T_\dt U_\dt$ (which implies the condition~\eqref{eq:relation_Q_P} below). See~\eqref{eq:similar_evolution_operators} for a concrete example. 

\begin{lemma}[Here and elsewhere: TU lemma]
\label{lem:TU}
Consider two numerical schemes with associated evolution operators $P_\dt,Q_\dt$ bounded on~$L^\infty(\mathcal{E})$, for which there exist bounded operators $U_\dt,T_\dt$ on~$L^\infty(\mathcal{E})$ such that, for all $n \geq 1$,
\begin{equation}
\label{eq:relation_Q_P}
Q_\dt^n = T_\dt P_\dt^{n-1} U_\dt.
\end{equation}
We also assume that both schemes are ergodic with associated invariant measures denoted respectively by $\mu_{P,\dt}$, $\mu_{Q,\dt}$: For almost all $(q,p) \in \cE$ and $f \in L^\infty(\mathcal{E})$,
\begin{equation}
\label{eq:ergoditicy_TU}
\lim_{n \to +\infty} P_\dt^n f(q,p) = \int_\cE f \, \d{\mu}_{P,\dt}, 
\qquad 
\lim_{n \to +\infty} Q_\dt^n f(q,p) = \int_\cE f \, \d{\mu}_{Q,\dt}.
\end{equation}
Then, for all $\varphi \in L^\infty(\mathcal{E})$, 
\begin{equation}
\label{eq:TU_relation}
\int_\cE \varphi \, \d{\mu}_{Q,\dt} = \int_\cE \left(U_\dt \varphi \right) \d{\mu}_{P,\dt}.
\end{equation}
\end{lemma}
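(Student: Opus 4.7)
My plan is to let $n \to \infty$ in the identity $Q_\dt^n \varphi = T_\dt P_\dt^{n-1}(U_\dt \varphi)$, integrating against a suitable reference measure so that dominated convergence can be used and pointwise limits inside $T_\dt$ avoided.

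First I would fix $\varphi \in L^\infty(\cE)$, set $\psi := U_\dt\varphi \in L^\infty(\cE)$, and integrate the identity against the invariant measure $\mu_{P,\dt}$. For the left-hand side, $Q_\dt^n \varphi \to \int_\cE \varphi\, d\mu_{Q,\dt}$ Lebesgue-a.e.\ by~\eqref{eq:ergoditicy_TU}, and the sequence is uniformly bounded by $\|\varphi\|_{L^\infty}$ since $Q_\dt$ is Markov. As $\mu_{P,\dt}$ is absolutely continuous with respect to Lebesgue measure by Proposition~\ref{prop:ergodicity_MC}, dominated convergence yields
\[
\int_\cE Q_\dt^n \varphi \, d\mu_{P,\dt} \ \xrightarrow[n \to \infty]{} \ \int_\cE \varphi \, d\mu_{Q,\dt}.
\]

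For the right-hand side, I would rewrite
\[
\int_\cE T_\dt P_\dt^{n-1}\psi \, d\mu_{P,\dt} = \int_\cE P_\dt^{n-1}\psi \, d\bigl(T_\dt^* \mu_{P,\dt}\bigr),
\]
the pushforward $T_\dt^*\mu_{P,\dt}$ being a probability measure because $T_\dt$ is itself a Markov operator (in particular $T_\dt\mathbf{1}=\mathbf{1}$) in all splittings of interest. For every such splitting considered in Sections~\ref{sec:splitting_schemes_1st}--\ref{sec:splitting_schemes_GLA}, $T_\dt$ is a composition of smooth diffeomorphisms of phase space ($e^{\dt A}$, $e^{\dt B}$) and of a Gaussian smoothing kernel ($e^{\gamma\dt C}$), so $T_\dt^*\mu_{P,\dt}$ inherits absolute continuity with respect to Lebesgue from $\mu_{P,\dt}$. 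The Lebesgue-a.e.\ convergence $P_\dt^{n-1}\psi \to \int_\cE \psi\, d\mu_{P,\dt}$ from~\eqref{eq:ergoditicy_TU} therefore holds $T_\dt^*\mu_{P,\dt}$-a.e., and a second application of dominated convergence (bound $\|\psi\|_{L^\infty}$) yields the limit $\int_\cE U_\dt \varphi \, d\mu_{P,\dt}$.

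Matching the two limits produces~\eqref{eq:TU_relation}. The one delicate point is the absolute continuity of $T_\dt^*\mu_{P,\dt}$, which is needed to transfer the Lebesgue-a.e.\ ergodic convergence of~\eqref{eq:ergoditicy_TU} to convergence against the pushforward; this is not automatic for an abstract bounded operator on $L^\infty(\cE)$, but is immediate from the concrete structure of all splitting operators used throughout the paper. An alternative, more algebraic route would be to exhibit the probability measure $\nu := U_\dt^*\mu_{P,\dt}$ and check its $Q_\dt$-invariance via $U_\dt Q_\dt = U_\dt T_\dt U_\dt = P_\dt U_\dt$ (using the companion factorization $P_\dt = U_\dt T_\dt$, which is consistent with the iteration hypothesis), concluding by the uniqueness of the invariant measure in Proposition~\ref{prop:ergodicity_MC}; the same absolute-continuity/injectivity issue reappears there as the task of rigorously extracting $P_\dt = U_\dt T_\dt$ from the stated relation.
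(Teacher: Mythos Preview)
Your proof is correct and follows essentially the same route as the paper's: both integrate the identity $Q_\dt^n\varphi = T_\dt P_\dt^{n-1}(U_\dt\varphi)$ against an absolutely continuous reference probability measure (you choose $\mu_{P,\dt}$, the paper an arbitrary $\rho$ with smooth density) and pass to the limit by dominated convergence, using that $T_\dt$ preserves constants. You are in fact more explicit than the paper about the one genuine technicality---that passing the limit through $T_\dt$ requires the pushforward of the reference measure by $T_\dt$ to remain absolutely continuous so that the Lebesgue-a.e.\ convergence in~\eqref{eq:ergoditicy_TU} transfers---which the paper leaves implicit but which is resolved in both arguments by the concrete structure of the splitting operators.
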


Ergodicity results such as~\eqref{eq:ergoditicy_TU} are implied by conditions such as~\eqref{eq:ergodicity_num}.

\medskip

\begin{proof}
The proof of this result relies on the simple observation that, for a given initial measure $\rho$ with a smooth density with respect to the Lebesgue measure, the ergodicity assumption ensures that, for a bounded measurable function~$\varphi$, 
\[
\int_\cE \varphi \, \d{\mu}_{Q,\dt} = \lim_{n \to +\infty} \int_\cE Q_\dt^n \varphi \, \d{\rho} = \lim_{n \to +\infty} \int_\cE T_\dt P_\dt^{n-1} \left(U_\dt \varphi \right) \, \d{\rho}.
\]
Now, we use the ergodicity property~\eqref{eq:ergoditicy_TU} with $f$ replaced by $U_\dt \varphi$ to obtain the following convergence for almost all $(q,p) \in \mathcal{E}$:
\[
\lim_{n \to +\infty} P_\dt^{n-1} \left(U_\dt \varphi \right)(q,p) = \int_\cE U_\dt \varphi \, \d{\mu}_{P,\dt} = a_\dt.
\]
Since $T_\dt$ preserves constant functions, there holds
\[
\int_\mathcal{E} T_\dt (a_\dt \mathbf{1}) \, \d\rho = a_\dt \int_\mathcal{E} \mathbf{1} \, \d\rho = a_\dt,
\]
which finally gives~\eqref{eq:TU_relation}.
\end{proof}

\medskip

Let us now show how we will use Lemma~\ref{lem:TU} in the sequel. Assume that a weak error estimate holds on the invariant measure $\mu_{P,\dt}$: there exist $\alpha \geq 1$ and a function $f_\alpha \in \mathcal{S}$ such that
\[
\int_\cE \psi \, \d{\mu}_{P,\dt} = \int_\cE \psi \, \d{\mu} + \Delta t^\alpha \int_\cE \psi \, f_\alpha \, \d{\mu} + \Delta t^{\alpha + 1} r_{\psi,\alpha,\dt},
\]
with $|r_{\psi,\alpha,\dt}| \leq K$ for $\dt$ sufficiently small.
Combining this equality and~\eqref{eq:TU_relation}, the following expansion is obtained for $\mu_{Q,\dt}$:
\[
\int_\cE \psi \, \d{\mu}_{Q,\dt} = \int_\cE \left(U_\dt \psi \right) \d{\mu}_{P,\dt}
= \int_\cE \left(U_\dt \psi \right) \d{\mu} + \Delta t^\alpha \int_\cE \left(U_\dt \psi \right) f_\alpha \, \d{\mu} + \Delta t^{\alpha + 1} r_{U_\dt\psi,\alpha,\dt}.
\]
In general, for an evolution operator $U_\dt$ preserving the measure $\mu$ at order~$\delta \geq 1$, we can write 
\[
U_\dt = \Id + \Delta t \, \mathcal{A}_1 + \dots  + \Delta t^{\delta-1} \mathcal{A}_{\delta-1} + \Delta t^{\delta} \, S_\delta + \Delta t^{\delta + 1} \, R_{\delta, \dt},
\]
where all the operators on the right hand side are defined on the core $\mathcal{S}$, and the operators $\mathcal{A}_k$ preserve the measure~$\mu$:
\[
\forall \varphi \in \mathcal{S}, \qquad \int_\mathcal{E} \mathcal{A}_k \varphi \, d\mu = 0,
\]
while the operator $S_{\delta}$ does not. Typically, $\mathcal{A}_k$ is a composition of the operators~$A+B$ and~$C$. In addition, for a given function $\varphi \in \mathcal{S}$, the remainder $R_{\delta, \dt} \varphi$ is uniformly bounded for~$\dt$ sufficiently small. Three cases should then be distinguished:
\begin{enumerate}[(i)]
\item When $\delta \geq \alpha+1$, the weak error in the invariant measure $\mu_{Q,\dt}$ is of the same order as for $\mu_{P,\dt}$ since
\[
\int_\cE \psi \, \d{\mu}_Q = \int_\cE \psi \, \d{\mu} + \Delta t^\alpha \int_\cE \psi \, f_\alpha \, \d{\mu} + \Delta t^{\alpha + 1} \widetilde{r}_{\psi,\alpha,\delta,\dt}.
\]
\item For $\delta \leq \alpha-1$, the weak error in the invariant measure $\mu_Q$ arises at dominant order from the operator~$U_\dt$:
\[
\int_\cE \psi \, \d{\mu}_Q = \int_\cE \psi \, \d{\mu} + \Delta t^\delta \int_\cE \psi \, \left(S_\delta^*\mathbf{1}\right) \, \d{\mu} + \Delta t^{\delta + 1} \widetilde{r}_{\psi,\alpha,\delta,\dt}.
\]
\item The interesting case corresponds to $\alpha = \delta$. In this situation, 
\begin{equation}
\label{eq:interesting_TU_situation}
\int_\cE \psi \, \d{\mu}_Q = \int_\cE \psi \, \d{\mu} + \Delta t^\alpha \int_\cE \psi \, \left(f_\alpha+S_\alpha^*\mathbf{1}\right) \, \d{\mu} + \Delta t^{\alpha + 1} \widetilde{r}_{\psi,\alpha,\delta,\dt}.
\end{equation}
An increase in the order of the error on the invariant measure is obtained when the leading order correction vanishes for all admissible observables~$\psi$, that is, if and only if $f_\alpha+S_\alpha^*\mathbf{1} = 0$.
\end{enumerate}

\subsubsection{First-order schemes}

The following result characterizes at leading order the invariant measure of the schemes based on a first-order splitting (see Section~\ref{sec:splitting_schemes_1st}). We first study the error estimates in the invariant measure of the schemes $P_\dt^{\gamma C, B, A}$, $P_\dt^{\gamma C, A,B}$ (which can be interpreted as GLA schemes with a symplectic Euler discretization of the Hamiltonian part, see~\cite{BO10}), and then deduce error estimates for the four remaining schemes introduced in Section~\ref{sec:splitting_schemes_1st} by making use of Lemma~\ref{lem:TU}. The proof can be read in Section~\ref{sec:proof_thm:error_first_order_schemes}.

\begin{theorem}
\label{thm:error_first_order_schemes}
Consider any of the first order splittings presented in Section~\ref{sec:splitting_schemes_1st}, and denote by $\mu_{\gamma,\dt}(\d{q} \, \d{p})$ its invariant measure. Then there exists a function $f_{1,\gamma} \in \widetilde{\mathcal{S}}$ such that, for any function $\psi \in \mathcal{S}$,
\begin{equation}
\label{eq:error_first_order_schemes}
\int_\cE \psi(q,p) \, \mu_{\gamma,\dt}(\d{q} \, \d{p}) = \int_\cE \psi(q,p) \, \mu(\d{q} \, \d{p}) + \dt \int_\cE \psi(q,p) f_{1,\gamma}(q,p) \, \mu(\d{q} \, \d{p}) + \dt^2 r_{\psi,\gamma,\dt},
\end{equation}
where the remainder $r_{\psi,\gamma,\dt}$ is uniformly bounded for $\dt$ sufficiently small. The expressions of the correction functions $f_{1,\gamma}$ depend on the numerical scheme at hand. They are defined as
\begin{equation}
\label{eq:correction_first_order_schemes}
\begin{aligned}
\Lgam^* f_{1}^{\gamma C,B,A} & = -\frac12 (A+B)g, \qquad g(q,p) = \beta p^T M^{-1} \nabla V(q),\\
f_{1}^{\gamma C,A,B} & = f_{1}^{A,B,\gamma C} = - f_{1}^{B,A,\gamma C} = - f_{1}^{\gamma C,B,A},\\
f_{1}^{A,\gamma C,B} & = - f_{1}^{B,\gamma C,A} = f_{1}^{\gamma C,B,A} - g.\\ 
\end{aligned}
\end{equation}
\end{theorem}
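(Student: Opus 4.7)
My strategy is to carry out a Talay-Tubaro analysis at the level of the generator for the single scheme $P_\dt^{\gamma C, B, A}$, and to deduce the five remaining cases via Lemma~\ref{lem:TU}. Taylor expansion of each factor of $P_\dt^{\gamma C, B, A} = \rme^{\gamma \dt C}\rme^{\dt B}\rme^{\dt A}$ on the core $\mathcal{S}$ gives
\[
P_\dt^{\gamma C, B, A} = \Id + \dt \, \Lgam + \dt^2 \, \cL^{(2)} + \dt^3 R_\dt, \qquad \cL^{(2)} = \tfrac12 \Lgam^2 + \tfrac12 \bigl([B, A] + \gamma[C, A] + \gamma[C, B]\bigr),
\]
with $R_\dt$ mapping $W^{m,\infty}_{\Li_s}$ into $L^\infty_{\Li_{s+s_0}}$ with norm bounded uniformly in small $\dt$. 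Using the formal adjoints $A^* = -A + g$, $B^* = -B - g$, $C^* = C$ with $g = \beta p^T M^{-1}\nabla V$, together with $\Lgam^* \mathbf{1} = 0$, a direct calculation yields $[B,A]^*\mathbf{1} = (A+B)g$, $[C,A]^*\mathbf{1} = -Cg$, and $[C,B]^*\mathbf{1} = +Cg$, so the $\gamma$-weighted $C$-commutator contributions cancel and $(\cL^{(2)})^*\mathbf{1} = \tfrac12(A+B)g \in \widetilde{\mathcal{S}}$ (the zero-average property following from $(A+B)^*\mathbf{1} = 0$). Theorem~\ref{thm:stability_S} then produces $f_1^{\gamma C, B, A} \in \widetilde{\mathcal{S}}$ as the unique zero-average solution of the Poisson equation $\Lgam^* f = -\tfrac12(A+B)g$.

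To derive~\eqref{eq:error_first_order_schemes}, fix $\psi \in \widetilde{\mathcal{S}}$ and set $\phi = \Lgam^{-1}\psi \in \widetilde{\mathcal{S}}$. The invariance of $\mu_{\gamma,\dt}$ under $P_\dt$ combined with the expansion above yields
\[
0 = \int_\cE \frac{P_\dt \phi - \phi}{\dt}\,\d\mu_{\gamma,\dt} = \int_\cE \psi\,\d\mu_{\gamma,\dt} + \dt \int_\cE \cL^{(2)}\phi\,\d\mu_{\gamma,\dt} + \dt^2 \int_\cE R_\dt \phi\,\d\mu_{\gamma,\dt}.
\]
The last integral is $\mathrm{O}(1)$ uniformly in $\dt$ by the moment estimate~\eqref{eq:moment_estimate}. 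A preliminary $\mathrm{O}(\dt)$ weak estimate, obtained by applying the same argument once to the centered function $\cL^{(2)}\phi - \int_\cE \cL^{(2)}\phi\,\d\mu \in \widetilde{\mathcal{S}}$, converts $\int_\cE \cL^{(2)}\phi\,\d\mu_{\gamma,\dt}$ into $\int_\cE \phi\,(\cL^{(2)})^*\mathbf{1}\,\d\mu + \mathrm{O}(\dt)$. Substituting the Poisson equation and the identity $\Lgam\phi = \psi$ produces exactly~\eqref{eq:error_first_order_schemes} for this scheme.

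The five remaining schemes follow from Lemma~\ref{lem:TU}. For $P_\dt^{A,B,\gamma C}$ and $P_\dt^{B,A,\gamma C}$ the decomposition $P_\dt^n = T_\dt Q_\dt^{n-1} U_\dt$ applies with $Q_\dt \in \{P_\dt^{\gamma C,A,B}, P_\dt^{\gamma C,B,A}\}$ and $U_\dt = \rme^{\gamma \dt C}$; since $U_\dt$ preserves $\mu$ to all orders (case~(i) of the discussion preceding the theorem), no additional correction appears. The sign relation $f_1^{\gamma C, A, B} = -f_1^{\gamma C, B, A}$ follows by redoing the commutator analysis with $A$ and $B$ interchanged: $[B,A]$ flips sign while the $C$-commutator contributions still cancel. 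For $P_\dt^{A,\gamma C,B}$ and $P_\dt^{B,\gamma C,A}$ one takes $U_\dt = \rme^{\gamma\dt C}\rme^{\dt B}$ and $\rme^{\gamma\dt C}\rme^{\dt A}$ respectively; because $B^*\mathbf{1} = -g$ and $A^*\mathbf{1} = +g$, expansion of $\int_\cE (U_\dt \psi)\,\d\mu$ introduces an extra first-order term $\mp g$ (case~(iii) of the discussion), yielding the last two identities in~\eqref{eq:correction_first_order_schemes}.

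The main technical obstacle is not the algebra but the uniform-in-$\dt$ control of the $\dt^2$ remainder. Two ingredients make it rigorous: the stability $\Lgam^{-1}\widetilde{\mathcal{S}} \subset \widetilde{\mathcal{S}}$ from Theorem~\ref{thm:stability_S}, which keeps the Poisson solutions in the polynomially-bounded class where the Taylor expansion of $P_\dt$ is controlled, together with the uniform moment bounds on $\mu_{\gamma,\dt}$ from Proposition~\ref{prop:ergodicity_MC}, which are needed to close the bootstrap from the weak $\mathrm{O}(\dt)$ estimate to $\mathrm{O}(\dt^2)$ upon reapplication of the Poisson construction to $\cL^{(2)}\phi$.
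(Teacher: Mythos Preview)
Your proof is correct and the algebraic computations match the paper's. The organizational route is slightly different, however, and it is worth noting where.

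The paper integrates $\dt^{-1}(\Id-P_\dt)\varphi$ against the \emph{corrected} measure $(1+\dt f_{1,\gamma})\,\d\mu$, compares this with the same integral against $\d\mu_{\gamma,\dt}$, and then replaces $\varphi$ by $Q_{1,\dt}\psi$, where
\[
Q_{1,\dt} = -\Lgam^{-1} + \frac{\dt}{2}\Big(\Pi^\perp + \Lgam^{-1}\Pi^\perp S_1 \Pi^\perp \Lgam^{-1}\Big)
\]
is an \emph{approximate inverse} of the projected discrete generator, satisfying $\Pi^\perp \dt^{-1}(\Id-P_\dt)\Pi^\perp Q_{1,\dt} = \Pi^\perp + \dt^2 Z_{1,\dt}$. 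This collapses the error analysis into a single step and is the device the paper reuses for the second-order schemes and for the overdamped limit. You instead take $\phi = \Lgam^{-1}\psi$ once and for all and run the classical Talay--Tubaro two-pass bootstrap: a first pass gives the crude $\mathrm{O}(\dt)$ estimate for any observable in $\widetilde{\mathcal{S}}$, which is then fed back to upgrade the $\cL^{(2)}\phi$ term. Both routes rest on exactly the same two technical inputs you identify, Theorem~\ref{thm:stability_S} and Proposition~\ref{prop:ergodicity_MC}, and both are rigorous here. Your bootstrap is arguably more elementary for first order; the paper's approximate-inverse machinery pays off when one wants the $\dt^4$ remainder in Theorem~\ref{thm:error_second_order_schemes} without iterating the bootstrap four times.

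The handling of the five derived schemes via Lemma~\ref{lem:TU} is identical to the paper's.
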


It would in fact possible to obtain bounds on the the remainder $r_{\psi,\gamma,\dt}$ with respect to~$\psi$, thanks to functional inequalities given in Appendix~A of~\cite{Kopec}.

\begin{remark}
The equations~\eqref{eq:correction_first_order_schemes} could be analytically solved if, instead of the fluctuation/dissipation operator $C$, we were using the mass-weighted differential operator as in~\cite{LM12}:
\[
C_M = -p^T \nabla_p + \frac1\beta M : \nabla_p^2.
\]
The corresponding generator $\mathcal{L}_{\gamma,M} = A+B + \gamma C_M$ defined on the core $\mathcal{S}$ is associated with Langevin dynamics where the friction force is proportional to the momenta rather than velocities. A simple computation shows that
\[
-\frac12 (A+B)g = \mathcal{L}_{\gamma,M}^* \left(\frac{\beta}{2} V - g\right).
\]
The condition~\eqref{eq:correction_first_order_schemes} would be replaced by $\mathcal{L}_{\gamma,M}^* f_1^{\gamma C,B,A} = -(A+B)g/2$, so that $f_1^{\gamma C,B,A} = \beta V/2 - g + c$ where $c$ is a constant ensuring that $f_1^{\gamma C,B,A}$ has a vanishing average with respect to~$\mu$.
\end{remark}

\subsubsection{Hamiltonian limit of the correction term}

For first order splitting schemes, the limit of the leading order correction term in~\eqref{eq:error_first_order_schemes} can be studied in the limit when $\gamma \to 0$. Not surprisingly, it turns out that the leading order correction is the first term in the expansion of the modified Hamiltonian of the symplectic Euler method in powers of $\dt$. In contrast to the more complete proof we are able to present for the overdamped limit (see Section~\ref{sec:ovd_limit}), we were not able to study the behavior of the remainder terms $r_{\psi,\gamma,\dt}$ in~\eqref{eq:error_first_order_schemes}. There is a technical obstruction to controlling these remainders from the way we prove our results since the limiting operator $\mathcal{L}_0 = A+B$ is not invertible. Let us also mention that studying the corresponding Hamiltonian limit for second order schemes turns out to be a much more difficult question (see Remark~\ref{rmk:Hamiltonian_limit_second_order}).

\begin{proposition}
\label{prop:Ham_limit_correction}
There exists a constant $K > 0$ such that, for all $0 < \gamma \leq 1$,
\[
\left\| f_1^{\gamma C, B,A} - \frac\beta2 p^T M^{-1} \nabla V \right\|_{L^2(\mu)} \leq K \gamma,
\]
with similar estimates for $f_1^{B, \gamma C, A}$ and $f_1^{B,A,\gamma C}$; and
\[
\left\| f_1^{\gamma C, A,B} + \frac\beta2 p^T M^{-1} \nabla V \right\|_{L^2(\mu)} \leq K \gamma,
\]
with similar estimates for $f_{1}^{A,\gamma C,B}$ and $f_{1}^{A,B,\gamma C}$.
\end{proposition}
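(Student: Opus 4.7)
The plan is to reduce the claim to a uniform-in-$\gamma$ bound on $\|(\Lgam^*)^{-1}Cg\|_{L^2(\mu)}$, where $g = \beta p^T M^{-1}\nabla V$ is the function appearing in Theorem~\ref{thm:error_first_order_schemes}. Since $(A+B)^* = -(A+B)$ and $C^* = C$, the candidate limit $g/2$ satisfies
\[
\Lgam^*\left(\frac{g}{2}\right) = -\frac{1}{2}(A+B)g + \frac{\gamma}{2}Cg,
\]
and subtracting from the defining equation $\Lgam^* f_1^{\gamma C,B,A} = -\frac12(A+B)g$ yields the exact identity
\[
\frac{g}{2} - f_1^{\gamma C,B,A} = \frac{\gamma}{2}\,(\Lgam^*)^{-1}Cg.
\]
The desired bound $\|f_1^{\gamma C,B,A} - g/2\|_{L^2(\mu)} \leq K\gamma$ thus reduces to showing that $\|(\Lgam^*)^{-1}Cg\|_{L^2(\mu)}$ remains bounded as $\gamma \to 0$.

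The main difficulty is that Theorem~\ref{thm:Ham_limit_Lgam} provides only $\|(\Lgam^*)^{-1}\|_{\mathcal{B}(\cH^0)} = O(1/\gamma)$, exactly the singular scale one needs to cancel. The crucial observation is that $Cg$ lies in the range of $A+B$ applied to an explicit quadratic polynomial in $p$. A direct computation gives $Cg = -\beta p^T M^{-2}\nabla V$, and we set
\[
h(p) = -\frac{\beta}{2}\,p^T M^{-2} p.
\]
Then $Ah = 0$ and $Bh = \beta p^T M^{-2}\nabla V$, so $(A+B)h = -Cg$. Rewriting $A+B = -\Lgam^* + \gamma C$ on $\widetilde{\mathcal{S}}$ and invoking the stability of $\widetilde{\mathcal{S}}$ under $(\Lgam^*)^{-1}$ (Theorem~\ref{thm:stability_S}), we obtain the algebraic identity
\[
(\Lgam^*)^{-1}Cg = -(\Lgam^*)^{-1}(A+B)h = h - \gamma\,(\Lgam^*)^{-1}Ch.
\]

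A second direct calculation yields $Ch = \beta p^T M^{-3} p - \mathrm{tr}(M^{-2})$, which has vanishing mean with respect to $\mu$. Hence $Ch \in \cH^0$, and Theorem~\ref{thm:Ham_limit_Lgam} applies to give $\gamma\|(\Lgam^*)^{-1}Ch\|_{L^2(\mu)} \leq c_+\|Ch\|_{L^2(\mu)}$, a constant independent of $\gamma$. Combining these estimates,
\[
\|(\Lgam^*)^{-1}Cg\|_{L^2(\mu)} \leq \|h\|_{L^2(\mu)} + c_+\|Ch\|_{L^2(\mu)}
\]
uniformly for $\gamma \in (0,1]$, from which the bound $\|f_1^{\gamma C,B,A} - g/2\|_{L^2(\mu)} \leq K\gamma$ follows.

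The estimates for the other five schemes follow without further work from the algebraic relations in the third line of~\eqref{eq:correction_first_order_schemes}, which express each correction as $\pm f_1^{\gamma C,B,A}$ or $\pm(f_1^{\gamma C,B,A} - g)$; for instance $f_1^{B,\gamma C,A} = g - f_1^{\gamma C,B,A}$, so $f_1^{B,\gamma C,A} - g/2 = -(f_1^{\gamma C,B,A} - g/2)$ inherits the same $L^2$ bound, and the three functions approximated by $-g/2$ are treated analogously. The main obstacle throughout is the $1/\gamma$-singularity of the Hamiltonian-limit resolvent; the argument circumvents it by exploiting the special structure of $Cg$ as the image under $A+B$ of an explicit quadratic polynomial in $p$, which converts the singular operator norm into a harmless uniform constant.
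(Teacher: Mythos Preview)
Your proof is correct and follows essentially the same route as the paper: both exploit that $Cg = -(A+B)h$ for an explicit quadratic $h$ in~$p$ (your $h$ is $-2f_1^1$ in the paper's ansatz notation, up to centering), and then use the $O(1/\gamma)$ resolvent bound of Theorem~\ref{thm:Ham_limit_Lgam} on the residual term $Ch$ to absorb the singularity. One small technical slip: since $h(p) = -\tfrac{\beta}{2}p^T M^{-2}p$ does not have zero mean with respect to~$\mu$, the step $(\Lgam^*)^{-1}\Lgam^* h = h$ actually yields $h - \int_\cE h\,\d\mu$; this is harmless for the bound (and the paper handles it by adding the constant~$c_1^1$), but worth fixing.
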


The proof of this result is provided in Section~\ref{sec:proof_Ham_limit}.

\subsubsection{Second-order schemes}

The following result characterizes at leading order the invariant measure of the schemes based on a second-order splitting (see Section~\ref{sec:splitting_schemes_2nd}).

\begin{theorem}
\label{thm:error_second_order_schemes}
Consider any of the second order splittings presented in Section~\ref{sec:splitting_schemes_2nd}, and denote by $\mu_{\gamma,\dt}(\d{q} \, \d{p})$ its invariant measure. Then there exists a function $f_{2,\gamma} \in \widetilde{\mathcal{S}}$ such that, for any function $\psi \in \mathcal{S}$,
\begin{equation}
\label{eq:error_second_order_schemes}
\int_\cE \psi(q,p) \, \mu_{\gamma,\dt}(\d{q} \, \d{p}) = \int_\cE \psi(q,p) \, \mu(\d{q} \, \d{p}) + \dt^2 \int_\cE \psi(q,p) f_{2,\gamma}(q,p) \, \mu(\d{q} \, \d{p}) + \dt^4 r_{\psi,\gamma,\dt},
\end{equation}
where the remainder $r_{\psi,\gamma,\dt}$ is uniformly bounded for $\dt$ sufficiently small. The expressions of the correction functions $f_{2,\gamma}$ depend on the numerical scheme at hand. They are defined as
\begin{equation}
\label{eq:correction_second_order_schemes}
\begin{aligned}
\Lgam^* f_{2}^{\gamma C,B,A,B,\gamma C} & = \frac{1}{12} (A+B) \left[\left(A+\frac{B}{2}\right)g\right], \qquad g(q,p) = \beta p^T M^{-1} \nabla V(q),\\
\Lgam^* f_{2}^{\gamma C,A,B,A,\gamma C} & = -\frac{1}{12} (A+B) \left[\left(B+\frac{A}{2}\right)g\right], \\
f_{2}^{A,B,\gamma C,B,A} & = f_{2}^{\gamma C,B,A,B,\gamma C} + \frac18 (A+B)g, \\
f_{2}^{B,A,\gamma C,A,B} & = f_{2}^{\gamma C,A,B,A,\gamma C} - \frac18 (A+B)g. \\
\end{aligned}
\end{equation}
\end{theorem}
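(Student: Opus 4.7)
My approach follows the Talay--Tubaro strategy used for the first-order splittings (Theorem~\ref{thm:error_first_order_schemes}), pushed one order higher. On the core $\mathcal{S}$ I would Taylor-expand the five exponential factors and write $P_\dt = \Id + \dt\,\Lgam + \dt^2 A_2 + \dt^3 A_3 + \dt^4 R_\dt$, seek the invariant measure in the form $\d{\mu_{\gamma,\dt}} = (1 + \dt f_1 + \dt^2 f_2 + \dt^3 f_3 + \dt^4 h_\dt)\,\d\mu$, and insert both expansions into the invariance identity $\int_\cE P_\dt \psi\,\d{\mu_{\gamma,\dt}} = \int_\cE \psi\,\d{\mu_{\gamma,\dt}}$. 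Matching powers of $\dt$ produces a cascade of Poisson equations of the form $\Lgam^* f_k = -A_{k+1}^*\mathbf{1} + (\text{lower-order contributions})$, which I would solve on $\widetilde{\mathcal{S}}$ using the invertibility of $\Lgam^*$ provided by Theorem~\ref{thm:stability_S}.

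The key structural observation is that $P_\dt^{\gamma C,B,A,B,\gamma C}$ and $P_\dt^{\gamma C,A,B,A,\gamma C}$ are palindromic compositions. Writing $P_1,\ldots,P_5$ for the five elementary operators with their Strang coefficients, the $\dt^2$-term takes the form $A_2 = \Lgam^2/2 + \tfrac12 \sum_{i<j}[P_i,P_j]$; a direct enumeration shows the commutator sum vanishes thanks to the $P_5=P_1$, $P_4=P_2$ symmetry, so $A_2 = \Lgam^2/2$ and $A_2^*\mathbf{1} = (\Lgam^*)^2 \mathbf{1} = 0$. The order-$\dt$ Poisson equation then forces $f_1=0$, and the order-$\dt^2$ equation collapses to $\Lgam^* f_2 = -A_3^*\mathbf{1}$. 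Splitting $A_3 = \Lgam^3/6 + \Delta_3$ (where $\Delta_3$ collects the nested commutators produced by the BCH-type bookkeeping), noting $(\Lgam^*)^3\mathbf{1}=0$, and systematically reducing $\Delta_3^*\mathbf{1}$ via $A^*\mathbf{1} = g$, $B^*\mathbf{1} = -g$, $C^*\mathbf{1} = 0$, together with $(A+B)^* = -(A+B)$ factored out last, collapses the result into the compact form $\tfrac{1}{12}(A+B)\bigl[(A+\tfrac{B}{2})g\bigr]$ for $P_\dt^{\gamma C,B,A,B,\gamma C}$, and into its mirror $-\tfrac{1}{12}(A+B)\bigl[(B+\tfrac{A}{2})g\bigr]$ for $P_\dt^{\gamma C,A,B,A,\gamma C}$.

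The two remaining schemes are recovered via Lemma~\ref{lem:TU}. One checks by inspection the factorisation $(P_\dt^{A,B,\gamma C,B,A})^n = T_\dt\,(P_\dt^{\gamma C,B,A,B,\gamma C})^{n-1}\,U_\dt$ with $U_\dt = \rme^{\gamma\dt C/2}\rme^{\dt B/2}\rme^{\dt A/2}$ and $T_\dt$ its mirror, simply by merging consecutive half-step factors of the same operator into full-step factors. Expanding $U_\dt = \Id + \tfrac{\dt}{2}\Lgam + \dt^2\Psi + \mathrm{O}(\dt^3)$ and computing $\Psi^*\mathbf{1} = \tfrac18(A+B)g$ by the same adjoint rules, then inserting into~\eqref{eq:TU_relation}, yields the additive shift $f_2^{A,B,\gamma C,B,A} = f_2^{\gamma C,B,A,B,\gamma C} + \tfrac18(A+B)g$, as claimed; the analogous manipulation handles $P_\dt^{B,A,\gamma C,A,B}$.

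The main obstacle is upgrading the remainder from the generic $\dt^3$ Talay--Tubaro outcome to the claimed $\dt^4$. This requires running the cascade one step further and establishing $f_3 = 0$, which I expect to follow from a parity argument based on the reversibility identity $\mathcal{R}\Lgam\mathcal{R} = \Lgam^*$ from~\eqref{eq:reversibility_Langevin} combined with the self-reversibility of the St\"ormer--Verlet substep $\rme^{\dt B/2}\rme^{\dt A}\rme^{\dt B/2}$ under $\mathcal{R}$: this symmetry constrains $\Delta_3^*\mathbf{1}$ and the bilinear contribution from $f_2$ to lie in the odd-parity eigenspace of $\mathcal{R}$, while a non-zero $f_3$ would have the opposite parity, forcing its vanishing. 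Uniform control of $r_{\psi,\gamma,\dt}$ for $\dt$ small then follows by iterating the regularity statement of Theorem~\ref{thm:stability_S} along the lines sketched for the first-order case in Remark~\ref{rmk:structure_proof}.
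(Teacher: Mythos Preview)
Your overall framework---BCH expansion, Poisson equations on $\widetilde{\mathcal{S}}$, and the TU lemma for the two non-palindromic schemes---matches the paper's proof closely, and your computation of $f_2$ via $\Lgam^* f_2 = -\Delta_3^*\mathbf{1}$ is exactly what the paper does (it calls your $\Delta_3$ by the name $S_2$). The TU-lemma part with $U_\dt = \rme^{\gamma\dt C/2}\rme^{\dt B/2}\rme^{\dt A/2}$ and $U_\dt^*\mathbf{1} = \mathbf{1} + \tfrac{\dt^2}{8}(A+B)g + \mathrm{O}(\dt^3)$ is also correct.

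The gap is your argument for $f_3=0$. The momentum-reversal parity $\mathcal{R}\Lgam\mathcal{R}=\Lgam^*$ is not the relevant symmetry here: it acts on phase space, not on the $\dt$-expansion, and there is no reason the right-hand side of the $\dt^3$ Poisson equation should land in a specific $\mathcal{R}$-eigenspace incompatible with a nonzero $f_3$. The symmetry that actually does the work is the \emph{time-reversal} property $P_{-\dt}=P_\dt^{-1}$ of palindromic compositions, which forces the symmetric BCH exponent to contain only even powers of $\dt$: one has $P_\dt = \exp\!\bigl(\dt(\Lgam + \dt^2 S_2 + \mathrm{O}(\dt^4))\bigr)$. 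Expanding this exponential gives the $\dt^4$ coefficient the specific form $A_4 = \tfrac{1}{24}\Lgam^4 + \tfrac{1}{2}(\Lgam S_2 + S_2\Lgam)$, and then
\[
A_4^*\mathbf{1} + A_2^* f_2 = \tfrac{1}{2}\Lgam^* S_2^*\mathbf{1} + \tfrac{1}{2}(\Lgam^*)^2 f_2 = \tfrac{1}{2}\Lgam^*\bigl(S_2^*\mathbf{1} + \Lgam^* f_2\bigr) = 0
\]
by the very definition of $f_2$. This is how the paper proceeds: rather than solving a cascade for $f_3$, it tests $(\Id - P_\dt)\varphi/\dt$ against $(1+\dt^2 f_2)\,\d\mu$ directly and observes that the $\dt^3$ contribution is $\tfrac{1}{2}\int\bigl(S_2\Lgam\varphi + (\Lgam^2\varphi)f_2\bigr)\d\mu = \tfrac{1}{2}\int(\Lgam\varphi)(S_2^*\mathbf{1}+\Lgam^*f_2)\,\d\mu = 0$. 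The approximate-inverse construction $Q_{2,\dt}$ then converts this into a uniform $\dt^4$ bound on the remainder, bypassing the need to control a formal density expansion term by term.
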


It can be checked that the expressions of $f_{2}^{B,A,\gamma C,A,B}$ and $f_{2}^{A,B,\gamma C,B,A}$ agree with the ones presented in~\cite{LM12}. 
Let us emphasize that no $\dt^3$ correction term appears in~\eqref{eq:error_second_order_schemes} after the $\dt^2$ term. In fact, a more careful treatment would allow us to write an error expansion in terms of higher orders of $\dt$, with only even powers of $\dt$ appearing.

The proof of this result is given in Section~\ref{sec:proof_thm:error_second_order_schemes}. We use as reference schemes for the proofs the schemes $P_\dt^{\gamma C,A,B,A,\gamma C}$, $P_\dt^{\gamma C,B,A,B,\gamma C}$. These schemes indeed turn out to be particularly convenient to study the overdamped limit.

\bigskip

The results from Theorem~\ref{thm:error_second_order_schemes} allow us to obtain error estimates for the so-called Geometric Langevin Algorithms (GLA) introduced in~\cite{BO10}. Recall the somewhat surprising result that the error in the invariant measure of the GLA schemes is of order $\Delta t^p$ for a discretization of order~$p$ of the Hamiltonian part, even though the weak and strong orders of the scheme are only one. The following result complements the estimate given in~\cite{BO10} by making precise the leading order corrections to the invariant measure of the numerical scheme with respect to the canonical measure (see the proof in Section~\ref{sec:proof_cor:error_GLA}).

\begin{corollary}[Error estimates for GLA schemes]
\label{cor:error_GLA}
Consider one of the GLA schemes defined in~\eqref{eq:GLA_schemes}, and denote by $\mu_{\gamma,\dt}(\d{q} \, \d{p})$ its invariant measure. Then there exist functions $f_{2,\gamma}, f_{3,\gamma} \in \widetilde{\mathcal{S}}$ such that, for any function $\psi \in \mathcal{S}$,
\begin{equation}
\label{eq:error_GLA_general}
\begin{aligned}
\int_\cE \psi(q,p) \, \mu_{\gamma,\dt}(\d{q} \, \d{p}) & = \int_\cE \psi(q,p) \, \mu(\d{q} \, \d{p}) + \dt^2 \int_\cE \psi(q,p) f_{2,\gamma}(q,p) \, \mu(\d{q} \, \d{p}) \\
& \quad + \dt^3 \int_\cE \psi(q,p) f_{3,\gamma}(q,p) \, \mu(\d{q} \, \d{p}) + \dt^4 r_{\psi,\gamma,\dt},
\end{aligned}
\end{equation}
where the remainder $r_{\psi,\gamma,\dt}$ is uniformly bounded for $\dt$ sufficiently small. The expressions of the correction functions $f_{2,\gamma}$ and $f_{3,\gamma}$ are
\[
\begin{aligned}
f_2^{\gamma C, A,B,A} = f_2^{\gamma C, A,B,A, \gamma C}, & \qquad f_3^{\gamma C, A,B,A} = -\frac{\gamma}{2} Cf_2^{\gamma C, A,B,A}, \\
f_2^{\gamma C, B,A,B} = f_2^{\gamma C, B,A,B \gamma C}, & \qquad f_3^{\gamma C, B,A,B} = -\frac{\gamma}{2} Cf_2^{\gamma C, B,A,B}.
\end{aligned}
\]
\end{corollary}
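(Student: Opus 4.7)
The plan is to reduce the analysis to that of the Strang scheme of Theorem~\ref{thm:error_second_order_schemes} via Lemma~\ref{lem:TU}. Introducing $V_\dt := \rme^{\gamma \dt C/2}$ and the Strang Hamiltonian flow $T_\dt := \rme^{\dt A/2} \rme^{\dt B} \rme^{\dt A/2}$, the GLA evolution operator factors as $P_\dt^{\gamma C,A,B,A} = V_\dt^2 T_\dt = V_\dt\, W_\dt$ with $W_\dt := V_\dt T_\dt$; correspondingly $W_\dt V_\dt = V_\dt T_\dt V_\dt = P_\dt^{\gamma C,A,B,A,\gamma C}$. Iterating gives $(P_\dt^{\gamma C,A,B,A})^n = V_\dt\,(P_\dt^{\gamma C,A,B,A,\gamma C})^{n-1} W_\dt$, so the TU lemma yields
\[
\int_\cE \varphi \, \mu_{\gamma,\dt}^{\gamma C,A,B,A}(\d q\,\d p) = \int_\cE (W_\dt \varphi)\, \mu_{\gamma,\dt}^{\gamma C,A,B,A,\gamma C}(\d q\,\d p).
\]

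The second step is a Taylor expansion in $\dt$ of both factors. Since $V_\dt$ preserves $\mu$ exactly and every power $(A+B)^k$ preserves $\mu$ by iterated invariance, the only surviving contribution up to order $\dt^3$ in $\int W_\dt \varphi\, \d\mu = \int T_\dt \varphi\, \d\mu$ comes from the leading Baker--Campbell--Hausdorff remainder of the Strang splitting:
\[
\int_\cE T_\dt \varphi \, \d\mu = \int_\cE \varphi \, \d\mu + \dt^3 \int_\cE E\, \varphi \, \d\mu + \mathrm{O}(\dt^4), \qquad E := -\tfrac{1}{24}[A,[A,B]] - \tfrac{1}{12}[B,[A,B]].
\]
Combining with Theorem~\ref{thm:error_second_order_schemes} for $\mu_{\gamma,\dt}^{\gamma C,A,B,A,\gamma C}$ (whose leading correction I denote $f_2$), expanding $W_\dt = \Id + \dt W_1 + \mathrm{O}(\dt^2)$ with $W_1 = (A+B) + \gamma C/2$, and using the formal adjoint relations $(A+B)^* = -(A+B)$ and $C^* = C$, I obtain
\begin{equation*}
\int_\cE \varphi \, \mu_{\gamma,\dt}^{\gamma C,A,B,A}(\d q\,\d p) = \int_\cE \varphi\,\d\mu + \dt^2 \int_\cE \varphi\, f_2\,\d\mu + \dt^3 \int_\cE \varphi\,\bigl[E^*\mathbf{1} + W_1^* f_2\bigr]\,\d\mu + \mathrm{O}(\dt^4).
\end{equation*}

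The decisive step, and the one I expect to be the main obstacle, is the self-referential identification of $E^*\mathbf{1}$. I would apply the invariance $\int P_\dt^{\gamma C,A,B,A,\gamma C} \psi\,\d\mu_{\gamma,\dt}^{\gamma C,A,B,A,\gamma C} = \int \psi\,\d\mu_{\gamma,\dt}^{\gamma C,A,B,A,\gamma C}$: exploiting the $\mu$-invariance of $V_\dt$ to reduce $\int P_\dt^{\gamma C,A,B,A,\gamma C}\psi\,\d\mu$ to an expansion of the form above (with the $\dt^3$ coefficient equal to $\int \psi E^*\mathbf{1}\,\d\mu$), and matching the $\dt^3$ coefficients on both sides yields the identity
\[
E^*\mathbf{1} + \mathcal{L}_\gamma^* f_2 = 0.
\]
Substituting this into the GLA expansion and using $\mathcal{L}_\gamma^* = -(A+B) + \gamma C$ produces the algebraic cancellation
\[
E^*\mathbf{1} + W_1^* f_2 = -\mathcal{L}_\gamma^* f_2 + \left(-(A+B) + \tfrac{\gamma}{2} C\right) f_2 = -\tfrac{\gamma}{2}\, C f_2,
\]
which is precisely the claimed expression for $f_3^{\gamma C,A,B,A}$. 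The case $P_\dt^{\gamma C,B,A,B}$ is handled identically after exchanging the roles of $A$ and $B$ in $T_\dt$, and the $\mathrm{O}(\dt^4)$ remainder follows by combining the remainder in Theorem~\ref{thm:error_second_order_schemes} with boundedness of the truncated Taylor expansion of $W_\dt$ on the weighted spaces used throughout (Theorem~\ref{thm:stability_S}).
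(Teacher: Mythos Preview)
Your argument is correct, and the algebraic identification $E^*\mathbf{1} + \mathcal{L}_\gamma^* f_2 = 0$ is valid: it is nothing other than the defining Poisson equation $\mathcal{L}_\gamma^* f_2 = -S_2^*\mathbf{1}$ from the proof of Theorem~\ref{thm:error_second_order_schemes}, once one observes that the $\gamma$-dependent commutators in $S_2$ satisfy $S_{2,1}^*\mathbf{1} = S_{2,2}^*\mathbf{1} = 0$ (so $S_2^*\mathbf{1} = E^*\mathbf{1}$). Your consistency argument recovers this cleanly, though you could also just cite that equation directly.

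The paper takes the TU lemma in the opposite direction. It first establishes the expansion~\eqref{eq:error_GLA_general} for the GLA scheme by repeating the machinery of Section~\ref{sec:proof_thm:error_second_order_schemes} (only noting that the $\dt^3$ term no longer cancels), and then writes the \emph{Strang} scheme in terms of the \emph{GLA} scheme with the much simpler choice $U_\dt = \rme^{\gamma\dt C/2}$. Since this $U_\dt$ preserves $\mu$ at every order, only the relation $U_\dt = \Id + \frac{\gamma\dt}{2}C + \mathrm{O}(\dt^2)$ is needed, and comparing $\dt^2$ and $\dt^3$ coefficients with Theorem~\ref{thm:error_second_order_schemes} immediately gives $f_2^{\gamma C,B,A,B} = f_2^{\gamma C,B,A,B,\gamma C}$ and $f_3^{\gamma C,B,A,B} + \frac{\gamma}{2} C f_2^{\gamma C,B,A,B} = 0$. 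The trade-off is this: the paper's orientation uses a trivial $U_\dt$ but requires an independent direct proof of the GLA expansion; your orientation derives the entire GLA expansion (remainder included) from Theorem~\ref{thm:error_second_order_schemes} alone, at the price of a more involved $U_\dt = W_\dt$ and the extra commutator identity. One small caution on your remainder control: since you apply Theorem~\ref{thm:error_second_order_schemes} to the $\dt$-dependent observable $W_\dt\varphi$, you should expand $W_\dt\varphi = \varphi + \dt W_1\varphi + \dt^2 W_2\varphi + \dt^3 W_3\varphi + \dt^4 R_\dt\varphi$ and apply the theorem term by term to the fixed functions $W_k\varphi$, bounding the last contribution via the moment estimate~\eqref{eq:moment_estimate}.
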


Note that the leading order term of the error is the same as for the corresponding second order splitting schemes. The next order correction (of order $\dt^3$) vanishes for functions $\psi$ depending only on the position variable~$q$. 

\begin{remark}[Hamiltonian limit of the correction functions $f_{2,\gamma}$]
\label{rmk:Hamiltonian_limit_second_order}
Proving a result similar to Proposition~\ref{prop:Ham_limit_correction} for second order splitting schemes or GLA schemes turns out to be much more difficult, although we formally expect that the limit of $f_{2,\gamma}$ as $\gamma \to 0$ is the first order correction of the modified Hamiltonian constructed by backward analysis. From~\eqref{eq:correction_second_order_schemes}, it should indeed be the case that $f_{2}^{\gamma C,B,A,B,\gamma C}$ converges to 
\[
f_2^{B,A,B} = -\frac{1}{12} \left(A+ \frac{B}{2}\right)g.
\]
Moreover, as we already mentioned before Proposition~\ref{prop:Ham_limit_correction}, we are not able to uniformly control remainder terms in the error expansion~\eqref{eq:error_second_order_schemes} as $\gamma\to 0$.
\end{remark}

\subsection{Numerical estimation of the correction term}
\label{sec:num_estimation_correction}

The results of Section~\ref{sec:error_estimates_finite_friction} show that the leading order correction terms for the average of an observable~$\psi \in \mathcal{S}$ can be written as 
\begin{equation}
\label{eq:leading_correction}
\int_\cE \psi(q,p) f_{\gamma}(q,p) \, \mu(\d{q} \, \d{p}),
\end{equation}
where the function $f_\gamma \in \widetilde{\mathcal{S}}$ is the solution of a Poisson equation
\begin{equation}
\label{eq:Poisson_leading_correction}
\Lgam^* f_{\gamma} = g_\gamma,
\end{equation}
the function $g_\gamma \in \widetilde{\mathcal{S}}$ depending on the numerical scheme at hand (the fact that $f_\gamma \in \widetilde{\mathcal{S}}$ is a consequence of Theorem~\ref{thm:stability_S}). It is in general impossible to analytically solve~\eqref{eq:Poisson_leading_correction}, and very difficult to numerically approximate its solution since it is a  high-dimensional partial differential equation. It is however possible to rewrite~\eqref{eq:leading_correction} as an integrated correlation function, a quantity which is amenable to numerical approximation. This is a standard way of computing transport coefficients based on Green-Kubo formulae, see the summary provided in Section~\ref{sec:def_transport_coeff}. It provides here a way to compute the first order correction in the perfect sampling bias with a single simulation (as an alternative to Romberg extrapolation, which requires at least two simulations at different timesteps, see~\cite{TT90}). 

\subsubsection{Error estimates}

The approach we follow is based on the following operator identity (which makes sense in~$\cH^1$ for instance, in view of~\eqref{eq:semigroup_estimates_H1mu})
\[
\Lgam^{-1} = -\int_0^{+\infty} \rme^{t \Lgam} \, \d{t}.
\]
Since 
\[
\int_\cE \left( \rme^{t \Lgam} \psi \right) \, g_\gamma \, \d{\mu} = \mathbb{E} \Big( \psi(q_t,p_t) g_\gamma(q_0,p_0) \Big),
\]
where the expectation is taken over all initial conditions $(q_0,p_0)$ distributed according to~$\mu$  and over all realizations of equilibrium Langevin dynamics~\eqref{eq:Langevin}, the leading order correction term~\eqref{eq:leading_correction} can be rewritten as
\begin{equation}
\label{eq:rewriting_correction}
\int_\cE \psi(q,p) f_{\gamma}(q,p) \, \mu(\d{q} \, \d{p}) = -\int_0^{+\infty} \mathbb{E} \Big( \psi(q_t,p_t) g_\gamma(q_0,p_0) \Big) \d{t}.
\end{equation}
The following result (proved in Section~\ref{sec:proof_approx_GK_formula}) shows how to approximate quantities such as~\eqref{eq:rewriting_correction} up to errors $\mathrm{O}(\dt^\alpha)$, when the invariant measure of the numerical scheme is correct to terms of order~$\mathrm{O}(\dt^\alpha)$ (as discussed in Section~\ref{sec:error_estimates_finite_friction}). The fundamental ingredient is the replacement of the observable $\psi$ by some modified observable, in the spirit of backward analysis. Let us emphasize that we do not require the numerical scheme to be of weak or strong order~$p$ in itself. For instance, GLA schemes are only first order correct on trajectories (as proved in~\cite{BO10}), but nonetheless may have invariant measures which are very close to~$\mu$. To somewhat simplify the notation and state our result in a more general fashion since it can be used in other contexts than Langevin dynamics (see~\cite{FHS14} for an application to Metropolis-Hastings schemes), we do not denote explicitly all the dependencies on~$\gamma$ although the reader should keep them in mind. 

\begin{theorem}
\label{thm:approx_GK_formula}
Consider a numerical method with an invariant measure $\mu_\dt$ having bounded moments at all orders (i.e.~\eqref{eq:moment_estimate} is satisfied) and such that, for $\psi \in \mathcal{S}$, 
\begin{equation}
\label{eq:asssumption_order_method}
  \int_\cE \psi \, \d{\mu}_\dt = \int_\cE \psi \, \d{\mu} + \dt^\alpha r_{\psi,\dt},
\end{equation}
where the remainder $r_{\psi,\dt}$ is uniformly bounded for $\dt$ small enough. Suppose in addition that its evolution operator $P_\dt$ is such that, for any $\psi \in \mathcal{S}$, 
\begin{equation}
\label{eq:expansion_I_Pdt}
-\frac{\Id-P_\dt}{\dt} \psi = \cL_\gamma\psi + \dt S_1\psi + \dots + \dt^{\alpha-1} S_{\alpha-1}\psi + \dt^\alpha \widetilde{R}_{\alpha,\dt}\psi, 
\end{equation}
where $S_1\psi,\dots,S_{\alpha-1}\psi,\widetilde{R}_{\alpha,\dt}\psi \in \mathcal{S}$ and there exists $s>0$ such that the remainder $\widetilde{R}_{\alpha,\dt}\psi$ is uniformly bounded in $L^\infty_{\Li_s}$ for~$\dt$ sufficiently small. Assume finally that $P_\dt$ satisfies the uniform ergodicity condition~\eqref{eq:ergodicity_num} (hence~\eqref{eq:unif_bound_Linf_Lis} holds). Then, the integrated correlation of two observables $\psi, \varphi \in \widetilde{\mathcal{S}}$ can be approximated by a Riemann sum up to an error of order $\dt^\alpha$:
\begin{equation}
\label{eq:correction_GK_general}
\int_0^{+\infty} \mathbb{E} \Big( \psi(q_t,p_t) \varphi(q_0,p_0) \Big) \d{t} = \dt \sum_{n=0}^{+\infty} \mathbb{E}_\dt \left(\widetilde{\psi}_{\dt,\alpha}\left(q^{n},p^{n}\right)\varphi\left(q^0,p^0\right)\right) + \dt^\alpha r^{\psi,\varphi}_\dt, 
\end{equation}
where $r_\dt^{\psi,\varphi}$ is uniformly bounded for $\dt$ sufficiently small, the expectation $\mathbb{E}_\dt$ is over all initial conditions $(q_0,p_0)$ distributed according to~$\mu_\dt$ and over all realizations of the Markov chain induced by $P_\dt$, and the modified observable $\widetilde{\psi}_{\dt,\alpha} \in \mathcal{S}$ reads 
\[
\widetilde{\psi}_{\dt,\alpha} = \psi_{\dt,\alpha} - \int_\cE \psi_{\dt,\alpha} \, \d{\mu}_\dt, \qquad
\psi_{\dt,\alpha} = \left(\Id + \dt \,S_1 \Lgam^{-1} + \dots + \dt^{\alpha-1} S_{\alpha-1}\Lgam^{-1} \right)\psi.
\]
\end{theorem}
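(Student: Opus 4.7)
The plan is to identify both sides of~\eqref{eq:correction_GK_general} with resolvent integrals — the continuous integrated correlation with $-\int_\cE \varphi\,\Lgam^{-1}\psi\,\d\mu$, and the discrete Riemann sum with $\int_\cE \varphi\,\dt(\Id-P_\dt)^{-1}\widetilde{\psi}_{\dt,\alpha}\,\d\mu_\dt$ — and then to match the two by applying the expansion~\eqref{eq:expansion_I_Pdt} to a well-chosen auxiliary function. For the continuous side, since $\psi\in\widetilde{\mathcal{S}}\subset\mathcal{H}^1$, the hypocoercive bound~\eqref{eq:semigroup_estimates_H1mu} justifies $\int_0^{+\infty}\rme^{t\Lgam}\psi\,\d t = -\Lgam^{-1}\psi$. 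For the discrete side, Corollary~\ref{corr:resolvent_estimates_I_Pdt} gives a uniform-in-$\dt$ bound on $\dt(\Id-P_\dt)^{-1}$ acting on mean-zero functions, so the Neumann series representation applies to $\widetilde{\psi}_{\dt,\alpha}\in L^\infty_{\Li_s,\dt}$.

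The crucial step is then to apply~\eqref{eq:expansion_I_Pdt} to the function $\phi := -\Lgam^{-1}\psi$, which belongs to $\widetilde{\mathcal{S}}$ thanks to Theorem~\ref{thm:stability_S}. Multiplying through by $-\dt$, using $\Lgam\phi = -\psi$, and collecting terms exactly in the pattern defining $\psi_{\dt,\alpha}$, one obtains the pointwise identity
\[
(\Id - P_\dt)\phi \;=\; \dt\,\psi_{\dt,\alpha} \;-\; \dt^{\alpha+1}\,\widetilde{R}_{\alpha,\dt}\phi.
\]
Integrating against $\d\mu_\dt$ and invoking $P_\dt$-invariance of $\mu_\dt$ yields $\int_\cE \psi_{\dt,\alpha}\,\d\mu_\dt = \dt^{\alpha}\int_\cE \widetilde{R}_{\alpha,\dt}\phi\,\d\mu_\dt = O(\dt^\alpha)$, so the constant subtracted in the definition of $\widetilde{\psi}_{\dt,\alpha}$ is itself $O(\dt^\alpha)$ and can be absorbed into the remainder. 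Inverting $\Id - P_\dt$ on the mean-zero subspace, where~\eqref{eq:unif_bound_Linf_Lis} applies, then gives
\[
\phi - \int_\cE \phi\,\d\mu_\dt \;=\; \dt(\Id - P_\dt)^{-1}\widetilde{\psi}_{\dt,\alpha} \;+\; \dt^\alpha\, r_\dt,
\]
with $r_\dt$ uniformly bounded in some $L^\infty_{\Li_s}$.

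To conclude, I would integrate this last identity against $\varphi\,\d\mu_\dt$, use assumption~\eqref{eq:asssumption_order_method} applied to $\varphi\phi\in\mathcal{S}$ and to $\varphi\in\widetilde{\mathcal{S}}$ to convert $\d\mu_\dt$ integrals into $\d\mu$ integrals at the cost of $O(\dt^\alpha)$ remainders, and use the uniform moment bound~\eqref{eq:moment_estimate} to pair $r_\dt$ with $\varphi$. The left-hand side then equals $\int_\cE \varphi\,\phi\,\d\mu + O(\dt^\alpha) = -\int_\cE \varphi\,\Lgam^{-1}\psi\,\d\mu + O(\dt^\alpha)$, which is the continuous integrated correlation, and the right-hand side is exactly the Riemann sum in~\eqref{eq:correction_GK_general} modulo $O(\dt^\alpha)$.

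The main obstacle is maintaining the bookkeeping of remainders in function spaces where a single uniform-in-$\dt$ resolvent bound is available: one must control $\widetilde{R}_{\alpha,\dt}\phi$ in some $L^\infty_{\Li_s,\dt}$, which combines the hypothesis on $\widetilde{R}_{\alpha,\dt}\psi$ in~\eqref{eq:expansion_I_Pdt} with the fact that $\phi\in\mathcal{S}$ from Theorem~\ref{thm:stability_S}, so that Corollary~\ref{corr:resolvent_estimates_I_Pdt} can be invoked with a constant independent of $\dt$. Without this uniform bound, $(\Id-P_\dt)^{-1}$ would contribute a spurious factor $1/\dt$ and the final error would degrade from $O(\dt^\alpha)$ to $O(\dt^{\alpha-1})$.
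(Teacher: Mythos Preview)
Your argument is correct and follows essentially the same route as the paper's proof. Both reduce to applying the expansion~\eqref{eq:expansion_I_Pdt} to $\phi=-\Lgam^{-1}\psi\in\widetilde{\mathcal{S}}$, recognising that $(\Id-P_\dt)\phi=\dt\,\psi_{\dt,\alpha}-\dt^{\alpha+1}\widetilde{R}_{\alpha,\dt}\phi$, inverting $\Id-P_\dt$ on the $\mu_\dt$-mean-zero subspace using Corollary~\ref{corr:resolvent_estimates_I_Pdt}, and finally switching between $\mu$ and $\mu_\dt$ via~\eqref{eq:asssumption_order_method}; the paper merely orders these steps slightly differently, starting from $\int_\cE(-\Lgam^{-1}\psi)\varphi\,\d\mu$ and inserting $\Pi^\perp_\dt$ before writing $\Pi^\perp_\dt\phi$ as $\dt\sum_n P_\dt^n$ applied to $\Pi^\perp_\dt\big(\frac{\Id-P_\dt}{\dt}\big)\phi$.
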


The assumptions of this theorem are satisfied for the splitting schemes considered in this article (see the comment after~\eqref{eq:P_dt_eq_order1} for the boundedness of the remainder $\widetilde{R}_{\alpha,\dt}\psi$). 

In the particular case $\alpha=2$, which is in fact the most relevant one from a practical viewpoint, it is possible not to modify the observable~$\psi$ when the discrete generator is correct at order~$2$ (see~\eqref{eq:expansion_I_Pdt_order_2} below for a precise statement), upon considering a time discretization of the integral which leads to errors of order~$\dt^2$, for instance a trapezoidal rule. The following result is obtained by an appropriate application of Theorem~\ref{thm:approx_GK_formula} (see Section~\ref{sec:proof_approx_GK_formula} for the proof).

\begin{corollary}[Trapezoidal rule for second order schemes]
\label{cor:GK_trapezoidal}
Consider a numerical scheme satisfying the assumptions of Theorem~\ref{thm:approx_GK_formula}, and whose discrete generator is in addition correct at order~2: for any $\psi \in \mathcal{S}$,
\begin{equation}
\label{eq:expansion_I_Pdt_order_2}
-\frac{\Id-P_\dt}{\dt}\psi = \cL_\gamma\psi + \frac\dt2 \cL_\gamma^2\psi + \dt^2 \widetilde{R}_{\dt}\psi.
\end{equation}
Then, for two observables $\varphi,\psi \in \widetilde{\mathcal{S}}$, 
\begin{equation}
\label{eq:correction_GK_second}
\begin{aligned}
& \int_0^{+\infty} \mathbb{E} \Big( \psi(q_t,p_t) \varphi(q_0,p_0) \Big) \d{t} \\
& \qquad \qquad = \frac\dt2 \mathbb{E}_\dt \Big(\psi_{\dt,0}\left(q^0,p^0\right) \varphi\left(q^0,p^0\right)\Big) + \dt \sum_{n=1}^{+\infty} \mathbb{E}_\dt \Big(\psi_{\dt,0}\left(q^{n},p^{n}\right)\varphi\left(q^0,p^0\right)\Big) + \dt^2 r^{\psi,\varphi}_\dt, 
\end{aligned}
\end{equation}
where $r_\dt^{\psi,\varphi}$ is bounded for $\dt$ sufficiently small and
\[
\psi_{\dt,0} = \psi - \int_\cE \psi \, \d{\mu}_\dt.
\]
\end{corollary}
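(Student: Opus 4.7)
The plan is to specialize Theorem~\ref{thm:approx_GK_formula} to $\alpha=2$ and then reshape the resulting left-Riemann sum into the trapezoidal form~\eqref{eq:correction_GK_second}. Under the stronger hypothesis~\eqref{eq:expansion_I_Pdt_order_2} the operator $S_1$ in~\eqref{eq:expansion_I_Pdt} equals $\tfrac12\cL_\gamma^2$, so the modified observable of Theorem~\ref{thm:approx_GK_formula} reduces to $\psi_{\dt,2}=(\Id+\tfrac{\dt}{2}\cL_\gamma)\psi$, and centring with respect to $\mu_\dt$ gives exactly $\widetilde{\psi}_{\dt,2}=\psi_{\dt,0}+\tfrac{\dt}{2}(\cL_\gamma\psi)_{\dt,0}$, where $(\,\cdot\,)_{\dt,0}$ denotes subtraction of the $\mu_\dt$-average. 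Substituting this decomposition into the conclusion of Theorem~\ref{thm:approx_GK_formula} reduces the matter to establishing
\[
\frac{\dt^2}{2}\sum_{n=0}^{+\infty}\mathbb{E}_\dt\bigl[(\cL_\gamma\psi)_{\dt,0}(q^n,p^n)\varphi(q^0,p^0)\bigr]=-\frac{\dt}{2}\mathbb{E}_\dt\bigl[\psi_{\dt,0}(q^0,p^0)\varphi(q^0,p^0)\bigr]+\mathrm{O}(\dt^2),
\]
since pulling the $n=0$ term out of the remaining sum $\dt\sum_{n\geq 0}\mathbb{E}_\dt[\psi_{\dt,0}(q^n,p^n)\varphi(q^0,p^0)]$ then produces the trapezoidal weights $\tfrac{\dt}{2},\dt,\dt,\ldots$ of~\eqref{eq:correction_GK_second}.

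To prove this identity, I would rearrange~\eqref{eq:expansion_I_Pdt_order_2} as $\cL_\gamma\psi=\dt^{-1}(P_\dt-\Id)\psi+\eta_\dt$ with a residual $\eta_\dt$ of size $\mathrm{O}(\dt)$ in $L^\infty_{\Li_s}$, and note that $\mu_\dt$-invariance of $P_\dt$ makes $(P_\dt-\Id)\psi$ automatically $\mu_\dt$-mean-zero, so the centred decomposition reads $(\cL_\gamma\psi)_{\dt,0}=\dt^{-1}(P_\dt-\Id)\psi+\widetilde{\eta}_\dt$ with $\widetilde{\eta}_\dt\in L^\infty_{\Li_s,\dt}$ and $\|\widetilde{\eta}_\dt\|_{L^\infty_{\Li_s,\dt}}=\mathrm{O}(\dt)$. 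The $P_\dt-\Id$ contribution to the sum is then a telescoping series: by the Markov property each summand equals $\mathbb{E}_\dt[\psi_{\dt,0}(q^{n+1},p^{n+1})\varphi(q^0,p^0)]-\mathbb{E}_\dt[\psi_{\dt,0}(q^n,p^n)\varphi(q^0,p^0)]$, and the boundary term at infinity is killed by the uniform exponential ergodicity~\eqref{eq:ergodicity_num} applied to $\psi_{\dt,0}$, leaving precisely $-\dt\,\mathbb{E}_\dt[\psi_{\dt,0}(q^0,p^0)\varphi(q^0,p^0)]$ after multiplication by $\dt$.

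The residual piece coming from $\widetilde{\eta}_\dt$ is handled by the uniform resolvent bound from Corollary~\ref{corr:resolvent_estimates_I_Pdt}: since $\widetilde{\eta}_\dt$ has zero $\mu_\dt$-average,
\[
\dt\sum_{n=0}^{+\infty}\mathbb{E}_\dt\bigl[\widetilde{\eta}_\dt(q^n,p^n)\varphi(q^0,p^0)\bigr]=\int_\cE\bigl(\dt(\Id-P_\dt)^{-1}\widetilde{\eta}_\dt\bigr)(q,p)\,\varphi(q,p)\,\mu_\dt(\d q\,\d p),
\]
and~\eqref{eq:unif_bound_Linf_Lis} makes the bracketed function $\mathrm{O}(\dt)$ in $L^\infty_{\Li_s,\dt}$; multiplying by the prefactor $\dt/2$ leaves only an $\mathrm{O}(\dt^2)$ contribution, which establishes the displayed identity and hence~\eqref{eq:correction_GK_second}. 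The hard part will be exactly this last bookkeeping step: because the sums run to infinity, one cannot control them by brute-force term-by-term estimates (which would produce divergent factors of $N\dt$), and it is precisely the uniform-in-$\dt$ resolvent bound from Corollary~\ref{corr:resolvent_estimates_I_Pdt}, itself a consequence of the uniform geometric ergodicity, that allows the $\mathrm{O}(\dt)$ perturbation $\widetilde{\eta}_\dt$ to produce only an $\mathrm{O}(\dt^2)$ global error.
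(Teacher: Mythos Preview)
Your argument is correct, and it reaches the same conclusion as the paper by a genuinely different route.

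The paper handles the correction term $\tfrac{\dt}{2}S_1\cL_\gamma^{-1}\psi=\tfrac{\dt}{2}\cdot\tfrac12\cL_\gamma\psi$ by re-applying Theorem~\ref{thm:approx_GK_formula} (at order~1) to the function $S_1\cL_\gamma^{-1}\psi$: the continuous-time integral $\int_0^\infty\mathbb{E}[S_1\cL_\gamma^{-1}\psi(q_t,p_t)\varphi(q_0,p_0)]\,\d t$ simplifies algebraically to $-\int_\cE\cL_\gamma^{-1}S_1\cL_\gamma^{-1}\psi\,\varphi\,\d\mu=-\tfrac12\int_\cE\psi\varphi\,\d\mu$ thanks to $\cL_\gamma^{-1}S_1\cL_\gamma^{-1}=\tfrac12\Id$, and one then passes from $\mu$ to $\mu_\dt$ at the cost of an $\mathrm{O}(\dt)$ error. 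You instead stay entirely in the discrete world: you replace $\cL_\gamma\psi$ by $\dt^{-1}(P_\dt-\Id)\psi$ plus an $\mathrm{O}(\dt)$ residual, telescope the $(P_\dt-\Id)$ part of the sum directly, and absorb the residual via Corollary~\ref{corr:resolvent_estimates_I_Pdt}.

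The two arguments are really discrete and continuous versions of the same identity (your telescoping is the discrete analogue of $\int_0^\infty\partial_t\rme^{t\cL_\gamma}\psi\,\d t=-\psi$). Your version is slightly more self-contained, since it avoids a second pass through Theorem~\ref{thm:approx_GK_formula} and the transfer between $\mu$ and $\mu_\dt$; the paper's version makes more transparent that the special hypothesis $S_1=\tfrac12\cL_\gamma^2$ is exactly what collapses $\cL_\gamma^{-1}S_1\cL_\gamma^{-1}$ to a multiple of the identity.
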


\subsubsection{Numerical approximation}

There are two principal ways to estimate the expectations in~\eqref{eq:correction_GK_general} or~\eqref{eq:correction_GK_second}, using either several independent realizations of the nonequilibrium dynamics or a single, long trajectory, see for instance the discussion in Section~13.4 of~\cite{Tuckerman}. When $K$ independent realizations $(q^{n,k},p^{n,k})$ are generated for $N_{\rm iter}$ timesteps each, starting from initial conditions distributed according to $\mu_\dt$, the expectation in~\eqref{eq:correction_GK_general} may be approximated using empirical averages of the correlation functions as
\[
\frac{\dt}{K}  \sum_{k=1}^K \sum_{n=0}^{N_{\rm iter}} \left[ \psi_{\dt,\alpha}\left(q^{n,k},p^{n,k}\right) - \Psi_{\dt,\alpha}^{K,N_{\rm iter}} \right]\varphi\left(q^{0,k},p^{0,k}\right), 
\]
where $\alpha = 1$ and $\psi_{\dt,1} = \psi$ for first order splittings; while $\alpha = 2$ and $\psi_{\dt,2} = (1+\dt\cL_\gamma/2)\psi$ for second order ones since $S_1 = \cL_\gamma^2/2$ for the schemes presented in Section~\ref{sec:splitting_schemes_2nd} (see for instance~\eqref{eq:I_Pdt_eq_order2}). The empirical average $\Psi_{\dt,p}^{M,N_{\rm iter}}$ reads
\[
\Psi_{\dt,\alpha}^{M,N_{\rm iter}} = \frac{1}{K (1+N_{\rm iter})} \sum_{k=1}^K \sum_{n=0}^{N_{\rm iter}}
\psi_{\dt,\alpha}\left(q^{n,k},p^{n,k}\right).
\] 
This formula highlights the other errors arising from the discretization: (i) a statistical error related to the finiteness of $K$ and to the fact that initial conditions are obtained in practice by subsampling a single, long trajectory; (ii) a truncation error related to the finiteness of $N_{\rm iter}$.

\subsubsection{Numerical illustration}
\label{sec:numerics}

We illustrate the convergence results~\eqref{eq:correction_GK_general} and~\eqref{eq:correction_GK_second} for a simple two-dimensional system. We denote $q=(x,y) \in \mathcal{M} = (2\pi \mathbb{T})^2$, and consider the potential
\[
V(q) = 2 \cos(2x) + \cos(y).
\]
The inverse temperature is fixed to $\beta = 1$ and we consider a trivial mass matrix $M = \mathrm{Id}$ with unit friction $\gamma=1$. Trajectory data is taken from $10^3$ independent runs of fixed time interval $2\times 10^8$, with the aim to compute the integral of the velocity autocorrelation function, which corresponds to $\psi(q,p) = \varphi(q,p) = M^{-1}p$ in~\eqref{eq:correction_GK_general}. Using the second order $P_\dt^{\gamma C, B,A,B, \gamma C}$ scheme, applying the appropriate correction function~\eqref{eq:correction_GK_second} gives the predicted order~$\dt^2$ result, while the standard Riemann approximation has errors of order~$\dt$. In the numerical results in Figure \ref{fig:corr} the corrected approximation gives marginally better results than the trapezoidal rule (though of the same order) due to additional higher order terms being included.

Let us now numerically confirm the error estimates~\eqref{eq:error_first_order_schemes}-\eqref{eq:error_second_order_schemes}-\eqref{eq:error_GLA_general}. More precisely, we show that, provided the leading correction term~\eqref{eq:leading_correction} is estimated by discretizing~\eqref{eq:rewriting_correction} using~\eqref{eq:correction_GK_second} and subtracted from the estimated result, canonical averages are estimated up to errors of order~$\dt^4$ for second order splittings instead of~$\dt^2$ without the correction. We use the same trajectory data as above to approximate the canonical average of the total system energy~$H$. We test the effectiveness of the correction both in practice and principle, by computing the observed average and correction term in the same simulation in the former case, while computing a more accurate correction term independently in the latter case (using a smaller timestep~$\dt=0.1$). The results are shown in the right panel of Figure \ref{fig:corr}. 

\begin{figure}
\begin{center}
\includegraphics[width=\textwidth]{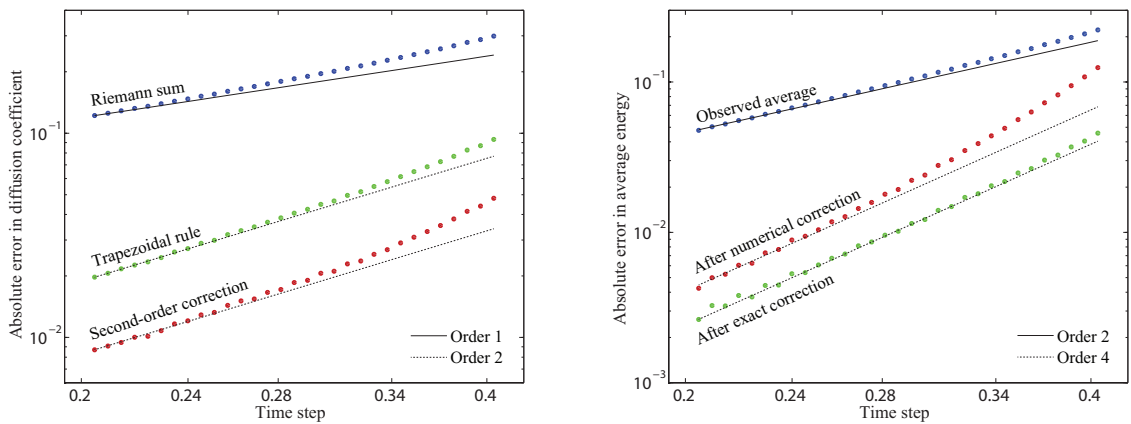}
\end{center}
\caption{\label{fig:corr} 
Left: The error in the value of the integrated velocity autocorrelation function is compared at a number of timesteps when computed using a Riemann sum or the correction term provided in \eqref{eq:correction_GK_general}. The result from computing the integral using the trapezoidal rule is also shown. Right: The error in the computed average of total energy is plotted, with the correction term computed using the same stepsize demonstrating the practical application of the method. We can test the validity of \eqref{eq:error_second_order_schemes} in principle by computing the correction more accurately at a smaller timestep in a separate simulation, this result is labelled as the `exact correction'.  All results are computed using the scheme associated with $P_\dt^{\gamma C, B,A,B, \gamma C}$ with $\beta=\gamma = 1$. } 
\end{figure}

\subsection{Overdamped limit}
\label{sec:ovd_limit}

We study in this section the overdamped limit $\gamma \to +\infty$, assuming that the mass matrix is $M = \Id$. We first study the consistency of the invariant measures of limiting numerical schemes in Section~\ref{sec:ovd_schemes}, before stating precise convergence results for second order splitting schemes in Section~\ref{sec:rigorous_estimates_ovd}.Ultimately, we relate in Section~\ref{sec:ovd_limit_correction} the overdamped limit of the correction terms obtained for finite $\gamma$ to the correction obtained by directly studying the overdamped limit.

\subsubsection{Overdamped limits of splitting schemes}
\label{sec:ovd_schemes}

The only part of the numerical schemes where the friction enters is the Ornstein-Uhlenbeck process on momenta. The limit $\gamma \to +\infty$ for $\dt > 0$ fixed amounts to resampling momenta according to the Gaussian distribution~$\kappa(\d{p})$ at all timesteps. For instance, the numerical scheme associated with the evolution operator $P_\dt^{\gamma C, B,A,B,\gamma C}$ reduces to
\[
q^{n+1} = q^n - \frac{\dt^2}{2} \nabla V(q^n) + \frac{\dt}{\sqrt{\beta}} \, G^n,
\]
where $(G^n)$ are independent and identically distributed Gaussian random vectors with identity covariance. This is indeed a consistent discretization of the overdamped process~\eqref{eq:overdamped} with an effective timestep $h = \dt^2/2$, and the invariant measure of this numerical scheme is close to~$\overline{\mu}$. Other schemes may have non-trivial large friction limits and invariant measures close to~$\overline{\mu}$. This is the case for the scheme associated with the evolution operator $P_\dt^{B,A,\gamma C,A,B}$, for which the limiting discrete dynamics reads (see~\cite{LM12})
\begin{align*}
q^1 &= q^0- \frac{\dt^2}{4} \nabla V(q^0) + \frac{\dt}{2\sqrt{\beta}} (G^0+G^{1}),\\
q^{n+1} &= q^n - \frac{\dt^2}{2} \nabla V(q^n) + \frac{\dt}{2\sqrt{\beta}} (G^n+G^{n+1}), \quad \textrm{for } n>0.
\end{align*}
Note that $(q^n)$ is not a Markov chain due to the correlations in the random noises. 

On the other hand, the limits of the invariant measures associated with certain schemes are not consistent with the canonical measure~$\overline{\mu}$. This is the case for the first-order schemes, as well as the second order splittings listed in item~(iii) in Section~\ref{sec:splitting_schemes_2nd}. For instance, the limit of the scheme associated with $P_\dt^{\gamma C,A,B}$ reads
\[
q^{n+1} = q^n + \frac{\dt}{\sqrt{\beta}} \, G^n.
\]
The invariant measure of this Markov chain is the uniform measure on $\mathcal{M}$, and is therefore very different from the invariant measure~$\overline{\mu}$ of the continuous dynamics~\eqref{eq:overdamped} (it amounts to setting $V = 0$). As another example, consider the limit of the scheme associated with $P_\dt^{\gamma C,B,A}$:
\[
q^{n+1} = q^n - \dt^2 \nabla V(q^n) + \frac{\dt}{\sqrt{\beta}} \, G^n.
\]
This is the Euler-Maruyama discretization of~\eqref{eq:overdamped} with an effective timestep $h=\dt^2$ but an inverse temperature $2\beta$ rather than $\beta$.

\subsubsection{Rigorous error estimates}
\label{sec:rigorous_estimates_ovd}

The following result quantifies the errors of the invariant measure of second order splitting schemes of Langevin dynamics, for large values of $\gamma$. We restrict ourselves to the second order splittings where the Ornstein-Uhlenbeck part is either at the ends or in the middle (categories~(i) and~(ii) in Section~\ref{sec:splitting_schemes_2nd}). From a technical viewpoint, we are able here to bound remainder terms uniformly in $\gamma$ by relying on the properties of the limiting operator $\Lovd^{-1}$. The result we obtain is the following (see Section~\ref{sec:proof_thm:ovd_limit} for the proof).

\begin{theorem}
\label{thm:ovd_limit}
Consider any of the second order splittings presented in Section~\ref{sec:splitting_schemes_2nd}, denote by $\mu_{\gamma,\dt}(\d{q} \, \d{p})$ its invariant measure, and by $\overline{\mu}_{\gamma,\dt}(\d{q})$ its marginal in the position variable. Then there exists a function $f_{2,\infty} = f_{2,\infty}(q) \in C^\infty(\mathcal{M})$, with average zero with respect to~$\overline{\mu}$, such that, for any smooth $\psi = \psi(q) \in C^\infty(\mathcal{M})$ and $\gamma \geq 1$, 
\[
\int_\mathcal{M} \psi(q) \, \overline{\mu}_{\gamma,\dt}(\d{q}) = \int_\mathcal{M} \psi \, \d{\overline{\mu}} + \dt^2 \int_\mathcal{M} \psi \, f_{2,\infty}\, \d{\overline{\mu}} + r_{\psi,\gamma,\dt},
\]
where the remainder is of order $\dt^4$ up to terms exponentially small in $\gamma \dt$. More precisely, there exist constants $a,b \geq 0$ and $\kappaK > 0$ (all depending on $\psi$) such that
\[
\left|r_{\psi,\gamma,\dt}\right| \leq a \dt^4 + b \, \rme^{-\kappaK \gamma \dt}.
\]
The expression of $f_{2,\infty}$ depends on the numerical scheme at hand:
\begin{equation}
\label{eq:correction_overdamped}
\begin{aligned}
f^{\gamma C,B,A,B,\gamma C}_{2,\infty}(q) & = \frac18 \left( -2\Delta V + \beta |\nabla V|^2 + a_{\beta,V}\right), \qquad a_{\beta,V} = \int_\mathcal{M} \Delta V \, \d{\overline{\mu}} = \beta \int_\mathcal{M} \left|\nabla V\right|^2 \, \d{\overline{\mu}},\\
f_{2,\infty}^{A,B,\gamma C, B,A}(q) & = -\frac18 \left(\Delta V - a_{\beta,V}\right), \\
f^{\gamma C,A,B,A,\gamma C}_{2,\infty}(q) & = \frac18 \left( \Delta V - \beta |\nabla V|^2\right), \\
f^{B,A,\gamma C,A,B}_{2,\infty}(q) & = 0. 
\end{aligned}
\end{equation}
\end{theorem}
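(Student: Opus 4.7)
The plan is to revisit the Talay-Tubaro expansion behind Theorem~\ref{thm:error_second_order_schemes}, tracking the dependence on $\gamma$ uniformly, and to isolate an exponentially small remainder $e^{-\kappaK\gamma\dt}$ from a uniform $\dt^4$ bulk by exploiting both the resolvent decomposition of Theorem~\ref{lem:bounds_CL_gamma} and the exponential contraction of $e^{\gamma\dt C/2}$ on $\mathrm{Ker}(\pi)^\perp$.

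First I would reduce to the two symmetric reference schemes $P_\dt^{\gamma C,B,A,B,\gamma C}$ and $P_\dt^{\gamma C,A,B,A,\gamma C}$, whose arrangement places the OU blocks $e^{\gamma\dt C/2}$ at the ends. The two remaining schemes $P_\dt^{A,B,\gamma C,B,A}$ and $P_\dt^{B,A,\gamma C,A,B}$ are related to the GLA schemes $P_\dt^{\gamma C,B,A,B}$ and $P_\dt^{\gamma C,A,B,A}$ from~\eqref{eq:GLA_schemes} via Lemma~\ref{lem:TU}, with a shuffling operator of the form $U_\dt = e^{\dt A/2}e^{\dt B/2}$ (or its $A\!\leftrightarrow\!B$ swap); then Corollary~\ref{cor:error_GLA} identifies their leading correction with that of the corresponding reference scheme. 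The discrepancy in the explicit formulae~\eqref{eq:correction_overdamped} is produced by the Taylor expansion of $U_\dt \psi(q)$ to order $\dt^2$, which, when integrated against $\overline\mu_{\gamma,\dt}$ and transformed via integrations by parts on the torus, yields exactly the $\Delta V$ and $\beta|\nabla V|^2$ shifts appearing there.

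Second, for a reference scheme Theorem~\ref{thm:error_second_order_schemes} provides $f_{2,\gamma} = (\cL_\gamma^*)^{-1} g_2$ where $g_2 = \frac{1}{12}(A+B)[(A+B/2)g]$ (or its $A\!\leftrightarrow\!B$ analog) and $g = \beta p^T M^{-1}\nabla V$; crucially $g_2$ does not depend on $\gamma$. Since $\psi$ depends only on $q$, only $\pi f_{2,\gamma}$ matters. Substituting the resolvent expansion
\[
(\cL_\gamma^*)^{-1} = \gamma \Lovd^{-1}\pi - p^T\nabla_q \Lovd^{-1}\pi + \Lovd^{-1}\pi(A+B) C^{-1}(\Id-\pi) + O_{\mathcal{B}(\cH^1)}(1/\gamma)
\]
from Theorem~\ref{lem:bounds_CL_gamma} and projecting by $\pi$ on the left, the middle term is killed by parity. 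The potentially divergent leading term $\gamma\Lovd^{-1}\pi g_2$ vanishes because $\pi g_2 = 0$: this follows from the general identity $\pi(A+B)h = e^{\beta V}\nabla_q\cdot(e^{-\beta V}M^{-1}\pi(ph))$, applied with $h = (A+B/2)g$, which is the sum of a $p$-quadratic polynomial and a function of $q$ alone, so $ph$ is $p$-odd and $\pi(ph) = 0$. What remains is $\pi f_{2,\gamma} = \Lovd^{-1}\pi(A+B)C^{-1}g_2 + O(1/\gamma)$, and a direct evaluation of $C^{-1}$ on the relevant low-order $p$-Hermite polynomials (with $M=\Id$) then recovers the explicit expressions in~\eqref{eq:correction_overdamped}.

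Third, to establish the uniform-in-$\gamma$ remainder bound I would write the BCH-type expansion
\[
-\frac{\Id-P_\dt}{\dt} = \cL_\gamma + \frac{\dt}{2}\cL_\gamma^2 + \dt^2 S_3 + \dt^3 S_4 + \dots
\]
and observe that each term $S_k$ involving factors of $\gamma C$ produces functions in $\mathrm{Ker}(\pi)^\perp$ at some stage of the iterated composition. On that subspace Lemma~\ref{lem:bounded_resolvent_perp} gives $\gamma$-uniform control of $\cL_\gamma^{-1}$, while the flanking OU blocks $e^{\gamma\dt C/2}$ contract these perpendicular contributions by $e^{-c\gamma\dt}$, yielding the $b\,e^{-\kappaK\gamma\dt}$ piece of the remainder. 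The $\gamma$-free parts of the expansion contribute to the uniform $a\dt^4$ term via Theorem~\ref{thm:stability_S} and the boundedness of $\Lovd^{-1}$ on smooth mean-zero functions of $q$. The principal obstacle is exactly this decomposition: a naive application of Theorem~\ref{thm:error_second_order_schemes} only gives a $\gamma$-dependent prefactor in front of $\dt^4$, and the combinatorial fact that divergent-in-$\gamma$ sources all sit in $\mathrm{Ker}(\pi)^\perp$, where the flanking OU contraction operates, is the crux — and it explains why the category~(iii) schemes of Section~\ref{sec:splitting_schemes_2nd} must be excluded, since their OU block is sandwiched in the middle and provides no flanking contraction, consistent with their inconsistent overdamped limit noted in Section~\ref{sec:ovd_schemes}.
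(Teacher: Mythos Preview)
Your computation of the leading correction $f_{2,\infty}$ via the resolvent expansion of Theorem~\ref{lem:bounds_CL_gamma} is sound and is in fact exactly the content of Proposition~\ref{prop:ovd_limit_correction} (proved in Section~\ref{sec:proof_prop:ovd_limit_correction}); the parity argument for $\pi g_2=0$ and the subsequent identification of $\pi f_{2,\gamma}=\Lovd^{-1}\pi(A+B)C^{-1}g_2+\mathrm{O}(1/\gamma)$ are correct. However, this only shows convergence of the correction function, not the full statement of Theorem~\ref{thm:ovd_limit} with its uniform remainder bound. Two smaller points: your TU reduction in the first paragraph misidentifies the shuffling operator (for $P_\dt^{B,A,\gamma C,A,B}$ versus the reference scheme one needs $U_{\gamma,\dt}=\rme^{\gamma\dt C/2}\rme^{\dt A/2}\rme^{\dt B/2}$, which still carries $\gamma$), and routing through the GLA schemes is unnecessary.

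The real gap is in your third paragraph. You correctly diagnose that a naive application of Theorem~\ref{thm:error_second_order_schemes} produces a $\gamma$-dependent prefactor on $\dt^4$, but your proposed cure --- that ``divergent-in-$\gamma$ sources all sit in $\mathrm{Ker}(\pi)^\perp$, where the flanking OU contraction operates'' --- is not established and is not obviously true at the level of the iterated BCH remainders composed with the approximate inverse $Q_{2,\dt}$. The operators $S_k$ involve arbitrary words in $A,B,\gamma C$, and after applying $\Lgam^{-1}$ (which itself has norm $\sim\gamma$) the resulting functions are not confined to $\mathrm{Ker}(\pi)^\perp$. Turning this heuristic into a proof would require a systematic decomposition of every term in the remainder into a $\gamma$-free bulk plus pieces that are simultaneously in the range of $(\Id-\pi)$ \emph{and} adjacent to a surviving OU factor --- a combinatorial claim you have not argued.

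The paper bypasses this entirely by a different and much cleaner route: it replaces the $\gamma$-dependent evolution $P_\dt^{\gamma C,A,B,A,\gamma C}$ by the \emph{limiting} operator $P_{\infty,\dt}=\pi P_{\mathrm{ham},\dt}\pi$ (with $P_{\mathrm{ham},\dt}$ the Verlet step), using the elementary estimate $\|\rme^{\gamma t C}-\pi\|_{\mathcal{B}(L^\infty_{\Li_s})}\leq K\rme^{-\alpha\gamma t}$ (Lemma~\ref{lem:cv_etC}). This produces the $b\,\rme^{-\kappaK\gamma\dt}$ contribution in one stroke. The Talay--Tubaro analysis is then carried out on $P_{\infty,\dt}$, which is a \emph{$\gamma$-free} discretization of overdamped Langevin with effective step $h=\dt^2/2$: one expands $P_{\infty,\dt}=\pi+h\Lovd+\tfrac{h^2}{2}(\Lovd^2+D)+h^3R_{\infty,\dt}$, reads off $f_{2,\infty}$ from $\Lovd f_{2,\infty}=-\tfrac14 D^*\mathbf{1}$, and obtains a uniform $a\dt^4$ remainder because no $\gamma$ appears anywhere. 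The TU lemma for the middle-OU schemes is then applied with $U_{\gamma,\dt}$ replaced by $U_{\infty,\dt}=\pi\rme^{\dt A/2}\rme^{\dt B/2}$, again at the cost of an exponentially small error. This strategy makes the structure of the remainder transparent and explains immediately why category~(iii) schemes fail: there are no flanking OU blocks to replace by~$\pi$.
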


The real number $a_{\beta,V}$ ensures that all functions $f_{2,\infty}$ are of average zero with respect to~$\overline{\mu}$. Two comments are in order. Note first that the result is stated for observables which depend only on the position variable~$q$ since the limiting case $\gamma \to +\infty$ corresponds to a dynamics on the positions only. There is anyway no restriction in stating the result using such observables since, as already discussed in the introduction, the error on the marginal in the position variables is the relevant error, momenta being trivial to sample exactly under the canonical measure. Secondly, let us emphasize that the $\dt^2$ correction term vanishes for the method associated with $P_\dt^{B,A,\gamma C,A,B}$ (as already noted in~\cite{LM12}). This means that the corresponding discretization of overdamped Langevin dynamics (formally obtained by setting $\gamma = +\infty$) has an invariant measure which is correct at second-order in the effective timestep $h = \dt^2/2$. 

\subsubsection{Overdamped limit of the correction terms}
\label{sec:ovd_limit_correction}

In order to relate the convergence result from Theorem~\ref{thm:ovd_limit} to the error estimates from Theorem~\ref{thm:error_second_order_schemes}, we prove that the limits of the correction functions $f_{2,\gamma}$ as $\gamma \to +\infty$ agree with the functions defined in~\eqref{eq:correction_overdamped} (see Section~\ref{sec:proof_prop:ovd_limit_correction} for the proof). This can be seen as a statement regarding the permutation of the limits $\gamma \to +\infty$ and $\dt \to 0$ for the leading correction term, namely, for a smooth function $\psi = \psi(q) \in C^\infty(\mathcal{M})$,
\[
\begin{aligned}
\lim_{\dt \to 0} \lim_{\gamma \to +\infty} \frac{1}{\dt^2} \left( \int_\mathcal{M} \psi \, \d{\overline{\mu}}_{\gamma,\dt} - \int_\mathcal{M} \psi \, \d{\overline{\mu}}\right) & = \lim_{\gamma \to +\infty} \lim_{\dt \to 0} \frac{1}{\dt^2} \left( \int_\mathcal{M} \psi \, \d{\overline{\mu}}_{\gamma,\dt} - \int_\mathcal{M} \psi \, \d{\overline{\mu}}\right) \\ 
& = \lim_{\gamma \to +\infty} \int_\mathcal{M} \psi \left(\pi f_{2,\gamma}\right) \d{\overline{\mu}}\\
& = \int_\mathcal{M} \psi \, f_{2,\infty} \, \d{\overline{\mu}}.
\end{aligned}
\]
The precise result is the following:

\begin{proposition}
\label{prop:ovd_limit_correction}
There exists a constant $K > 0$ such that, for all $\gamma \geq 1$,
\[
\begin{aligned}
\left\| f_2^{\gamma C, B,A,B,\gamma C} - \frac18 \left( -2 \Delta V + \beta |\nabla V|^2 + a_{\beta,V} \right)\right\|_{H^1(\mu)} & \leq \frac{K}{\gamma}, \\
\left\| f_2^{A,B,\gamma C, B,A} - \frac18 \left(-2\Delta V + \beta p^T (\nabla^2 V)p + a_{\beta,V} \right)\right\|_{H^1(\mu)} & \leq \frac{K}{\gamma}, \\
\left\| f_2^{\gamma C, A,B,A,\gamma C} - \frac18 \left( \Delta V - \beta |\nabla V|^2\right) \right\|_{H^1(\mu)} & \leq \frac{K}{\gamma}, \\
\left\| f_2^{B,A,\gamma C,A,B} - \frac18 \left( \Delta V - \beta p^T (\nabla^2 V)p \right) \right\|_{H^1(\mu)} & \leq \frac{K}{\gamma}, \\
\end{aligned}
\]
where the constant $a_{\beta,V}$ is defined in~\eqref{eq:correction_overdamped}.
\end{proposition}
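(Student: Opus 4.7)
The plan is to apply the resolvent expansion of Theorem~\ref{lem:bounds_CL_gamma} for $(\cL_\gamma^*)^{-1}$ directly to the right-hand sides of the Poisson equations supplied by Theorem~\ref{thm:error_second_order_schemes}. Setting $g(q,p) = \beta p^T\nabla V(q)$ (using $M = \Id$), one has $f_2^{\gamma C,B,A,B,\gamma C} = (\cL_\gamma^*)^{-1} F_1$ and $f_2^{\gamma C,A,B,A,\gamma C} = (\cL_\gamma^*)^{-1} F_2$ with $F_1 = \frac{1}{12}(A+B)[(A+B/2)g]$ and $F_2 = -\frac{1}{12}(A+B)[(B+A/2)g]$, both belonging to $\widetilde{\mathcal{S}}\subset\cH^1$.

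The first step is to verify that $\pi F_j = 0$ for $j=1,2$. A direct computation gives $(A+B/2)g = \beta p^T(\nabla^2 V)p - (\beta/2)|\nabla V|^2$ and $(B+A/2)g = -\beta|\nabla V|^2 + (\beta/2)p^T(\nabla^2 V)p$, both even polynomials in $p$. Applying $A+B$ to either of them produces odd-in-$p$ polynomials, which integrate to zero against the centred Gaussian $\kappa(\d p)$. With $\pi F_j = 0$, the two divergent terms $\gamma\Lovd^{-1}\pi F_j$ and $-p^T\nabla_q\Lovd^{-1}\pi F_j$ in the expansion~\eqref{eq:divergent_behavior_Lgamma} both vanish, and $(\Id-\pi)F_j = F_j$. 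Applied to the fixed smooth functions $F_j$, Theorem~\ref{lem:bounds_CL_gamma} then yields
\[
\left\|f_{2,\gamma}^{\gamma C,B,A,B,\gamma C} - \Lovd^{-1}\pi(A+B)C^{-1}F_1\right\|_{H^1(\mu)} \leq \frac{K}{\gamma},
\]
together with the analogous bound for $f_{2,\gamma}^{\gamma C,A,B,A,\gamma C}$.

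The main obstacle is the explicit identification of the limiting functions $\Lovd^{-1}\pi(A+B)C^{-1}F_j$ with those appearing in~\eqref{eq:correction_overdamped}. Since $F_j$ is a polynomial of degree at most three in $p$ with smooth $q$-dependent coefficients, $C^{-1}F_j$ is computable componentwise from the action of $C$ on Hermite polynomials in $p$ (in particular $C^{-1}p_i = -p_i$, together with a three-term expression for $C^{-1}(p_kp_ip_j)$). Applying $\pi(A+B)$ then produces a function of $q$ only, involving $\Delta^2 V$, $\|\nabla^2 V\|_F^2$, $\nabla V\cdot\nabla\Delta V$, and $\nabla V^T(\nabla^2 V)\nabla V$. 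The remaining step is to recognise this expression as $\Lovd[\tfrac{1}{8}(-2\Delta V + \beta|\nabla V|^2)]$ for $F_1$ and as $\Lovd[\tfrac{1}{8}(\Delta V - \beta|\nabla V|^2)]$ for $F_2$, using the elementary identity $\Delta|\nabla V|^2 = 2\|\nabla^2 V\|_F^2 + 2\nabla V\cdot\nabla\Delta V$. Inverting $\Lovd$ on the zero-$\overline{\mu}$-mean subspace then recovers the claimed $f_{2,\infty}$ up to an additive constant, fixed to $\tfrac{1}{8}a_{\beta,V}$ by the requirement that $f_{2,\infty}$ has zero mean with respect to $\overline{\mu}$; integration by parts against $\rme^{-\beta V}$ gives the identity $\beta\int|\nabla V|^2\,\d\overline{\mu} = \int\Delta V\,\d\overline{\mu}$ featuring in the definition of $a_{\beta,V}$.

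Finally, the two remaining estimates for $f_2^{A,B,\gamma C,B,A}$ and $f_2^{B,A,\gamma C,A,B}$ follow immediately from the additive identities of Theorem~\ref{thm:error_second_order_schemes}, since the correcting term $(A+B)g = \beta p^T(\nabla^2 V)p - \beta|\nabla V|^2$ does not depend on $\gamma$. Adding (respectively subtracting) $\tfrac{1}{8}(A+B)g$ to the first two limits produces exactly the remaining formulas of the proposition, and the $\mathcal{O}(1/\gamma)$ error bound in $H^1(\mu)$ is preserved by this $\gamma$-independent shift.
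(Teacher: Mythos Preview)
Your proposal is correct and follows essentially the same approach as the paper's proof: both apply the adjoint resolvent expansion of Theorem~\ref{lem:bounds_CL_gamma} to the right-hand sides of the Poisson equations in Theorem~\ref{thm:error_second_order_schemes}, use that these right-hand sides are odd in $p$ so that only the $\Lovd^{-1}\pi(A+B)C^{-1}$ term survives, explicitly compute this term and recognise it as $\Lovd$ applied to the claimed limit, and finally obtain the two remaining cases from the $\gamma$-independent additive relations in~\eqref{eq:correction_second_order_schemes}. The paper carries out the $C^{-1}$ and $\pi(A+B)$ computations slightly more explicitly (writing $C^{-1}\widetilde{g}$ in terms of $A^3\pi V$ and $A\pi(\cdots)$, then invoking the identities~\eqref{eq:rules_A+B_carre}--\eqref{eq:rules_A4_BA3_etc}), but the argument is the same.
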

Note that, as expected, the averages with respect to~$\kappa(\d{p})$ of the above limiting functions coincide with the functions~$f_{2,\infty}$ given in~\eqref{eq:correction_overdamped}, that is, $\pi f_{2,\gamma} = f_{2,\infty} + \mathrm{O}(\gamma^{-1})$.

Let us also mention that the overdamped limit of the correction function $f_{1,\gamma}$ for first order splittings is not well defined. This is not surprising since the invariant measures of the corresponding numerical schemes are not consistent with~$\overline{\mu}$, as discussed in Section~\ref{sec:ovd_schemes}. For instance, combining~\eqref{eq:divergent_behavior_Lgamma} and the expressions of the correction functions~\eqref{eq:correction_first_order_schemes}, we see that there exists a constant $K>0$ such that
\begin{equation}
\label{eq:divergence_f1}
\left\| f_{1}^{\gamma C,B,A} + \frac{\gamma \beta}{2} \Lovd^{-1} \mathcal{L}_{\mathrm{ovd},M} V \right\|_{H^1(\mu)} \leq K,
\end{equation}
where the operator
\[
\mathcal{L}_{\mathrm{ovd},M} = -M^{-1} \nabla V \cdot \nabla_q + \frac1\beta M : \nabla^2,
\]
defined on $\mathcal{S}$, is the generator of the overdamped Langevin dynamics with non-trivial mass matrix:
\[
\d{q}_t = -M^{-1} \nabla V(q_t)\,\d{t} + \sqrt{\frac{2}{\beta}} M^{-1/2} \, \d{W}_t.
\]
Note that, when $M = \Id$, the solution can in fact be analytically computed as $f_{1}^{\gamma C,B,A} = -\beta ( \gamma V + p^T \nabla V)/2$. In any case, \eqref{eq:divergence_f1} shows that $f_{1}^{\gamma C,B,A}$ diverges as $\gamma \to +\infty$.

\section{Nonequilibrium dynamics and the computation of transport coefficients}
\label{sec:noneq_systems}

We discuss in this section the numerical estimation of transport properties such as the thermal conductivity, the shear stress, etc. (see~\cite{EM08,Tuckerman} for general physical presentations of the computation of transport coefficients, and Section~3.1 of~\cite{HDR} for a mathematically oriented introduction). 

We consider the prototypical case of the estimation of the autodiffusion coefficient. In this situation, it is relevant to consider a nonequilibrium perturbation of  standard equilibrium Langevin dynamics, where some external forcing arising from a constant force~$F \in \mathbb{R}^{dN}$ is imposed on the system:
\begin{equation}
\label{eq:noneq_Langevin}
\left\{ \begin{aligned}
\d{q}_t & = M^{-1} p_t \, \d{t}, \\
\d{p}_t & = \Big( -\nabla V(q_t) + \eta F \Big) \d{t} - \gamma M^{-1} p_t \, \d{t} + \sqrt{\frac{2\gamma}{\beta}} \, \d{W}_t.
\end{aligned} \right.
\end{equation}
We denote by 
\[
\wcL = F \cdot \nabla_p
\]
the generator of the perturbation (considered as an operator on $L^2(\mu)$, with core~$\mathcal{S}$). Note that the constant force $F$ does not derive from the gradient of a smooth function defined on~$\mathcal{M}$. (It would indeed seem that this force derives from $-F^T q$, but this potential is not periodic.) Therefore, the expression of the invariant measure is unknown, but can be nonetheless obtained as an expansion in powers of~$\eta$ when the magnitude of the forcing is sufficiently small (see Section~\ref{sec:def_transport_coeff}). The effect of the force is to create a non-zero average velocity in the direction of~$F$. The magnitude of the average velocity is a property of the system under consideration. For small forcings, it is linear in~$\eta$, with a constant of proportionality called the \emph{mobility} (see the definition~\eqref{eq:def_nu_NEMD} below), related to the autodiffusion coefficient through \eqref{eq:def_nu_Einstein}.

\begin{remark}
  As shown in~\cite{JPS14}, it is possible to consider more general forcing terms $F(q)$ which do not derive from the gradient of a periodic function. A popular example is provided by shearing forces where the particles experience a force in some direction, whose intensity depends on the coordinates of the system in another direction. 
\end{remark}

We will also be interested in the overdamped limit of the nonequilibrium dynamics~\eqref{eq:noneq_Langevin}, which reads
\begin{equation}
\label{eq:noneq_ovd_Langevin}
\d{q}_t = \Big( -\nabla V(q_t) + \eta F \Big) \d{t} + \sqrt{\frac{2}{\beta}} \, \d{W}_t.
\end{equation}
The generator of this dynamics is $\Lovd + \eta \wcL_{\rm ovd}$ with $\wcL_{\rm ovd} = F \cdot \nabla_q$ (all operators being defined on the core~$\mathcal{S}$). In this case the physically relevant response turns out to be the average force $-F\cdot \nabla V$ exerted in the direction $F$.

\subsection{Definition of transport coefficients}
\label{sec:def_transport_coeff}

Following the strategy advertised in~\cite{rey-bellet} (using the kinetic energy as a Lyapunov function), it is easy to show that the dynamics~\eqref{eq:noneq_Langevin} has a unique invariant probability measure $\mu_{\gamma,\eta}(\d{q}\,\d{p})$ with a smooth density with respect to the Lebesgue measure for any value of $\eta \in \mathbb{R}$. The mobility $\nu_{F,\gamma}$ is defined as the linear response of the velocity in the direction $F$ as the magnitude of the forcing goes to~0:
\begin{equation}
\label{eq:def_nu_NEMD}
\nu_{F,\gamma} = \lim_{\eta \to 0} \frac{1}{\eta} \int_\cE F^T M^{-1} p \, \mu_{\gamma,\eta}(\d{q} \, \d{p}). 
\end{equation}
From linear response theory (see for example the presentation in~\cite[Section~3.1]{HDR}, and the short summary provided in Section~\ref{sec:proof_LRT}), it can be shown that 
\begin{equation}
\label{eq:def_nu_LRT}
\nu_{F,\gamma} = \int_\cE F^T M^{-1} p \, f_{0,1,\gamma}(q,p) \, \mu(\d{q} \, \d{p}), \qquad \Lgam^* f_{0,1,\gamma} = -\wcL^* \mathbf{1} = -\beta F^T M^{-1}p.
\end{equation}
The mobility can therefore be rewritten as the integrated autocorrelation function of the velocity in the direction~$F$:
\begin{equation}
\label{eq:def_nu_GK}
\nu_{F,\gamma} = \beta \int_0^{+\infty} \mathbb{E}\Big[\big(F^T M^{-1}p_t\big)\big(F^T M^{-1}p_0\big)\Big] \d{t},
\end{equation}
where the expectation is over all initial conditions $(q_0,p_0)$ distributed according to~$\mu$ and over all realizations of the equilibrium Langevin dynamics~\eqref{eq:Langevin}. From this relation, it is easily seen that the mobility in the direction~$F$ is related to the autodiffusion coefficient
\begin{equation}
\label{eq:def_nu_Einstein}
D_{F,\gamma} = \lim_{t \to +\infty} \frac{\mathbb{E}\Big[\big(F\cdot(q_t-q_0)\big)^2\Big]}{2t}
\end{equation}
as
\[
\nu_{F,\gamma} = \beta D_{F,\gamma}.
\]
In practice, the two most popular ways of estimating a transport coefficient rely on the Green-Kubo formula~\eqref{eq:def_nu_GK} and the linear response of nonequilibrium dynamics in their steady-states~\eqref{eq:def_nu_NEMD}. Since the error estimates for Green-Kubo type formulas have already been discussed in Theorem~\ref{thm:approx_GK_formula}, we will restrict ourselves in the sequel to the analysis of the numerical errors introduced by nonequilibrium methods.

\subsubsection{Overdamped limit}

The overdamped limit of the mobility $\nu_{F,\gamma}$ is studied in~\cite{HP08}, where the authors  consider the autodiffusion coefficient $D_{F,\gamma}$. First, it is easily shown that the overdamped dynamics~\eqref{eq:noneq_ovd_Langevin} admits a unique invariant probability measure, which we denote by $\overline{\mu}_\eta(\d{q})$. The mobility for the overdamped dynamics~\eqref{eq:noneq_ovd_Langevin} is defined from the linear response of the projected force $-F \cdot \nabla V$ as
\begin{equation}
\label{eq:ovd_mobility}
\overline{\nu}_F = \lim_{\eta \to 0} \frac1\eta \int_\mathcal{M} -F^T \nabla V(q) \, \overline{\mu}_\eta(\d{q}) 
= \beta \int_\mathcal{M} F^T \nabla V(q) \Lovd^{-1} \left(F^T \nabla V(q)\right) \overline{\mu}(\d{q}).
\end{equation}
The derivation of this formula is very similar to that leading to~\eqref{eq:def_nu_NEMD}. The following result summarizes the limiting behavior of the mobility as the friction increases (recall that we set mass matrices to identity when studying overdamped limits). 

\begin{lemma}
\label{lem:ovd_mobility}
There exists $K > 0$ such that, for any $\gamma \geq 1$,
\[
\left|\gamma \nu_{F,\gamma} - \overline{\nu}_F - |F|^2 \right| \leq \frac{K}{\gamma}.
\]
\end{lemma}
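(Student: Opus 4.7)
The plan is to compute $\gamma\nu_{F,\gamma}$ up to errors of order $\gamma^{-1}$ via the Poisson representation $\nu_{F,\gamma} = -\beta\int_\cE F^T p\cdot (\cL_\gamma^*)^{-1}(F^T p)\,\d{\mu}$ supplied by~\eqref{eq:def_nu_LRT} (with $M=\Id$ in the overdamped regime). A direct application of Theorem~\ref{lem:bounds_CL_gamma} to $F^T p$ produces only $(\cL_\gamma^*)^{-1}(F^T p) = \Lovd^{-1}(F^T\nabla V) + \mathrm{O}(\gamma^{-1})$ in~$\cH^1$, since $\pi(F^T p) = 0$ kills the divergent terms $\gamma \Lovd^{-1}\pi$ and $p^T\nabla_q\Lovd^{-1}\pi$, while $C^{-1}(F^T p) = -F^T p$ and $(A+B)(-F^T p) = F\cdot\nabla V$. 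Unfortunately this leading contribution depends on $q$ alone and is therefore $L^2(\mu)$-orthogonal to $F^T p$; only the $\mathrm{O}(\gamma^{-1})$ remainder contributes to $\nu_{F,\gamma}$, and one more order of expansion is required.

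I would extract this next order by adiabatic elimination. Write $g_\gamma := (\cL_\gamma^*)^{-1}(F^T p) = \overline{g}(q) + g_\perp(q,p)$ with $\pi g_\perp = 0$. Using $\cL_\gamma^* = -(A+B) + \gamma C$ together with the elementary identities $\pi C = C\pi = 0$ and $\pi(A+B)\pi = 0$ (the latter because $B\overline{g} = 0$ and $p\cdot\nabla_q\overline{g}$ is odd in~$p$), the Poisson equation $\cL_\gamma^* g_\gamma = F^T p$ decouples into the constraint $\pi(A+B)g_\perp = 0$ and the fast equation $\gamma C g_\perp = F^T p + p\cdot\nabla_q\overline{g} + (\Id-\pi)(A+B)g_\perp$. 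On $\mathrm{Ker}(\pi)$ the operator $C$ is boundedly invertible; combined with $C^{-1}(F^T p) = -F^T p$ and $C^{-1}(p\cdot\nabla_q\overline{g}) = -p\cdot\nabla_q\overline{g}$, a Neumann expansion in $\gamma^{-1}$ yields $g_\perp = -\gamma^{-1}\bigl(F^T p + p\cdot\nabla_q\overline{g}\bigr) + \mathrm{O}(\gamma^{-2})$ in $H^1(\mu)$. Substituting into the constraint and using the direct computations $\pi(A+B)(F^T p) = -F\cdot\nabla V$ and $\pi(A+B)(p\cdot\nabla_q\overline{g}) = \Lovd\overline{g}$ (the latter coming from $\int p_ip_j\,\d{\kappa} = \beta^{-1}\delta_{ij}$) produces $\Lovd\overline{g} = F\cdot\nabla V + \mathrm{O}(\gamma^{-1})$, whence $\overline{g} = \Lovd^{-1}(F\cdot\nabla V) + \mathrm{O}(\gamma^{-1})$.

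Inserting back into $\nu_{F,\gamma} = -\beta\int F^T p\cdot(\overline{g} + g_\perp)\,\d{\mu}$, the $\overline{g}$-term vanishes on $p$-integration, and the leading $\gamma^{-1}$ contribution of $g_\perp$ gives $\nu_{F,\gamma} = \gamma^{-1}\bigl(|F|^2 + \int_\mathcal{M} F\cdot\nabla_q\overline{g}\,\d{\overline{\mu}}\bigr) + \mathrm{O}(\gamma^{-2})$, again using $\int p_ip_j\,\d{\kappa} = \beta^{-1}\delta_{ij}$. A final integration by parts against $\rme^{-\beta V}$, legitimate by the periodicity of $V$, converts $\int F\cdot\nabla_q\overline{g}\,\d{\overline{\mu}}$ into $\beta\int\overline{g}(F\cdot\nabla V)\,\d{\overline{\mu}}$, which by the definition~\eqref{eq:ovd_mobility} of $\overline{\nu}_F$ equals $\overline{\nu}_F + \mathrm{O}(\gamma^{-1})$. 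Multiplication by $\gamma$ yields the desired bound $|\gamma\nu_{F,\gamma} - \overline{\nu}_F - |F|^2| \leq K\gamma^{-1}$.

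The main technical obstacle is the uniform-in-$\gamma$ control of the $\mathrm{O}(\gamma^{-2})$ remainder in $g_\perp$ in a norm strong enough to survive the final pairing with $F^T p$. This is exactly what the uniform hypocoercivity estimate of Lemma~\ref{lem:bounded_resolvent_perp} supplies: it ensures that the Neumann series representation of $\bigl(\gamma C - (\Id-\pi)(A+B)\bigr)^{-1}$ on $\cH^1_\perp$ converges as a bounded operator uniformly in $\gamma \geq 1$. Elliptic regularity of $\Lovd^{-1}$ on the compact manifold $\mathcal{M}$, applied to the smooth function $F\cdot\nabla V \in C^\infty(\mathcal{M})$ (which has zero mean with respect to $\overline{\mu}$ by periodicity of $V$), provides the smoothness of $\overline{g}$ needed to place $p\cdot\nabla_q\overline{g}$ in $H^1(\mu)$, closing the argument.
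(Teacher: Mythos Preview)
Your argument is correct in substance, but it takes a longer route than the paper's. The key difference is that the paper exploits the elementary identity
\[
\cL_\gamma\big(F^T p\big) = -\gamma\,F^T p - F^T\nabla V,
\]
which lets one write $\gamma\,F^T p = -\cL_\gamma(F^T p) - F^T\nabla V$ inside the integrand. After taking adjoints this turns $\gamma\nu_{F,\gamma}$ into $|F|^2 + \beta\int_\cE (F^T p)\,\cL_\gamma^{-1}(F^T\nabla V)\,\d\mu$, and now $F^T\nabla V$ depends on $q$ alone, so the expansion~\eqref{eq:divergent_behavior_Lgamma} of Theorem~\ref{lem:bounds_CL_gamma} applies \emph{directly}: the $\gamma\Lovd^{-1}\pi$ term pairs to zero with $F^T p$, the $p^T\nabla_q\Lovd^{-1}\pi$ term produces exactly $\overline{\nu}_F$, and the $\mathrm{O}(\gamma^{-1})$ remainder gives the desired bound. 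No further expansion is needed.

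Your approach instead applies~\eqref{eq:divergent_behavior_Lgamma} to $F^T p$, discovers that the surviving leading term is $L^2(\mu)$-orthogonal to $F^T p$, and is then forced to push the resolvent expansion one order further by hand (your adiabatic elimination). This works, but you are essentially redoing the proof of Theorem~\ref{lem:bounds_CL_gamma} at the next order for this particular right-hand side. One technical remark: your closing justification that Lemma~\ref{lem:bounded_resolvent_perp} makes the Neumann series for $\bigl(\gamma C - (\Id-\pi)(A+B)\bigr)^{-1}$ converge on $\cH^1_\perp$ is not quite the right statement, since $(\Id-\pi)(A+B)$ is unbounded on $H^1(\mu)$. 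The correct way to use Lemma~\ref{lem:bounded_resolvent_perp} here is the construct-and-bound-residual argument already carried out in the proof of Theorem~\ref{lem:bounds_CL_gamma}: build the explicit two-term approximation $u_0 + \gamma^{-1}u_1$ (and, for the $\mathrm{O}(\gamma^{-2})$ remainder you need, also a $\gamma^{-2}u_2$ term), verify that the residual lies in $\cH^1_\perp$, and then invoke the uniform bound. With that adjustment your proof is complete, though the paper's identity-based shortcut is both shorter and avoids this extra bookkeeping.
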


This result is already contained in~\cite{HP08}, but we nonetheless provide a short alternative proof in Section~\ref{sec:proof_lem:ovd_mobility} (see Remark~\ref{rmk:ovd_Einstein} for a more precise comparison of the results). It shows that, in the overdamped regime $\gamma \to +\infty$, 
\begin{equation}
\label{eq:computatio_nu_F_gamma}
\nu_{F,\gamma} = \frac{|F|^2 + \overline{\nu}_F}{\gamma} + \mathrm{O}\left(\frac{1}{\gamma^2}\right),
\end{equation}
which suggests to estimate $\nu_{F,\gamma}$ using the linear response of $F^T \nabla V$ for large frictions since this quantity is expected to be a good approximation of $\overline{\nu}_F$ -- instead of relying on the standard linear response result~\eqref{eq:def_nu_NEMD}, for which the response is of order $1/\gamma$ and is hence difficult to reliably estimate. Error estimates on the numerical approximation are deduced from~\eqref{eq:estimate_overline_nu_F} below.

\subsection{Numerical schemes for the nonequilibrium Langevin dynamics}

We present in this section numerical schemes approximating solutions of~\eqref{eq:noneq_Langevin}. These schemes reduce to the schemes presented in Section~\ref{sec:splitting_schemes} when $\eta = 0$. Since the aim is to decompose the evolution generated by $\Lgam + \eta \wcL$ into analytically integrable parts, there are two principal options: either replace $B$ by 
\[
B_\eta = B+\eta\wcL,
\]
or replace $C$ by $C+\eta\wcL$. However, the schemes built on the latter option do not perform correctly in the overdamped limit since their invariant measures are not consistent with the invariant measures of nonequilibrium overdamped Langevin dynamics~\eqref{eq:noneq_ovd_Langevin}. More precisely, consider for instance the first order scheme generated by $P_{\dt}^{A,B,\gamma C + \eta \wcL} = \rme^{\Delta t \, A} \rme^{\Delta t \, B} \rme^{\Delta t (\gamma C + \eta \wcL)}$ in the case when $M = \Id$:
\[
\left\{ \begin{aligned}
q^{n+1} & = q^n + \dt \, p^n, \\
\widetilde{p}^{n+1} & = p^n - \dt \, \nabla V(q^{n+1}), \\
p^{n+1} & = \alpha_\dt \widetilde{p}^{n+1} + \frac{1-\alpha_\dt}{\gamma} \, \eta F + \sqrt{\frac{1-\alpha^2_\dt}{\beta}} \, G^n,
\end{aligned} \right.
\]
where $\alpha_\dt$ is defined after~\eqref{eq:Langevin_splitting}, and $(G^n)$ is a sequence of independent and identically distributed Gaussian random vectors with identity covariance. As $\gamma \to +\infty$, a standard Euler-Maruyama discretization of the equilibrium overdamped Langevin dynamics (\textit{i.e.} $\eta = 0$) is obtained, whereas we would like to obtain a consistent discretization of nonequilibrium overdamped Langevin dynamics~\eqref{eq:noneq_ovd_Langevin}. We therefore instead consider schemes obtained by replacing $B$ with $B + \eta \wcL$, such as the first order splitting
\[
P_{\dt}^{A,B+ \eta \wcL,\gamma C} = \rme^{\Delta t \, A} \rme^{\Delta t (B+ \eta \wcL)} \rme^{\gamma \Delta t \, C},
\]
or the second order splitting 
\[
P_{\dt}^{\gamma C,B+ \eta \wcL,A,B+ \eta \wcL,C} = \rme^{\gamma \Delta t \, C/2} \rme^{\Delta t (B+ \eta \wcL)/2} \rme^{\Delta t \, A} \rme^{\Delta t (B+ \eta \wcL)/2} \rme^{\gamma \Delta t \, C/2}.
\]
The numerical scheme associated with the first order splitting scheme $P_{\dt}^{A,B+ \eta \wcL,\gamma C}$
\[
\left\{ \begin{aligned}
q^{n+1} & = q^n + \dt \, p^n, \\
\widetilde{p}^{n+1} & = p^n + \dt \Big(-\nabla V(q^{n+1}) + \eta F \Big), \\
p^{n+1} & = \alpha_\dt \widetilde{p}^{n+1} + \sqrt{\frac{1-\alpha^2_\dt}{\beta}} \, G^n,
\end{aligned} \right.
\]
indeed is, in the limit as $\gamma \to +\infty$, a consistent discretization of the nonequilibrium Langevin dynamics~\eqref{eq:noneq_ovd_Langevin}, and its invariant measure turns out to converge to the invariant measure of~\eqref{eq:noneq_ovd_Langevin} in the limit $\dt \to 0$.

Following the method of proof of Proposition~\ref{prop:ergodicity_MC}, it can be shown that there exists a unique invariant measure $\mu_{\gamma,\eta,\dt}$ for the corresponding Markov chain. The crucial point is that the gradient structure of the force term is never used explicitly in the proofs since we  rely solely on the boundedness of the force, so that we are able to obtain convergence results and moment estimates that are independent of the magnitude~$\eta$ of the forcing term provided $\eta$ is in a bounded subset of~$\mathbb{R}$. We denote below by $P_{\gamma,\eta,\dt}$ the evolution operator associated with the numerical schemes.

\begin{proposition}[Ergodicity of numerical schemes for nonequilibrium systems]
\label{prop:ergodicity_MC_noneq}
Fix $s^* \geq 1$ and $\eta^* > 0$. For any $0 < \gamma < +\infty$, there exists $\dt^*$ such that, for any $0 < \dt \leq \dt^*$ and $\eta \in [-\eta^*,\eta^*]$, the Markov chain associated with $P_{\gamma,\eta,\dt}$ has a unique invariant probability measure $\mu_{\gamma,\eta,\dt}$, which admits a density with respect to the Lebesgue measure $\d{q} \, \d{p}$, and has finite moments: There exists $R > 0$ such that, for any $1 \leq s \leq s^*$,
\[
\int_\cE \Li_s \, \d{\mu}_{\gamma,\eta,\dt} \leq R < +\infty,
\]
uniformly in the timestep $\dt$ and the forcing magnitude~$\eta$. There also exist $\lambda, K > 0$ (depending on $s^*$, $\gamma$ and $\eta^*$ but not on~$\dt$) such that, for all functions $f \in L^\infty_{\Li_s}$, the following holds for almost all $(q,p) \in \cE$:
\[
\forall n \in \mathbb{N}, \qquad \left| \left(P_{\gamma,\eta,\dt}^n f\right)(q,p) - \int_\cE f \d{\mu}_{\gamma,\eta,\dt} \right| \leq K \, \Li_s(q,p) \, \rme^{-\lambda n\dt} \, \| f \|_{L^\infty_{\Li_s}}.
\]
\end{proposition}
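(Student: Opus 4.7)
My plan is to mirror the proof of Proposition~\ref{prop:ergodicity_MC} with careful tracking of the dependence on the forcing parameter~$\eta$. The key observation is that replacing $B$ by $B+\eta\wcL$ amounts to replacing the drift $-\nabla V(q)$ by $-\nabla V(q)+\eta F$, which is still smooth in $q\in\mathcal{M}$, and remains uniformly bounded and Lipschitz over $(q,\eta)\in\mathcal{M}\times[-\eta^*,\eta^*]$ since $V$ is smooth on the compact torus and $F$ is constant. All the steps of Lemmas~\ref{lem:Lyapunov} and~\ref{lem:minorization} and of Proposition~\ref{prop:ergodicity_MC} use the force only through its smoothness and sup-bound, never through its gradient structure. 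It therefore suffices to check that the constants produced by those proofs can be chosen uniformly in $\eta\in[-\eta^*,\eta^*]$.

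First I would establish the uniform Lyapunov condition. Writing $P_{\gamma,\eta,\dt}$ as a composition of the elementary semigroups $\mathrm{e}^{\dt A}$, $\mathrm{e}^{\dt(B+\eta\wcL)}$ and $\mathrm{e}^{\gamma\dt C}$, I compute $P_{\gamma,\eta,\dt}\Li_s$ for $\Li_s(q,p)=1+|p|^{2s}$. The $A$-step leaves $|p|$ unchanged; the $(B+\eta\wcL)$-step shifts $p$ by $\dt\bigl(-\nabla V(q)+\eta F\bigr)$, a quantity bounded in absolute value by $\|\nabla V\|_{L^\infty(\mathcal{M})}+\eta^*|F|$ independently of $(q,\eta)$; and the $\gamma C$-step is an Ornstein-Uhlenbeck update that strictly contracts the $|p|^{2s}$ contribution. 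Following the bookkeeping in Lemma~\ref{lem:Lyapunov} yields an inequality $P_{\gamma,\eta,\dt}\Li_s\leq\mathrm{e}^{-C_a\dt}\Li_s+C_b\dt$ with constants $C_a,C_b>0$ depending on $s^*$, $\gamma$ and $\eta^*$ but not on $\dt$ nor on the precise value of~$\eta\in[-\eta^*,\eta^*]$.

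Next I would transfer the uniform minorization estimate of Lemma~\ref{lem:minorization}. Iterating $\lceil T/\dt\rceil$ steps for a fixed $T$ large enough produces a Gaussian-like transition kernel whose mean is a bounded deterministic function of $(q_0,p_0)$ and whose covariance is bounded from below, the noise being injected by the $\gamma C$ updates. Because the drift $\eta F$ is bounded uniformly in $\eta\in[-\eta^*,\eta^*]$, the same explicit lower bound on the transition density on any compact set $\mathcal{M}\times B(0,p_{\rm max})$ as in the equilibrium proof is available, with a constant $\alpha>0$ depending on $T$, $\gamma$, $p_{\rm max}$ and $\eta^*$ only, and with a reference probability measure $\nu$ that can be taken to be independent of $\eta$ (up to shrinking $\alpha$).

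With both ingredients in hand, applying the Harris-type framework of~\cite{HM11} to the sampled chain $P_{\gamma,\eta,\dt}^{\lceil T/\dt\rceil}$ yields the existence and uniqueness of $\mu_{\gamma,\eta,\dt}$, the bound on moments of $\Li_s$ as the time-invariant counterpart of the Lyapunov inequality, and the exponential convergence estimate with rate $\lambda>0$ and prefactor $K$ independent of $\dt$ and of $\eta\in[-\eta^*,\eta^*]$. The only non-cosmetic departure from the equilibrium proof, and hence the main technical point, is to verify that the various constants appearing in the drift inequality, the minorization constant $\alpha$, and the rate $\lambda$ output by~\cite{HM11} are indeed uniform on the compact forcing interval; this is precisely what restricting to bounded $\eta$ makes possible, in the same way that compactness of~$\mathcal{M}$ is what allows for $\dt$-uniform constants in the equilibrium case.
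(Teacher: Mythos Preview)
Your proposal is correct and mirrors exactly the paper's own argument: the paper does not give a separate proof but simply remarks that the method of proof of Proposition~\ref{prop:ergodicity_MC} carries over verbatim because the gradient structure of the force is never used, only its boundedness, so that all constants can be chosen uniformly for $\eta$ in a bounded set. Your write-up spells out in more detail what the paper leaves implicit, but the approach is the same.
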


Let us emphasize that we do not have any control on the convergence rate~$\lambda$ in terms of~$\eta^*$, and it could well be that $\lambda$ goes to~0 as $\eta^*$ increases.

\subsection{Error estimates on transport coefficients from nonequilibrium methods}
\label{sec:error_transport}

The following result provides error estimates for the invariant measure of the first order or second order splittings schemes of Section~\ref{sec:splitting_schemes_2nd} when $B$ is replaced by $B_\eta$.

\begin{theorem}
\label{thm:error_noneq}
Denote by $p$ the order of the splitting scheme, by $f_{\alpha,0,\gamma}$ the leading order correction function in the case $\eta = 0$ as given by Theorem~\ref{thm:error_first_order_schemes} for $\alpha=1$ and by Theorem~\ref{thm:error_second_order_schemes} for $\alpha=2$. Then, there exists a function $f_{\alpha,1,\gamma} \in \widetilde{\mathcal{S}}$ such that, for any smooth function $\psi \in \mathcal{S}$, there exist $\dt^*,\eta^* > 0$ and a constant $K > 0$ for which, for all $\eta \in [-\eta^*,\eta^*]$ and $0 < \dt \leq \dt^*$,
\[
\int_\cE \psi \, \d{\mu}_{\gamma,\eta,\dt} = \int_\cE \psi \Big(1+ \eta f_{0,1,\gamma} + \dt^\alpha f_{\alpha,0,\gamma} + \eta \dt^\alpha f_{\alpha,1,\gamma} \Big) \d{\mu} + r_{\psi,\gamma,\eta,\dt},
\]
where $f_{0,1,\gamma}$ is defined in~\eqref{eq:def_nu_LRT}, and
\[
\left|r_{\psi,\gamma,\eta,\dt}\right| \leq K(\eta^2 + \dt^{\alpha+1}), 
\qquad 
\left|r_{\psi,\gamma,\eta,\dt} - r_{\psi,\gamma,0,\dt}\right| \leq K \eta (\eta + \dt^{\alpha+1}).
\]
\end{theorem}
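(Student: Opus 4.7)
The plan is to extend the bivariate Talay--Tubaro expansion used for Theorems~\ref{thm:error_first_order_schemes} and~\ref{thm:error_second_order_schemes} by adding an explicit perturbative expansion in the forcing magnitude~$\eta$. First I would establish an operator-level expansion of $P_{\gamma,\eta,\dt}$ around $(\eta,\dt)=(0,0)$: since the forcing enters each elementary exponential only through the replacement $B\to B_\eta=B+\eta\wcL$, a Taylor expansion of $\rme^{\dt B_\eta}$ yields, for any $\psi\in\mathcal{S}$,
\[
-\frac{\Id - P_{\gamma,\eta,\dt}}{\dt}\psi = \big(\Lgam + \eta\wcL\big)\psi + \dt\, S_{1,\eta}\psi + \dots + \dt^{\alpha-1} S_{\alpha-1,\eta}\psi + \dt^\alpha \widetilde{R}_{\alpha,\gamma,\eta,\dt}\psi,
\]
where each $S_{k,\eta}$ is polynomial in~$\eta$ with coefficients built from compositions of $A,B,C,\wcL$, and where both $S_{k,\eta}\psi$ and $\widetilde{R}_{\alpha,\gamma,\eta,\dt}\psi$ are bounded in $L^\infty_{\Li_s}$ uniformly in $\dt\in(0,\dt^*]$ and $\eta\in[-\eta^*,\eta^*]$. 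Proposition~\ref{prop:ergodicity_MC_noneq}, combined with the argument that leads to Corollary~\ref{corr:resolvent_estimates_I_Pdt}, provides a uniform-in-$(\dt,\eta)$ resolvent bound for $(\Id-P_{\gamma,\eta,\dt})/\dt$ on the subspace of functions with vanishing average against $\mu_{\gamma,\eta,\dt}$.

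Next, I would insert the ansatz
\[
\frac{d\mu_{\gamma,\eta,\dt}}{d\mu} = 1 + \eta f_{0,1,\gamma} + \dt^\alpha f_{\alpha,0,\gamma} + \eta\dt^\alpha f_{\alpha,1,\gamma} + \eta^2 \rho_{\gamma,\eta,\dt} + \dt^{\alpha+1}\sigma_{\gamma,\eta,\dt}
\]
into the invariance identity $\int_\cE (\Id-P_{\gamma,\eta,\dt})\psi\,d\mu_{\gamma,\eta,\dt}=0$ and match orders of $(\eta,\dt)$. At order~$\eta$ we recover the linear-response equation $\Lgam^* f_{0,1,\gamma}=-\wcL^*\mathbf{1}$ of~\eqref{eq:def_nu_LRT}; at order~$\dt^\alpha$ we recover the equation for $f_{\alpha,0,\gamma}$ given by Theorems~\ref{thm:error_first_order_schemes} or~\ref{thm:error_second_order_schemes}; and at order~$\eta\dt^\alpha$ we obtain a Poisson equation of the form $\Lgam^* f_{\alpha,1,\gamma} = g_{\alpha,\gamma}$, whose right-hand side collects $\wcL^* f_{\alpha,0,\gamma}$, $S_{\alpha-1,0}^* f_{0,1,\gamma}$ and the first $\eta$-derivative at $\eta=0$ of $S_{\alpha-1,\eta}^*\mathbf{1}$. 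All these terms lie in $\widetilde{\mathcal{S}}$, so Theorem~\ref{thm:stability_S} ensures a unique solution $f_{\alpha,1,\gamma}\in\widetilde{\mathcal{S}}$ (a suitable additive constant restores the normalization condition $\int f_{\alpha,1,\gamma}\,d\mu=0$).

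The main obstacle is controlling the remainders $\rho_{\gamma,\eta,\dt}$ and $\sigma_{\gamma,\eta,\dt}$ uniformly. Subtracting the ansatz from the invariance identity and inverting $(\Id-P_{\gamma,\eta,\dt})/\dt$ by means of the uniform resolvent bound allows me to estimate both remainders in $L^\infty_{\Li_s}$, so that testing against $\psi\in\mathcal{S}$ gives $|r_{\psi,\gamma,\eta,\dt}|\leq K(\eta^2+\dt^{\alpha+1})$. The refined bound on $r_{\psi,\gamma,\eta,\dt}-r_{\psi,\gamma,0,\dt}$ then follows by construction: the pure-$\dt$ contributions cancel, so the same resolvent argument produces an additional factor~$\eta$, yielding $|r_{\psi,\gamma,\eta,\dt}-r_{\psi,\gamma,0,\dt}|\leq K\eta(\eta+\dt^{\alpha+1})$. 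The delicate technical point throughout is the uniformity of all constants in $\eta\in[-\eta^*,\eta^*]$; I would obtain it via a Neumann-series perturbation around the equilibrium case (where Corollary~\ref{corr:resolvent_estimates_I_Pdt} applies), exploiting the fact that the Lyapunov and minorization estimates behind Proposition~\ref{prop:ergodicity_MC_noneq} never use the gradient structure of the drift, and that the compactness of~$\mathcal{M}$ makes all bounds on $\wcL^*\mathbf{1}=-\beta F^TM^{-1}p$ and its compositions with $S_{k,\eta}$ fall under the existing $L^\infty_{\Li_s}$ framework.
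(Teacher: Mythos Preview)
Your plan captures the essential architecture of the paper's argument: the bivariate operator expansion of $P_{\gamma,\eta,\dt}$, the identification of $f_{\alpha,1,\gamma}$ via a Poisson equation whose right-hand side combines $\wcL^* f_{\alpha,0,\gamma}$, $S_k^* f_{0,1,\gamma}$, and the $\eta$-derivative of $S_{k,\eta}^*\mathbf{1}$, and the reliance on uniform-in-$(\dt,\eta)$ ergodicity from Proposition~\ref{prop:ergodicity_MC_noneq}. The Poisson equation you describe for $f_{\alpha,1,\gamma}$ matches~\eqref{eq:def_f_1_1_gamma}.

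There is, however, a genuine gap in how you propose to control the remainder. You posit a density ansatz $d\mu_{\gamma,\eta,\dt}/d\mu = 1 + \cdots + \eta^2\rho_{\gamma,\eta,\dt} + \dt^{\alpha+1}\sigma_{\gamma,\eta,\dt}$ and claim that inverting $(\Id-P_{\gamma,\eta,\dt})/\dt$ via the uniform resolvent bound yields $L^\infty_{\Li_s}$ estimates on $\rho,\sigma$. But the resolvent bound of Corollary~\ref{corr:resolvent_estimates_I_Pdt} (and its nonequilibrium analogue) is for the \emph{forward} operator acting on observables centered with respect to $\mu_{\gamma,\eta,\dt}$; what you would need here is a bound on the \emph{adjoint} acting on densities, and no such estimate is available in the paper's framework. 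The paper sidesteps this entirely: it never writes a density ansatz. Instead it fixes $\psi$, computes
\[
\int_\cE\left[\frac{\Id-P_{\gamma,\eta,\dt}}{\dt}\,\varphi\right]\big(1+\eta f_{0,1,\gamma}+\dt^\alpha f_{\alpha,0,\gamma}+\eta\dt^\alpha f_{\alpha,1,\gamma}\big)\,d\mu,
\]
shows the leading terms vanish by construction, and then substitutes $\varphi=Q_{\eta,\dt}\psi$, where $Q_{\eta,\dt}$ is an explicit approximate inverse built as a finite truncation (in both $\eta$ and $\dt$) of the formal Neumann series for $\big(\Lgam+\eta\wcL+\dt\,\Pi^\perp S_{1,\eta}\Pi^\perp+\cdots\big)^{-1}$, so that $\Pi^\perp\big[(\Id-P_{\gamma,\eta,\dt})/\dt\big]\Pi^\perp Q_{\eta,\dt}=\Pi^\perp+\eta^2\mathcal{R}^1_{\eta,\dt}+\dt^2\mathcal{R}^2_{\eta,\dt}$ with $\mathcal{R}^2_{\eta,\dt}=\mathcal{R}^2_{0,\dt}+\eta\widetilde{\mathcal{R}}^2_{\eta,\dt}$. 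Because $Q_{\eta,\dt}\psi\in\widetilde{\mathcal{S}}$ by Theorem~\ref{thm:stability_S}, the remainder integrals are bounded directly from the moment estimates on $\mu_{\gamma,\eta,\dt}$, and the splitting $\mathcal{R}^2_{\eta,\dt}=\mathcal{R}^2_{0,\dt}+\eta\widetilde{\mathcal{R}}^2_{\eta,\dt}$ is precisely what delivers the refined bound on $r_{\psi,\gamma,\eta,\dt}-r_{\psi,\gamma,0,\dt}$. This keeps the entire argument on the observable side, where the available tools apply.
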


The proof of this result can be found in Section~\ref{sec:proof_noneq}. Note that the remainder term now collects higher order terms both as powers of the timestep and the nonequilibrium parameter~$\eta$. The estimates we obtain on the remainder are however compatible with taking the linear response limit, as made precise by the following error estimate on the transport coefficient (which is an immediate consequence of Theorem~\ref{thm:error_noneq}). In order to state the result, we introduce the reference linear response for an observable $\psi$
\[
\mathscr{D}_{\psi,\gamma,0} = \lim_{\eta \to 0} \frac{1}{\eta} \left(\int_\cE \psi \, \d{\mu}_{\gamma,\eta} - \int_\cE  \psi \, \d{\mu}_{\gamma} \right),
\]
and its numerical approximation
\[
\mathscr{D}_{\psi,\gamma,\dt} = \lim_{\eta \to 0} \frac{1}{\eta} \left(\int_\cE \psi \, \d{\mu}_{\gamma,\eta,\dt} - \int_\cE  \psi \, \d{\mu}_{\gamma,\dt} \right).
\]
It is often the case that $\psi$ has a vanishing average with respect to~$\mu$, as is the case for the function $F^T M^{-1}p$ in~\eqref{eq:def_nu_NEMD}. In general, it however has a non-zero average with respect to the invariant measure $\mu_{\gamma,\dt}$ of the numerical scheme associated with a discretization of the equilibrium dynamics.

\begin{corollary}
\label{cor:noneq}
There exist $\dt^*,\eta^* > 0$ and a constant $K > 0$ such that, for all $\eta \in [-\eta^*,\eta^*]$ and $0 < \dt \leq \dt^*$,
\[
\mathscr{D}_{\psi,\gamma,\dt} = \mathscr{D}_{\psi,\gamma,0} + \dt^\alpha \int_\cE \psi \, f_{\alpha,1,\gamma} \, \d{\mu} + \dt^{\alpha+1} r_{\psi,\gamma,\dt},
\]
where $r_{\psi,\gamma,\dt}$ is uniformly bounded.
\end{corollary}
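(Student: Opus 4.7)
The plan is to view the corollary as a direct consequence of Theorem~\ref{thm:error_noneq} via a subtract-and-divide argument. First I would apply the expansion of Theorem~\ref{thm:error_noneq} twice, once to $\mu_{\gamma,\eta,\dt}$ at a generic small $\eta$ and once with $\eta=0$ (recall $\mu_{\gamma,0,\dt} = \mu_{\gamma,\dt}$). Subtracting the two expansions cancels the $\eta$-independent contributions and leaves
\[
\int_\cE \psi\,\d\mu_{\gamma,\eta,\dt} - \int_\cE \psi\,\d\mu_{\gamma,\dt}
= \eta \int_\cE \psi \left(f_{0,1,\gamma} + \dt^\alpha f_{\alpha,1,\gamma}\right)\d\mu + \left(r_{\psi,\gamma,\eta,\dt} - r_{\psi,\gamma,0,\dt}\right).
\]

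Next I would divide by $\eta$ and invoke the sharper remainder estimate provided by Theorem~\ref{thm:error_noneq}, namely $|r_{\psi,\gamma,\eta,\dt} - r_{\psi,\gamma,0,\dt}| \leq K\eta(\eta + \dt^{\alpha+1})$. This estimate is crucial: the crude bound $\mathrm{O}(\eta^2 + \dt^{\alpha+1})$ on each remainder separately would only give $\mathrm{O}(\dt^{\alpha+1}/\eta)$ after division, whereas the sharper bound controls the difference quotient by $K(\eta + \dt^{\alpha+1})$. Consequently
\[
\frac{1}{\eta}\left(\int_\cE \psi\,\d\mu_{\gamma,\eta,\dt} - \int_\cE \psi\,\d\mu_{\gamma,\dt}\right)
= \int_\cE \psi\left(f_{0,1,\gamma} + \dt^\alpha f_{\alpha,1,\gamma}\right)\d\mu + \mathrm{O}\!\left(\eta + \dt^{\alpha+1}\right).
\]
Letting $\eta \to 0$ shows that the defining limit for $\mathscr{D}_{\psi,\gamma,\dt}$ exists and equals $\int_\cE \psi\,f_{0,1,\gamma}\,\d\mu + \dt^\alpha \int_\cE \psi\,f_{\alpha,1,\gamma}\,\d\mu$ up to a term bounded uniformly by a constant times $\dt^{\alpha+1}$.

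It then remains to identify $\mathscr{D}_{\psi,\gamma,0}$ with $\int_\cE \psi\,f_{0,1,\gamma}\,\d\mu$. This is the standard linear-response identity for the continuous dynamics~\eqref{eq:noneq_Langevin}: differentiating the stationarity relation $\int_\cE (\mathcal{L}_\gamma + \eta\wcL)\varphi\,\d\mu_{\gamma,\eta} = 0$ in $\eta$ at $\eta=0$ yields the Poisson equation $\mathcal{L}_\gamma^* f_{0,1,\gamma} = -\wcL^*\mathbf 1$ already used in~\eqref{eq:def_nu_LRT}, and gives the response formula $\mathscr{D}_{\psi,\gamma,0} = \int_\cE \psi\,f_{0,1,\gamma}\,\d\mu$. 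I would invoke this result (which is independently established in the linear response section) rather than re-deriving it, and substitute into the expression above to conclude.

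The argument contains no genuine technical obstacle, all the hard analytic work being absorbed into Theorem~\ref{thm:error_noneq}. The only delicate point is ensuring the existence of the limit $\eta \to 0$ defining $\mathscr{D}_{\psi,\gamma,\dt}$ uniformly in $\dt$; this is precisely what the second, finer remainder bound in Theorem~\ref{thm:error_noneq} buys us, and it is what distinguishes the present corollary from what a direct weak-error expansion in $\eta$ alone would yield.
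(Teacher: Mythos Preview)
Your proposal is correct and matches the paper's approach: the paper states explicitly that Corollary~\ref{cor:noneq} is ``an immediate consequence of Theorem~\ref{thm:error_noneq}'' and gives no separate proof, so the subtract-divide-limit argument you outline, leaning on the finer remainder bound $|r_{\psi,\gamma,\eta,\dt}-r_{\psi,\gamma,0,\dt}|\leq K\eta(\eta+\dt^{\alpha+1})$ and the linear-response identity $\mathscr{D}_{\psi,\gamma,0}=\int_\cE \psi f_{0,1,\gamma}\,\d\mu$ from Section~\ref{sec:proof_LRT}, is exactly what is intended.
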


In particular, we obtain the following estimate on the numerically computed mobility:
\begin{align}
\nu_{F,\gamma,\dt} & = \lim_{\eta \to 0} \frac{1}{\eta} \left(\int_\cE F^T M^{-1} p \, \mu_{\gamma,\eta,\dt}(\d{q}\,\d{p}) - \int_\cE  F^T M^{-1} p \, \mu_{\gamma,0,\dt}(\d{q}\,\d{p}) \right) \label{eq:LR_num_mob} \\
& = \nu_{F,\gamma} + \dt^\alpha \int_\cE  F^T M^{-1} p  \, f_{\alpha,1,\gamma} \, \d{\mu} + \dt^{\alpha+1} r_{\gamma,\dt}, \label{eq:prediction_num}
\end{align}
where the reference mobility $\nu_{F,\gamma}$ is defined in~\eqref{eq:def_nu_LRT}.

\subsubsection{Numerical illustration}

We consider the same system as in Section~\ref{sec:numerics}, with an external force $F=(1,0)$ and $K+1$ forcing strengths $\eta_k = (k-1) \Delta \eta$ uniformly spaced in the interval $[0,\eta_{\rm max}]$ with $\eta_{\rm max} = 0.5$ (so that $\Delta \eta = \eta_{\rm max}/K$). We fix the friction to $\gamma = 1$ and the inverse temperature to $\beta = 1$. We use a coupling strategy to reduce the statistical noise in the computation of the linear response~\eqref{eq:LR_num_mob}. The $K+1$ replicas of the system are started at the same position $q= (0,0)$, with the same velocity (sampled according to the canonical measure~$\mu$). Each replica experiences the force $-\nabla V + \eta_k F$ (Note that the first replica experiences the reference force $-\nabla V$ corresponding to a discretization of the equilibrium dynamics). Most importantly, the same Gaussian random numbers $G^n$ are used for all replicas to discretize the Brownian motion. Although not carefully documented here, this coupling strategy tremendously decreases the statistical error in the computed linear response. Such a coupling strategy was already proposed for exclusion processes in~\cite{GoodmanKin}. However, our experience shows that it fails for higher dimensional systems with more complex potentials (such as Lennard-Jones fluids).

For a given value of the timestep $\dt$, we denote by $(q^{k,n},p^{k,n})_{n \geq 0}$ the discrete trajectory of the $k$th replica. The linear response in the projected average velocity $\delta v_{\eta_k}$ is approximated over $N_{\rm iter}$ integration steps as
\[
\begin{aligned}
\delta v_{\eta_k} & = \int_\cE F^T M^{-1} p \, \mu_{\dt,\eta_k}(\d{q}\,\d{p}) - \int_\cE F^T M^{-1} p \, \mu_{\dt,0}(\d{q}\,\d{p}) \\
& \simeq \frac{1}{N_{\rm iter}} \sum_{n=1}^{N_{\rm iter}}F^T M^{-1} \Big(p^{k,n}-p^{1,n}\Big) = {\widehat{v}_{\eta_k}}^{\,\,N_{\rm iter}}.
\end{aligned}
\]
We then estimate the mobility by a linear fit on the first $K' = 10$ values of $\widehat{v}_{\eta_k}^{\,\,N_{\rm iter}}$ considered as a function of~$\eta_k$ (see Figure~\ref{fig:noneq}, left). The value $\nu_{F,\gamma,\dt}$ is the estimated slope in the fit. The behavior of the mobility $\nu_{F,\gamma,\dt}$ as a function of the timestep is presented in Figure~\ref{fig:noneq} (right) for the numerical schemes associated with the first order splitting $P_\dt^{A,B_\eta,\gamma C}$ and the second order splitting $P_\dt^{\gamma C, B_\eta,A,B_\eta, \gamma C}$. We used $N_{\rm iter} = 4 \times 10^{11}$ for the first order scheme, and $N_{\rm iter} = 2.5 \times 10^{11}$ for the second order one. The statistical error is very small and error bars are therefore not reported. The computed mobilities can be fitted for small~$\dt$ as
\[
\nu_{F,\gamma,\dt} \simeq 0.0740 + 0.0817\dt
\] 
for the first-order splitting and 
\[
\nu_{F,\gamma,\dt} \simeq 0.0741 + 0.197\dt^2
\]
for the second order splitting scheme, in agreement with the theoretical prediction~\eqref{eq:prediction_num}.

\begin{figure}
\begin{center}
\includegraphics[width=7cm]{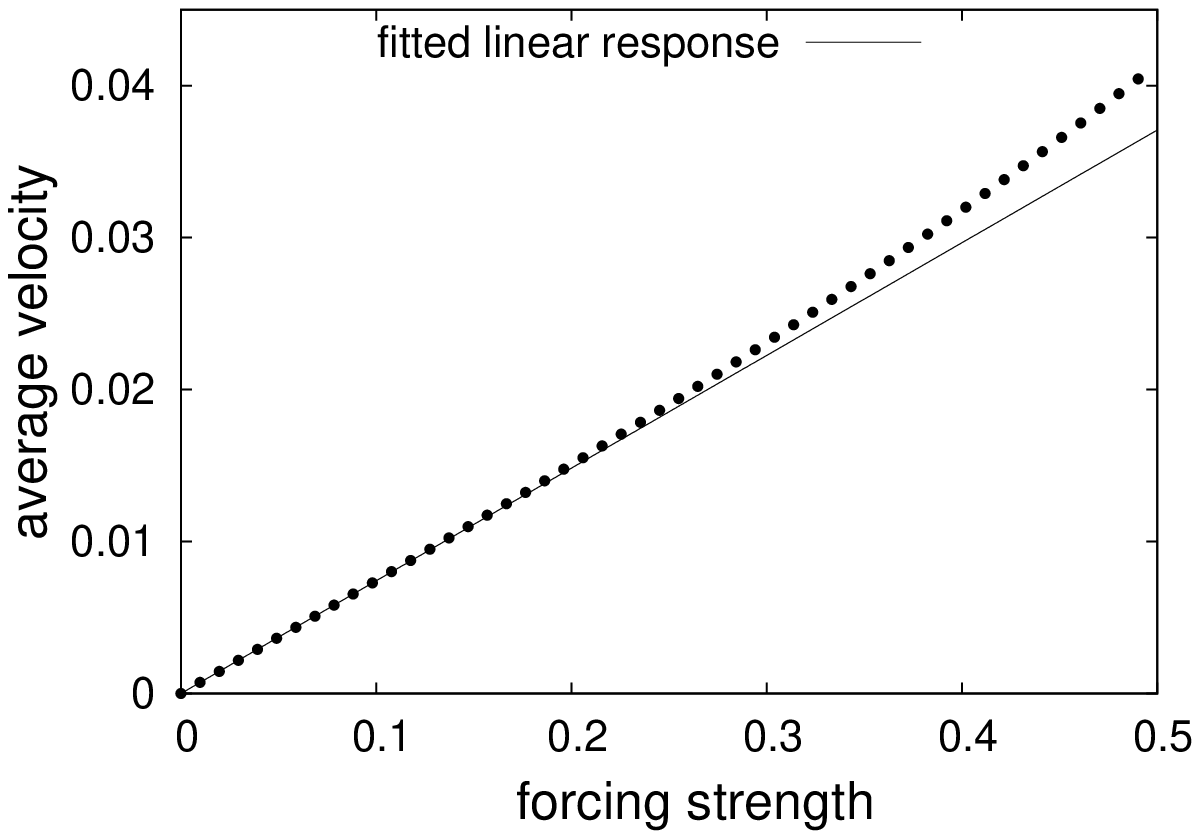}
\includegraphics[width=7cm]{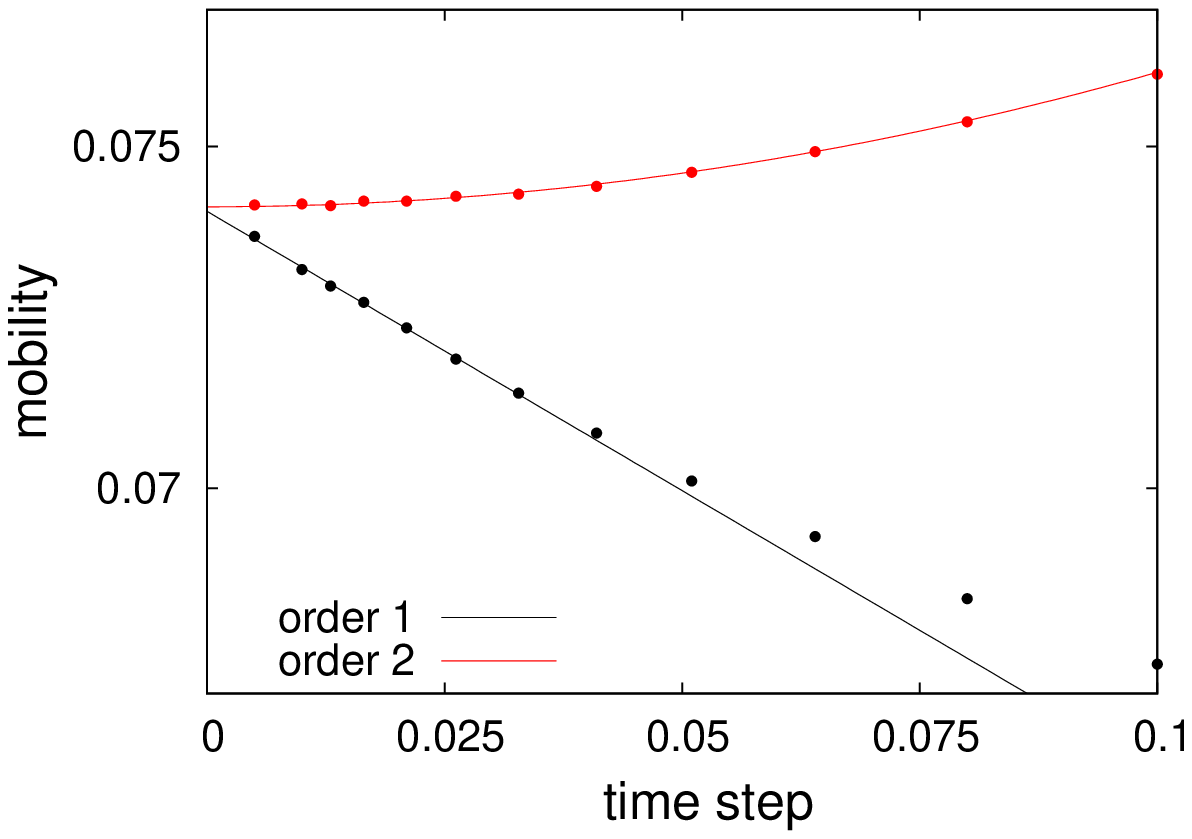}
\end{center}
\caption{\label{fig:noneq} 
Left: Linear response of the average velocity $\delta v_\eta$ as a function of $\eta$ ($K = 50$) for the scheme associated with $P_\dt^{\gamma C, B_\eta,A,B_\eta, \gamma C}$ and $\dt = 0.01, \gamma = 1$. A linear fit on the first ten values gives $\delta v_\eta \simeq 0.07416 \eta$, so that $\nu_{F,\gamma,\dt} = 0.07416$ in this case. Right: Scaling of the mobility $\nu_{F,\gamma,\dt}$ for the first order scheme $P_\dt^{A,B_\eta,\gamma C}$ and the second order scheme $P_\dt^{\gamma C, B_\eta,A,B_\eta, \gamma C}$ (with $\gamma = 1$). The fits respectively give $\nu_{F,\gamma,\dt} \simeq 0.0740 + 0.0817\dt$ and $\nu_{F,\gamma,\dt} \simeq 0.0741 + 0.197\dt^2$.}
\end{figure}

\subsection{Error estimates in the overdamped limit}

We now study the numerical errors arising in the simulation of nonequilibrium systems in the large friction limit. We restrict ourselves to the second order splittings where the Ornstein-Uhlenbeck part is either at the ends or in the middle (categories~(i) and~(ii) in Section~\ref{sec:splitting_schemes_2nd}). To state the result, we introduce the first order correction to the invariant measure in terms of the magnitude of the nonequilibrium forcing, namely (recall $\wcL_{\rm ovd} = F \cdot \nabla_q$)
\[
\Lovd^* f_{0,1,\infty} = -\wcL_{\rm ovd}^* \mathbf{1} = -\beta F^T \nabla V.
\]
A simple computation based on~\eqref{eq:divergent_behavior_Lgamma} shows that the functions $f_{0,1,\gamma}$ defined in~\eqref{eq:def_nu_LRT} converge in $H^1(\mu)$ to $f_{0,1,\infty}$ (recall that we assume $M = \Id$ in the overdamped regime).

\begin{theorem}
\label{thm:error_estimate_noneq_ovd}
Denote by $\overline{\mu}_{\gamma,\eta,\dt}(\d{q})$ the marginal of the invariant measure $\mu_{\gamma,\eta,\dt}$ of an admissible second order splitting scheme in the position variable, and by $f_{2,0,\infty}$ the leading order correction function in the case $\eta = 0$ as given by Theorem~\ref{thm:ovd_limit}. Then, there exists a function $f_{2,1,\infty} \in \widetilde{\mathcal{S}}$ such that, for any $\psi \equiv \psi(q) \in C^\infty(\mathcal{M})$, there exist $\dt^*,\eta^* > 0$ and constants $K,\kappaK > 0$ such that, for all $\eta \in [-\eta^*,\eta^*]$, $0 < \dt \leq \dt^*$ and $\gamma \geq 1$,
\[
\int_\mathcal{M} \psi(q) \, \overline{\mu}_{\gamma,\eta,\dt}(\d{q}) = \int_\mathcal{M} \psi(q) \Big(1+ \eta f_{0,1,\infty}(q) + \dt^2 f_{2,0,\infty}(q) + \eta \dt^2 f_{2,1,\infty} \Big) \overline{\mu}(\d{q}) + r_{\psi,\gamma,\eta,\dt},
\]
with
\[
\left|r_{\psi,\gamma,\eta,\dt}\right| \leq K\left(\eta^2 + \dt^{3} + \rme^{-\kappaK \gamma \dt}\right), 
\qquad 
\left|r_{\psi,\gamma,\eta,\dt} - r_{\psi,\gamma,0,\dt}\right| \leq K \eta (\eta + \dt^{3} + \rme^{-\kappaK \gamma \dt}).
\]
\end{theorem}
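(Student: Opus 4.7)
The plan is to combine the nonequilibrium expansion strategy of Theorem~\ref{thm:error_noneq} with the uniform-in-$\gamma$ analysis underlying Theorem~\ref{thm:ovd_limit}, performing a joint expansion in the two small parameters $\eta$ and $\dt$ while controlling all remainders uniformly for $\gamma \geq 1$. As a first step, I would use Lemma~\ref{lem:TU} (the TU lemma) applied to the $\eta$-modified evolutions (with $B$ replaced by $B_\eta = B + \eta \wcL$) to reduce the analysis to the two ``centered'' reference schemes $P_\dt^{\gamma C, B_\eta, A, B_\eta, \gamma C}$ and $P_\dt^{\gamma C, A, B_\eta, A, \gamma C}$, whose outer half-steps are exactly $\rme^{\gamma \dt C/2}$. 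From there, starting from the invariance relation $\int_\cE (P_{\gamma,\eta,\dt} - \Id)\varphi \, \d{\mu}_{\gamma,\eta,\dt} = 0$ and the BCH-type expansion
\[
P_{\gamma,\eta,\dt} = \Id + \dt(\Lgam + \eta \wcL) + \frac{\dt^2}{2}(\Lgam + \eta \wcL)^2 + \dt^3 S_{\gamma,\eta} + \dt^4 R_{\gamma,\eta,\dt},
\]
one postulates an ansatz for the Radon-Nikodym derivative of the invariant measure of the form
\[
\frac{\d{\mu}_{\gamma,\eta,\dt}}{\d{\mu}} = 1 + \eta h_{0,1,\gamma} + \dt^2 h_{2,0,\gamma} + \eta \dt^2 h_{2,1,\gamma} + \widetilde{r}_{\gamma,\eta,\dt}.
\]

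Matching terms by powers of $\eta$ and $\dt$ then produces Poisson equations for each correction. At order $\eta\dt^0$ one recovers $\Lgam^* h_{0,1,\gamma} = -\wcL^* \mathbf{1}$, hence $h_{0,1,\gamma} = f_{0,1,\gamma}$ from~\eqref{eq:def_nu_LRT}; at order $\eta^0 \dt^2$ one recovers the equation defining the function $f_{2,\gamma}$ of Theorem~\ref{thm:error_second_order_schemes}; and at the new cross order $\eta \dt^2$ one obtains an equation
\[
\Lgam^* h_{2,1,\gamma} = -\wcL^* h_{2,0,\gamma} + \Phi_{\gamma}[h_{0,1,\gamma}],
\]
where $\Phi_\gamma$ collects the cross-terms coming from the discrete generator expansion. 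The target function $f_{2,1,\infty}$ is then identified as the $\gamma \to +\infty$ limit of $\pi h_{2,1,\gamma}$, i.e., as the unique mean-zero solution (with respect to $\overline{\mu}$) of an overdamped Poisson equation of the same flavour as the ones appearing in Section~\ref{sec:ovd_limit_correction}.

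To secure uniform-in-$\gamma$ remainder bounds, the core idea is that of the proofs of Theorem~\ref{thm:ovd_limit} and Proposition~\ref{prop:ovd_limit_correction}: each correction~$h$ is split as $h = \pi h + (\Id - \pi) h$; the position-only component is controlled via the $\mathrm{O}(\gamma)$ scaling of $\Lgam^{-1} \pi$ provided by Theorem~\ref{lem:bounds_CL_gamma}, after which the $1/\gamma$ prefactor present in the linear-response Poisson equation cancels this divergence in the marginalized answer; the orthogonal component is controlled uniformly in $\gamma$ by the hypocoercive estimate of Lemma~\ref{lem:bounded_resolvent_perp}. The exponential remainder $\rme^{-\kappaK \gamma \dt}$ then appears exactly as in Theorem~\ref{thm:ovd_limit}: when the observable $\psi$ depends only on $q$, the outer factor $\rme^{\gamma \dt C/2}$ in the canonical splittings contributes a term $\bigl(\rme^{\gamma \dt C/2} - \pi\bigr)\psi$ whose action on $\mathrm{Ker}(\pi)$ decays geometrically in $\gamma \dt$, thanks to the spectral gap of $C$ on $L^2(\kappa) \cap \mathrm{Ker}(\pi)$. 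The two separate remainder bounds are then obtained by tracking separately the $\mathrm{O}(\eta^2)$ contributions coming from the Neumann-type expansion in~$\eta$ (which drop out of the difference $r_{\psi,\gamma,\eta,\dt} - r_{\psi,\gamma,0,\dt}$) and the $\mathrm{O}(\dt^3)$ discretization remainders (which, in that difference, survive only multiplied by $\eta$), both being uniform in $\gamma$ thanks to Proposition~\ref{prop:ergodicity_MC_noneq}.

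The main obstacle is the bookkeeping for the cross-order term $\eta \dt^2 h_{2,1,\gamma}$ within the $\gamma$-uniform framework. Unlike the purely equilibrium case of Theorem~\ref{thm:ovd_limit}, the perturbation operator $\wcL = F \cdot \nabla_p$ maps $\mathrm{Ker}(\pi)$ out of itself and back, so that the projection argument couples the two components $\pi h_{2,1,\gamma}$ and $(\Id - \pi) h_{2,1,\gamma}$; one must verify that, after applying $\pi$, all factors of $\gamma$ coming from $\Lgam^{-1}$ either cancel or appear only in front of quantities that have already been shown to converge to their overdamped limits (as in Theorem~\ref{thm:ovd_limit} and Lemma~\ref{lem:ovd_mobility}). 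Concretely, this requires iterating the decomposition of Theorem~\ref{lem:bounds_CL_gamma} in expressions such as $\Lgam^{-1} \wcL^* \Lgam^{-1} \Lgam^{-1} \wcL^* \mathbf{1}$ while keeping the remainders in $H^1(\mu)$ bounded uniformly in~$\gamma$. Repeated use of Theorem~\ref{thm:stability_S} is needed to ensure that all iterated operators remain well defined on $\widetilde{\mathcal{S}}$, and a Neumann series argument based on~\eqref{eq:divergent_behavior_Lgamma} provides the required uniform invertibility of the perturbed generator $\Lgam + \eta\wcL$ for $|\eta|$ small and $\gamma \geq 1$, which is what ultimately allows the two-parameter expansion to be closed.
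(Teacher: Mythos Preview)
Your proposal takes a genuinely different route from the paper, and while the overall architecture is sensible, it makes the $\gamma$-uniform control of the remainder considerably harder than necessary.

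The paper does \emph{not} perform the two-parameter expansion at finite~$\gamma$ and then study the limit of the finite-$\gamma$ correction functions $h_{i,j,\gamma}$. Instead it first replaces the full evolution $P_\dt^{\gamma C,A,B_\eta,A,\gamma C}$ by the limiting operator $P_{\infty,\dt,\eta} = \pi\,\rme^{\dt A/2}\rme^{\dt B_\eta}\rme^{\dt A/2}\,\pi$, at the cost of an error controlled by Lemma~\ref{lem:cv_etC} (this is precisely the source of the $\rme^{-\kappaK\gamma\dt}$ term). Because $B_\eta=(F-\nabla V)\cdot\nabla_p$ has the same algebraic structure as $B$, the rules~\eqref{eq:rules_ovd_1}--\eqref{eq:rules_ovd_2} still apply, and $P_{\infty,\dt,\eta}$ expands, with $h=\dt^2/2$, as
\[
\pi + h\big(\Lovd+\eta\,\wcL_{\rm ovd}\big)\pi + \tfrac{h^2}{2}\big(\Lovd^2+D+\eta\widetilde D_1\big)\pi + \mathscr{R}_{\dt,\eta},
\]
where $\pi\wcL A\pi = \wcL_{\rm ovd}=F\cdot\nabla_q$. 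From this point on everything is $\gamma$-free: the correction functions $f_{0,1,\infty},f_{2,0,\infty},f_{2,1,\infty}$ are defined directly by \emph{overdamped} Poisson equations involving $\Lovd^*$, and the approximate inverse is built from $\Lovd^{-1}$ acting on $C^\infty(\mathcal{M})$, so the $\dt^3$ and $\eta^2$ remainder bounds follow exactly as in Sections~\ref{sec:proof_thm:ovd_limit} and~\ref{sec:proof_noneq} without any iteration of Theorem~\ref{lem:bounds_CL_gamma}.

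Your route---expand at finite $\gamma$, solve $\Lgam^*h_{2,1,\gamma}=\cdots$, then pass to the limit---would correctly identify $f_{2,1,\infty}$ (this is the content of Proposition~\ref{prop:ovd_limit_correction} for the equilibrium case), but the step you flag as ``bookkeeping'' is where the real difficulty lies. The BCH remainder $R_{\gamma,\eta,\dt}$ in your expansion of $P_{\gamma,\eta,\dt}$ contains differential operators with coefficients that \emph{grow polynomially in~$\gamma$} (already $S_2$ involves $\gamma^2 S_{2,2}$), and the approximate inverse $Q_{\eta,\dt}$ built from $\Lgam^{-1}$ has norm of order~$\gamma$ by Theorem~\ref{lem:bounds_CL_gamma}. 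Proposition~\ref{prop:ergodicity_MC_noneq} gives moment bounds and decay rates for $P_\dt^n$, but it does not provide the $\gamma$-uniform control of these high-order differential remainders that you would need. The paper's device of passing to $P_{\infty,\dt,\eta}$ via Lemma~\ref{lem:cv_etC} is exactly what circumvents this: it trades the uncontrolled growth in $\gamma$ of the finite-$\gamma$ BCH remainder for a single exponentially small error, after which no $\gamma$ appears anywhere in the analysis.
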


The proof is presented in Section~\ref{sec:proof_thm:error_estimate_noneq_ovd}. This result allows us to estimate the error in the computation of the transport coefficient $\nu_{F,\gamma}$ based on~\eqref{eq:ovd_mobility} and Lemma~\ref{lem:ovd_mobility}. Indeed, studying the linear response of the observable $-F^T \nabla V$ and defining
\[
\overline{\nu}_{F,\gamma,\dt} = -\lim_{\eta \to 0} \frac1\eta \left(\int_\mathcal{M} F^T \nabla V(q) \, \overline{\mu}_{\gamma,\eta,\dt}(\d{q}) - \int_\mathcal{M} F^T\nabla V(q) \, \overline{\mu}_{\gamma,\dt}(\d{q}) \right),
\] 
there holds
\[
\overline{\nu}_F = \overline{\nu}_{F,\gamma,\dt} - \dt^2 \int_\mathcal{M} F^T \nabla V(q) f_{2,1,\infty}(q) \, \overline{\mu}(\d{q}) + r_{\psi,\gamma,\dt},
\]
with $|r_{\psi,\gamma,\dt}| \leq a(\dt^3 + \rme^{-\kappaK \gamma \dt})$ for some $a > 0$. Therefore, in view of~\eqref{eq:computatio_nu_F_gamma},
\begin{equation}
\label{eq:estimate_overline_nu_F}
\nu_{F,\gamma} = \frac{|F|^2 + \overline{\nu}_F}{\gamma} + \mathrm{O}\left(\frac{1}{\gamma^2}\right) = \frac{|F|^2 + \overline{\nu}_{F,\gamma,\dt}}{\gamma} + \mathrm{O}\left(\frac{1}{\gamma^2},\frac{\dt^2}{\gamma},\frac{\rme^{- \kappaK \gamma \dt}}{\gamma}\right).
\end{equation}
In the latter expression, $\overline{\nu}_{F,\gamma,\dt}$ can be numerically estimated, in a manner similar to that presented at the end of Section~\ref{sec:error_transport}.

%
%

\section{Proofs of the results}
\label{sec:proofs}

Unless otherwise stated, the default norm $\| f \|$ and scalar product $\langle f,g \rangle$ are the ones associated with the Hilbert space~$L^2(\mu)$. Recall that, unless otherwise mentioned, all operators are defined on~$\mathcal{S}$, and that formal adjoint operators are by default considered on~$L^2(\mu)$. Recall also that
\begin{equation}
\label{eq:decomposition_C}
C = -\frac1\beta \nabla_p^* \nabla_p = -\frac1\beta \sum_{i=1}^N \sum_{\alpha = 1}^d \partial_{p_{i,\alpha}}^* \partial_{p_{i,\alpha}},
\end{equation}
with $p_i = (p_{i,1},\dots,p_{i,d})$ since $\partial_{p_{i,\alpha}}^* = -\partial_{p_{i,\alpha}} + \beta p_{i,\alpha}$.

\subsection{Large friction behavior of $\mathcal{L}_\gamma^{-1}$}
\label{sec:proof_L_gamma_large}

The proof of Lemma~\ref{lem:bounded_resolvent_perp} follows the same lines as the proof of uniform hypocoercive estimates in the corrected version of Theorem~3 in~\cite{JS12} (see the erratum~\cite{JS12erratum} or the updated preprint version~\cite{JS12preprint}). We provide a simplified version of it for completeness.

\begin{proof}[Proof of Lemma~\ref{lem:bounded_resolvent_perp}]
We show that the operator $\mathcal{L}_\gamma$ is uniformly hypocoercive for $\gamma \geq 1$. The aim is to obtain bounds on the inverse $\mathcal{L}^{-1}_\gamma$ extended to $\cH^1_\perp$. To this end, we decompose $\Lgam$ for $\gamma \geq 1$ as
\[
\Lgam = \mathcal{L}_1 + (\gamma-1) C.
\]
The proof of Theorem~6.2 in~\cite{HP08} shows that there exists $\widetilde{\alpha} > 0$ such that, for all $u \in \mathcal{S}$,
\[
-\left\langle\left\langle u, \mathcal{L}_1 u \right\rangle\right\rangle
\geq \widetilde{\alpha} \left\langle\left\langle u, u \right\rangle\right\rangle,
\]
where the norm induced by 
$\left\langle\left\langle \cdot, \cdot\right\rangle\right\rangle$ is equivalent
to the $H^1(\mu)$ norm. More precisely, $\left\langle\left\langle \cdot, \cdot\right\rangle\right\rangle$
is the  bilinear form defined by
\[
\left\langle\left\langle u, v\right\rangle\right\rangle 
= a\left\langle u, v\right\rangle
+ b \left\langle \nabla_p u,\nabla_pv\right\rangle 
- \langle \nabla_p u, \nabla_q v\rangle 
- \langle \nabla_q u, \nabla_p v\rangle 
+ b\langle \nabla_q u,\nabla_q v\rangle,
\]
with appropriate coefficients $a \gg b \gg 1$.
It follows that there exists $\alpha > 0$ independent of $\gamma$ such that
\begin{equation}
\label{eq:eq_for_coercivity_Lgam}
\alpha \left\| u \right\|_{H^1(\mu)}^2 - (\gamma-1)
\left\langle\left\langle u, C u \right\rangle\right\rangle
\leq -\left\langle\left\langle u, \Lgam  u \right\rangle\right\rangle.
\end{equation}
Let us now show that 
\begin{equation}
\label{eq:estimate_positiviy_A}
\forall u \in \cH^1_\perp \cap \mathcal{S}, \qquad -\left\langle\left\langle u, C u \right\rangle\right\rangle 
\geq 0.
\end{equation}
Using the rewriting~\eqref{eq:decomposition_C} of the operator~$C$, and the commutation relations $[\partial_{p_{i,\alpha}}, \partial_{p_{j,\alpha'}}^*] = \beta \delta_{\alpha,\alpha'}\delta_{ij}$, a simple computation shows
\begin{align}
\left\langle\left\langle u, \left(\partial_{p_{i,\alpha}}\right)^*\partial_{p_{i,\alpha}} u \right\rangle\right\rangle 
& = (a+\beta b) \| \partial_{p_{i,\alpha}} u \|^2 + b \|\nabla_p \partial_{p_{i,\alpha}} u\|^2  \nonumber \\
& \quad + b \|\nabla_q \partial_{p_{i,\alpha}} u\|^2 - 2\langle \nabla_q \partial_{p_{i,\alpha}} u, \nabla_p \partial_{p_{i,\alpha}} u\rangle - \beta \langle \partial_{q_{i,\alpha}} u, \partial_{p_{i,\alpha}} u\rangle \nonumber \\
& \geq \left(a+\beta \left(b-\frac12\right)\right) \| \partial_{p_{i,\alpha}} u \|^2 + (b-1) \|\nabla_p \partial_{p_{i,\alpha}} u\|^2 \label{eq:ineg_pi*pi} \\
& \qquad + (b-1) \|\nabla_q \partial_{p_{i,\alpha}} u\|^2 - \frac{\beta}{2} \| \partial_{q_{i,\alpha}} u \|^2. \nonumber
\end{align}
Now, since the Gaussian measure $\kappa(\d{p})$ satisfies a Poincar\'e inequality, there exists a constant $A > 0$ such that, for all $i = 1,\dots,N$ and $\alpha = 1,\dots,d$,
\[
  \| \partial_{q_{i,\alpha}}u \|^2 \leq A \| \nabla_p \partial_{q_{i,\alpha}} u \|^2.
\]
Note indeed that $\partial_{q_{i,\alpha}}u$ has a vanishing average with respect to the Gaussian measure $\kappa(\d{p})$ because 
\[
\int_{\RR^{dN}} \partial_{q_{i,\alpha}} u(q,p) \, \kappa(\d{p}) = \partial_{q_{i,\alpha}} \overline{u}(q) = 0 
\]
for functions $u \in \cH^1_\perp$. Therefore,
\[
\sum_{i=1}^N \sum_{\alpha=1}^d \| \partial_{q_{i,\alpha}} u \|^2 \leq A \sum_{i,j=1}^N \sum_{\alpha,\alpha' = 1}^d \| \partial_{p_{j,\alpha'}} \partial_{q_{i,\alpha}} u \|^2 = A \sum_{j=1}^N \sum_{\alpha' = 1}^d \| \nabla_{q} \partial_{p_{j,\alpha'}}\|^2.
\]
Summing~\eqref{eq:ineg_pi*pi} on $i \in \{ 1,\dots,N\}$ and $\alpha \in \{1,\dots,d\}$, the quantity~\eqref{eq:estimate_positiviy_A} is seen to be non-negative for an appropriate choice of constants $a \gg b \gg 1$. 

From~\eqref{eq:eq_for_coercivity_Lgam}, we then deduce that there exists a constant $K>0$ such that, for any $\gamma \geq 1$ and for any $u \in \cH^1_\perp \cap \mathcal{S}$, it holds $\left\| u \right\|_{H^1(\mu)} \leq K \| \Lgam u \|_{H^1(\mu)}$. Taking inverses and passing to the limit in $\cH^1_\perp$ gives
\[
\forall \gamma \geq 1, \quad \forall u \in \cH^1_\perp, \qquad \left\| \Lgam^{-1}  u \right\|_{H^1(\mu)} \leq K \| u \|_{H^1(\mu)},
\]
which is the desired result.
\end{proof}

\medskip

We are now in position to give the proof  of Theorem~\ref{lem:bounds_CL_gamma}.

\smallskip

\begin{proof}[Proof of Theorem~\ref{lem:bounds_CL_gamma}]
We write the proof for $\Lgam^{-1}$. The estimates for $(\Lgam^*)^{-1}$ are obtained by using $\Lgam^* = \mathcal{R} \Lgam \mathcal{R}$ (the momentum reversal operator being defined in~\eqref{eq:op_R}), and the fact that $\mathcal{R} C \mathcal{R} = C$, $\mathcal{R} \Lovd \mathcal{R} = \Lovd$ and $\mathcal{R} (A+B) \mathcal{R} = -(A+B)$.

The lower bound in~\eqref{eq:crude_bounds_Lgamma_ovd} could be obtained directly provided $V$ is not constant, by considering the special case 
\[
\cL_\gamma \Big( p^T \nabla V + \gamma (V-v)\Big) = p^T M^{-1} \left(\nabla^2 V\right)p - |\nabla V|^2, 
\]
where $v$ is a constant chosen such that $p^T \nabla V + \gamma (V-v)$ has a vanishing average with respect to~$\mu$. This example is also useful to motivate the fact that, in general, solutions of the Poisson equation $\cL_\gamma u_\gamma = f$ have divergent parts of order~$\gamma$ as $\gamma \to +\infty$.

Let us now turn to the refined upper and lower bounds~\eqref{eq:divergent_behavior_Lgamma}, which we prove using techniques from asymptotic analysis. Consider $f \in \cH^1$, and $u_\gamma \in \cH^1$ the unique solution of the following Poisson equation $\cL_\gamma u_\gamma = f$. The above example suggests the following expansion in inverse powers of $\gamma$:
\begin{equation}
\label{eq:ansatz_u_pi}
u_\gamma = \gamma u^{-1} + u^0 + \frac1\gamma u^1 + \dots
\end{equation}
To rigorously prove this expansion, we first proceed formally, taking~\eqref{eq:ansatz_u_pi} as an ansatz, plugging it into  $\cL_\gamma u = f$ and identifying terms according to powers of~$\gamma$. This leads to
\begin{align*}
C u^{-1} & = 0, \\
(A+B)u^{-1} + Cu^0 & = 0, \\
(A+B)u^0 + Cu^1 & = f.
\end{align*}
The first equality implies that $u^{-1} = u^{-1}(q)$ since $C$ satisfies a Poincar\'e inequality on $L^2(\kappa)$ (where $\kappa$ is defined in~\eqref{eq:pi}). The second then reduces to $Cu^0 = -M^{-1} p \cdot \nabla_q u^{-1}$, from which we deduce $u^0(q,p) = p^T\nabla u^{-1}(q) + \widetilde{u}^0(q)$. Inserting this expression in the third equality gives
\[
Cu^1 = f - p^T M^{-1} \left(\nabla^2 u^{-1}\right) p - p^T M^{-1} \nabla \widetilde{u}^0 + (\nabla V)^T \nabla u^{-1}.
\]
The solvability condition for this equation is that the right-hand side has a vanishing average with respect to~$\kappa$, \textit{i.e.} belongs to the kernel of~$\pi$. This condition reads
\[
\frac1\beta \Delta u^{-1} - (\nabla V)^T \nabla u^{-1} = \pi f,
\]
so that $u^{-1} = \Lovd^{-1} \pi f$ (which is well defined since $\pi f$ has a vanishing average with respect to~$\overline{\mu}$). Note that the function $u^{-1}$ is in $H^{n+2}(\overline{\mu})$ when $f \in H^n(\mu)$ (by elliptic regularity, using also the fact that $\rme^{-\beta V(q)}$ is a smooth function bounded from above and below on~$\mathcal{M}$), so that $p^T M^{-1} (\nabla^2 u^{-1}) p$ belongs to $L^2(\mu)$. The equation determining $u^1$ then reduces to
\[
Cu^1 = (f-\pi f) - p^T M^{-1} \nabla \widetilde{u}^0 - p^T M^{-1} \left(\nabla^2 u^{-1}\right) p + \frac1\beta\Delta u^{-1}.
\]
Since $C(p^T  A p) = -p^T M^{-1} (A + A^T)p + 2\beta^{-1} \mathrm{Tr}(A)$, we can choose 
\[
u^1(q,p) = \left[C^{-1}(f -\pi f)\right](q,p) + \frac12 p^T (\nabla^2 u^{-1}(q)) p + p^T \nabla_q \widetilde{u}^0(q). 
\]
Coming back to~\eqref{eq:ansatz_u_pi}, we see that the proposed approximate solution is such that
\begin{equation}
\label{eq:difference_with_exact_sol}
\cL_\gamma\left(u_\gamma - \gamma u^{-1} - u^0 - \frac1\gamma u^1\right) = -\frac1\gamma (A+B)u^1.
\end{equation}
We now choose $\widetilde{u}^0$ such that $(A+B)u^1$ belongs to $\cH^1_\perp$, which amounts to
\[
\pi(A+B)p^T \nabla_q \widetilde{u}^0 = \Lovd \widetilde{u}^0 = -\pi(A+B)C^{-1}(f -\pi f).
\]
It is easily checked that $\widetilde{u}^0 = -\Lovd^{-1}\pi(A+B)C^{-1}(f -\pi f)$ is a well defined element in $\cH^1$ for $f \in H^1(\mu)$: first, $C^{-1}(f -\pi f) \in \cH^1$, so $(A+B)C^{-1}(f -\pi f) \in L^2(\mu)$. Finally, the image under $\Lovd^{-1}\pi$ of any function in $L^2(\mu)$ is a function of average zero with respect to $\overline{\mu}$, depending only on the position variable~$q$ and belonging to~$H^2(\overline{\mu})$; hence to~$\cH^1$.

Combining~\eqref{eq:difference_with_exact_sol} and Lemma~\ref{lem:bounded_resolvent_perp}, we see that there exists a constant $R > 0$, such that, for all $\gamma \geq 1$, it holds $\| u_\gamma - \gamma u^{-1} - u^0 \|_{H^1(\mu)} \leq R \| f \|_{H^1(\mu)}/\gamma$ for the above choices of functions $u^{-1}, u^0$. This  gives~\eqref{eq:divergent_behavior_Lgamma}.
\end{proof}

\subsection{Ergodicity results for numerical schemes}
\label{sec:proof_ergodicity_MC}

\begin{proof}[Proof of Lemma~\ref{lem:Lyapunov}]
We write the proof for the scheme associated with the evolution operator $P_\dt^{B,A,\gamma C}$, starting by the case $s=1$, before turning to the general case $s \geq 2$. The proofs for other schemes are very similar, and we therefore skip them.

The numerical scheme corresponding to $P_\dt^{B,A,\gamma C}$ is~\eqref{eq:Langevin_splitting}. We introduce $m \in (0,+\infty)$ such that $m \leq M \leq m^{-1}$ (in the sense of symmetric matrices). A simple computation shows that
\[
\begin{aligned}
\mathbb{E}\left[\left.\left(p^{n+1}\right)^2\,\right|\,\mathcal{F}_n\right] & = \left(p^n- \dt \nabla V(q^n)\right)^T \alpha_\dt^2 \left(p^n- \dt \nabla V(q^n)\right) + \frac1\beta \Tr\left[\left(1-\alpha_\dt^2\right)M^2\right] \\
& \leq \rme^{-2m\gamma\dt} \left(p^n\right)^2 + 2 \dt \left\|\nabla V\right\|_{L^\infty} \left|p^n\right| + \dt^2 \left\|\nabla V\right\|_{L^\infty}^2 + \frac{1-\rme^{-2\gamma\dt/m}}{\beta m^2} \\
& \leq \left(\rme^{-2m\gamma\dt} + \varepsilon \dt \right) \left(p^n\right)^2 + \dt \left( \frac1\varepsilon + \dt\right) \left\|\nabla V\right\|_{L^\infty}^2 + \frac{1-\rme^{-2\gamma\dt/m}}{\beta m^2}.
\end{aligned}
\]
We choose for instance $\varepsilon = m\gamma$, in which case 
\[
0 \leq \rme^{-2m\gamma\dt} + \varepsilon \dt \leq \exp\left( -C_a \dt \right), \qquad C_a = \frac{m\gamma}{2},
\]
for $\dt$ sufficiently small, and
\[
0 \leq \dt \left( \frac1\varepsilon + \dt\right) \left\|\nabla V\right\|_{L^\infty}^2 + \frac{1-\rme^{-2\gamma\dt/m}}{\beta m^2} \leq \widetilde{C}_b \dt, 
\qquad
\widetilde{C}_b = \frac{2}{m\gamma} \left\|\nabla V\right\|_{L^\infty}^2 + \frac{4\gamma}{\beta m^3},
\]
for $\dt$ sufficiently small. Finally, since $\Li_2(q,p) = 1+|p|^2$,
\[
\mathbb{E}\left[\left. \Li_2\left(q^{n+1},p^{n+1}\right)\,\right|\,\mathcal{F}_n\right] \leq \rme^{-C_a\dt} \Li_2\left(q^n,p^n\right) + 1-\rme^{-C_a\dt} + \widetilde{C}_b \dt \leq \rme^{-C_a\dt} \Li_2\left(q^n,p^n\right) + C_b \dt,
\]
for $\dt$ sufficiently small. This gives~\eqref{eq:moment_estimates}. To obtain~\eqref{eq:moment_estimates_uniform}, we iterate the bound~\eqref{eq:moment_estimates}:
\[
\begin{aligned}
P_\dt^n \Li_s & \leq \rme^{-C_a \,n\dt} \Li_s + C_b \dt \left(1 + \rme^{-C_a\dt} + \dots + \rme^{-C_a\, (n-1)\dt}\right) \leq \rme^{-C_a \, n\dt} \Li_s + \frac{C_b \dt}{1-\rme^{-C_a\dt}}. 
\end{aligned}
\]

The computations are similar for a general power $s \geq 2$. We write $p^{n+1} = \alpha_\dt p^n + \delta_\dt$ with $\delta_\dt = - \alpha_\dt \dt \nabla V(q^n) + \sqrt{\beta^{-1}(1-\alpha_\dt^2)M} \, G^n$. Note that $\delta_\dt$ is of order~$\dt^{1/2}$ because of the random term. We work componentwise, using the assumption that $M$ is diagonal, so that, denoting by $m_i$ the mass of the $i$th degree of freedom,
\[
\begin{aligned}
\left(p_i^{n+1}\right)^{2s} & = \left(\rme^{-\gamma \dt/m_i} p_i^n + \delta_{i,\dt} \right)^{2s} \\
& = \rme^{-2s \gamma \dt/m_i} \left(p_i^n\right)^{2s} + 2s \, \rme^{-(2s-1)\gamma \dt/m_i} \left(p_i^n\right)^{2s-1} \delta_{i,\dt} \\
& \ \ \ + s(2s-1) \rme^{-2(s-1)\gamma \dt/m_i} \left(p_i^n\right)^{2(s-1)} \delta_{i,\dt}^2 + \dots
\end{aligned}
\]
Taking expectations,
\[
\begin{aligned}
& \mathbb{E}\left[ \left. \left(p_i^{n+1}\right)^{2s} \, \right| \, \mathcal{F}_n \right]
= \rme^{-2s \gamma \dt/m_i} \left(p_i^n\right)^{2s} - 2s \, \dt\, \rme^{-2s \gamma \dt/m_i} \left(p_i^n\right)^{2s-1} \partial_{q_i}V(q^n) \\
& \ \ \ + s(2s-1) \rme^{-2(s-1)\gamma \dt/m_i} \left(p_i^n\right)^{2(s-1)} \left(\dt^2 \rme^{-2 \gamma \dt/m_i} \partial_{q_i}V(q^n) + \frac{(1-\rme^{-2 \gamma \dt/m_i})m_i}{\beta} \right) \\
& \ \ \ + \dt^2 r_{s,\dt,i}(q^n) \left(1+\left(p^n\right)^{2s-3}\right),
\end{aligned}
\]
where the remainder $r_{s,\dt}(q^n)$ is uniformly bounded as $\dt \to 0$. Distinguishing between $|p_i| \geq 1/\varepsilon$ and $|p_i| \leq 1/\varepsilon$, we have 
\[
|p_i|^{2s -m} \leq \varepsilon^m (p_i)^{2s} + \frac{1}{\varepsilon^{2s-m}},
\]
from which we obtain
\[
\mathbb{E}\left[ \left. \left(p_i^{n+1}\right)^{2s} \, \right| \, \mathcal{F}_n \right] \leq 
\widehat{a}_{\dt,\varepsilon,i} \left(p_i^n\right)^{2s} + \widehat{b}_{\dt,\varepsilon,i},
\]
with
\[
\begin{aligned}
\widehat{a}_{\dt,\varepsilon,i} & = \rme^{-2s \gamma \dt/m_i} + 2s\varepsilon \dt \| \partial_{q_i}V \|_{L^\infty} \\
& \ \ \ + s(2s-1) \varepsilon^2 \left(\dt^2 \| \partial_{q_i}V \|_{L^\infty} + \frac{(1-\rme^{-2 \gamma \dt/m_i})m_i}{\beta}\right) + \varepsilon^3 \dt^2 \| r_{s,\dt,i} \|_{L^\infty} ,
\end{aligned}
\]
and
\[
\begin{aligned}
\widehat{b}_{\dt,\varepsilon,i} & = \frac{2s}{\varepsilon} \dt \| \partial_{q_i}V \|_{L^\infty} \\ 
& \ \ \ + \frac{s(2s-1)}{\varepsilon^2} \left(\dt^2 \| \partial_{q_i}V \|_{L^\infty} + \frac{(1-\rme^{-2 \gamma \dt/m_i})m_i}{\beta}\right) + \dt^2\left(1 + \frac{1}{\varepsilon^3}\right) \| r_{s,\dt,i} \|_{L^\infty}.
\end{aligned}
\]
The proof is then concluded as in the case $s=1$ by choosing $\varepsilon$ sufficiently small (independently of $\dt$).
\end{proof}

\bigskip

\begin{proof}[Proof of Lemma~\ref{lem:minorization}]
  It is sufficient to prove the result for indicator functions of Borel sets~$A = A_q \times A_p \subset \cE$, where $A_q \subset \mathcal{M}$ while $A_p \subset \mathbb{R}^{dN}$ (see~\cite{Rudin}). We therefore aim at proving 
\[
\mathbb{P}\left( \left(q^n,p^n\right) \in A \, \left| \, \left|p^0\right| \leq p_{\rm max} \right.\right) \geq \alpha \, \nu(A),
\]
for a well chosen probability measure~$\nu$ and a constant $\alpha > 0$. The idea of the proof is to explicitly rewrite $q^n$ and $p^n$ as perturbations of the reference evolution corresponding to $\nabla V = 0$ and $(q^0,p^0) = (0,0)$. Since we consider smooth potentials and the position space is compact, the perturbation can be uniformly controlled when the initial momenta are within a compact set.

We write the proof for the scheme associated with the evolution operator $P_\dt^{B,A,\gamma C}$, as in the proof of Lemma~\ref{lem:Lyapunov}. A simple computation shows that, for $n \geq 1$,
\[
q^n = q^0 + \dt M^{-1} \left(p^{n-1} + \dots + p^0 \right) - \dt^2 M^{-1} \Big( \nabla V(q^{n-1}) + \dots + \nabla V(q^0) \Big),
\]
and
\[
\begin{aligned}
p^n = \alpha_\dt^n \, p^0 & - \dt \, \alpha_\dt \left( \nabla V(q^{n-1}) + \alpha_\dt \nabla V(q^{n-2}) + \dots + \alpha_\dt^{n-1} \nabla V(q^0) \right) \\
& + \sqrt{\frac{1-\alpha_\dt^2}{\beta} M}\left(G^{n-1} + \alpha_\dt G^{n-2} + \dots + \alpha_\dt^{n-1} G^0 \right).
\end{aligned}
\]
Denote by $\mathcal{G}^n$ the centered Gaussian random variable
\[
\mathcal{G}^n = \sqrt{\frac{1-\alpha_\dt^2}{\beta} M}\left(G^{n-1} + \alpha_\dt G^{n-2} + \dots + \alpha_\dt^{n-1} G^0 \right).
\]
Introduce also 
\[
\begin{aligned}
F^n & = -\alpha_\dt \left( \nabla V(q^{n-1}) + \alpha_\dt \nabla V(q^{n-2}) + \dots + \alpha_\dt^{n-1} \nabla V(q^0) \right), \\
\mathscr{P}^n & = \alpha_\dt^n \, p^0 + \dt \, F^n, \\
\mathscr{Q}^n & = q^0 + \dt M^{-1} \left(\dt \sum_{m=0}^{n-1} F^m + \frac{1-\alpha_\dt^n}{1-\alpha_\dt} p^0 \right) - \dt^2 M^{-1} \Big( \nabla V(q^{n-1}) + \dots + \nabla V(q^0) \Big).\\
\end{aligned}
\]
With this notation,
\[
p^n = \mathscr{P}^n + \mathcal{G}^n, \qquad q^n = \mathscr{Q}^n + \widetilde{\mathcal{G}}^n,
\]
where
\[
\begin{aligned}
\widetilde{\mathcal{G}}^n & = \dt M^{-1} \sum_{m=1}^{n-1} \mathcal{G}^m \\
& = \dt \sqrt{\frac{1-\alpha_\dt^2}{\beta} M^{-1}}\Big(G^{n-2} + (1+\alpha_\dt) G^{n-3} + \dots + (1+\alpha_\dt+\dots+\alpha_\dt^{n-2}) G^0 \Big)
\end{aligned}
\]
is a centered Gaussian random variable. Now, 
\begin{equation}
\label{eq:first_ineq_minorization}
\mathbb{P}\left( \left(q^n,p^n\right) \in A \, \left| \, \left|p^0\right| \leq p_{\rm max} \right.\right)
= \mathbb{P}\left(\left. \left(\widetilde{\mathcal{G}}^n,\mathcal{G}^n\right) \in \left(A_q -\mathscr{Q}^n\right) \times \left( A_p - \mathscr{P}^n\right) \right| \, \left|p^0\right| \leq p_{\rm max} \right).
\end{equation}
In fact, we consider in the sequel that the random variable $\widetilde{\mathcal{G}}^n$ has values in~$\mathbb{R}^{dN}$ rather than~$\mathcal{M}$ and understand $A_q -\mathscr{Q}^n$ as a subset of~$\mathbb{R}^{dN}$ rather than~$\mathcal{M}$. This amounts to neglecting the possible periodic images, and henceforth reduces the probability on the right-hand side of the above inequality. This is however not a problem since we seek a lower bound.

Note that $\dt \, F^n$ is uniformly bounded: using $0 \leq \alpha_\dt \leq \exp(-\gamma m \dt)$ in the sense of symmetric, positive matrices (with $m \leq M \leq m^{-1}$),
\[
\left| \dt \, F^n \right| \leq   \| \nabla V \|_{L^\infty} \, \frac{\dt}{1-\exp(-\gamma m \dt)} \leq \frac{2 }{m\gamma} \, \| \nabla V \|_{L^\infty}
\]
provided $\dt$ is sufficiently small. Therefore, there exists a constant $R > 0$ (depending on $p_{\rm max}$) and $\dt^*>0$ such that, for all timesteps $0 < \dt \leq \dt^*$ and corresponding integration steps $0 \leq n \leq T/\dt$, 
\begin{equation}
\label{eq:second_ineq_minorization}
\left| \mathscr{Q}^n \right|\leq R, \qquad \left| \mathscr{P}^n \right|\leq R.
\end{equation}
A lengthy but straightforward computation shows that the variance of the centered Gaussian vector $\left(\widetilde{\mathcal{G}}^n,\mathcal{G}^n\right)$ is
\[
\mathscr{V}^n = \mathbb{E}\left[ \left(\widetilde{\mathcal{G}}^n,\mathcal{G}^n\right)^T \left(\widetilde{\mathcal{G}}^n,\mathcal{G}^n\right) \right] = \begin{pmatrix} \mathscr{V}^n_{qq} & \mathscr{V}^n_{qp} \\ \mathscr{V}^n_{qp} & \mathscr{V}^n_{pp} \end{pmatrix}
\]
with
\[
\left\{ \begin{aligned}
\mathscr{V}^n_{qq} & = \frac{\dt \, (1-\alpha_\dt^2)}{(1-\alpha_\dt)^2} M^{-1}\left( (n-1)\dt - \frac{2\dt \, \alpha_\dt}{1-\alpha_\dt} (1-\alpha_\dt^{n-1}) + \frac{\dt \, \alpha_\dt^2}{1-\alpha_\dt^2} \left(1-\alpha_\dt^{2(n-1)}\right)\right),\\
\mathscr{V}^n_{qp} & = \frac{\dt \, \alpha_\dt}{\beta (1-\alpha_\dt)} \Big(1 - \alpha_\dt^{n-1} (1+\alpha_\dt) + \alpha_\dt^{2n-1} \Big),\\
\mathscr{V}^n_{pp} & =  \frac{M}{\beta}(1-\alpha_\dt^{2n}).
\end{aligned} \right.
\]
To check that this expression is appropriate, we note that it converges as $\dt \to 0$ with $n\dt \to T$ to the variance of the limiting continuous process
\[
\d{q}_t = M^{-1} p_t \, \d{t}, \qquad \d{p}_t = -\gamma M^{-1} p_t \, \d{t} + \sqrt{\frac{2\gamma}{\beta}} \, \d{W}_t,
\]
starting from $(q_0,p_0) = (0,0)$, which reads
\[
\mathscr{V} = \begin{pmatrix} \mathscr{V}_{qq} & \mathscr{V}_{qp} \\ \mathscr{V}_{qp} & \mathscr{V}_{pp} \end{pmatrix},
\]
with
\[
\left\{ 
\begin{aligned}
\mathscr{V}_{qq} & = \frac{1}{\beta \gamma}\left(2T - \frac{M}{\gamma}\left(3 - 4 \, \alpha_T + \alpha_T^2\right)\right),\\
\mathscr{V}_{qp} & = \frac{M}{\beta \gamma}\left(1-\alpha_T\right)^2, \\
\mathscr{V}_{pp} & = \frac{M}{\beta} \left(1-\alpha_T^2\right).
\end{aligned}
\right.
\]
Upon reducing~$\dt^* > 0$, it holds $\mathscr{V}/2 \leq  \mathscr{V}^{\lceil T/\dt \rceil} \leq 2\mathscr{V}$ for~$0 < \dt \leq \dt^*$. In particular, $\mathscr{V}^{\lceil T/\dt \rceil}$ is invertible for $T$ sufficiently large. For a set $E_q \times E_p \subset \mathbb{R}^{2dN}$, it then holds that
\begin{align}
\mathbb{P}\left( \left(\widetilde{\mathcal{G}}^{\lceil T/\dt \rceil},\mathcal{G}^{\lceil T/\dt \rceil}\right) \in E \right)  &= 
(2\pi)^{-dN} \mathrm{det}\left(\mathscr{V}^{\lceil T/\dt \rceil}\right)^{-1/2} \int_{E_q \times E_p} \exp\left( -\frac12 x^T \left(\mathscr{V}^{\lceil T/\dt \rceil}\right)^{-1} x \right) \, \d{x} \nonumber \\
& \geq \pi^{-dN} 2^{-3dN/2} \mathrm{det}\left(\mathscr{V}\right)^{-1/2} \int_{E_q \times E_p} \exp\left( -x^T \mathscr{V}^{-1} x \right) \, \d{x}. \label{eq:third_ineq_minorization}
\end{align}
The result follows by combining~\eqref{eq:first_ineq_minorization}-\eqref{eq:second_ineq_minorization}-\eqref{eq:third_ineq_minorization} and introducing the probability measure
\[
\nu(A_q \times A_p) = Z_R^{-1} \inf_{|\mathscr{Q}|, |\mathscr{P}| \leq R} \int_{(A_q - \mathscr{Q}) \times (A_p-\mathscr{P})} \exp\left( -x^T \mathscr{V}^{-1} x \right) \, \d{x},
\]
as well as $\alpha  = Z_R \pi^{-dN} 2^{-3dN/2} \mathrm{det}\left(\mathscr{V}\right)^{-1/2}$.
\end{proof}

\bigskip

\begin{proof}[Proof of Proposition~\ref{prop:ergodicity_MC}]
We only prove~\eqref{eq:ergodicity_num} and~\eqref{eq:moment_estimate} since the other results are standard. To obtain the bound~\eqref{eq:ergodicity_num}, we first note that, by the results of~\cite{HM11}, there exists $\widetilde{\lambda} > 0$ such that, for any function $f \in L^\infty_{\Li_s,\dt}$ and $0 < \dt \leq \dt^*$ (the critical timestep being given by Lemmas~\ref{lem:Lyapunov} and~\ref{lem:minorization}), the following holds for almost all $(q,p) \in \mathcal{E}$:
\[
\forall m \in \mathbb{N}, \qquad \left| \left( \left[P_\dt^{\lceil T/\dt \rceil}\right]^m f\right)(q,p) \right| \leq K \, \Li_s(q,p) \, \rme^{-\widetilde{\lambda} m} \, \| f \|_{L^\infty_{\Li_s}}.
\]
For a general index $n \in \mathbb{N}$, we write 
\[
n = m_n \left\lceil \frac{T}{\dt} \right\rceil + \widetilde{n}, \qquad 0 \leq \widetilde{n} \leq \left\lceil \frac{T}{\dt} \right\rceil - 1, 
\]
so that, using the contractivity property $|P_\dt f(q,p) | \leq |f(q,p)|$,
\[
\left| P_\dt^n f(q,p) \right| \leq K \, \Li_s(q,p) \, \rme^{-\widetilde{\lambda} m_n} \, \| f \|_{L^\infty_{\Li_s}}. 
\]
Introducing $\lambda = \widetilde{\lambda}/T$, the argument of the exponent reads 
\[
\widetilde{\lambda} m_n = \lambda (n-\widetilde{n})\dt \, \frac{T}{\dt}\left\lceil \frac{T}{\dt} \right\rceil^{-1} \geq \frac{\lambda n \dt}{2} - \lambda T,
\]
when $\dt$ is sufficiently small. This gives~\eqref{eq:ergodicity_num}.

The moment estimate~\eqref{eq:moment_estimate} is obtained by averaging~\eqref{eq:moment_estimates} with respect to the invariant measure: 
\[
\int_\cE \left(P_\dt \Li_s\right) \d{\mu}_{\gamma,\dt} \leq \rme^{-C_a \dt} \int_\cE \Li_s \, \d{\mu}_{\gamma,\dt} + C_b \dt.
\]
Since $\mu_{\gamma,\dt}$ is invariant, 
\[
\int_\cE \left(P_\dt \Li_s\right) \d{\mu}_{\gamma,\dt} = \int_\cE \Li_s \, \d{\mu}_{\gamma,\dt},
\]
so that 
\[
\left(1-\rme^{-C_a \dt}\right) \int_\cE \Li_s \, \d{\mu}_{\gamma,\dt} \leq C_b \dt,
\]
which gives the desired result with $R=2C_b/C_a$ for instance, provided $\dt$ is sufficiently small.
\end{proof}

\subsection{Some useful results}
\label{sec:useful}

\subsubsection{Expansion of the evolution operator}
\label{sec:expansion_evolution}

We give in this section an expression for the evolution operator 
\[
P_t = \rme^{t A_M} \dots \rme^{t A_1},
\]
which can easily be compared to the evolution operator $\rme^{t (A_1+\dots+A_M)}$. We assume that the generators $A_i$ of all elementary dynamics are well defined operators on a core~$X$, with image in~$X$ (typically, $X = \mathcal{S}$ or a subset of this space such as $\widetilde{\mathcal{S}}$). We also assume that the elementary evolution semigroups $\rme^{t A_i}$, as well as $P_t$, are well defined on~$X$ with values in~$X$. These semigroups may be extended to bounded operators on an appropriate Banach space using the Hille-Yosida theorem for instance (see~\cite{Pazy}). All the operator equalities stated in this section have to be considered in the strong sense, namely $T_1 = T_2$ means $T_1 \varphi = T_2 \varphi$ for all $\varphi \in X$. 

It is easy to check that the operators $A,B,C$ defined in~\eqref{eq:def_ABC} map $\mathcal{S}$ to itself.
It is in fact possible to analytically write down the action of the associated semigroups:
\begin{equation}
\label{eq:analytic_expressions_semigroups}
\left\{ \begin{aligned}
\left(\etA \varphi\right)(q,p) & = \varphi\Big(q+tM^{-1}p,p\Big), \\
\left(\etB \varphi\right)(q,p) & = \varphi\Big(q,p-t\nabla V(q)\Big), \\
\left(\etC \varphi\right)(q,p) & = \int_{\RR^{dN}} \varphi\left(q,\rme^{-\gamma M^{-1} t}p + \left(\frac{1-\rme^{-2\gamma M^{-1} t}}{\beta}M\right)^{1/2} x\right) \frac{\rme^{-|x|^2/2}}{(2\pi)^{dN/2}} \, \d{x}.
\end{aligned} \right.
\end{equation}

Coming back to the general case, the key building block for the subsequent numerical analysis is the following equality:
\[
P_t = P_0 + t \left. \frac{dP_t}{dt} \right|_{t=0} + \frac{t^2}{2} \left. \frac{d^2 P_t}{dt^2} \right|_{t=0} + \dots + \frac{t^n}{n!} \left. \frac{d^n P_t}{dt^n} \right|_{t=0} + \frac{t^{n+1}}{n!} \int_0^1 (1-\theta)^n \left. \frac{d^{n+1} P_s}{ds^{n+1}} \right|_{s = \theta t} \, \d{\theta}.
\]
Now,
\[
\begin{aligned}
\frac{dP_t}{dt} & = A_M \rme^{t A_M} \dots \rme^{t A_1} + \rme^{t A_M} A_{M-1} \rme^{t A_{M-1}} \dots \rme^{t A_1} + \dots
+ \rme^{t A_M} \dots \rme^{t A_1} A_1 \\
& = \cT\left[ (A_1+\dots+A_M) P_t \right]
\end{aligned}
\]
where $\cT$ is a notation indicating that the operators with the smallest indices (or their associated semigroups) are farthest to the right. In fact, simple computations show that
\[
\frac{d^n P_t}{dt^n} = \cT\Big[ (A_1+\dots+A_M)^n P_t \Big].
\]
Therefore, the following equality holds when applied to functions~$\varphi \in X$:
\begin{equation}
\label{eq:P_t_expansion}
\begin{aligned}
P_t \varphi & = \varphi + t(A_1+\dots+A_M)\varphi + \frac{t^2}{2}\cT\Big[ (A_1+\dots+A_M)^2 \Big]\varphi +\dots + \frac{t^n}{n!}\cT\Big[ (A_1+\dots+A_M)^n \Big]\varphi \\
& \ \ + \frac{t^{n+1}}{n!} \int_0^1 (1-\theta)^n \cT\Big[ (A_1+\dots+A_M)^{n+1} P_{\theta t}\Big]\varphi\, \d{\theta}.
\end{aligned}
\end{equation}

\subsubsection{Baker-Campbell-Hausdorff (BCH) formula}
\label{sec:BCH}

It is important to rewrite the various terms in the right-hand side of~\eqref{eq:P_t_expansion} in a form more amenable to analytical computations. More precisely, it is  convenient to write the following equality in terms of operators defined on~$X$:
\[
\cT\Big[ (A_1+\dots+A_M)^n \Big] = (A_1+\dots+A_M)^n + S_n,
\]
where the operator $S_n$ involves commutators $[A_i,A_j]$, which can also be defined as operators on~$X$ with values in~$X$. In fact, the algebraic expressions of the operators $S_n$ can be formally obtained from the BCH formula for first order splittings (see for instance~\cite[Section~III.4.2]{HairerLubichWanner06}): for $M=3$,
\[
\rme^{\dt A_3} \rme^{\dt A_2} \rme^{\dt A_1} = \rme^{\dt \mathcal{A}}, 
\qquad \mathcal{A} = A_1+A_2+A_3 + \frac{\dt}{2} \Big( [A_3,A_1+A_2] + [A_2,A_1]\Big) + \dots,
\]
and from the symmetric BCH formula for second order involving 3~operators (obtained by composition of the standard BCH formula involving 2~operators):
\begin{equation}
\label{eq:BCH}
\rme^{\dt A_1/2} \rme^{\dt A_2/2} \rme^{\dt A_3} \rme^{\dt A_2/2} \rme^{\dt A_1/2}
= \rme^{\dt \mathcal{A}}, 
\end{equation}
with
\[
\begin{aligned}
\mathcal{A} = A_1+A_2+A_3 + \frac{\dt^2}{12} \left( [A_3,[A_3,A_2]] + [A_2+A_3,[A_2+A_3,A_1]] \phantom{\frac12} \right.&\\ 
\left. - \frac12 [A_2,[A_2,A_3]] - \frac12 [A_1,[A_1,A_2+A_3]]\right) + \dots&
\end{aligned}
\]
where we do not write down the expressions of the higher order terms $\dt^{2n}$ (for $n \geq 2$). Let us insist that these formulas are only formal (since the operators appearing the argument of the exponential on the right-hand side involve more and more derivatives), but nonetheless allow us to find the algebraic expressions of $S_n$ upon formally expanding the exponential as 
\[
\rme^{\dt \mathcal{A}} = \Id + \dt \mathcal{A} + \frac{\dt^2}{2} \mathcal{A}^2 + \dots
\]
and identifying terms with the same powers of $\dt$ in~\eqref{eq:P_t_expansion}.

\subsubsection{Approximate inverse operators}
\label{sec:approx_inv}

Consider an operator $A$ defined on some core~$X$ (typically some subspace of~$\mathcal{S}$), and whose inverse is also defined on~$X$ in the following sense: for any $g \in X$, there exist $f \in X$ such that $Af=g$. We denote by $A^{-1}g$ the element~$f \in X$. At this stage, we do not assume any boundedness in an appropriate operator norm for~$A^{-1}$ or some extension of it. We next consider a perturbation $\dt^\alpha B$ for some exponent $\alpha \geq 1$, where $B$ is also defined on~$X$ and has values in~$X$. In the typical situations encountered in this article, $B$ is not bounded with respect to $A$ in an appropriate operator norm since it may involve higher order derivatives than $A$ does. It is therefore impossible in general to properly define the inverse of $A + \dt^\alpha B$.

However, it is possible to introduce an approximate inverse, which we define as an operator $Q_{\dt,n}$ from $X$ to~$X$ such that there exists an operator $\widetilde{Q}_{\dt,n}$ from~$X$ to~$X$ for which the following equality holds for any function $f\in X$:
\begin{equation}
  \label{eq:pseudo_inverse_general_definition}
  (A + \dt^\alpha B)Q_{\dt,n}f = f + \dt^{(n+1)\alpha} \widetilde{Q}_{\dt,n}f. 
\end{equation}
To this end we simply truncate the formal series expansion of the inverse of the operator $A + \dt^\alpha \, B = A(\Id + \dt^\alpha \, A^{-1} B)$, which formally reads $A^{-1} - \dt^\alpha \, A^{-1}B A^{-1} + \dt^{2\alpha} \, A^{-1}B A^{-1}B A^{-1} + \dots$. For instance, $Q_{\dt,1} = A^{-1} - \dt^\alpha \, A^{-1}B A^{-1}$ and $Q_{\dt,2} = A^{-1} - \dt^\alpha \, A^{-1}B A^{-1} + \dt^{2\alpha} \, A^{-1}B A^{-1}B A^{-1}$ indeed are operators from $X$ to~$X$ satisfying~\eqref{eq:pseudo_inverse_general_definition}, respectively with $n=1$ and $n=2$.

\subsection{Proof of Theorem~\ref{thm:error_first_order_schemes}}
\label{sec:proof_thm:error_first_order_schemes}

We write the proof for the scheme associated with $P_\dt^{\gamma C, B, A} = \rme^{\gamma \dt C} \rme^{\dt B} \rme^{\dt A}$, the proof for the scheme $P_\dt^{\gamma C, A, B}$ following the same lines. The results for the other schemes are then obtained with the TU lemma (Lemma~\ref{lem:TU}). Without loss of generality, we perform the proof for a function~$\psi$ with average zero with respect to~$\mu$ (recovering the general case by adding a constant to $\psi$ in the final expression).

\paragraph{Proof of~\eqref{eq:error_first_order_schemes}.}
First, note that, by definition of the invariant measure $\mu_{\gamma,\dt}$, it holds that, for any $\varphi \in \mathcal{S}$,
\begin{equation}
  \label{eq:avg_0_mu_dt}
  \int_\cE \left(\frac{\Id-P_\dt^{\gamma C,B,A}}{\dt}\right) \varphi \, \d{\mu}_{\gamma,\dt} = 0.
\end{equation}
The next step is to choose the correction function~$f_{1,\gamma}$. Using the results of Section~\ref{sec:useful}, a simple computation shows that
\begin{equation}
\label{eq:P_dt_eq_order1}
P_\dt^{\gamma C, B, A} = \Id + \dt \Lgam + \frac{\dt^2}{2} \left(\Lgam^2 + S_1\right) + \dt^3 R_{1,\dt}, \qquad S_1 = [C,A+B] + [B,A],
\end{equation}
where the subscript index~1 refers to the order of the splitting, and where all operators are understood as operators on~$\mathcal{S}$. More precisely, 
\[
R_{1,\dt} = \frac12 \int_0^1 (1-\theta)^2 \mathcal{R}_{\theta \Delta t} \, \d{\theta},
\]
where $\mathcal{R}_s$ is a finite linear combination of terms of the form $C^\gamma \esC B^\beta \esB A^\alpha \esA$ with $\alpha,\beta,\gamma \geq 0$ and $\alpha + \beta + \gamma = 3$. In any case, $R_{1,\dt}$ is a differential operator involving at most 6 derivatives, and with smooth coefficients of at most polynomial growth. It is easily seen that $R_{1,\dt}\psi$ is uniformly bounded in some space~$L^\infty_{\Li_s}$ (with $s$ chosen sufficiently large) for $\dt$ small enough when $\psi \in \mathcal{S}$. Therefore, for any $\varphi \in \mathcal{S}$ and $f_{1,\gamma} \in \widetilde{\mathcal{S}}$, 
\[
\begin{aligned}
& \int_\cE \left[\left(\frac{\Id-P_\dt^{\gamma C,B,A}}{\dt}\right) \varphi\right] (1+\dt f_{1,\gamma}) \, \d{\mu} \\
& \qquad = -\int_\cE \left[\left(\Lgam + \frac{\dt}{2} \left(\Lgam^2 + S_1\right) + \dt^2 R_{1,\dt}\right)\varphi\right] (1+\dt f_{1,\gamma}) \, \d{\mu} \\
& \qquad = -\dt \int_\cE  \left( \frac12 S_1 \varphi + (\Lgam \varphi) f_{1,\gamma} \right)d\mu 
- \dt^2 \int_\cE \left( \left[\frac{1}{2} \left(\Lgam^2 + S_1\right)\varphi\right] f_{1,\gamma} + (R_{1,\dt}\varphi) (1+\dt f_{1,\gamma}) \right) \d{\mu}.
\end{aligned}
\]
The dominant term on the right-hand side can be written, using integration by parts,
\[
\int_\cE \left(\frac12 S_1 \varphi + (\Lgam \varphi) f_{1,\gamma} \right) \d{\mu} = \int_\cE \varphi \left[ \frac12 S_1^* \mathbf{1} + \Lgam^* f_{1,\gamma}\right] \d{\mu}.
\]
In view of~\eqref{eq:avg_0_mu_dt}, we choose the correction function in order to eliminate the dominant term:
\begin{equation}
\label{eq:choice_f1}
\Lgam^* f_{1,\gamma} = -\frac12 S_1^* \mathbf{1}.
\end{equation}
Relying on Theorem~\ref{thm:stability_S} and~\eqref{eq:stability_H1}, the function $f_{1,\gamma}$ is a well defined element from $\widetilde{\mathcal{S}}$ since the right-hand side of~\eqref{eq:choice_f1} belongs to $\widetilde{\mathcal{S}}$. A direct computation using integration by parts indeed shows that $S_1^* \mathbf{1} \in \mathcal{S}$ (see~\eqref{eq:S1_star} below). The centering condition follows from the fact that $\mathbf{1} \in \mathrm{Ker}(S_1)$: indeed,
\[
\int_\cE  S_1^* \mathbf{1} \, \d{\mu} =  \int_\cE S_1 \mathbf{1} \, \d{\mu} = 0.
\]
With the choice~\eqref{eq:choice_f1},
\begin{equation}
\label{eq:first_order_corrected}
\begin{aligned}
& \int_\cE \left[\left(\frac{\Id-P_\dt^{\gamma C,B,A}}{\dt}\right) \varphi\right] (1+\dt f_{1,\gamma}) \, \d{\mu} \\
& \qquad = - \dt^2 \int_\cE \left( \left[\frac{1}{2} \left(\Lgam^2 + S_1\right)\varphi\right] f_{1,\gamma} + (R_{1,\dt}\varphi) (1+\dt f_{1,\gamma}) \right) \d{\mu}.
\end{aligned}
\end{equation}
We would like, at this stage, to replace the observable~$\varphi$ appearing on the left hand side by the function 
\[
\left(\frac{\Id-P_\dt^{\gamma C,B,A}}{\dt}\right)^{-1} \psi.
\] 
However, we do not have any control on the derivatives of this function (Corollary~\ref{corr:resolvent_estimates_I_Pdt} allows to control the norm of the function, not of its derivatives), whereas such a control is required to bound the remainder terms. In order to use an approximate inverse operator involving iterated powers of $\mathcal{L}_\gamma^{-1}$ (see Section~\ref{sec:approx_inv}), we first need to make sure that all operators are defined on~$\widetilde{\mathcal{S}}$, with values in~$\widetilde{\mathcal{S}}$. This is the case for $\mathcal{L}_\gamma$ and its inverse, but not for the other operators appearing in~\eqref{eq:P_dt_eq_order1}, which have values in~$\mathcal{S}$. We therefore project out averages with respect to~$\mu$. Define to this end the projector
\begin{equation}
  \label{eq:def_Pi}
  \Pi^\perp f = f - \int_\mathcal{E} f \, \d \mu,
\end{equation}
which maps $\mathcal{S}$ to~$\widetilde{\mathcal{S}}$. Then, for a function $\varphi \in \widetilde{\mathcal{S}}$ (for which $\Pi^\perp \varphi = \varphi$), \eqref{eq:first_order_corrected} can be rewritten as
\[
\begin{aligned}
& \int_\cE \left[ \Pi^\perp \frac{\Id-P_\dt^{\gamma C,B,A}}{\dt} \Pi^\perp \varphi\right] (1+\dt f_{1,\gamma}) \, \d{\mu} \\
& \qquad = \frac{1}{\dt} \int_\cE P_\dt^{\gamma C,B,A} \varphi \, \d{\mu} - \dt^2 \int_\cE \left( \left[\frac{1}{2} \left(\Lgam^2 + S_1\right)\varphi\right] f_{1,\gamma} + (R_{1,\dt}\varphi) (1+\dt f_{1,\gamma}) \right) \d{\mu},
\end{aligned}
\]
where we have used the fact that $f_{1,\gamma}$ is of average zero with respect to~$\mu$. On the other hand, \eqref{eq:avg_0_mu_dt} may be rewritten
\[
\int_\cE \Pi^\perp \frac{\Id-P_\dt^{\gamma C,B,A}}{\dt} \Pi^\perp \varphi \, \d{\mu}_{\gamma,\dt} = \frac{1}{\dt} \int_\cE P_\dt^{\gamma C,B,A} \varphi \, \d{\mu}.
\]
Therefore,
\begin{equation}
\label{eq:first_order_corrected_bis}
\begin{aligned}
& \int_\cE \left[ \Pi^\perp \frac{\Id-P_\dt^{\gamma C,B,A}}{\dt} \Pi^\perp \varphi\right] (1+\dt f_{1,\gamma}) \, \d{\mu} - \int_\cE \Pi^\perp \frac{\Id-P_\dt^{\gamma C,B,A}}{\dt} \Pi^\perp \varphi \, \d{\mu}_{\gamma,\dt} \\
& \qquad \qquad =  - \dt^2 \int_\cE \left( \left[\frac{1}{2} \left(\Lgam^2 + S_1\right)\varphi\right] f_{1,\gamma} + (R_{1,\dt}\varphi) (1+\dt f_{1,\gamma}) \right) \d{\mu}.
\end{aligned}
\end{equation}
As a consequence of the presence of the projection~$\Pi^\perp$, all of  the operators in~\eqref{eq:P_dt_eq_order1} are restricted to the range of $\Pi^\perp$, \textit{i.e.} the following equality holds on~$\widetilde{\mathcal{S}}$: 
\[
-\Pi^\perp \frac{\Id-P_\dt^{\gamma C, B, A}}{\dt} \Pi^\perp = \Lgam + \frac{\dt}{2} \left(\Lgam^2 + \Pi^\perp S_1 \Pi^\perp \right) + \dt^2 \Pi^\perp R_{1,\dt} \Pi^\perp.
\]
We therefore introduce the operator
\[
Q_{1,\dt} = -\Lgam^{-1} + \frac\dt2 (\Pi^\perp + \Lgam^{-1} \Pi^\perp S_1 \Pi^\perp \Lgam^{-1}),
\] 
which is a well defined operator from $\widetilde{\mathcal{S}}$ to $\widetilde{\mathcal{S}}$ such that
\[
\left(\Pi^\perp \frac{\Id-P_\dt^{\gamma C,B,A}}{\dt} \Pi^\perp \right) Q_{1,\dt} = \Pi^\perp + \dt^2 Z_{1,\dt},
\]
where $Z_{1,\dt}$ maps~$\mathcal{S}$ to~$\mathcal{S}$. We finally replace $\varphi$ by $Q_{1,\dt}\psi$ in~\eqref{eq:first_order_corrected_bis}. This gives (recall that $\Pi^\perp \psi = \psi$ by assumption)
\[
\int_\cE \psi \, (1+\dt f_{1,\gamma}) \, \d{\mu} - \int_\mathcal{E} \psi \, \d{\mu_\dt} = \dt^2 \int_\cE \left[\left(\widetilde{R}_{1,\dt} \psi\right) f_{1,\gamma} + \widehat{R}_{1,\dt} \psi\right] \d{\mu},
\]
where the functions $\widetilde{R}_{1,\dt} \psi, \widehat{R}_{1,\dt} \psi$ belong to $\mathcal{S}$ when $\psi$ does. The integral on the right-hand side is uniformly bounded for small $\dt$ (using the fact that the functions appearing in the integral are in~$\mathcal{S}$ and relying on Proposition~\ref{prop:ergodicity_MC}). This gives~\eqref{eq:error_first_order_schemes} for the splitting scheme $P_\dt^{\gamma C, B, A}$.

\paragraph{Proof of~\eqref{eq:correction_first_order_schemes}.}
The function $f_1^{\gamma C,B,A} \in \widetilde{\mathcal{S}}$ (denoted by $f_{1,\gamma}$ above) is uniquely determined by the equation
\[
\Lgam^* f_1^{\gamma C,B,A} = -\frac12 S_1^* \mathbf{1} = - \frac12 \Big( [C,A+B] + [B,A] \Big)^* \mathbf{1}, \qquad \int_\mathcal{E} f_1^{\gamma C,B,A} \, \d{\mu} = 0,
\]
where we have used $[\Lgam^2]^* \mathbf{1} = 0$ to simplify the right-hand side. Now, $[C,A+B]^* = [C,A+B]$ since $C^* = C$ and $(A+B)^* = -(A+B)$. Therefore, $[C,A+B]^* \mathbf{1} = 0$. In addition, 
\[
[B,A]^* \mathbf{1} = - (A+B)^*g = (A+B)g,
\]
since $A^* = -A + g$ and $B^* = -B -g$. Therefore,
\begin{equation}
\label{eq:S1_star}
S_1^* \mathbf{1} = (A+B)g.
\end{equation}
This gives the first expression in~\eqref{eq:correction_first_order_schemes}.

To obtain the expressions of $f_1^{A,\gamma C,B}$ and $f_1^{B,A,\gamma C}$, we use the TU lemma, where the operators $U_\dt$ respectively read $\rme^{\gamma \dt C}\rme^{\dt B} = \Id + \dt (B+\gamma C) + \dt^2 R_\dt$ and $\rme^{\gamma \dt C}$ (which preserves $\mu$). 
We actually are in a situation similar to~\eqref{eq:interesting_TU_situation}:
\[
f_1^{B,A,\gamma C} = f_1^{\gamma C,B,A}, \qquad f_1^{A,\gamma C,B} = f_1^{\gamma C,B,A} + B^* \mathbf{1}.
\]
The expressions for the first order corrections when the operators $A$ and $B$ are exchanged are obtained by noting that the sign of $S_1^* \mathbf{1}$ is changed and that $f_1^{B,\gamma C,A} = f_1^{\gamma C,A,B} + A^* \mathbf{1}$.

\begin{remark}
\label{rmk:structure_proof}
Let us highlight the structure of the proof, in order to make clear which technical extensions are required in order to state error estimates for other dynamics:
\begin{enumerate}[(i)]
\item first, an expansion of the evolution operator $P_\dt$ in powers of~$\dt$ has to be written out. This step is usually quite simple although sometimes algebraically involved. The expansion of $P_\dt$ is the same as the one used to prove weak error estimates;
  \item second, good control on the resolvent has to be established, such as the stability result provided by Theorem~\ref{thm:stability_S}. This step may already be quite complicated since it involves proving that $\mu$ is the unique invariant measure, and that the resolvent can be inverted for functions with average zero with respect to~$\mu$. A typical way to do so is to establish decay properties of the semigroup. Such decay estimates may be hard to obtain for degenerate noises;
  \item the existence of an invariant measure $\mu_{\dt}$ for the numerical scheme has to be demonstrated (uniqueness is not required), typically by finding a Lyapunov function. Again, this may be difficult if the dynamics is highly degenerate.
\end{enumerate}
Once the above steps have been performed, the correction function can be identified as the solution of a Poisson equation, by comparing the average of $(\Id-P_\dt)\varphi$ under~$\mu$ and~$\mu_\dt$. The remainder of the proof allows one to state error estimates for any smooth function (and not just functions in the range of $\Id - P_\dt$) using appropriate pseudo-inverses.
\end{remark}

\subsection{Proof of Proposition~\ref{prop:Ham_limit_correction}}
\label{sec:proof_Ham_limit}

We use a very standard strategy: first, we propose an ansatz for the correction term $f_{1,\gamma}$ as 
\[
f_{1,\gamma} = f_1^0 + \gamma f_1^1 + \gamma^2 f_1^2 + \dots,
\]
then identify the two leading order terms in this expression, and finally use the resolvent estimate of Theorem~\ref{thm:Ham_limit_Lgam} to conclude. Note that our ansatz is not obvious since the estimate of Theorem~\ref{thm:Ham_limit_Lgam} shows that, in general, a leading order correction term of order $1/\gamma$ should be considered. It turns out however that, due to the specific structure of the right-hand side of~\eqref{eq:correction_first_order_schemes} (namely the fact that the right-hand is at leading order in $\gamma$ the image under the Hamiltonian operator of some function), such a divergent leading order term is not necessary.

Consider for instance the case when $f_{1,\gamma}$ is $f_1^{\gamma C, B, A}$. This function solves
\[
\Big[-(A+B)+\gamma C\Big]f_1^{\gamma C, B, A} = -\frac12 (A+B)g, \qquad \int_\mathcal{E} f_1^{\gamma C, B, A} \, \d{\mu} = 0,
\]
so that we consider the ansatz $f_1^{\gamma C, B, A} = g/2 + \gamma f_1^1 + \dots$. Identifying terms with same powers of~$\gamma$, we see that the correction term $f_1^1$ should satisfy
\[
(A+B)f_1^1 = \frac12 Cg = \frac\beta2 p^T M^{-2}\nabla V.
\]
Possible solutions are defined up to elements of the kernel of $A+B$ (which contains function of the form $\varphi \circ H$). One possible choice is to set $f_1^1 = \beta p^T M^{-2} p/4 + c_1^1$, where the constant $c_1^1$ is chosen in order for $f_1^1$ to have a vanishing average with respect to~$\mu$. Then, 
\[
\Lgam^*\left(f_1^{\gamma C, B, A} -\frac{g}{2}-\gamma f_1^1\right) = \gamma^2 Cf_1^1.
\]
In view of Theorem~\ref{thm:Ham_limit_Lgam}, this implies that there exists a constant $K > 0$ such that
\[
\left\| f_1^{\gamma C, B, A} - \frac{g}{2}-\gamma f_1^1 \right\|_{L^2(\mu)} \leq K\gamma,
\]
for $\gamma \leq 1$, which gives the desired estimate on $f_1^{\gamma C, B, A}$. Similar computations give the estimate on $f_1^{\gamma C, A, B}$, while the estimates on the remaining functions are obtained from~\eqref{eq:correction_first_order_schemes}.

\subsection{Proof of Theorem~\ref{thm:error_second_order_schemes}}
\label{sec:proof_thm:error_second_order_schemes}

The proof follows the same lines as the proof for the first order splitting schemes (see Section~\ref{sec:proof_thm:error_first_order_schemes}). We present only the required modifications. We write the proof for $P_\dt^{\gamma C,B,A,B,\gamma C}$ since the correction term has a much simpler right-hand side than $P_\dt^{A,B,\gamma C,B,A}$.

\paragraph{Proof of~\eqref{eq:error_second_order_schemes}.}
Expanding up to terms of order~$\dt^5$ the formal expression of $P_\dt^{\gamma C,B,A,B,\gamma C}$ given by the BCH expansion~\eqref{eq:BCH}, we obtain the following equality (as operators on~$\mathcal{S}$)
\[
P_\dt^{\gamma C,B,A,B,\gamma C} = \Id + \dt (\Lgam + \dt^2 S_2) + \frac{\dt^2}{2}\Big(\Lgam^2 + \dt^2 (\Lgam S_2 + S_2 \Lgam) \Big) + \frac{\dt^3}{6} \Lgam^3 + \frac{\dt^4}{24} \Lgam^4 + \dt^5 R_{2,\dt},
\]
where 
\[
R_{2,\dt} = \frac{1}{24} \int_0^1 (1-\theta)^4 \mathcal{R}_{\theta \Delta t} \, \d{\theta},
\]
$\mathcal{R}_s$ being a finite linear combination of terms of the form $C^\gamma \esC B^\beta \esB A^\alpha \esA$ with $\alpha,\beta,\gamma \geq 0$ and $\alpha + \beta + \gamma = 5$; and
\begin{equation}
\label{eq:def_S2}
S_2 = \frac{1}{12} \left( S_{2,0} + \gamma S_{2,1} + \gamma^2 S_{2,2}\right),
\end{equation}
with 
\[
\left\{ \begin{aligned}
S_{2,0} & = [A,[A,B]] - \frac12 [B,[B,A]], \\ 
S_{2,1} & = [A+B,[A+B,C]], \\
S_{2,2} & = -\frac12 [C,[C,A+B]].
\end{aligned} \right.
\]
Therefore,
\begin{equation}
\label{eq:I_Pdt_eq_order2}
\frac{\Id-P_\dt^{\gamma C,B,A,B,\gamma C}}{\dt} = -\Lgam - \frac\dt2 \Lgam^2 - \dt^2 \left( \frac16 \Lgam^3 + S_2\right) - \frac{\dt^3}{2} \left(  \frac{1}{12} \Lgam^4 + S_2 \Lgam + \Lgam S_2 \right) - \dt^4 R_{2,\dt}.
\end{equation}
We choose $f_2^{\gamma C,B,A,B,\gamma C} \in \widetilde{\mathcal{S}}$ as the unique solution of the Poisson equation $\Lgam^* f_2^{\gamma C,B,A,B,\gamma C} = -S_2^* \mathbf{1}$ (which is indeed well posed since the right hand side has a vanishing average with respect to~$\mu$ since it is in the image of~$S_2$, and is regular as shown by~\eqref{eq:S2_star} below). Then, for a function $\varphi \in \mathcal{S}$,
\[
\begin{aligned}
& \int_\cE \left(\frac{\Id-P_\dt^{\gamma C,B,A,B,\gamma C}}{\dt}\right)\varphi \, \left(1+\dt^2 f_2^{\gamma C,B,A,B,\gamma C} \right) \d{\mu} = \\
& \qquad \qquad -\frac{\dt^3}{2} \int_\cE S_2 \Lgam \varphi + \left(\Lgam^2 \varphi \right) f_2^{\gamma C,B,A,B,\gamma C} \, \d{\mu} - \dt^4 \int_\cE \left[ \widetilde{R}_{2,\dt} \varphi + \widehat{R}_{2,\dt} \varphi f_2^{\gamma C,B,A,B,\gamma C} \right] \d{\mu},
\end{aligned}
\]
where many terms cancel by the invariance of $\mu$ by $\left(\Lgam^\alpha\right)^*$ (for integer powers $\alpha$). The leading order term on the right-hand side in fact vanishes since it can be rewritten as
\[
\int_\cE S_2 \Lgam \varphi + \Lgam^2 \varphi \, f_2^{\gamma C,B,A,B,\gamma C} \, \d{\mu} = \int_\cE \Lgam \varphi \left( S_2^* \mathbf{1} + \Lgam^*f_2^{\gamma C,B,A,B,\gamma C} \right) \d{\mu} = 0.
\]
Therefore,
\[
\int_\cE \left(\frac{\Id-P_\dt^{\gamma C,B,A,B,\gamma C}}{\dt}\right)\varphi \, \left(1+\dt^2 f_2^{\gamma C,B,A,B,\gamma C} \right) \d{\mu} = - \dt^4 \int_\cE \left[ \widetilde{R}_{2,\dt} \varphi + \widehat{R}_{2,\dt} \varphi f_2^{\gamma C,B,A,B,\gamma C} \right] \d{\mu}.
\]
We then restrict the above equality to functions $\varphi \in \widetilde{\mathcal{S}}$, project out the average with respect to~$\mu$ of the first factor in the integral on the left using the projector~$\Pi^\perp$ introduced in~\eqref{eq:def_Pi}, and finally replace $\varphi$ by $Q_{2,\dt} \psi$ where $Q_{2,\dt}$ is an approximate inverse satisfying
\[
\Pi^\perp \frac{\Id-P_\dt^{\gamma C,B,A,B,\gamma C}}{\dt} \Pi^\perp Q_{2,\dt} = \Pi^\perp + \dt^4 Z_\dt.
\]
The proof is concluded as in Section~\ref{sec:proof_thm:error_first_order_schemes}.

\paragraph{Proof of~\eqref{eq:correction_second_order_schemes}.}

To evaluate the expression  $S_2^* \mathbf{1}$, we need to compute the actions of the formal adjoints of the various commutators. Using $C\mathbf{1} = (A+B)\mathbf{1} = 0$ and
\[
C^* = C, \qquad A^* = -A + g, \qquad B^* = -B-g,
\]
straightforward computations show that $S_{2,2}^* \mathbf{1} = S_{2,1}^*\mathbf{1} = 0$. In addition, since 
\[
A\left(g^2\right) = 2g Ag, \qquad B\left(g^2\right) = 2g \, Bg,
\]
it follows that
\[
\begin{aligned}
\big([A,[A,B]]\big)^*\mathbf{1} & = \left(A^2 B-2ABA + BA^2 \right)^*\mathbf{1} = \left( B^* A^* - 2 A^* B^* - (A^*)^2 \right)g \\
& = \Big( (B+g)(A-g) - 2 (A-g)(B+g) - (A-g)^2 \Big)g \\
& = (BA-2AB-A^2)g = -(A+B)Ag,
\end{aligned}
\]
where we have used $AB g = BA g$ (as can be checked by a direct computation). A similar computation shows that $\big([B,[B,A]]\big)^*\mathbf{1} = (-AB+2BA+B^2)g = (A+B)Bg = ABg$ (since $B^2g = 0$ by a direct verification). Finally, 
\begin{equation}
\label{eq:S2_star}
S_{2}^*\mathbf{1} = -\frac{1}{12} (A+B)\left(A + \frac{B}{2}\right)g. 
\end{equation}
To obtain the expression of $f_{2}^{A,B,\gamma C,B,A}$, we use the TU lemma with the operator 
\[
U_\dt = \rme^{\gamma\dt C/2} \rme^{\dt B/2} \rme^{\dt A/2}.
\]
A simple computation shows that
\[
U_\dt^* \mathbf{1} = \mathbf{1} + \frac{\dt^2}{8} (A+B)g + \dt^3 R_\dt^* \mathbf{1}.
\]
In fact, it can be shown that the $\dt^3$ term does not pollute the remainder since the next order correction in the invariant measure has to be of order $\dt^4$ (see~\eqref{eq:error_second_order_schemes}). The expressions for $f_{2}^{\gamma C,A,B,A,\gamma C}$ and $f_{2}^{B,A,\gamma C,A,B}$ are obtained in a similar manner.

\subsection{Proof of Corollary~\ref{cor:error_GLA}}
\label{sec:proof_cor:error_GLA}

The proof relies on the results of Theorem~\ref{thm:error_second_order_schemes} and the TU lemma (Lemma~\ref{lem:TU}). More precisely, the error estimate~\eqref{eq:error_GLA_general} is established by following the same lines of proof as for second order splitting schemes, except that the contributions of order~$\dt^3$ do not vanish. We then use the TU lemma by considering the GLA evolution as the reference, and express the invariant measure of second order splitting schemes in terms of the invariant measure of the GLA scheme. For instance, consider $P_\dt^{\gamma C,B,A,B}$ and $P_\dt^{\gamma C,B,A,B,\gamma C}$, in which case $U_\dt = \rme^{\gamma \dt C/2}$. Then,
\[
\begin{aligned}
&\int_\cE \psi \, \d{\mu}_\dt^{\gamma C,B,A,B,\gamma C} = \int_\cE \left( U_\dt \psi \right)  \d{\mu}_\dt^{\gamma C,B,A,B} \\
& = \int_\cE U_\dt \psi \, \d{\mu} + \dt^2 \int_\cE (U_\dt \psi) f_2^{\gamma C,B,A,B} \, \d{\mu}+ \dt^3 \int_\cE (U_\dt \psi) f_3^{\gamma C,B,A,B} \, \d{\mu} + \dt^4 r_{\psi,\gamma,\dt} \\
& = \int_\cE \psi \, \d{\mu} + \dt^2 \int_\cE \psi \, f_2^{\gamma C,B,A,B} \, \d{\mu}+ \dt^3 \int_\cE \psi \left(f_3^{\gamma C,B,A,B}+\frac{\gamma}{2}Cf_2^{\gamma C,B,A,B}\right) \d{\mu} + \dt^4 \widetilde{r}_{\psi,\gamma,\dt}, 
\end{aligned}
\]
where we have used the invariance of $\mu$ by $U_\dt$. The comparison with~\eqref{eq:error_second_order_schemes}-\eqref{eq:correction_second_order_schemes} gives the desired result.

\subsection{Approximation of integrated correlation functions}
\label{sec:proof_approx_GK_formula}

\begin{proof}[Proof of Theorem~\ref{thm:approx_GK_formula}]
The proof makes use of the projection operator defined on $\mathcal{S}$ as (compare~\eqref{eq:def_Pi})
\[
\Pi^\perp_\dt \varphi = \varphi - \int_\mathcal{E} \varphi \, \d{\mu}_\dt.
\]
The range of $\Pi^\perp_\dt$ is contained in the set of functions with average zero with respect to the invariant measure~$\mu_\dt$ of the numerical scheme. We first introduce the invariant measure for the numerical scheme, using the fact that $-\cL_\gamma^{-1} \psi$ has zero average with respect to~$\mu$:
\begin{align}
\int_\cE \left(-\cL_\gamma^{-1} \psi\right) \varphi \, \d{\mu} & = \int_\cE \left(-\cL_\gamma^{-1} \psi\right) \Pi^\perp_\dt \varphi \, \d{\mu} \nonumber \\
& = \int_\cE \left(-\cL_\gamma^{-1} \psi\right) \Pi^\perp_\dt \varphi \, \d{\mu}_\dt + \dt^\alpha r^{\psi,\varphi}_\dt, \nonumber \\
& = \int_\cE \Pi^\perp_\dt \left(-\cL_\gamma^{-1} \psi\right) \Pi^\perp_\dt \varphi \, \d{\mu}_\dt + \dt^\alpha r^{\psi,\varphi}_\dt, \label{eq:introduce_mu_dt_in_correlation}
\end{align}
where $r^{\psi,\varphi}_\dt$ is uniformly bounded for $\dt$ sufficiently small by~\eqref{eq:asssumption_order_method}. In addition, by~\eqref{eq:expansion_I_Pdt}, 
\[
\begin{aligned}
-\Pi^\perp_\dt \cL_\gamma^{-1} \psi & = -\Pi^\perp_\dt \left(\dt\sum_{n=0}^{+\infty} P_\dt^n \right) \Pi^\perp_\dt \left(\frac{\Id - P_\dt}{\dt}\right) \Lgam^{-1} \psi \\
& = \dt \left(\sum_{n=0}^{+\infty} \left[ \Pi^\perp_\dt P_\dt \Pi^\perp_\dt\right]^n \right) \left(\cL_\gamma + \dt S_1 + \dots + \dt^{\alpha-1} S_{\alpha-1} + \dt^\alpha \widetilde{R}_{\alpha,\dt}\right) \Lgam^{-1} \psi, \\
& = \dt \sum_{n=0}^{+\infty} \left[ \Pi^\perp_\dt P_\dt \Pi^\perp_\dt\right]^n \widetilde{\psi}_{\dt,\alpha} + \dt^\alpha \left(\frac{\Id - P_\dt}{\dt}\right)^{-1} \Pi^\perp_\dt \widetilde{R}_{\alpha,\dt} \Lgam^{-1} \psi.
\end{aligned}
\]
Note that the sum on the right hand side is well defined in view of the decay estimates~\eqref{eq:ergodicity_num}. Plugging the above equality in~\eqref{eq:introduce_mu_dt_in_correlation} leads to
\[
\begin{aligned}
\int_\cE \left(-\cL_\gamma^{-1} \psi\right) \varphi \, \d{\mu} & = \dt \int_\mathcal{E} \sum_{n=0}^{+\infty} \left( \Pi_\dt^\perp P_\dt^n \widetilde{\psi}_{\dt,\alpha} \right) \left( \Pi^\perp_\dt \varphi \right) \d{\mu}_\dt \\ 
& \qquad +  \dt^\alpha \int_\mathcal{E} \left( \left(\frac{\Id - P_\dt}{\dt} \right)^{-1} \Pi^\perp_\dt \widetilde{R}_{\alpha,\dt} \Lgam^{-1} \psi\right)\Pi^\perp_\dt \varphi \, \d{\mu}_\dt + \dt^\alpha r^{\psi,\varphi}_\dt.
\end{aligned}
\]
The second term on the right hand side is uniformly bounded in view of the moment estimates on~$\mu_\dt$, the resolvent bounds provided by Corollary~\ref{corr:resolvent_estimates_I_Pdt} and the uniform boundedness of the remainder $\widetilde{R}_{\alpha,\dt}f$ for a given, smooth function~$f$. Since
\[
\int_\mathcal{E} \sum_{n=0}^{+\infty} \left( \Pi_\dt^\perp P_\dt^n \widetilde{\psi}_{\dt,\alpha} \right)\left( \Pi^\perp_\dt \varphi \right) \d{\mu}_\dt = \int_\mathcal{E} \sum_{n=0}^{+\infty} \left(P_\dt^n \widetilde{\psi}_{\dt,\alpha} \right) \varphi \, \d{\mu}_\dt = \sum_{n=0}^{+\infty} \mathbb{E}_\dt \left(\widetilde{\psi}_{\dt,\alpha}\left(q^{n},p^{n}\right)\varphi\left(q^0,p^0\right)\right),
\]
\eqref{eq:correction_GK_general} finally follows.
\end{proof}

\bigskip

\begin{proof}[Proof of Corollary~\ref{cor:GK_trapezoidal}]
The idea is to start from~\eqref{eq:correction_GK_general} and to appropriately rewrite the first order correction term. We use to this end~\eqref{eq:correction_GK_general} with $\psi$ replaced by its first order correction $(\psi_{\dt,2} - \psi)/\dt = S_1\cL_\gamma^{-1}\psi$, and discard terms of order~$\dt^2$:
\[
\int_0^{+\infty} \mathbb{E} \Big( S_1\cL_\gamma^{-1}\psi(q_t,p_t) \varphi(q_0,p_0) \Big) \d{t} = \dt \sum_{n=0}^{+\infty} \mathbb{E}_\dt \Big(S_1\cL_\gamma^{-1}\psi\left(q^{n+1},p^{n+1}\right)\varphi_{\dt,0}\left(q^0,p^0\right)\Big) + \dt \,r^{\psi,\varphi}_\dt, 
\]
where $r^{\psi,\varphi}_\dt$ is uniformly bounded for $\dt$ sufficiently small and $\varphi_{\dt,0} = \Pi_\dt^\perp \varphi$.
On the other hand, using $S_1 = \mathcal{L}_\gamma^2/2$,
\[
\int_0^{+\infty} \mathbb{E} \Big( S_1\cL_\gamma^{-1}\psi(q_t,p_t) \varphi(q_0,p_0) \Big) \d{t} = -\int_\cE \cL_\gamma^{-1}S_1\cL_\gamma^{-1} \psi \, \varphi \, \d{\mu} = -\frac12 \int_\cE \psi \varphi \, \d{\mu},
\]
so that, 
\[
\begin{aligned}
& \dt \sum_{n=0}^{+\infty} \mathbb{E}_\dt \Big( \left(S_1\cL_\gamma^{-1}\psi\right)_{\dt,0}\left(q^{n+1},p^{n+1}\right)\varphi\left(q^0,p^0\right)\Big) \\
& \qquad = \dt \sum_{n=0}^{+\infty} \mathbb{E}_\dt \Big(S_1\cL_\gamma^{-1}\psi\left(q^{n+1},p^{n+1}\right)\Pi_\dt^\perp\varphi\left(q^0,p^0\right)\Big) \\
& \qquad = \int_0^{+\infty} \mathbb{E} \Big( S_1\cL_\gamma^{-1}\psi(q_t,p_t) \Pi_\dt^\perp\varphi(q_0,p_0) \Big) \d{t} - \dt \,r^{\psi,\varphi}_\dt \\
& \qquad = -\frac12 \int_\cE \psi \, \Pi_\dt^\perp\varphi \, \d{\mu} - \dt \,r^{\psi,\varphi}_\dt = -\frac12 \int_\cE \psi_{\dt,0} \varphi \, \d{\mu} - \dt \,r^{\psi,\varphi}_\dt \\
& \qquad = -\frac12 \mathbb{E}_\dt( \psi_{\dt,0} \varphi) + \dt \, \widetilde{r}^{\psi,\varphi}_\dt.
\end{aligned}
\]
This gives~\eqref{eq:correction_GK_second}.
\end{proof}

\subsection{Proof of Theorem~\ref{thm:ovd_limit}}
\label{sec:proof_thm:ovd_limit}

We write the proof for the evolution operator $P_\dt^{\gamma C, A,B,A,\gamma C}$ first, and mention then how to extend the result to $P_\dt^{B,A,\gamma C,A,B}$ using the TU lemma. The proofs for $P_\dt^{\gamma C,B,A,B,\gamma C}$ and $P_\dt^{A,B,\gamma C,B,A}$ are very similar, so we only briefly mention the required modifications. By default, all operators appearing in this section are defined on the core~$\mathcal{S}$.

\paragraph{Reduction to a limiting operator up to exponentially small terms.}

Let us introduce the evolution operator corresponding to the standard position Verlet scheme: $P_{{\rm ham},\dt} = \rme^{\dt A/2} \rme^{\dt B} \rme^{\dt A/2}$, so that $P_\dt^{\gamma C, A,B,A\gamma C} = \rme^{\gamma \dt C/2} P_{{\rm ham},\dt} \rm\, e^{\gamma \dt C/2}$. On the other hand, we have the following convergence result, whose proof is postponed to the end of the section.

\begin{lemma}
\label{lem:cv_etC}
Fix $s^* \in \mathbb{N}^*$. Then, there exist $K,\alpha > 0$ such that, for any $1 \leq s \leq s^*$ and any $t \geq 0$,
\[
\left\| \rme^{\gamma t C} - \pi \right\|_{\mathcal{B}(L^\infty_{\Li_s})} \leq K \rme^{-\alpha \gamma t}.
\]
\end{lemma}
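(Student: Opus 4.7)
The plan has three steps, based on the explicit Mehler representation of the Ornstein--Uhlenbeck semigroup generated by $C$.

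First, since $\pi\psi(q)$ depends only on $q$ it belongs to $\ker C$ and therefore $\rme^{\gamma t C}(\pi\psi) = \pi\psi$; equivalently, $\kappa$ is invariant under the dynamics generated by $C$, so $\pi \, \rme^{\gamma t C} = \pi$. Combining these identities, $\rme^{\gamma t C} - \pi = \rme^{\gamma t C}(\Id - \pi)$, and since $\Id - \pi$ is bounded on $L^\infty_{\Li_s}$ (its norm is controlled by an $s$-dependent moment of $\kappa$), it suffices to prove that, for any $\psi$ satisfying $\pi\psi = 0$,
\begin{equation*}
\|\rme^{\gamma t C}\psi\|_{L^\infty_{\Li_s}} \leq K \, \rme^{-\alpha \gamma t}\, \|\psi\|_{L^\infty_{\Li_s}}.
\end{equation*}

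Second, setting $\tau = \gamma t$, $\alpha_\tau = \rme^{-\tau M^{-1}}$ and $\Sigma_\tau = \beta^{-1}(\Id - \alpha_\tau^2) M$, the explicit expression~\eqref{eq:analytic_expressions_semigroups} identifies $\rme^{\gamma t C}\psi$ with an integral of $\psi$ against the density $k_\tau(\cdot;p)$ of $\mathcal{N}(\alpha_\tau p, \Sigma_\tau)$, and $\pi\psi$ with the integral of $\psi$ against the density $k_\infty$ of $\kappa = \mathcal{N}(0, M/\beta)$. Using $\pi\psi = 0$ together with the growth bound $|\psi(q,y)| \leq \|\psi\|_{L^\infty_{\Li_s}}(1 + |y|^{2s})$ yields
\begin{equation*}
\bigl|(\rme^{\gamma t C}\psi)(q,p)\bigr| \leq \|\psi\|_{L^\infty_{\Li_s}} \int_{\mathbb{R}^{dN}}(1 + |y|^{2s}) \bigl|k_\tau(y;p) - k_\infty(y)\bigr|\,\d{y}.
\end{equation*}
The task thus reduces to the uniform weighted total-variation estimate
\begin{equation*}
\mathcal{I}_s(\tau,p) := \int_{\mathbb{R}^{dN}}(1 + |y|^{2s}) \bigl|k_\tau(y;p) - k_\infty(y)\bigr|\,\d{y} \leq K_s\,(1 + |p|^{2s})\,\rme^{-\alpha\tau}, \qquad \tau \geq 0,
\end{equation*}
with $K_s$ finite for $1 \leq s \leq s^*$.

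Third, this bound is obtained by direct Gaussian computation. Since the operator norm of $\alpha_\tau$ is at most $\rme^{-\lambda_{\min}(M^{-1})\tau}$, both the mean displacement $\alpha_\tau p$ and the covariance discrepancy $\Sigma_\tau - M/\beta = -\alpha_\tau^2 M/\beta$ are exponentially small in $\tau$. Writing $k_\tau(y;p) = k_\infty(y) \, \rme^{Q_\tau(y,p)}$ where $Q_\tau$ is an affine-quadratic form in $y$ whose coefficients and linear part have size $\mathrm{O}(\rme^{-\lambda_{\min}(M^{-1})\tau})$, and using the elementary inequality $|\rme^{x}-1| \leq |x|\rme^{|x|}$, integration against the Gaussian weight $k_\infty$ absorbs the polynomial weight $(1+|y|^{2s})$ by Gaussian moments and gives $\mathcal{I}_s(\tau, p) \leq K_s(1 + |p|^{2s})\rme^{-\alpha\tau}$ for any fixed $\alpha < \lambda_{\min}(M^{-1})$. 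The main obstacle is the regime of small $\tau$, where this perturbative argument does not apply. This is handled by splitting the time interval into $[0, \tau_0]$ and $[\tau_0, +\infty)$: on $[0, \tau_0]$ the elementary bounds $\|\rme^{\gamma t C}\psi\|_{L^\infty_{\Li_s}} + \|\pi\psi\|_{L^\infty_{\Li_s}} \leq C\|\psi\|_{L^\infty_{\Li_s}}$ (obtained by direct moment computations with $k_\tau$, uniformly in $\tau$) give the required estimate after inflating $K$ by the factor $\rme^{\alpha\tau_0}$, while on $[\tau_0, +\infty)$ the perturbative Gaussian argument applies, completing the proof.
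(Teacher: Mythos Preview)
Your approach---an explicit Mehler/Gaussian computation---is sound in principle but quite different from the paper's. The paper never touches densities: it simply verifies the Lyapunov inequality $C\Li_s \leq -\Li_s + b_s$ for the Ornstein--Uhlenbeck generator and then invokes a general Harris-type convergence theorem (Theorem~8.7 in~\cite{rey-bellet}) to obtain the $L^\infty_{\Li_s}$ decay directly, applying the resulting bound pointwise in~$q$. Your route is more self-contained and avoids that black box, at the price of several hand estimates; Steps~1 and~2 are correct.

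There is, however, a real gap in Step~3 as written. The inequality $|\rme^{x}-1|\leq|x|\,\rme^{|x|}$ applied to $x = Q_\tau(y,p)$ produces the factor $\rme^{|Q_\tau(y,p)|}$, and $Q_\tau$ contains the $y$-independent term $-\tfrac12(\alpha_\tau p)^T\Sigma_\tau^{-1}(\alpha_\tau p)$ of size $\mathrm{O}(\rme^{-2\lambda\tau}|p|^2)$ together with the linear-in-$y$ term of size $\mathrm{O}(\rme^{-\lambda\tau}|p|\,|y|)$. After integrating in~$y$ against $k_\infty$, your upper bound on $\mathcal{I}_s(\tau,p)$ therefore carries a factor of order $\exp\bigl(c\,\rme^{-2\lambda\tau}|p|^2\bigr)$, which is \emph{not} dominated by any polynomial $1+|p|^{2s}$ uniformly in~$p$; the perturbative estimate only controls the regime where $\rme^{-\lambda\tau}|p|$ stays bounded. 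To close the argument you must treat large~$|p|$ separately: for instance, when $|p|\geq \rme^{\lambda\tau/2}$ the crude triangle inequality already gives
\[
\mathcal{I}_s(\tau,p)\leq \int_{\mathbb{R}^{dN}}(1+|y|^{2s})k_\tau\,\d y + \int_{\mathbb{R}^{dN}}(1+|y|^{2s})k_\infty\,\d y \leq C_s\bigl(1+|\alpha_\tau p|^{2s}\bigr) + C_s',
\]
so that $\mathcal{I}_s(\tau,p)/(1+|p|^{2s}) \leq C_s\,\rme^{-2s\lambda\tau} + C_s'|p|^{-2s} \leq C_s''\,\rme^{-s\lambda\tau}$ in that range. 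With this additional case split on~$|p|$ (mirroring the small/large-$\tau$ split you already perform), your proof goes through; without it, Step~3 does not establish the required uniformity in~$p$.
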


This suggests to consider the limiting operator $P_{\infty,\dt} = \pi P_{{\rm ham},\dt} \pi$ and write
\begin{equation}
\label{eq:distance_to_limiting_operator}
P_\dt^{\gamma C, A,B,A,\gamma C} - P_{\infty,\dt} = \Big( \rme^{\gamma \dt C/2} - \pi\Big) P_{{\rm ham},\dt} \pi + \rme^{\gamma \dt C/2} P_{{\rm ham},\dt}  \Big( \rme^{\gamma \dt C/2} - \pi \Big).
\end{equation}
For a given smooth function $\varphi \in \mathcal{S}$ which depends only on the position variable~$q$,
\begin{equation}
\label{eq:ovd_diff_first_term}
\int_\cE \left(\Id - P_\dt^{\gamma C, A,B,A,\gamma C}\right)\varphi \, \d{\mu}_{\gamma,\dt} = 0 = 
\int_\cE (\Id - P_{\infty,\dt} )\varphi \, \d{\mu}_{\gamma,\dt} + r_{\varphi,\gamma,\dt}^1,
\end{equation}
with the remainder
\[
r_{\varphi,\gamma,\dt}^1 = \int_\cE  \left( P_{\infty,\dt}-P_\dt^{\gamma C, A,B,A,\gamma C}\right)\varphi \, \d{\mu}_{\gamma,\dt}.
\]
On the other hand,
\begin{equation}
\label{eq:ovd_diff_second_term}
\int_\cE \left[\left(\Id - P_\dt^{\gamma C, B,A,B,\gamma C}\right)\varphi \right] (1 + \dt^2 f_{2,\infty}) \, \d{\mu} = \int_\cE \left[(\Id - P_{\infty,\dt})\varphi \right] (1 + \dt^2 f_{2,\infty}) \, \d{\mu} + r_{\varphi,\gamma,\dt}^2,
\end{equation}
with the remainder
\[
r_{\varphi,\gamma,\dt}^2 = \int_\cE \left[\left(P_{\infty,\dt} - P_\dt^{\gamma C, B,A,B,\gamma C}\right)\varphi \right] (1 + \dt^2 f_{2,\infty}) \, \d{\mu}.
\]
The idea is that the remainders $r_{\varphi,\gamma,\dt}^1$ and $r_{\varphi,\gamma,\dt}^2$ are exponentially small when the function $\varphi$ is sufficiently smooth (see below for a more precise discussion, once $\varphi$ has been replaced by $Q_\dt \psi$ with $Q_\dt$ an appropriate approximate inverse). Therefore, the leading order terms in the error estimate are obtained by considering the limiting operator $P_{\infty,\dt}$ only.

\paragraph{Error estimates for the limiting operator $P_{\infty,\dt}$.}
We now study the error estimates associated with $P_{\infty,\dt}$, following the strategy used in Section~\ref{sec:proof_thm:error_first_order_schemes}. We first use the results of Section~\ref{sec:expansion_evolution} with $M=3$, $A_1 = A_3 = A/2$ and $A_2 = B$ to expand $P_{{\rm ham},\dt}$, so that 
\begin{equation}
\label{eq:dvpmt_P_infty}
P_{\infty,\dt} = \pi + \dt \pi(A+B)\pi + \frac{\dt^2}{2} \pi (A+B)^2\pi + \frac{\dt^3}{6} \pi S_3 \pi + \frac{\dt^4}{24} \pi S_4 \pi + \frac{\dt^5}{120} \pi S_5 \pi + \dt^6 \pi R_\dt \pi,  
\end{equation}
with $S_i = \cT[(A_1+A_2+A_3)^i]$.
To give more precise expressions of the operators appearing on the right-hand side of the above equality, we use the following facts:
\begin{equation}
\label{eq:rules_ovd_1}
\forall n \in \mathbb{N}, \qquad B^n \pi = 0, \qquad \pi A^{2n+1} \pi = 0, 
\end{equation}
and 
\begin{equation}
\label{eq:rules_ovd_2}
\forall n \geq m+1, \qquad B^n A^m \pi = 0.
\end{equation}
In addition,
\[
\pi A^2 \pi = \frac1\beta \Delta_q \pi,
\qquad
BA\pi = - \nabla V \cdot \nabla_q \pi.
\]
Using these rules in~\eqref{eq:dvpmt_P_infty} leads to
\begin{equation}
\label{eq:rules_A+B_carre}
\pi(A+B)\pi = 0, \qquad \pi(A+B)^2\pi = \pi(A^2+BA)\pi = \Lovd \pi.
\end{equation}
The operator $S_3$ is a combination of terms of the form $A^a B^b A^c$ with $a+b+c=3$ and $a,b,c \in \mathbb{N}$. In view of~\eqref{eq:rules_ovd_1}-\eqref{eq:rules_ovd_2}, only the terms with $c \geq 1$ and $b \leq c$ have to be considered, so that only $BA^2$ and $ABA$ remain. A simple computation shows that $BA^2\pi \varphi$ and $ABA\pi\varphi$ are functions linear in~$p$, so that $\pi BA^2 \pi = \pi ABA\pi\varphi = 0$. Finally, $\pi S_3 \pi = 0$. A similar reasoning shows that $\pi S_5 \pi = 0$ and that many terms appearing in the expression of $S_4$ also disappear. 

Plugging the above results in~\eqref{eq:dvpmt_P_infty} and introducing $h=\dt^2/2$,
\[
P_{\infty,\dt} = \pi + h \pi \Lovd \pi + \frac{h^2}{6}\pi \left(A^4 + \frac32 A^2 BA + \frac32 ABA^2 + \frac32 B^2 A^2 + \frac12 BA^3 \right)\pi+ h^3 R_{\infty,\dt}.
\]
Using
\begin{equation}
\label{eq:rules_A4_BA3_etc}
\begin{aligned}
\pi A^4 \pi \varphi & = \frac{3}{\beta^2} \Delta_q^2 \pi \varphi = 3 \left(\pi A^2 \pi\right)^2 \varphi, \\
\pi B A^3 \pi \varphi & = -\frac3\beta \nabla V \cdot \nabla_q \Big(\Delta_q \pi \varphi \Big) = 3 \pi BA \pi A^2 \pi \varphi, \\
\pi B^2 A^2 \pi \varphi & = 2 (\nabla V)^T \Big(\nabla_q^2 \pi\varphi\Big) \nabla V, \\
\pi ABA^2 \pi \varphi & = -\frac2\beta \left(\nabla^2 V : \nabla^2 \varphi + \nabla V \cdot \nabla (\Delta \varphi)\right), \\
\pi A^2BA \pi \varphi & = -\frac1\beta \left(2\nabla^2 V : \nabla^2 \varphi + \nabla V \cdot \nabla (\Delta \varphi) + \nabla (\Delta V) \cdot \nabla \varphi \right) = \pi A^2 \pi BA \pi \varphi, \\
\end{aligned} 
\end{equation}
it follows
\[
\begin{aligned}
& \left(A^4 + \frac32 A^2 BA + \frac32 ABA^2 + \frac32 B^2 A^2 + \frac12 BA^3 \right)\pi \varphi \\
& \qquad = \frac{3}{\beta^2} \Delta_q^2\varphi - \frac{6}{\beta} \nabla^2 V : \nabla^2 \varphi - \frac{6}{\beta}\nabla V \cdot \nabla (\Delta \varphi) - \frac{3}{2\beta} \nabla (\Delta V) \cdot \nabla \varphi + 3 (\nabla V)^T (\nabla^2 \varphi)\nabla V.
\end{aligned}
\]
A straightforward computation shows that
\[
\mathcal{L}_{\rm ovd}^2 \varphi = \frac{1}{\beta^2} \Delta_q^2 \varphi - \frac{2}{\beta} \nabla^2 V : \nabla^2 \varphi - \frac{2}{\beta}\nabla V \cdot \nabla (\Delta \varphi) - \frac{1}{\beta} \nabla (\Delta V) \cdot \nabla \varphi + (\nabla V)^T (\nabla^2 \varphi)\nabla V + (\nabla V)^T (\nabla^2 V) \nabla \varphi.
\]
Therefore, 
\[
\pi\left(A^4 + \frac32 A^2 BA + \frac32 ABA^2 + \frac32 B^2 A^2 + \frac12 BA^3 \right)\pi = 3 \left(\mathcal{L}_{\rm ovd}^2 + D\right)\pi,
\]
with 
\begin{equation}
\label{eq:def_D}
D \varphi = \frac{1}{2\beta} \nabla (\Delta V) \cdot \nabla \varphi - (\nabla V)^T (\nabla^2 V) \nabla \varphi.
\end{equation}
In conclusion,
\begin{equation}
\label{eq:final_expansion_dt_inf}
  P_{\infty,\dt} = \pi + h \mathcal{L}_{\rm ovd} + \frac{h^2}{2} \left(\mathcal{L}_{\rm ovd}^2 + D\right)\pi + h^3 R_{\infty,\dt}.
\end{equation}
Let us emphasize that this operator acts on functions of~$q$ (we define it on $\mathcal{S} \cap \mathrm{Ker}(\pi) = C^\infty(\mathcal{M})$), that $\pi$ is the identity operator for functions which are independent of~$p$, and note that for any $\phi\in C^\infty(\mathcal{M})$,
\begin{equation}
\label{eq:expansion_P_inf_dt}
\frac{\pi-P_{\infty,\dt}}{h} \phi = -\Lovd \phi - \frac{h}{2} \left(\Lovd^2 +D \right)\phi -h^2 R_\dt \phi.
\end{equation}
In fact, proceeding as in Section~\ref{sec:proof_thm:error_first_order_schemes}, we project out averages with respect to~$\overline{\mu}(\d q)$ in order to properly define approximate inverses. Introduce to this end the projector 
\[
\overline{\Pi}^\perp \phi = \phi - \int_\mathcal{M} \phi(q) \, \overline{\mu}(\d q)
\]
defined on the core~$C^\infty(\mathcal{M})$. The equality~\eqref{eq:expansion_P_inf_dt} then implies the following equality on $C^\infty(\mathcal{M}) \cap \mathrm{Ran}(\overline{\Pi}^\perp)$:
\[
\overline{\Pi}^\perp \frac{\pi-P_{\infty,\dt}}{h} \overline{\Pi}^\perp = -\Lovd - \frac{h}{2} \left(\Lovd^2 + \overline{\Pi}^\perp D \overline{\Pi}^\perp \right) -h^2 \overline{\Pi}^\perp R_\dt \overline{\Pi}^\perp.
\]
An approximate inverse of the operator appearing on the left hand side of the above equality is thus
\[
Q_h = -\Lovd^{-1} + \frac{h}{2} \left( \overline{\Pi}^\perp + \Lovd^{-1} \overline{\Pi}^\perp D \overline{\Pi}^\perp \Lovd^{-1} \right).
\]
Denote by $\overline{\mu}_{\infty,\dt}(\d{q})$ the invariant measure of the Markov chain generated by the limiting method $P_{\infty,\dt}$. Proceeding as in Section~\ref{sec:proof_thm:error_first_order_schemes} by first identifying the leading order correction $f_{2,\infty}$, projecting out averages with respect to $\overline{\mu}(\d q)$ using $\overline{\Pi}^\perp$, and replacing $\overline{\Pi}^\perp \varphi$ by $Q_h \psi$, the equality~\eqref{eq:final_expansion_dt_inf} allows us to obtain
\begin{equation}
\label{eq:error_estimate_limiting_operator}
\int_\mathcal{M} \psi(q) \, \overline{\mu}_{\infty,\dt}(\d{q}) = \int_\mathcal{M} \psi(q) \, \overline{\mu}(\d{q}) + \dt^2 \int_\mathcal{M} \psi(q) f_{2,\infty}(q) \, \overline{\mu}(\d{q}) + \dt^4 \overline{r}_{\dt,\psi},
\end{equation}
where $f_{2,\infty}$ is the unique solution of 
\begin{equation}
\label{eq:def_f2infty}
\mathcal{L}_{\rm ovd} f_{2,\infty} = -\frac14 D^* \mathbf{1}. 
\end{equation}
A more explicit expression can be obtained by noting that
\[
D\varphi = \frac12 \nabla \left(\frac1\beta \Delta V - |\nabla V|^2\right) \cdot \nabla \varphi,
\]
so that (recalling $\mathcal{L}_{\rm ovd} = - \beta^{-1} \nabla^* \nabla = -\beta^{-1} \sum_{i=1}^{dN} \partial_{q_i}^* \partial_{q_i}$ where the formal adjoints are taken on $L^2(\overline{\mu})$)
\[
\begin{aligned}
\int_\mathcal{M} \varphi \left(D^*\mathbf{1}\right) \, \d{\overline{\mu}} & = \int_\mathcal{M} D\varphi \, \d{\overline{\mu}} = \frac12 \int_\mathcal{M} \varphi \nabla^* \nabla \left(\frac1\beta \Delta V - |\nabla V|^2\right) \, \d{\overline{\mu}} \\
& = -\frac{1}{2} \int_\mathcal{M} \varphi \, \mathcal{L}_{\rm ovd}\left(\Delta V - \beta |\nabla V|^2\right) \, \d{\overline{\mu}}.
\end{aligned}
\]
Since $f_{2,\infty}$ should have a vanishing average with respect to~$\mu$, this proves that 
\begin{equation}
  \label{eq:f_2_infty}
  f_{2,\infty}(q) = \frac18 \left( \Delta V - \beta |\nabla V|^2\right) + a,
\end{equation}
where the constant~$a$ is adjusted to account for the constraint of vanishing average. A simple computation shows that it is equal to the constant $a_{\beta,V}$ defined in~\eqref{eq:correction_overdamped}.

In fact, it is possible for the scheme considered here to precisely determine the leading order correction for numerical averages by noting that
\begin{equation}
\label{eq:magic_identity_Laplacian}
\frac{1}{\beta} \int_\mathcal{M} \Delta \varphi \, \d{\overline{\mu}} = -\int_\mathcal{M} \varphi \left( \Delta V - \beta |\nabla V|^2\right) \, \d{\overline{\mu}},
\end{equation}
so that finally
\[
\int_\mathcal{M} \psi(q) \, \overline{\mu}_{\infty,\dt}(\d{q}) = \int_\mathcal{M} \psi(q) \, \overline{\mu}(\d{q}) - \frac{\dt^2}{8\beta} \int_\mathcal{M} \Delta \psi(q) \, \overline{\mu}(\d{q}) + \dt^4 r_{\dt,\psi}.
\]

\paragraph{Conclusion of the proof.}
We now come back to~\eqref{eq:ovd_diff_first_term}-\eqref{eq:ovd_diff_second_term} and replace $\overline{\Pi}^\perp \varphi$ by $Q_h \psi$:
\begin{equation}
\label{eq:final_error_estimate_ovd}
\int_\cE \psi \, \d{\mu}_{\gamma,\dt} = \int_\cE \psi (1 + \dt^2 f_{2,\infty}) \, \d{\mu} + r^1_{\psi,\gamma,\dt} + r^2_{\psi,\gamma,\dt} + \dt^4 \overline{r}_{\dt,\psi},
\end{equation}
where $\overline{r}_{\dt,\psi}$ is the same as in~\eqref{eq:error_estimate_limiting_operator}, while
\[
\begin{aligned}
r^1_{\psi,\gamma,\dt} = \int_\cE  \left( P_{\infty,\dt}-P_\dt^{\gamma C, A,B,A,\gamma C}\right)Q_h\psi \, \d{\mu}_{\gamma,\dt}, \\
r^2_{\psi,\gamma,\dt} = \int_\cE \left[\left(P_{\infty,\dt} - P_\dt^{\gamma C, B,A,B,\gamma C}\right)Q_h \psi \right] (1 + \dt^2 f_{2,\infty}) \, \d{\mu}. \\
\end{aligned}
\]
We then integrate with respect to momenta in~\eqref{eq:final_error_estimate_ovd}, and bound the remainders by $K \rme^{-\kappaK \gamma \dt}$ in view of the decomposition~\eqref{eq:distance_to_limiting_operator} and Lemma~\ref{lem:cv_etC} (the operators $P_{{\rm ham},\dt}$ and $\rme^{\gamma \dt C/2}$ being bounded on $L^\infty_{\Li_s}$ uniformly in~$\dt$).

\paragraph{Proof of~\eqref{eq:correction_overdamped} for $f^{B,A,\gamma C,A,B}_{2,\infty}$}
We set \[
U_{\gamma,\dt} = \rme^{\gamma\dt C/2} \rme^{\dt A/2} \rme^{\dt B/2}, 
\qquad 
T_{\gamma,\dt} = \rme^{\dt B/2} \rme^{\dt A/2} \rme^{\gamma\dt C/2},
\]
so that $P_\dt^{B,A,\gamma C,A,B} = T_{\gamma,\dt}U_{\gamma,\dt}$ while $P_\dt^{\gamma C,A,B,A,\gamma C} = U_{\gamma,\dt} T_{\gamma,\dt}$. By the TU lemma,
\begin{align}
\int_\cE \psi \, \d{\mu}_{\dt}^{B,A,\gamma C,A,B} & = \int_\cE \left(U_{\gamma,\dt}\psi\right)d\mu_\dt^{\gamma C,A,B,A,\gamma C} \nonumber \\
& = \int_\cE \left(U_{\infty,\dt}\psi\right)d\mu_\dt^{\gamma C,A,B,A,\gamma C} + \int_\cE \left(U_{\gamma,\dt}-U_{\infty,\dt}\right)\psi \, \d{\mu}_\dt^{\gamma C,A,B,A,\gamma C}, \label{eqn_rhsbacab}
\end{align}
where we have introduced $U_{\infty,\dt} = \pi \rme^{\dt A/2} \rme^{\dt B/2}$. The second term on the right hand side can be bounded by $K \rme^{-\kappaK \gamma \dt}$ in view of Lemma~\ref{lem:cv_etC} and the moment estimate~\eqref{eq:moment_estimate}. For the first term in the right-hand side of \eqref{eqn_rhsbacab}, we use~\eqref{eq:final_error_estimate_ovd} and the following expansion (using the rules~\eqref{eq:rules_ovd_1}-\eqref{eq:rules_ovd_2}): for $\psi \in \mathcal{S}$,
\[
U_{\infty,\dt} \psi = U_{\infty,\dt} \pi \psi = \psi + \frac{\dt^2}{8} \pi A^2 \pi \psi + \dt^4 \widetilde{r}_{\psi,\dt} = \psi + \frac{\dt^2}{8\beta} \Delta \psi + \dt^4 \widetilde{r}_{\psi,\dt}, 
\]
where the remainder $\widetilde{r}_{\psi,\dt}$ is uniformly bounded for $\dt$ sufficiently small. Therefore,
\[
\int_\cE \left(U_{\infty,\dt}\psi\right)d\mu_\dt^{\gamma C,A,B,A,\gamma C} = \int_\cE \psi (1 + \dt^2 f_{2,\infty}) \, \d{\mu} + \frac{\dt^2}{8\beta} \int_\cE \Delta \psi \, \d{\mu} + \widehat{r}_{\psi,\gamma,\dt},
\]
where $f_{2,\infty}$ is given in~\eqref{eq:f_2_infty}. The remainder $\widehat{r}_{\psi,\gamma,\dt}$ is the sum of terms of order~$\dt^4$ and others which can be bounded by $K \rme^{-\kappaK \gamma \dt}$. We conclude by resorting to~\eqref{eq:magic_identity_Laplacian} to compute the formal adjoint of the operator $\Delta_q$ on $L^2(\mu)$.

\paragraph{Proof of~\eqref{eq:correction_overdamped} for $f^{\gamma C,B,A,B,\gamma C}_{2,\infty}$ and $f^{A,B,\gamma C,B,A}_{2,\infty}$.}
We mimic the above proof for the evolution operator $P_\dt^{\gamma C,B,A,B,\gamma C}$. The equality~\eqref{eq:dvpmt_P_infty} still holds, but the operator $S_4$ now reads
\[
S_4 = A^4 + 2 BA^2 + \frac32 B^2A^2,
\]
so that
\[
D\varphi = \frac2\beta \nabla^2 V : \nabla^2\varphi + \frac1\beta \nabla(\Delta V)\cdot \nabla\varphi - \nabla V^T (\nabla^2 V) \nabla\varphi.
\]
A simple computation shows that
\[
\int_\mathcal{M} D\varphi \, \d{\overline{\mu}} = -\frac1\beta \int_\mathcal{M} \nabla\left(\Delta V - \frac\beta2 |\nabla V|^2 \right) \cdot \nabla \varphi \, \d{\overline{\mu}} = \int_\mathcal{M} \Lovd\left(\Delta V - \frac\beta2 |\nabla V|^2 \right) \varphi \, \d{\overline{\mu}},
\]
so that, in view of~\eqref{eq:def_f2infty},
\[
f^{\gamma C,B,A,B,\gamma C}_{2,\infty} = -\frac14 \left( \Delta V - \frac\beta2 |\nabla V|^2 -\frac{a_{\beta,V}}{2} \right).
\]
The expression of $f^{A,B,\gamma C,B,A}_{2,\infty}$ is obtained via the TU lemma, introducing the limiting operator
\[
U_{\infty,\dt}\pi = \pi \rme^{\dt B/2} \rme^{\dt A/2}\pi = \pi + \frac{\dt^2}{8} \pi (A^2 + 2BA)\pi + \dt^4 R_\dt,
\]
so that
\[
f^{A,B,\gamma C,B,A}_{2,\infty} = f^{\gamma C,B,A,B,\gamma C}_{2,\infty} + \frac18 \Big( \pi (A^2 + 2BA)\pi \Big)^* \mathbf{1} = f^{\gamma C,B,A,B,\gamma C}_{2,\infty} + \frac18 \Big( \pi BA \pi \Big)^* \mathbf{1} = -\frac18 \left( \Delta V - a_{\beta,V} \right).
\]

Let us conclude this section with the proof of Lemma~\ref{lem:cv_etC}.

\begin{proof}[Proof of Lemma~\ref{lem:cv_etC}]
The conclusion follows for instance by an application of Theorem~8.7 in~\cite{rey-bellet}, considering as a reference dynamics the Ornstein-Uhlenbeck process
\[
\d{p}_t = -M^{-1} p_t \, \d{t} + \sqrt{\frac{2\gamma}{\beta}} \, \d{W}_t 
\]
with generator~$C$ defined on functions of~$\mathcal{S}$ which are independent of~$q$ (recall that the unique invariant probability measure of this process is $\kappa(\d{p})$). To apply the theorem, we need to show that $\Li_s$ is a Lyapunov function for any~$s \geq 1$. We compute 
\[
C \Li_s = \left(-2s p^T p + \frac{2s(dN+2s-2)}{\beta} \right)|p|^{2(s-1)}
\leq - \Li_s + b_s
\]
for an appropriate constant $b_s \geq 0$. This shows the existence of constants $R_s, \alpha_s$ such that 
\[
\left| \left(\rme^{t C}f\right)(p) - \int_{\mathbb{R}^{dN}} f(p) \, \kappa(\d{p}) \right| \leq R_s \rme^{-\alpha_s t} \| f\|_{L^\infty_{\Li_s}(\d{p})} \Li_s(p), 
\]
where the notation $L^\infty_{\Li_s}(\d{p})$ emphasizes that the supremum is taken over a function of the momentum variable only. The desired result now follows by applying the above bound to the function $\psi(q,\cdot)$ for any element $\psi \in L^\infty_{\Li_s}$, and taking the supremum over~$q$.
\end{proof}

\subsection{Proof of Proposition~\ref{prop:ovd_limit_correction}}
\label{sec:proof_prop:ovd_limit_correction}

Recall that we set $M = \Id$ for overdamped limits. We consider first $f_2^{\gamma C, B,A,B, \gamma C}$, which satisfies~\eqref{eq:correction_second_order_schemes}. Let us first compute the right-hand side. Since
\[
\left[\left(A+\frac12 B\right)g\right] = \beta \left(p^T (\nabla^2 V) p - \frac12 |\nabla V|^2 \right),
\]
a simple computation shows that
\[
\widetilde{g} = \frac{1}{12} (A+B) \left[\left(A+\frac12 B\right)g\right] = \frac{\beta}{12} \Big[ (\nabla^3 V) : (p \otimes p \otimes p) - 3p^T (\nabla^2 V) \nabla V \Big].
\]
Note that the above function has average zero with respect to~$\kappa$. We then apply Theorem~\ref{lem:bounds_CL_gamma} to obtain
\[
\left\| f_2^{\gamma C, B,A,B, \gamma C} - \Lovd^{-1} \pi (A+B) C^{-1} \widetilde{g} \right\|_{H^1(\mu)} \leq \frac{K}{\gamma}.
\]
Since
\[
C \Big[ (\nabla^3 V) : (p \otimes p \otimes p) \Big] = -3 (\nabla^3 V) : (p \otimes p \otimes p) + \frac{6}{\beta} p^T \nabla \left(\Delta V\right),
\]
it is easily checked that
\[
\begin{aligned}
C^{-1} \widetilde{g} & = -\frac{\beta}{36} (\nabla^3 V) : (p \otimes p \otimes p) - \frac{1}{6} p^T\nabla (\Delta V) + \frac{\beta}{4} p^T (\nabla^2 V) \nabla V \\
& = -\frac{\beta}{36} A^3 \pi V + A\pi \left(- \frac16  (\Delta V) + \frac{\beta}{8} |\nabla V|^2\right).
\end{aligned}
\]
To compute $\pi (A+B) C^{-1} \widetilde{g}$, we rely on~\eqref{eq:rules_A+B_carre} and~\eqref{eq:rules_A4_BA3_etc} and obtain
\[
\begin{aligned}
\pi (A+B) C^{-1} \widetilde{g} & = -\frac{1}{12} \left(\frac1\beta \Delta^2 V - \nabla V \cdot \nabla (\Delta V) \right) + \Lovd \left(- \frac16  (\Delta V) + \frac{\beta}{8} |\nabla V|^2\right) \\
& = \Lovd\left(-\frac14 \Delta V + \frac{\beta}{8} |\nabla V|^2\right).
\end{aligned}
\]
This allows us to conclude that the limit of $f_2^{\gamma C, B,A,B, \gamma C}$ is the argument of the operator $\Lovd$ in the previous line, up to an additive constant chosen to ensure that $f_2^{\gamma C, B,A,B, \gamma C}$ has a vanishing average with respect to~$\mu$ (which turns out to be~$a_{\beta,V}/8$). We deduce the limit for $f_2^{A,B, \gamma C,B,A}$ with~\eqref{eq:correction_second_order_schemes} since $(A+B)g = p^T(\nabla^2 V)p-|\nabla V|^2$.

The expressions for the limits of $f_2^{\gamma C, A,B,A \gamma C}$ and $f_2^{B,A, \gamma C,A,B}$ are obtained in a similar fashion.

\subsection{Linear response theory}
\label{sec:proof_LRT}

\subsubsection{Definition of the mobility in~\eqref{eq:def_nu_LRT}}

We briefly sketch the discussion in~\cite[Section~3.1]{HDR} (see in particular Theorem~3.1 in this reference). Hypoellipticity arguments show that the measure $\mu_{\gamma,\eta}$ has a smooth density with respect to the Lebesgue measure. It moreover formally satisfies the Fokker-Planck equation
\begin{equation}
\label{eq:FP_xi}
\left(\Lgam+\eta \wcL\right)^* h_{\gamma,\eta} = 0, \qquad  \mu_{\gamma,\eta}(\d{q} \, \d{p}) = h_{\gamma,\eta}(q,p) \mu(\d{q}\,\d{p}), \qquad \int_\cE \d{\mu}_{\gamma,\eta} = 1.
\end{equation}
This equation can be given a rigorous meaning when $\eta$ is sufficiently small. We rely on the following result (proved at the end of this section), which is itself based on the fact that $\left(\Lgam^*\right)^{-1}$ can be extended to a bounded operator on $\cH^0$ (see Theorem~\ref{thm:Ham_limit_Lgam} and the comment after it).

\begin{lemma}
\label{lem:relatively_bounded_perturbation}
The operator $(\Lgam^*)^{-1}\wcL^*$, considered as an operator on the Hilbert space $\mathcal{H}^0 = L^2(\mu) \cap \{ \mathbf{1} \}^\perp$ introduced in~\eqref{eq:def_H0}, is bounded.
\end{lemma}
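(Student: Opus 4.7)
The plan is to reduce the claim to a one-line energy identity by a duality argument that transfers the $p$-derivative carried by $\wcL^*$ onto the resolvent $\Lgam^{-1}$. First I would verify that $(\Lgam^*)^{-1}\wcL^*$ is unambiguously defined on the core $\widetilde{\mathcal{S}}$: for any $\varphi\in\widetilde{\mathcal{S}}$ the function $\wcL^*\varphi = -F\cdot\nabla_p\varphi + \beta(F^T M^{-1}p)\varphi$ belongs to $\mathcal{S}$, and has vanishing $\mu$-average since $\wcL\mathbf{1}=0$, hence lies in $\widetilde{\mathcal{S}}$; Theorem~\ref{thm:stability_S} then places $(\Lgam^*)^{-1}\wcL^*\varphi$ back in $\widetilde{\mathcal{S}}$. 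Since $\widetilde{\mathcal{S}}$ is dense in $\cH^0$, the lemma is equivalent to producing a constant $K_\gamma$ with $\|(\Lgam^*)^{-1}\wcL^*\varphi\|_{L^2(\mu)}\leq K_\gamma\|\varphi\|_{L^2(\mu)}$ for every $\varphi\in\widetilde{\mathcal{S}}$.

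Next I would characterise this $L^2(\mu)$ norm by duality. For an arbitrary $\psi\in\widetilde{\mathcal{S}}$, set $u=\Lgam^{-1}\psi\in\widetilde{\mathcal{S}}$ (Theorem~\ref{thm:stability_S} again); successive integrations by parts on the core yield
\[
\langle \psi,(\Lgam^*)^{-1}\wcL^*\varphi\rangle_{L^2(\mu)} = \langle u,\wcL^*\varphi\rangle_{L^2(\mu)} = \langle \wcL u,\varphi\rangle_{L^2(\mu)} = \langle F\cdot\nabla_p u,\varphi\rangle_{L^2(\mu)},
\]
so Cauchy--Schwarz and density reduce the whole statement to the regularisation estimate $\|\nabla_p \Lgam^{-1}\psi\|_{L^2(\mu)} \leq C_\gamma\|\psi\|_{L^2(\mu)}$ for $\psi\in\widetilde{\mathcal{S}}$.

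This final inequality will follow from a single elementary energy identity, without invoking the full hypocoercive machinery. Pairing $\Lgam u=\psi$ with $u$ in $L^2(\mu)$ and using both the antisymmetry $(A+B)^*=-(A+B)$ and the decomposition $C=-\beta^{-1}\nabla_p^*\nabla_p$ recalled in~\eqref{eq:decomposition_C}, the Hamiltonian terms drop out and one is left with
\[
\langle u,\psi\rangle_{L^2(\mu)} = -\frac{\gamma}{\beta}\|\nabla_p u\|_{L^2(\mu)}^2.
\]
Combining this with Cauchy--Schwarz and the bound $\|u\|_{L^2(\mu)}\leq (c_+/\gamma)\|\psi\|_{L^2(\mu)}$, which holds on the full subspace $\cH^0$ by Theorem~\ref{thm:Ham_limit_Lgam} (extended to arbitrary fixed $\gamma$ as noted after its statement), gives $\|\nabla_p u\|_{L^2(\mu)}^2\leq (\beta c_+/\gamma^2)\|\psi\|_{L^2(\mu)}^2$ and closes the argument. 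The only subtle point I foresee is organisational rather than analytical: making certain that every manipulation takes place on $\widetilde{\mathcal{S}}$ before passing by density to the bounded extension on $\cH^0$, and this is precisely the service rendered by Theorem~\ref{thm:stability_S}.
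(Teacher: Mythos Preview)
Your proposal is correct and follows essentially the same route as the paper: both arguments boil down to the energy identity $\langle u,\Lgam u\rangle_{L^2(\mu)}=-\frac{\gamma}{\beta}\|\nabla_p u\|_{L^2(\mu)}^2$ combined with the $\cH^0$ resolvent bound, yielding boundedness of $\wcL\Lgam^{-1}$ and hence of its adjoint $(\Lgam^*)^{-1}\wcL^*$. The paper simply states the last duality step in one line (``hence its adjoint is bounded as well''), whereas you unfold it explicitly via a test function $\psi$; this is purely cosmetic.
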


Denoting by $r$ the spectral radius of $(\Lgam^*)^{-1}\wcL^* \in \mathcal{B}(\mathcal{H}^0)$, it is easily checked that $\left(\Lgam+\eta \wcL\right)^*$ is invertible for $|\eta| < r^{-1}$ with
\[
\left[ \left(\Lgam+\eta \wcL\right)^* \right]^{-1} = \left(\sum_{n=0}^{+\infty} (-\eta)^n \left[\left(\Lgam^*\right)^{-1}\wcL^*\right]^n\right)\left(\Lgam^*\right)^{-1}.
\]
Therefore, a straightforward computation shows that
\begin{equation}
\label{eq:f_gamma_eta}
h_{\gamma,\eta}(q,p) = 1 + \sum_{n=1}^{+\infty} (-\eta)^n \left[\left(\Lgam^*\right)^{-1}\wcL^*\right]^n \mathbf{1}
\end{equation}
is an admissible solution of~\eqref{eq:FP_xi}, and it is in fact the only one in view of the uniqueness of the invariant probability measure (since $h_{\gamma,\eta}$ can be shown to be nonnegative). Note that the normalization of the measure $h_{\gamma,\eta} \d{\mu}$ does not depend on $\eta$. Finally, 
\[
\int_\cE F^T M^{-1}p \, \mu_{\gamma,\eta}(\d{q} \, \d{p}) = -\eta \int_\cE F^T M^{-1}p \left[\left(\Lgam^*\right)^{-1}\wcL^*\mathbf{1} \right] \mu(\d{q} \, \d{p}) + \eta^2 r_{\eta,\gamma},
\]
with $r_{\eta,\gamma}$ uniformly bounded as $\eta \to 0$. This gives~\eqref{eq:def_nu_LRT}.

\begin{proof}[Proof of Lemma~\ref{lem:relatively_bounded_perturbation}]
Note first that the image of $\wcL^*$ is contained in $\cH^0$ since, for any $u \in \mathcal{S}$, 
\[
\int_\cE \wcL^*u \, \d{\mu} = \int_\cE u \left(\wcL\mathbf{1}\right) \d{\mu} = 0.
\]
It is therefore possible to give a meaning to the operator $(\Lgam^*)^{-1}\wcL^*$ as an operator on~$\widetilde{\mathcal{S}}$. We then check that the perturbation $\wcL$ is $\Lgam$-bounded (with relative bound~0, in fact): for $u \in \widetilde{\mathcal{S}}$,
\[
\left\| \wcL u \right\|^2_{L^2(\mu)} \leq |F|^2 \| \nabla_p u\|^2_{L^2(\mu)} = -\beta |F|^2 \langle u, \Lgam u\rangle_{L^2(\mu)} \leq \beta |F|^2 \| u \|_{L^2(\mu)} \left\| \Lgam u \right\|_{L^2(\mu)},
\]
so that, for $u \in \cH^0$ (recall that $\Lgam^{-1}u$ is well defined in this case),
\[
\left\| \wcL \Lgam^{-1} u \right\|^2_{L^2(\mu)} \leq   \beta |F|^2 \| u \|_{L^2(\mu)}\left\| \Lgam^{-1} u \right\|_{L^2(\mu)} \leq \beta |F|^2 \left\| \Lgam^{-1} \right\|_{\mathcal{B}(\cH^0)} \| u \|^2_{L^2(\mu)}.
\]
This proves that $\wcL \Lgam^{-1}$ is bounded, hence its adjoint is bounded as well. 
\end{proof}

\subsubsection{Proof of Lemma~\ref{lem:ovd_mobility}}
\label{sec:proof_lem:ovd_mobility}

Recall that we set mass matrices to identity when considering overdamped limits. Since
\[
\Lgam \left(F^T p\right) = - \gamma F^T p - F^T \nabla V, 
\]
it follows (using first~\eqref{eq:f_gamma_eta} to compute the linear response and then~\eqref{eq:divergent_behavior_Lgamma} to obtain the asymptotic behavior of $\Lgam^{-1}(F^T\nabla V)$ as $\gamma \to +\infty$)
\[
\begin{aligned}
\gamma \nu_{F,\gamma} & = \lim_{\eta \to 0} \frac{\gamma}{\eta} \int_\cE F^T p \, \mu_{\gamma,\eta}(\d{q}\,\d{p}) = \lim_{\eta \to 0} \frac1\eta \int_\cE \left[-F^T \nabla V(q)-\Lgam \left(F^T p\right)\right] \mu_{\gamma,\eta}(\d{q}\,\d{p}) \\
& = \beta \int_\cE F^T p \, \Lgam^{-1} \left[F^T \nabla V(q)+\Lgam \left(F^T p\right)\right] \mu(\d{q}\,\d{p}) \\
& = |F|^2 + \beta \int_\cE \left(F^T p\right) \left[p^T \nabla_q \Lovd^{-1}\left(F^T \nabla V\right)\right]  \mu(\d{q}\,\d{p}) + \frac1\gamma r_\gamma \\
& = |F|^2 + \int_\mathcal{M} \left(F^T \nabla_q^* \mathbf{1} \right) \Lovd^{-1}\left(F^T \nabla V\right) \, \overline{\mu}(\d{q}) + \frac1\gamma r_\gamma \\
& = |F|^2 + \beta \int_\mathcal{M} \left(F^T \nabla V \right) \Lovd^{-1}\left(F^T \nabla V\right) \, \overline{\mu}(\d{q}) + \frac1\gamma r_\gamma \\
& = |F|^2 + \overline{\nu}_F + \frac1\gamma r_\gamma,
\end{aligned}
\]
where $r_\gamma$ is uniformly bounded for $\gamma \geq 1$. This gives the desired result.

\begin{remark}
\label{rmk:ovd_Einstein}
The article~\cite{HP08} in fact studies the limiting behavior of the autodiffusion coefficient, as computed from~\eqref{eq:def_nu_Einstein}:
\[
\beta \overline{D}_F = \int_\mathcal{M} \left| F + \nabla_q \Lovd^{-1} \left(F\cdot \nabla V\right) \right|^2 \d{\overline{\mu}}.
\]
Using $\Lovd = -\beta^{-1} \nabla_q^* \nabla_q$, a simple computation shows
\[
\begin{aligned}
\beta \overline{D}_F 
& = |F|^2 + 2 \int_\mathcal{M}F^T \nabla_q \Lovd^{-1} \left(F\cdot \nabla V\right) \, \d{\overline{\mu}}
+ \int_\mathcal{M} \left|\nabla_q \Lovd^{-1} \left(F\cdot \nabla V\right)\right|^2 \d{\overline{\mu}} \\
& = |F|^2 + 2 \int_\mathcal{M} \left(F^T \nabla_q^*\mathbf{1} \right) \Lovd^{-1} \left(F\cdot \nabla V\right) \, \d{\overline{\mu}} + \int_\mathcal{M} \nabla_q^* \nabla_q \Lovd^{-1} \left(F\cdot \nabla V\right)\, \Lovd^{-1} \left(F\cdot \nabla V\right)\, \d{\overline{\mu}} \\
& = |F|^2 + \beta \int_\mathcal{M} \left(F^T \nabla V \right) \Lovd^{-1} \left(F\cdot \nabla V\right) \, \d{\overline{\mu}},
\end{aligned}
\]
so that $\beta \overline{D}_F = |F|^2 + \overline{\nu}_F$.
\end{remark}

\subsection{Proof of Theorem~\ref{thm:error_noneq}}
\label{sec:proof_noneq}

The proof again is along the lines of the proof written in Section~\ref{sec:proof_thm:error_first_order_schemes}, and we are therefore very brief, mentioning only the most important modifications.

\paragraph{Case $\alpha=1$.}
Let us first consider the first order scheme $P_{\dt}^{\gamma C,B+ \eta \wcL,A}$. Using the notation introduced in Section~\ref{sec:expansion_evolution}, and recalling the definition $B_\eta = B + \eta \wcL$, we write
\begin{equation}
  \label{eq:expansion_P_dt_eta}
  P_{\dt}^{\gamma C,B + \eta\wcL,A} = \Id + \dt \left(A+B_\eta+\gamma C\right)+ \frac{\dt^2}{2} \cT\left[\Big( A+B_\eta+\gamma C \Big)^2\right] + \frac{\dt^3}{2} R_{\eta,\dt}, 
\end{equation}
with
\[
R_{\eta,\dt} = \int_0^1 (1-\theta)^2 \, \cT\left[(A+B_\eta+\gamma C)P_{\theta \dt}^{\gamma C,B + \eta\wcL,A}\right]^3 \d{\theta}.
\]
All the operators appearing in the expressions above are defined on the core~$\mathcal{S}$, and have values in~$\mathcal{S}$. Since 
\[
\rme^{\theta \dt B_\eta} - \rme^{\theta \dt B} = \eta \int_0^1 \rme^{\theta s B_\eta} \, \wcL \, \rme^{\theta (1-s) B} \, ds,
\]
it is easy to see that the operator $R_{\eta,\dt}$ can be rewritten as the sum of two contributions: $R_{\eta,\dt} = R_{0,\dt} + \eta \widetilde{R}_{\eta,\dt}$, where, for $\psi \in \mathcal{S}$, the smooth function $\widetilde{R}_{\eta,\dt} \psi$ can be uniformly controlled in $\eta$ for $|\eta| \leq 1$. Finally, the evolution operator can be rewritten as
\begin{equation}
\label{eq:Pdt_expansion_finite_gamma_eta}
P_{\dt}^{\gamma C,B + \eta\wcL,A} = \Id + \dt \left(\Lgam + \eta \wcL\right) + \frac{\dt^2}{2} \left(\Lgam^2 + S_1 + \eta D_1 \right) + \dt^2 \mathscr{R}_{\eta,\dt},
\end{equation}
where $S_1$ is defined in~\eqref{eq:P_dt_eq_order1} (which corresponds to the case $\eta = 0$), $D_1 = (2\gamma C+B)\wcL + \wcL (2A+B)$, and 
\[
\mathscr{R}_{\eta,\dt} = \frac{\dt}{2} R_{0,\dt} + \frac{\eta \dt}{2} \widetilde{R}_{\eta,\dt} + \frac{\eta^2}{2} \wcL^2.
\]
We then compute, for $\varphi \in \mathcal{S}$ and $f_{1,1,\gamma} \in \widetilde{\mathcal{S}}$ to be chosen later,
\[
\begin{aligned}
& \int_\cE\left[ \left(\frac{\Id-P_{\dt}^{\gamma C,B + \eta\wcL,A}}{\dt}\right)\varphi \right] \left(1 + \dt f_{1,0,\gamma} + \eta f_{0,1,\gamma} + \eta \dt f_{1,1,\gamma} \right) \, \d{\mu} \\
& = -\int_\cE \left[\left(\Lgam + \eta \wcL + \frac{\dt}{2} \left(\Lgam^2 + S_1 + \eta D_1 \right) + \dt \mathscr{R}_{\eta,\dt} \right)\varphi \right] \left(1 + \dt f_{1,0,\gamma} + \eta f_{0,1,\gamma} + \eta \dt f_{1,1,\gamma} \right) \, \d{\mu} \\
& = -\eta \int_\cE \left[ \wcL \varphi + (\Lgam \varphi) f_{0,1,\gamma} \right]\d\mu - \dt \int_\cE \left[ \frac12 S_1 \varphi + (\Lgam \varphi) f_{1,0,\gamma} \right]\d\mu \\
& \ \ \ - \eta\dt \int_\cE \left[ \left(\wcL \varphi\right) f_{1,0,\gamma} + \frac12\left(\Lgam^2 + S_1\right)\varphi \, f_{0,1,\gamma} + (\Lgam \varphi) f_{1,1,\gamma} + \frac12 D_1\varphi \right]\d\mu \\
& \ \ \ - \eta^2 \int_\cE \left(\wcL \varphi\right) (f_{0,1,\gamma} + \dt f_{1,1,\gamma})\, \d{\mu} - \frac{\dt^2}{2} \int_\cE \left[\left(\Lgam^2 + S_1 + \eta D_1 \right)\varphi \right] (f_{1,0,\gamma} + \eta f_{1,1,\gamma}) \, \d{\mu} \\
& \ \ \ - \dt \int_\cE \mathscr{R}_{\eta,\dt}\varphi \left(1 + \dt f_{1,0,\gamma} + \eta f_{0,1,\gamma} + \eta \dt f_{1,1,\gamma} \right) \, \d{\mu}.
\end{aligned}
\]
The first two terms in the last expression vanish by definition of $f_{0,1,\gamma}$ and $f_{1,0,\gamma}$, while the third one vanishes when the function $f_{1,1,\gamma}$ is defined by the Poisson equation
\begin{equation}
\label{eq:def_f_1_1_gamma}
\Lgam^* f_{1,1,\gamma} =  - \wcL^* f_{1,0,\gamma} - \frac12\left(\Lgam^2 + S_1\right)^* f_{0,1,\gamma} - \frac12 D_1^* \mathbf{1}.
\end{equation}
It is easy to check that the right-hand side of this equation has a vanishing average with respect to~$\mu$ (integrating with respect to~$\mu$ and letting the adjoints of the operators act on~$\mathbf{1}$). We then project~\eqref{eq:expansion_P_dt_eta} using $\Pi^\perp$ and introduce the approximate inverse, defined on~$\widetilde{\mathcal{S}}$ as
\[
\begin{aligned}
Q_{\eta,\dt} & = -\Lgam^{-1} + \eta \Lgam^{-1} \Pi^\perp\wcL\Pi^\perp \Lgam^{-1} + \frac{\dt}{2} \left[\Pi^\perp + \Lgam^{-1}\Pi^\perp\left(S_1 + \eta D_1\right)\Pi^\perp \Lgam^{-1} \right] \\
& \ \ - \frac{\eta\dt}{2} \Lgam^{-1} \Pi^\perp\wcL\Pi^\perp \Lgam^{-1}\left(\Lgam^2 + \Pi^\perp S_1\Pi^\perp + \eta \Pi^\perp D_1\Pi^\perp \right) \Lgam^{-1} \\
& \ \ - \frac{\eta\dt}{2}  \Lgam^{-1} \left(\Lgam^2 + \Pi^\perp S_1\Pi^\perp + \eta \Pi^\perp D_1\Pi^\perp \right) \Lgam^{-1} \Pi^\perp \wcL \Pi^\perp \Lgam^{-1},
\end{aligned}
\]
obtained by truncating the formal series expansion of the inverse operator by discarding terms associated with $\eta^2$ or $\dt^2$. The approximate inverse is such that
\[
\Pi^\perp \left(\frac{\Id-P_{\dt}^{\gamma C,B + \eta\wcL,A}}{\dt}\right) \Pi^\perp Q_{\eta,\dt} = \Pi^\perp + \eta^2 \mathcal{R}^1_{\eta,\dt} + \dt^2 \mathcal{R}^2_{\eta,\dt},
\]
with $\mathcal{R}^2_{\eta,\dt} = \mathcal{R}^2_{0,\dt} + \eta \widetilde{\mathcal{R}}^2_{\eta,\dt}$. We then replace $\Pi^\perp \varphi$ by $Q_{\eta,\dt} \psi$ and conclude as in Section~\ref{sec:proof_thm:error_first_order_schemes}.

\paragraph{Case $\alpha=2$.}
The result for the second order splitting is obtained by appropriate modifications of the proof written above for $p=1$, similar to the ones introduced in Section~\ref{sec:proof_thm:error_second_order_schemes}. We will therefore mention only the most important point, which is the following. Replacing $B$ by $B_\eta$ in the expansion~\eqref{eq:I_Pdt_eq_order2}, we see that
\[
\begin{aligned}
\frac{\Id-P_\dt^{\gamma C,B_\eta,A,B_\eta,\gamma C}}{\dt} & = -\Lgam - \eta \wcL - \frac\dt2 (\Lgam+\eta \wcL)^2 - \dt^2 \left( \frac16 (\Lgam+\eta\wcL)^3 + S_{2} + \eta \widetilde{S}_{2,\eta} \right) - \dt^3 R_{\eta,\dt} \\
& =  -\Lgam - \eta \wcL - \frac\dt2 \Lgam^2 - \frac{\eta\dt}{2} \left(\Lgam \wcL + \wcL \Lgam\right) - \frac{\eta^2\dt}{2} \wcL^2 - \dt^2 \left( \frac16 \Lgam^3 + S_{2} \right) \\
& \ \ \ - \eta \dt^2 \left(\frac16 \left(\Lgam^2 \wcL + \Lgam \wcL \Lgam + \wcL \Lgam^2 \right) + \widetilde{S}_{2,0}\right) + \mathscr{R}_{\eta,\dt},
\end{aligned}
\]
where $\mathscr{R}_{\eta,\dt}$ regroups operators of order $\dt^{3+\alpha} \eta^{\alpha'}$ or $\dt^{2+\alpha} \eta^{2+\alpha'}$ for $\alpha,\alpha' \geq 0$, the operator $S_2$ is defined in~\eqref{eq:def_S2} and $\widetilde{S}_{2,\eta}$ satisfies
\[
\begin{aligned}
12 \, \widetilde{S}_{2,\eta} & = \left[A,\left[A,\wcL\right]\right] - \frac12 \left[B,\left[\wcL,A\right]\right]- \frac12 \left[\wcL,\left[B,A\right]\right] + \gamma \left[\wcL,\left[A+B,C\right]\right]  + \gamma \left[A+B,\left[\wcL,C\right]\right]  \\
& \ \ \ - \frac{\gamma^2}{2} \left[C,\left[C,\wcL\right]\right] + \eta \left( \gamma \left[\wcL,\left[\wcL,C\right]\right] - \frac12 \left[\wcL,\left[\wcL,A\right]\right] \right).
\end{aligned}
\]
We next compute the dominant terms in
\[
\int_\cE \left[\left(\frac{\Id-P_\dt^{\gamma C,B_\eta,A,B_\eta,\gamma C}}{\dt}\right)\varphi \right] \left(1 + \dt^2 f_{2,0,\gamma} + \eta f_{0,1,\gamma} + \eta \dt^2 f_{2,1,\gamma} \right) \, \d{\mu}.
\]
We consider only contributions of the form $\eta^\alpha \dt^{\alpha'}$ with $\alpha = 0,1$ and $0 \leq \alpha' \leq 2$. The contributions in $\dt,\dt^2$ are the same as in the case $\eta = 0$ and therefore vanish. The contribution in $\eta$ vanishes in view of the choice of $f_{0,1,\gamma}$. For the same reason, the contribution in $\eta\dt$ vanishes as well:
\[
- \frac{\eta\dt}{2} \int_\cE \left[ \left(\Lgam \wcL + \wcL \Lgam\right)\varphi + \left(\Lgam^2\varphi\right) f_{0,1,\gamma} \right] \, \d{\mu} = - \frac{\eta\dt}{2} \int_\cE \left( \Lgam \varphi\right) \left(\wcL^*\mathbf{1} + \Lgam^* f_{0,1,\gamma}\right) \d{\mu} = 0.
\]
The contribution in $\eta\dt^2$ is proportional to
\[
\int_\cE \left[ \left(\frac{\Lgam^2 \wcL + \Lgam \wcL \Lgam + \wcL \Lgam^2}{6} + \widetilde{S}_{2,0}\right)\varphi + \left(\wcL \varphi\right) f_{2,0,\gamma} + \left[\left( \frac{\Lgam^3}{6} + S_{2} \right)\varphi\right] f_{0,1,\gamma} + \left(\Lgam\varphi\right) f_{2,1,\gamma} \right] \d{\mu}.
\]
The requirement that this expression vanishes for all functions $\varphi \in \mathcal{S}$ characterizes the function $f_{2,1,\gamma}$ (the discussion on the solvability of this equation following the same lines as the discussion on the solvability of~\eqref{eq:def_f_1_1_gamma}). The proof is then concluded as in the case $p=1$.

\subsection{Proof of Theorem~\ref{thm:error_estimate_noneq_ovd}}
\label{sec:proof_thm:error_estimate_noneq_ovd}

The proof of this result is obtained by modifying the proof of Theorem~\ref{thm:ovd_limit} presented in Section~\ref{sec:proof_thm:ovd_limit} by taking into account the nonequilibrium perturbation, as done in the proof of Theorem~\ref{thm:error_noneq} presented in Section~\ref{sec:proof_noneq}. We will therefore be very brief and only mention the most important modifications.

We write the proof for the scheme associated with the evolution operator $P_\dt^{\gamma C,A,B_\eta,A,\gamma C}$ for instance (since this is the case explicitly treated in Section~\ref{sec:proof_thm:ovd_limit} for $\eta = 0$). First, arguing as in Section~\ref{sec:proof_thm:ovd_limit}, we see that it is possible to replace $P_\dt^{\gamma C,A,B_\eta,A,\gamma C}$ by 
\[
\pi P_{{\rm ham},\dt,\eta} \pi = \pi \rme^{\dt A/2} \rme^{\dt B_\eta} \rme^{\dt A/2} \pi
\]
up to error terms in the invariant measure which are exponentially small in $\gamma \dt$. Note that $B_\eta = (F-\nabla V)\cdot \nabla_p$, so that the rules~\eqref{eq:rules_ovd_1}-\eqref{eq:rules_ovd_2} are still valid. Therefore, introducing again $h = \dt^2/2$,
\[
\begin{aligned}
& \pi P_{{\rm ham},\dt,\eta} \pi \\
& = \pi + \frac{\dt^2}{2} \pi (A+B_\eta)^2 \pi + \frac{\dt^4}{24}\pi \left(A^4 + \frac32 A^2 B_\eta A + \frac32 A B_\eta A^2 + \frac32 B_\eta^2 A^2 + \frac12 B_\eta A^3\right)\pi + \dt^6 R_{\dt,\eta} \\
& = \pi + h \pi \left(\Lovd + \eta\left[\wcL(A+B) + (A+B)\wcL\right] + \eta^2 \wcL^2 \right) \pi + \frac{h^2}{2} \left(\Lovd^2 + D + \eta \widetilde{D}_1 + \eta^2 \widetilde{D}_2\right) \pi \\
& \ \ \ + \dt^6 R_{\dt,\eta},
\end{aligned}
\]
where $D$ is defined in~\eqref{eq:def_D}, and the expressions of the operators $\widetilde{D}_i$ ($i=1,2$) are obtained by expanding the various terms $A^aB_\eta^b A^c$ in powers of $\eta$. Keeping only the dominant terms, we arrive at
\[
\pi P_{{\rm ham},\dt,\eta} \pi = \pi + h \Lovd\pi + \frac{h^2}{2}\left(\Lovd^2 + D\right) + \eta h \pi \left[\wcL(A+B) + (A+B)\wcL\right]\pi + \frac{\eta h^2}{2} \widetilde{D}_1 + \mathscr{R}_{\dt,\eta}.
\]
Since 
\[
\pi \left(\wcL(A+B) + (A+B)\wcL\right) \pi  = \pi \wcL A \pi = \wcL_{\rm ovd},
\]
we conclude
\[
\pi P_{{\rm ham},\dt,\eta} \pi = \pi + h \left( \Lovd + \eta \wcL_{\rm ovd}\right) \pi + \frac{h^2}{2}\left(\Lovd^2 + D + \eta  \widetilde{D}_1 \right) + \mathscr{R}_{\dt,\eta}.
\]
This relation is the analogue of~\eqref{eq:Pdt_expansion_finite_gamma_eta} in the overdamped limit, and the remainder of the proof is carried on following the strategy presented in Section~\ref{sec:proof_thm:ovd_limit}. 

\section*{Acknowledgements}

The authors thank Francis Nier for several fruitful discussions on the properties of Langevin-type generators. Ben Leimkuhler and Charles Matthews acknowledge the support of the Engineering and Physical Sciences Research Council (UK) and grant EP/G036136/1. G. Stoltz was partially supported by the project DYMHOM (De la dynamique mol\'eculaire, via l'homog\'en\'eisation, aux mod\`eles macroscopiques de poro\'elasticit\'e et \'electrocin\'etique) from the program NEEDS (Projet f\'ed\'erateur Milieux Poreux MIPOR). G. Stoltz also benefited from the scientific environment of the Laboratoire International Associ\'e between the Centre National de la Recherche Scientifique and the University of Illinois at Urbana-Champaign.

\bibliographystyle{IMANUM-BIB}
\bibliography{ref}

\end{document}